\newcommand{\blind}{1}
\newtheorem{theorem}{Theorem}[section]
\newtheorem{corollary}[theorem]{Corollary}
\newtheorem{lemma}{Lemma}[section]
\newtheorem{remark}{Remark}[section]
\newenvironment{proof}[1][Proof]{\noindent\textbf{#1.} }{\ \rule{0.5em}{0.5em}}
\begin{document}

\def\spacingset#1{\renewcommand{\baselinestretch}%
{#1}\small\normalsize} \spacingset{1}


\if1\blind
{
  \title{\bf Comparing Misspecified Models with Big Data: A Variational Bayesian Perspective}
  \author{Yong Li\thanks{
    Li gratefully acknowledges \textit{the financial support of the Chinese Natural
Science fund (No.72273142, 72394392).} Yong Li, School of Economics, Renmin University
of China, Beijing, 1000872, China.}\hspace{.2cm}\\
Renmin University of China\\
gibbsli@ruc.edu.cn\\
\and
Sushanta K. Mallick \thanks{Sushanta K. Mallick, School of Business and Management, Queen Mary University of London}
\hspace{.2cm}\\
Queen Mary University of London\\
s.k.mallick@qmul.ac.uk\\
\and
    Tao Zeng\thanks{Tao gratefully acknowledges \textit{the financial support of the National Natural
Science Foundation of China (No.72073121).} Tao Zeng, School of Economics, Academy of Financial Research and Beijing Research Center, Zhejiang
University, Zhejiang, 310058, China.}\\
    Zhejiang University \\
ztzt6512@gmail.com\\
\and
Junxing Zhang\thanks{Junxing Zhang, School of Economics,
Renmin University of China, Beijing, 1000872, China.}\\
Renmin University of China\\
zjx0316a@ruc.edu.cn}
  \maketitle
} \fi

\if0\blind
{
 \title{\bf  Comparing Misspecified Models with Big Data, A Variational Bayesian Perspective}
\maketitle
  \medskip
} \fi

\begin{abstract}
Optimal data detection in massive multiple-input multiple-output (MIMO) systems often requires prohibitively high computational complexity. A variety of detection algorithms have been proposed in the literature, offering different trade-offs between complexity and detection performance. In recent years, Variational Bayes (VB) has emerged as a widely used method for addressing statistical inference in the context of massive data. This study focuses on misspecified models and examines the risk functions associated with predictive distributions derived from variational posterior distributions. These risk functions, defined as the expectation of the Kullback-Leibler (KL) divergence between the true data-generating density and the variational predictive distributions, provide a framework for assessing predictive performance. We propose two novel information criteria for predictive model comparison based on these risk functions. Under certain regularity conditions, we demonstrate that the proposed information criteria are asymptotically unbiased estimators of their respective risk functions. Through comprehensive numerical simulations and empirical applications in economics and finance, we demonstrate the effectiveness of these information criteria in comparing misspecified models in the context of massive data.

\end{abstract}

\noindent%
{\it Keywords:}  Kullback–Leibler divergence; Information criterion; Model misspecification; Variational Bayes; Massive Data.

\spacingset{1.9} 

\section{Introduction}
\label{sec:intro}

In numerous empirical studies, parametric models are commonly employed. However, parametric models inherently carry the risk of model misspecification. As George Box famously stated, ``All models are wrong, but some are useful.'' When a model is misspecified, it can result in inefficient or, in some cases, inconsistent estimation of key parameters. Furthermore, likelihood-based statistical inferences, such as hypothesis testing and goodness-of-fit assessments, are significantly affected. Therefore, developing robust methods to address model misspecification is of critical importance.

Model comparison is one of the most critical issues in statistical inference. For a partial list of studies, see \citet{granger1995comments}, \citet{phillips1994posterior}, \citet{phillips1995bayesian,phillips1996econometric}, \citet{hansen2005test}, and \citet{burnham2008model}. There are essentially two strands of literature on model selection \citep{VehtariandOjanen,anderson2004model}. The first strand aims to answer the question which model best explains the observed data. The Bayes factor (BF, \citealp{kass1995bayes}) and its variations belong to this strand. They compare models by examining “posterior probabilities” given the observed data and search for the “true” model. Bayes Information Criterion (BIC, \citealp{schwarz1978estimating}) is a large sample approximation to BF, although it is based on the maximum likelihood estimator (MLE). The second strand comes from a predictive perspective, answering the question which model gives the best predictions of future observations, which are generated by the same mechanism that gives the observed data. From the predictive perspective, many penalty-based information criteria have been proposed for model comparison. In the frequentist framework, the two most popular information criteria are the Akaike Information Criterion (AIC) proposed by \citet{Akaike1973} and the Takeuchi Information Criterion (TIC) introduced by \citet{takeuchi1976distribution}. Both are asymptotically unbiased estimators of the expected Kullback-Leibler (KL) divergence between the data generating process (DGP) and the plug-in predictive distribution when the MLE is used. The plug-in predictive distribution is obtained by substituting parameter values with their optimal estimates to produce the plug-in estimated sampling distribution. The AIC assumes that all candidate models either nest the true model or are good approximations of the DGP, whereas the TIC allows for model misspecification, with its penalty term involving the inverse of the Hessian matrix. Under the Bayesian framework, Deviance Information Criterion (DIC), proposed by \citet{spiegelhalter2002bayesian}, is one of the most popular penalty-based predictive information criteria. In a recent study, \citet{li2020deviance} developed a variant of DIC for comparing misspecified models, while \citet{li2024deviance} proposed a decision-theoretic interpretation of DIC, demonstrating that DIC is the Bayesian version of AIC.

In recent years, several model selection approaches utilizing the Variational Bayes (VB) method have been introduced. A common strategy in VB-based model selection is to use the evidence lower bound (ELBO) as a proxy for the logarithm of the marginal likelihood function, \(\log p(\mathbf{y})\), to perform Bayes factor (BF) comparisons.  \citet{corduneanu2001variational} investigated VB model selection in the context of mixture models, and used the ELBO as a proxy to determine the optimal number of components. \citet{you2014variational} explored the application of VB to classical Bayesian linear models. They established that, under mild regularity conditions, VB-based estimators possess desirable frequentist properties, such as consistency. Additionally, they proposed two VB-specific information criteria: the Variational AIC (VAIC), which substitutes the VB posterior mean into the DIC, and the Variational Bayesian Information Criterion (VBIC), which uses the ELBO as a proxy for the marginal likelihood. They further showed that VAIC is asymptotically equivalent to the frequentist AIC, while VBIC is first-order equivalent to the BIC in linear regression. \citet{Zhang_2024} proposed using the ELBO as an alternative criterion for model selection and demonstrated its asymptotic equivalence to the BIC. However, in the context of misspecified models and the era of massive data, there has been relatively little research on Bayesian model selection from a predictive perspective. This gap highlights the need for further investigation into model selection methodologies that prioritize predictive performance in such settings.

In this paper, we propose two new penalty-based predictive information criteria for model comparison in the context of misspecified models with massive data. First, based on the variational posterior distribution, we demonstrate that, from a predictive perspective, two types of predictive distributions can be derived: the variational plug-in predictive distribution and the variational posterior predictive distribution. Second, we examine the risk functions associated with these two variational predictive distributions, defined as the expectations of the KL divergence between the DGP and the predictive distributions. Third, under certain regularity conditions, we establish that the proposed information criteria are asymptotically unbiased estimators of their corresponding risk functions. Finally, through simulations and real-world case studies, we illustrate the application of the proposed information criteria.

The paper is organized as follows. Section \ref{sec:review} briefly reviews the literature
on how to make statistical inferences about misspecified models and VB
technique for misspecified models with massive data. Section \ref{sec:risk} investigates
the risk functions of variational predictive distributions. Section \ref{sec:model selection}
introduces the statistical decision theory and proposes the new
penalized-based information criterion to compare misspecified models with
massive data. Section \ref{sec:simu and real} illustrates the new methods using two simulated big
data and two real big data. Section \ref{sec:conclude} concludes the paper. The Appendix
collects the proof of the theoretical results and VB analytical expression
of parametric models used in the paper.
\section{Statistical Inference for Misspecified Models: A Review} \label{sec:review}

\subsection{MLE-based Inference under Model Misspecification}


Let the observed data be $\mathbf{y}=\left( y_1,\cdots
,y_{n}\right)$, with an i.i.d. data generating process (DGP) denoted by $g(\mathbf{y})$.
Consider a parametric model, denoted by $p(\mathbf{y}|\mbox{\boldmath${%
\theta}$}) $ used to fit the data, where $\mbox{\boldmath${%
\theta}$}$ is a $P$-dimensional parameter, and $\mbox{\boldmath${%
\theta}$}\in {\mbox{\boldmath${\Theta}$}}\subseteq R^{P}$. The Kullback-Leibler (KL) divergence is used to measure the ``distance'' between $g(\mathbf{y})$ and $p(\mathbf{y}|{\mbox{\boldmath${\theta}$}})$, that is, 
\begin{eqnarray*}
& &KL[g(\mathbf{y}),p(\mathbf{y}|{\mbox{\boldmath${\theta}$}})]=\int g(%
\mathbf{y})\ln \frac{g(\mathbf{y})}{p(\mathbf{y}|{\mbox{\boldmath${\theta}$}}%
)}d\mathbf{y} \\
&=&E_{g(\mathbf{y})}\ln g(\mathbf{y})-E_{g(\mathbf{y})}\ln p(\mathbf{y}|{\mbox{\boldmath${%
\theta}$}}),
\end{eqnarray*}
where $E_{g(\mathbf{y})}$ is with respect to the DGP $g(\mathbf{y})$.
Let ${\mbox{\boldmath${\theta}$}}^{*}$ $\in \Theta \subset R^{p}$ the pseudo
true value that minimizes the KL divergence 
\begin{equation*}
{\mbox{\boldmath${\theta}$}}^{*}=\arg \min_{\theta }KL({\mbox{\boldmath${%
\theta}$}})=\arg \max_{\theta }E_{g(\mathbf{y})}\ln p(\mathbf{y}|{\mbox{\boldmath${%
\theta}$}}),
\end{equation*}
and ${\mbox{\boldmath${\hat{\theta}}$}}$ denoted as the quasi maximum likelihood (QML) estimator
of ${\mbox{\boldmath${\theta}$}}$, which maximizes the log-likelihood
function of the parametric model, 
\begin{equation*}
{\mbox{\boldmath${\hat{\theta}}$}}=\arg \max_{\theta }\ln p(\mathbf{y}|{%
\mbox{\boldmath${\theta}$}}).
\end{equation*}

For simplicity, let 
$l_t\left(\mathbf{y}_t,\boldsymbol{\theta}\right)=\ln p\left(\mathbf{y}_t | \boldsymbol{\theta}\right)$ represent the conditional log-likelihood for the $t^{t h}$ observation for any $1 \leq t \leq n$. We suppress $l_t\left(\mathbf{y}_t, \boldsymbol{\theta}\right)$ as $l_t(\boldsymbol{\theta})$, so that the log-likelihood function $\ln p(\mathbf{y} | \boldsymbol{\theta})$ is expressed as $\sum_{t=1}^n l_t(\boldsymbol{\theta})$. Define $\nabla^j l_t(\boldsymbol{\theta})$ as the $j^{t h}$ order derivative of $l_t(\boldsymbol{\theta})$ and $\nabla^j l_t(\boldsymbol{\theta})=l_t(\boldsymbol{\theta})$ when $j=0$. Let $\mathbf{\hat{J}}({\mbox{\boldmath${\theta}$}})=\frac{1}{n}\sum_{t=1}^{n}\nabla l_t(\boldsymbol{\theta})\nabla l_t(\boldsymbol{\theta})^{\prime }-\frac{1}{n}\sum_{t=1}^{n}\nabla l_t(\boldsymbol{\theta})\sum_{t=1}^{n}\nabla l_t(\boldsymbol{\theta})^{\prime}$, $\mathbf{\hat{I}}({\mbox{\boldmath${\theta}$}})=-\frac{1}{n}%
\sum_{t=1}^{n}\nabla^2 l_t(\boldsymbol{\theta})$. White (1982) established
the maximum likelihood (ML) theory for misspecified models, that is, 
\begin{equation}
\left(\mathbf{\hat{I}}^{-1}({\mbox{\boldmath${\hat{\theta}}$}})\mathbf{\hat{J}}({\mbox{\boldmath${\hat{\theta}}$}})\mathbf{\hat{I}}^{-1}({\mbox{\boldmath${\hat{\theta}}$}})\right)^{-1/2} \sqrt{n}({\mbox{\boldmath${\hat{\theta}}$}}-{\mbox{\boldmath${\theta}$}}^{*})\overset{d}{\rightarrow }N\left({\mbox{\boldmath${0}$}},\mathbf{I}\right) ,  \label{white}
\end{equation}%
as n goes to infinity where the asymptotic variance takes the sandwich form. If the model is
correctly specified, then 
\begin{equation}
\left(\mathbf{\hat{I}}^{-1}( {\mbox{\boldmath${\hat\theta}$}})\right)^{-1/2} \sqrt{n}({\mbox{\boldmath${\hat{\theta}}$}}-{\mbox{\boldmath${\theta}$}}^*)%
\overset{d}{\rightarrow }N\left(\mathbf{0},\mathbf{I}\right) .  \label{ml}
\end{equation}
as n goes to infinity.

\subsection{Bayesian Inference under Model Misspecification}

Consider a statistical model indexed by a set of $P$ parameters, $\boldsymbol{\theta} \in \boldsymbol{\Theta} \subseteq \mathbb{R}^P$, with a prior distribution $p(\boldsymbol{\theta})$ defined over $\boldsymbol{\theta}$. By applying Bayes' theorem, the posterior distribution can be expressed as:
\begin{equation}
p(\boldsymbol{\theta}|\mathbf{y}) = \frac{p(\mathbf{y}|\boldsymbol{\theta})p(\boldsymbol{\theta})}{p(\mathbf{y})} \propto p(\boldsymbol{\theta})p(\mathbf{y}|\boldsymbol{\theta}), \label{post01}
\end{equation}
where $p(\mathbf{y}) = \int p(\mathbf{y}|\boldsymbol{\theta})p(\boldsymbol{\theta})d\boldsymbol{\theta}$ represents the marginal likelihood.

In most cases, the posterior distribution $p(\boldsymbol{\theta}|\mathbf{y})$ does not have a closed-form solution. Consequently, posterior sampling is typically conducted using Markov Chain Monte Carlo (MCMC) techniques \citep{gelman2003bayesian}. Based on the random samples generated from posterior simulations, Bayesian statistical inference can be performed using the corresponding sample means and covariance matrices. For example, let $\{\boldsymbol{\theta}^{(j)} : j = 1, 2, \cdots, J\}$ denote the effective random samples generated from the posterior distribution after discarding burn-in samples. Bayesian estimates of $\boldsymbol{\theta}$ and the associated standard error can then be calculated as:
$
	\boldsymbol{\bar\theta}=\frac {1}{J}\sum_{j=1}^J\boldsymbol{\theta}^{(j)},
    	\widehat{Var(\boldsymbol{\theta}|\mathbf{y})}
	=\frac{1}{J-1}\sum_{j=1}^J(\boldsymbol{\theta}^{(j)}-\boldsymbol{\bar{\theta}})
	(\boldsymbol{\theta}^{(j)}-\boldsymbol{\bar{\theta}})^{\prime}.
$

These Bayesian estimates are consistent estimators of the posterior mean and covariance matrix. It is well documented in the literature that MCMC techniques are powerful and efficient for posterior simulation. Due to advances in MCMC, Bayesian methods have gained significant popularity for statistical inference and are now widely applied to a variety of complex models.

It is worth noting that the Bayesian large-sample theory exhibits a key difference from the QML large-sample theory, particularly for misspecified models. Unlike QML theory, Bayesian asymptotic results do not differ between correctly specified and misspecified models. In both cases, the Bayesian large-sample theory is given by:
\begin{equation*}
\left(\mathbf{\hat{I}}^{-1}(\hat{\boldsymbol{\theta}})/n\right)^{-1/2}(\boldsymbol{\theta} - \hat{\boldsymbol{\theta}})|\mathbf{y} \overset{d}{\rightarrow} N\left(\mathbf{0}, \mathbf{I}\right),
\end{equation*}
in probability as $n \to \infty$ \citep{kleijn2012bernstein}.

\subsection{Variational Bayes for Misspecified Models with Massive Data}

To compute \( p(\boldsymbol{\theta}|\mathbf{y}) \), the dominant paradigm in Bayesian statistics is MCMC, including the Metropolis-Hastings algorithm \citep{metropolis1953equation, hastings1970monte} and the Gibbs sampler \citep{geman1984stochastic}, among others. While MCMC provides a flexible and widely applicable method to sample from the posterior distribution of \(\boldsymbol{\theta}\), it faces significant challenges, particularly when applied to massive datasets where the sample size \(n\) is extremely large.

One notable scenario in which the log-likelihood becomes computationally intractable is when dealing with massive data \citep{bardenet2017markov, quiroz2019speeding}. In such cases, the log-likelihood function is represented by the summation of numerous terms, making it prohibitively expensive to evaluate. Due to the high computational cost associated with likelihood evaluations for massive datasets, MCMC methods can require hours or even days to converge to a stationary posterior distribution.

Recently, to address the limitations of Bayesian inference based on MCMC for massive datasets, Variational Bayes (VB) methods \citep{jordan1999introduction}, have garnered significant attention in the research community. VB offers an alternative to MCMC by solving the following optimization problem:
\begin{equation*}
p^{VB}(\boldsymbol{\theta}|\mathbf{y}) = \arg \min_{q(\boldsymbol{\theta}) \in \Gamma} KL[q(\boldsymbol{\theta}), p(\boldsymbol{\theta}|\mathbf{y})],
\end{equation*}
where \(p^{VB}(\boldsymbol{\theta}|\mathbf{y})\) denotes the Variational Bayesian posterior, and the goal is to approximate the posterior \(p(\boldsymbol{\theta}|\mathbf{y})\) using a tractable variational family \(\Gamma\). A commonly used variational family is the mean-field (MF) family, which assumes the factorized form:
$q(\boldsymbol{\theta}) = \prod_{i=1}^P q_{\theta_i}(\theta_i).$
This simplification facilitates efficient optimization by reducing computational complexity.

Since VB formulates posterior inference as an optimization problem, it provides a computationally efficient alternative to MCMC, particularly in the context of massive datasets under Bayesian modeling \citep{attias2013inferring, bishop2006pattern}. Empirical studies have shown that VB-based algorithms can be orders of magnitude faster than MCMC \citep{blei2017variational, gunawan2017fast}. Beyond the classical mean-field VB, advances such as stochastic variational inference (SVI) \citep{hoffman2013stochastic} have further enabled scalable Bayesian analysis for large-scale datasets.

The asymptotic properties of the VB posterior have been a topic of significant interest in the literature. Define the second-order derivative of the log-likelihood as
$\mathbf{\bar{H}}_n(\boldsymbol{\theta}) := \frac{1}{n} \frac{\partial^2 \ln p(\mathbf{y}, \boldsymbol{\theta})}{\partial \boldsymbol{\theta} \partial \boldsymbol{\theta}^\prime}$, and take the expectation to obtain
$\mathbf{H}_n(\boldsymbol{\theta}) := E[\mathbf{\bar{H}}_n(\boldsymbol{\theta})]
$, then the normal approximation to the VB posterior can be expressed as:
\begin{equation*}
p^{VBN}(\boldsymbol{\theta}|\mathbf{y}) = (2\pi)^{-P/2} \left| -n\mathbf{H}_n^d \right|^{1/2} \exp\left(-\frac{1}{2} (\boldsymbol{\hat{\theta}}_n - \boldsymbol{\theta})^\prime (-n\mathbf{H}_n^d)(\boldsymbol{\hat{\theta}}_n - \boldsymbol{\theta})\right),
\end{equation*}
where \(\mathbf{H}_n^d\) is a diagonal matrix whose diagonal elements match those of \(\mathbf{H}_n\). As established by \citet{han2019statistical} and \citet{Zhang_2024}, the KL divergence between the VB posterior \(p^{VB}(\boldsymbol{\theta}|\mathbf{y})\) and the normal approximation \(p^{VBN}(\boldsymbol{\theta}|\mathbf{y})\) converges to \(0\) in probability as \(n \to \infty\). \citet{wang2019variational} proved that the total variation between the  VB posterior and $p^{VBN}(\boldsymbol{\theta}|\mathbf{y})$ converges to \(0\) in probability as \(n \to \infty\).

\section{Risk of Predictive Distributions on Misspecified Models based on Variational Bayes} \label{sec:risk}

In the literature, assessing the utility of a misspecified statistical model is typically achieved by examining its predictive performance \citep{bernardo1979expected}. Given a set of future observations \(\mathbf{y}_f\), the predictive distribution is denoted by \(p_f(\mathbf{y}_f|\mathbf{y})\). A commonly used approach for quantifying the predictive performance of a misspecified model is to compute the KL divergence between the true data-generating process \(g(\mathbf{y}_f)\) and the predictive distribution \(p_f(\mathbf{y}_f|\mathbf{y})\), scaled by a factor of 2. This measure is expressed as:
\[
2 \times KL\left[ g\left( \mathbf{y}_f \right), p_f\left( \mathbf{y}_f|\mathbf{y} \right) \right] = 2 E_{\mathbf{y}_f}\left[ \ln \frac{g\left( \mathbf{y}_f \right)}{p\left( \mathbf{y}_f|\mathbf{y} \right)} \right],
\]
which can be rewritten as
$2 \int \left[ \ln \frac{g\left( \mathbf{y}_f \right)}{p\left( \mathbf{y}_f|\mathbf{y} \right)} \right] g\left( \mathbf{y}_f \right) \, d\mathbf{y}_f.$
Building on this KL divergence, statistical decision theory allows the specification of a loss function associated with a decision \(d\) as:
\[
\mathcal{L}(\mathbf{y}, d) = 2 \times KL\left[ g\left( \mathbf{y}_f \right), p\left( \mathbf{y}_f|\mathbf{y}, d \right) \right],
\]
where \(p(\mathbf{y}_f|\mathbf{y}, d)\) represents the predictive density based on decision \(d\). The corresponding risk function is then defined as \citep{good1952rational}:
\[
\text{Risk}(d) = E_{\mathbf{y}}\left[ \mathcal{L}(\mathbf{y}, d) \right] = \int \mathcal{L}(\mathbf{y}, d) g(\mathbf{y}) \, d\mathbf{y}.
\]
In the context of VB, two types of predictive distributions can be derived for prediction: the variational plug-in predictive distribution and the variational posterior predictive distribution. These two distributions correspond to different statistical decisions, resulting in two distinct risk functions. In the subsequent subsection, we evaluate these two risk functions and derive estimators for them. To facilitate this analysis, we first establish the necessary notations and outline mild regularity conditions.

Let $%
\mathbf{y}:=(y_{1},\ldots ,y_{n})$ and $l_{t}\left( \mathbf{y}_{t},{%
\mbox{\boldmath${\theta}$}}\right) =\ln p(\mathbf{y}_{t}|{%
\mbox{\boldmath${\theta}$}})$ be the conditional log-likelihood for the $%
t^{th}$ observation for any $1\leq t\leq n$. For simplicity, we suppress $%
l_{t}\left( \mathbf{y}_{t},{\mbox{\boldmath${\theta}$}}\right) $\ as $%
l_{t}\left( {\mbox{\boldmath${\theta}$}}\right) $ so that the log-likelihood
function $\ln p(\mathbf{y|}{\mbox{\boldmath${\theta}$})}$ is $%
\sum_{t=1}^{n}l_{t}\left( {\mbox{\boldmath${\theta}$}}\right) $.
And define $\bigtriangledown ^{j}l_{t}\left( {\mbox{\boldmath${\theta}$}}%
\right) $\ to be the $j^{th}$\ derivative of $l_{t}\left( {%
\mbox{\boldmath${\theta}$}}\right) $ and $\bigtriangledown ^{j}l_{t}\left( {%
\mbox{\boldmath${\theta}$}}\right) =l_{t}\left( {\mbox{\boldmath${\theta}$}}%
\right) $\ when $j=0$. 
We suppress the superscript
when $j=1$, and 
\begin{eqnarray*}
&&\mathbf{s}(\mathbf{y},{\mbox{\boldmath${\theta}$}}):=\frac{\partial \ln p(%
\mathbf{y}|{\mbox{\boldmath${\theta}$}})}{\partial {\mbox{\boldmath${%
\theta}$}}}=\sum_{t=1}^{n}\bigtriangledown l_{t}\left( {\mbox{\boldmath${%
\theta}$}}\right) ,\;\mathbf{h}(\mathbf{y},{\mbox{\boldmath${\theta}$}}):=%
\frac{\partial ^{2}\ln p(\mathbf{y}|{\mbox{\boldmath${\theta}$}})}{\partial {%
\mbox{\boldmath${\theta}$}}\partial {\mbox{\boldmath${\theta}$}}^{\prime }}%
=\sum_{t=1}^{n}\bigtriangledown ^{2}l_{t}\left( {\mbox{\boldmath${\theta}$}}%
\right) , \\
&&\mathbf{s}_{t}({\mbox{\boldmath${\theta}$}}):=\bigtriangledown l_{t}\left( 
{\mbox{\boldmath${\theta}$}}\right) ,\;\mathbf{h}_{t}({\mbox{\boldmath${%
\theta}$}}):=\bigtriangledown ^{2}l_{t}\left( {\mbox{\boldmath${\theta}$}}%
\right) , \\
&&\mathbf{B}_{n}\left( {\mbox{\boldmath${\theta}$}}\right) :=Var\left[ \frac{%
1}{\sqrt{n}}\sum_{t=1}^{n}\bigtriangledown l_{t}\left( {\mbox{\boldmath${%
\theta}$}}\right) \right] ,\mathbf{\bar{H}}_{n}({\mbox{\boldmath${\theta}$}}%
):=\frac{1}{n}\sum_{t=1}^{n}\mathbf{h}_{t}({\mbox{\boldmath${\theta}$}}), \\
&&\mathbf{\bar{J}}_{n}({\mbox{\boldmath${\theta}$}}):=\frac{1}{n}%
\sum_{t=1}^{n}\left[ \mathbf{s}_{t}({\mbox{\boldmath${\theta}$}})-\mathbf{%
\bar{s}}_{t}({\mbox{\boldmath${\theta}$}})\right] \left[ \mathbf{s}_{t}({%
\mbox{\boldmath${\theta}$}})-\mathbf{\bar{s}}_{t}({\mbox{\boldmath${\theta}$}%
})\right] ^{\prime },\mathbf{\bar{s}}_{t}({\mbox{\boldmath${\theta}$}})=%
\frac{1}{n}\sum_{t=1}^{n}\mathbf{s}_{t}({\mbox{\boldmath${\theta}$}}), \\
&&\mathcal{L}_{n}({\mbox{\boldmath${\theta}$}}):=\ln p({\mbox{%
\boldmath${				\theta}$}}|\mathbf{y}),\mathcal{L}_{n}^{(j)}({%
\mbox{\boldmath${\theta}$}}):=\partial ^{j}\ln p({\mbox{\boldmath${\theta}$}}%
|\mathbf{y})/\partial {\mbox{\boldmath${\theta}$}}^{j}, \\
&&\mathbf{H}_{n}({\mbox{\boldmath${\theta}$}}):=\int \mathbf{\bar{H}}_{n}({%
\mbox{\boldmath${\theta}$}})g\left( \mathbf{y}\right) d\mathbf{y},\;\mathbf{J%
}_{n}({\mbox{\boldmath${\theta}$}})=\int \mathbf{\bar{J}}_{n}({%
\mbox{\boldmath${\theta}$}})g\left( \mathbf{y}\right) d\mathbf{y}.
\end{eqnarray*}
Then, the following regularity conditions can be imposed  


\textbf{Assumption 1}: ${\mbox{\boldmath${\Theta}$}}\subset R^{P}$ is
compact.

\textbf{Assumption 2}: The data $\mathbf{y}=(y_{1},\ldots ,y_{n})$ is
independent and identically distributed. 

\textbf{Assumption 3: }For all $t$, $l_{t}\left( {\mbox{\boldmath		%
\ensuremath{{\theta}}}}\right) $\ is eight-times differentiable on $%
\mbox{\boldmath\ensuremath{{\Theta}}}$ almost surely.

\textbf{Assumption 4}: For $j=0,1,2,3$, for any ${\mbox{\boldmath${\theta}$},%
\mbox{\boldmath${\theta}$}^{\prime }\in \mbox{\boldmath${\Theta}$}}$, $%
\left\Vert \bigtriangledown ^{j}l_{t}\left( {\mbox{\boldmath${\theta}$}}%
\right) -\bigtriangledown ^{j}l_{t}\left( {\mbox{\boldmath${\theta}$}}%
^{\prime }\right) \right\Vert \leq c_{t}^{j}\left( \mathbf{y}_{t}\right)
\left\Vert {\mbox{\boldmath${\theta}$}}-{\mbox{\boldmath${\theta}$}}^{\prime
}\right\Vert $ in probability, where $c_{t}^{j}\left( \mathbf{y}_{t}\right) $
is a positive random variable with $\sup_{t}E\left\Vert c_{t}^{j}\left( 
\mathbf{y}_{t}\right) \right\Vert <\infty $ and $\frac{1}{n}%
\sum_{t=1}^{n}\left( c_{t}^{j}\left( \mathbf{y}_{t}\right) -E\left(
c_{t}^{j}\left( \mathbf{y}_{t}\right) \right) \right) \overset{p}{%
\rightarrow }0$.

\textbf{Assumption 5}: For $j=0,1,\ldots ,4$, there exists a function $M_{t}(%
\mathbf{y}_{t})$ such that for all ${\mbox{\boldmath${\theta}$}}\in {\ %
\mbox{\boldmath${\Theta}$}}$, $\bigtriangledown ^{j}l_{t}\left( {\ %
\mbox{\boldmath${\theta}$}}\right) $ exists, $\sup_{{\mbox{\boldmath${				%
\theta}$}\in \mbox{\boldmath${\Theta}$}}}\left\Vert \bigtriangledown
^{j}l_{t}\left( {\mbox{\boldmath${\theta}$}}\right) \right\Vert \leqslant
M_{t}(\mathbf{y}_{t})$, and $\sup_{t}E\left\Vert M_{t}(\mathbf{y}%
_{t})\right\Vert ^{r+\delta }\leq M<\infty $ for some $\delta >0$ and $r>2$.

\textbf{Assumption 6}: Let $\mbox{\boldmath${\theta}$}_{n}^{p}$ be the
pseudo-true value that minimizes the KL loss between the DGP and the
candidate model 
\begin{equation*}
\mbox{\boldmath${\theta}$}_{n}^{p}=\arg \min_{{\mbox{\boldmath${\theta}$}}%
\in {\mbox{\boldmath${\Theta}$}}}\frac{1}{n}\int \ln \frac{g(\mathbf{y})}{p(%
\mathbf{y}|{\mbox{\boldmath${\theta}$}})}g(\mathbf{y})d\mathbf{y,}
\end{equation*}%
where $\left\{ \mbox{\boldmath${\theta}$}_{n}^{p}\right\} $ is the sequence
of minimizers interior to ${\mbox{\boldmath${\Theta}$}}$ uniformly in $n$.
For all $\varepsilon >0$,%
\begin{equation}
\lim_{n\rightarrow \infty }\sup \sup_{{\Theta \backslash }N\left( %
\mbox{\boldmath${\theta}$}_{n}^{p},\varepsilon \right) }\frac{1}{n}%
\sum_{t=1}^{n}\left\{ E\left[ l_{t}\left( {\mbox{\boldmath${\theta}$}}%
\right) \right] -E\left[ l_{t}\left( \mbox{\boldmath${\theta}$}%
_{n}^{p}\right) \right] \right\} <0,  \label{iden}
\end{equation}%
where $N\left( \mbox{\boldmath${\theta}$}_{n}^{p},\varepsilon \right) $ is
the open ball of radius $\varepsilon $ around $\mbox{\boldmath${\theta}$}%
_{n}^{p}$.

\textbf{Assumption 7}: The sequence $\left\{ \mathbf{H}_{n}\left( %
\mbox{\boldmath${\theta}$}_{n}^{p}\right) \right\} $ is negative definite
and the sequence $\left\{ \mathbf{B}_{n}\left( \mbox{\boldmath${\theta}$}%
_{n}^{p}\right) \right\} $ is positive definite, both uniformly in $n$.

\textbf{Assumption 8}: The prior density $p\left( \mbox{\boldmath	%
\ensuremath{{\theta}}}\right) $ is thrice continuously differentiable
and $0<p\left( {\mbox{\boldmath\ensuremath{{\theta}}}}_{n}^{0}\right)
<\infty $\ uniformly in $n$. Moreover, there exists an $n^{\ast }$ such
that, for any $n>n^{\ast }$, the posterior distribution $p\left( {\ %
\mbox{\boldmath\ensuremath{{\theta}}}}|\mathbf{y}\right) $ is proper and $%
\int \left\Vert {\mbox{\boldmath\ensuremath{{\theta}}}}\right\Vert
^{2}p\left( {\mbox{\boldmath\ensuremath{{\theta}}}}|\mathbf{y}\right) d{\ %
\mbox{\boldmath\ensuremath{{\theta}}}}<\infty $.

Assumptions 1-7 are well-known primitive conditions for developing the QML
theory, namely consistency and asymptotic normality, for independent and
identically distributed data; see, for example, \citet{gallant1988unified} and
\citet{WOOLDRIDGE19942639}. Assumption 8 is the regular condition for prior density,
see, for example, \citet{li2020deviance}. Assumptions 1-8 are sufficient for the assumptions used by
\citet{Zhang_2024} to develope the asymptotic properties of VB posterior distribution without latent variables.  

\subsection{Risk of VB Plug-in Predictive Distribution}

Under VB inference, for a potentially misspecified model, let $\overline{\boldsymbol{\theta}}^{VB}$ denote the VB estimator of the parameter $\boldsymbol{\theta}$ which corresponds to the posterior mean of the variational posterior distribution
$p^{VB}(\boldsymbol{\theta}|\mathbf{y})$. In cases where the posterior mean does not have a closed-form analytical solution, it can generally be approximated consistently using the sample mean
$\boldsymbol{\bar\theta}^{VB}=\frac {1}{J}\sum_{j=1}^J\boldsymbol{\theta_{VB}}^{(j)}
$, where $\boldsymbol{\theta_{VB}}^{(j)}$, $j=1,2,\cdots J$ are generated from $p^{VB}(\boldsymbol{\theta}|\mathbf{y})$.

Building on the literature regarding the development of popular information criteria such as AIC, TIC, and DIC, we assume the existence of future replicated data \(\mathbf{y}_{rep}\), which shares the same DGP as the observed data \(\mathbf{y}\) and independent of \(\mathbf{y}\). For more details on the concept of \(\mathbf{y}_{rep}\), one may refer to the comprehensive discussion in the seminal textbook on model selection by \citet{anderson2004model} and the references therein. For the future data \(\mathbf{y}_{rep}\), the VB plug-in predictive distribution can be expressed as $p\left(\mathbf{y}_{rep}| \overline{\boldsymbol{\theta}}^{VB}\right)$, where \(\overline{\boldsymbol{\theta}}^{VB}\) represents the VB estimator, typically the posterior mean of the variational posterior distribution. The predictive distribution provides a probabilistic framework for evaluating future observations based on the fitted model. Correspondingly, the loss function associated with the statistical decision, denoted as \(d_1\), can be specified as follows:
\begin{equation*}
\mathcal{L}(\mathbf{y},d_1)=2\times KL\left[ g\left( \mathbf{y}%
_{rep}\right) ,p\left(\mathbf{y}_{rep}| \overline{\boldsymbol{\theta}}%
^{VB}\right) \right].
\end{equation*}%
In this context, the risk function can be expressed as:
\begin{eqnarray*}
&&Risk(d_1)=E_{\mathbf{y}}\left[ \mathcal{L}( \mathbf{y}%
,d_1)\right]=2\times E_{\mathbf{y}}E_{\mathbf{y}_{rep}}\left[ \ln 
\frac{g\left( \mathbf{y}_{rep}\right) }{p\left(\mathbf{y}_{rep}| \overline{%
\boldsymbol{\theta}}^{VB}\right) }\right]\\
&&=E_{\mathbf{y}}E_{\mathbf{y}_{rep}}\left[ 2\ln g\left( 
\mathbf{y}_{rep}\right) \right] +E_{\mathbf{y}}E_{\mathbf{y}_{rep}}\left[
-2\ln p\left(\mathbf{y}_{rep}| \overline{\boldsymbol{\theta}}^{VB}\right) %
\right].
\end{eqnarray*}
Since $E_{\mathbf{y}}E_{\mathbf{y}_{rep}}%
\left[ 2\ln g\left( \mathbf{y}_{rep}\right) \right] $ is the same across all statistical decisions, the risk function can be expressed as:
\begin{equation*}
Risk(d_1)=C+E_{\mathbf{y}}E_{\mathbf{y}_{rep}}\left[ -2\ln
p\left(\mathbf{y}_{rep}| \overline{\boldsymbol{\theta}}^{VB}\right) \right]
\end{equation*}
where $C=E_{\mathbf{y}}E_{\mathbf{y}_{rep}}\left[ 2\ln g\left( 
\mathbf{y}_{rep}\right) \right]$. 

It is evident that a smaller value of \( \text{Risk}(d_1) \) indicates better performance of the predictive distribution \( p\left(\mathbf{y}_{rep}| \overline{\boldsymbol{\theta}}^{VB} \right) \) in predicting the replicate data \(\mathbf{y}_{rep}\). However, in general, this risk function does not have a closed-form analytical expression. Therefore, evaluating the risk function is essential for assessing the predictive behavior of the model.

To address this challenge, we derive an asymptotic expansion of the risk function, as presented in the following theorem. This derivation provides a practical approach to approximate the risk function in large-sample scenarios, offering insights into the predictive performance of the VB-based approach.

\begin{theorem}
\label{riskvtic} Under Assumptions 1-8, it can be shown that 
\begin{eqnarray*}
&&E_{\mathbf{y}}E_{\mathbf{y}_{rep}}\left( -2\ln p\left(\mathbf{y}_{rep}| 
\overline{\boldsymbol{\theta}}^{VB}\right) \right)
=E_{\mathbf{y}}\left( -2\ln p\left( \mathbf{y}|\widehat{{\mbox{\boldmath${\theta}$}}}_{n}\left( \mathbf{y}\right) \right) \right) -2%
\mathbf{tr}\left[\mathbf{B}_{n}\mathbf{H}_{n}^{-1}\right] +o\left( 1\right).
\end{eqnarray*}
with 
$
\mathbf{B}_{n} = \mathbf{B}_{n}\left(\boldsymbol{\theta}_n^p\right), \mathbf{H}_{n} = \mathbf{H}_{n}\left(\boldsymbol{\theta}_n^p\right)
$, where $\widehat{{\mbox{\boldmath${\theta}$}}}_{n}\left( \mathbf{y}\right)$ 
is the MLE estimator of $\boldsymbol{\theta}$.
\end{theorem}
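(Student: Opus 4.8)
The plan is to reduce the left-hand side to the classical Takeuchi-type (TIC) calculation for the quasi-MLE, exploiting the fact that the variational posterior mean is first-order equivalent to $\widehat{\boldsymbol{\theta}}_n$. Write $\boldsymbol{\theta}^{\ast}:=\boldsymbol{\theta}_n^p$, $\mathbf{s}(\mathbf{y},\boldsymbol{\theta})=\sum_{t=1}^n\nabla l_t(\boldsymbol{\theta})$ and $\mathbf{h}(\mathbf{y},\boldsymbol{\theta})=\sum_{t=1}^n\nabla^2 l_t(\boldsymbol{\theta})$, and recall from the i.i.d. structure and the definition of $\boldsymbol{\theta}^{\ast}$ that $E_{\mathbf{y}}[\mathbf{s}(\mathbf{y},\boldsymbol{\theta}^{\ast})]=\mathbf{0}$, $\mathrm{Var}_{\mathbf{y}}[\mathbf{s}(\mathbf{y},\boldsymbol{\theta}^{\ast})]=n\mathbf{B}_n$, and $E_{\mathbf{y}}[\mathbf{h}(\mathbf{y},\boldsymbol{\theta}^{\ast})]=n\mathbf{H}_n$. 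Under Assumptions 1--7 the quasi-MLE admits the standard linearization $\widehat{\boldsymbol{\theta}}_n-\boldsymbol{\theta}^{\ast}=(-n\mathbf{H}_n)^{-1}\mathbf{s}(\mathbf{y},\boldsymbol{\theta}^{\ast})+o_p(n^{-1/2})$ together with $\sqrt{n}(\widehat{\boldsymbol{\theta}}_n-\boldsymbol{\theta}^{\ast})\overset{d}{\rightarrow}N(\mathbf{0},(-\mathbf{H}_n)^{-1}\mathbf{B}_n(-\mathbf{H}_n)^{-1})$, with $n\|\widehat{\boldsymbol{\theta}}_n-\boldsymbol{\theta}^{\ast}\|^2$ uniformly integrable (White 1982; \citet{gallant1988unified}). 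The first step is to transfer this to the VB mean: because $p^{VB}(\boldsymbol{\theta}|\mathbf{y})$ is asymptotically indistinguishable (in KL and in total variation) from the Gaussian $p^{VBN}$ centred at $\widehat{\boldsymbol{\theta}}_n$ (\citet{han2019statistical}, \citet{wang2019variational}, \citet{Zhang_2024}), and Assumption 8 supplies a uniform posterior second-moment bound, one obtains $\overline{\boldsymbol{\theta}}^{VB}-\widehat{\boldsymbol{\theta}}_n=o_p(n^{-1/2})$ with the difference also $L^2$-negligible after scaling by $\sqrt{n}$; hence $\overline{\boldsymbol{\theta}}^{VB}$ inherits the same linearization and the same sandwich limit as $\widehat{\boldsymbol{\theta}}_n$.

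\textbf{Future-data term.} Since $\mathbf{y}_{rep}$ is independent of $\mathbf{y}$ (and hence of $\overline{\boldsymbol{\theta}}^{VB}$), a third-order Taylor expansion of $-2\ln p(\mathbf{y}_{rep}|\overline{\boldsymbol{\theta}}^{VB})$ about $\boldsymbol{\theta}^{\ast}$ — with remainder controlled by Assumptions 3--5 and $\|\overline{\boldsymbol{\theta}}^{VB}-\boldsymbol{\theta}^{\ast}\|=O_p(n^{-1/2})$ — yields, after taking $E_{\mathbf{y}_{rep}}$ so that the linear term vanishes ($E_{\mathbf{y}_{rep}}[\mathbf{s}(\mathbf{y}_{rep},\boldsymbol{\theta}^{\ast})]=\mathbf{0}$),
\[
E_{\mathbf{y}_{rep}}\!\left[-2\ln p(\mathbf{y}_{rep}|\overline{\boldsymbol{\theta}}^{VB})\right]=E_{\mathbf{y}_{rep}}\!\left[-2\ln p(\mathbf{y}_{rep}|\boldsymbol{\theta}^{\ast})\right]-(\overline{\boldsymbol{\theta}}^{VB}-\boldsymbol{\theta}^{\ast})'(n\mathbf{H}_n)(\overline{\boldsymbol{\theta}}^{VB}-\boldsymbol{\theta}^{\ast})+o_p(1).
\]
Taking $E_{\mathbf{y}}$, replacing $\overline{\boldsymbol{\theta}}^{VB}$ by $\widehat{\boldsymbol{\theta}}_n$ at cost $o(1)$ by the first step, and invoking the sandwich limit with uniform integrability, the quadratic term $-E_{\mathbf{y}}[(\overline{\boldsymbol{\theta}}^{VB}-\boldsymbol{\theta}^{\ast})'(n\mathbf{H}_n)(\overline{\boldsymbol{\theta}}^{VB}-\boldsymbol{\theta}^{\ast})]$ converges to $\mathbf{tr}[\mathbf{B}_n(-\mathbf{H}_n)^{-1}]$. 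Using that $\mathbf{y}_{rep}$ and $\mathbf{y}$ share the DGP $g$, this gives $E_{\mathbf{y}}E_{\mathbf{y}_{rep}}[-2\ln p(\mathbf{y}_{rep}|\overline{\boldsymbol{\theta}}^{VB})]=E_{\mathbf{y}}[-2\ln p(\mathbf{y}|\boldsymbol{\theta}^{\ast})]+\mathbf{tr}[\mathbf{B}_n(-\mathbf{H}_n)^{-1}]+o(1)$.

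\textbf{Observed-data term and conclusion.} For $-2\ln p(\mathbf{y}|\widehat{\boldsymbol{\theta}}_n)$ one again expands about $\boldsymbol{\theta}^{\ast}$, but now $\mathbf{y}$ and $\widehat{\boldsymbol{\theta}}_n$ are dependent; replacing $\mathbf{h}(\mathbf{y},\boldsymbol{\theta}^{\ast})=n\bar{\mathbf{H}}_n(\boldsymbol{\theta}^{\ast})$ by $n\mathbf{H}_n$ inside the quadratic form (a law of large numbers, Assumption 4) and substituting the linearization collapses both the linear and quadratic contributions into multiples of $\mathbf{s}(\mathbf{y},\boldsymbol{\theta}^{\ast})'(-n\mathbf{H}_n)^{-1}\mathbf{s}(\mathbf{y},\boldsymbol{\theta}^{\ast})$, giving $-2\ln p(\mathbf{y}|\widehat{\boldsymbol{\theta}}_n)=-2\ln p(\mathbf{y}|\boldsymbol{\theta}^{\ast})-\mathbf{s}(\mathbf{y},\boldsymbol{\theta}^{\ast})'(-n\mathbf{H}_n)^{-1}\mathbf{s}(\mathbf{y},\boldsymbol{\theta}^{\ast})+o_p(1)$, hence $E_{\mathbf{y}}[-2\ln p(\mathbf{y}|\widehat{\boldsymbol{\theta}}_n)]=E_{\mathbf{y}}[-2\ln p(\mathbf{y}|\boldsymbol{\theta}^{\ast})]-\mathbf{tr}[\mathbf{B}_n(-\mathbf{H}_n)^{-1}]+o(1)$. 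Subtracting this from the future-data display, the common term $E_{\mathbf{y}}[-2\ln p(\mathbf{y}|\boldsymbol{\theta}^{\ast})]$ cancels and we obtain $E_{\mathbf{y}}E_{\mathbf{y}_{rep}}[-2\ln p(\mathbf{y}_{rep}|\overline{\boldsymbol{\theta}}^{VB})]=E_{\mathbf{y}}[-2\ln p(\mathbf{y}|\widehat{\boldsymbol{\theta}}_n)]+2\,\mathbf{tr}[\mathbf{B}_n(-\mathbf{H}_n)^{-1}]+o(1)$, which is the asserted identity since $2\,\mathbf{tr}[\mathbf{B}_n(-\mathbf{H}_n)^{-1}]=-2\,\mathbf{tr}[\mathbf{B}_n\mathbf{H}_n^{-1}]$.

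\textbf{Main obstacle.} The delicate point is the first step — promoting the known convergence of the VB \emph{distribution} to $p^{VBN}$ (in KL / total variation) into convergence of its \emph{mean} at rate $o_p(n^{-1/2})$ and, crucially, into $L^2$-negligibility of $\sqrt{n}(\overline{\boldsymbol{\theta}}^{VB}-\widehat{\boldsymbol{\theta}}_n)$, which is what legitimizes the substitution inside the quadratic forms and the passage to traces of limiting covariances; this is exactly where the properness and second-moment conditions of Assumption 8 are used. A secondary, routine issue is upgrading the $o_p(1)$ Taylor remainders to $o(1)$ under expectation, which rests on the uniform $(r+\delta)$-moment bounds on the derivatives in Assumption 5.
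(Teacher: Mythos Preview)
Your proposal is correct and follows the standard TIC-style argument, as does the paper, but the two are organized differently. The paper uses the Akaike--Konishi telescoping decomposition
\[
E_{\mathbf{y}}E_{\mathbf{y}_{rep}}\bigl[-2\ln p(\mathbf{y}_{rep}\mid\overline{\boldsymbol\theta}^{VB}(\mathbf{y}))\bigr]=T_1+T_2+T_3,
\]
passing through the intermediate point $\overline{\boldsymbol\theta}^{VB}(\mathbf{y}_{rep})$ (so that $T_1$ becomes the in-sample fit by exchangeability of $\mathbf{y}$ and $\mathbf{y}_{rep}$), then shows $T_2=T_3+o(1)$ and evaluates $T_3=\mathbf{tr}[\mathbf{B}_n(-\mathbf{H}_n)^{-1}]+o(1)$ by expanding about $\boldsymbol\theta_n^p$. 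You instead expand both the future-data risk and the observed-data fit directly about $\boldsymbol\theta_n^p$ and subtract; this is slightly more direct and avoids introducing $\overline{\boldsymbol\theta}^{VB}(\mathbf{y}_{rep})$, at the cost of handling the dependent linear-plus-quadratic term on the observed-data side explicitly. Either route yields the same two trace contributions.

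On your flagged ``main obstacle'': the paper does not derive the $L^2$-negligibility of $\sqrt{n}(\overline{\boldsymbol\theta}^{VB}-\widehat{\boldsymbol\theta}_n)$ from total-variation or KL convergence of $p^{VB}$ to $p^{VBN}$ plus Assumption~8, as you sketch. It instead invokes the explicit deterministic rate $\|\sqrt{n}(\widehat{\boldsymbol\theta}_n-\overline{\boldsymbol\theta}^{VB})\|\le C M^{3/2}(\log n)^{d/2+3/2}n^{-1/4}$ from \citet{han2019statistical} (hence $\overline{\boldsymbol\theta}^{VB}=\widehat{\boldsymbol\theta}_n+O_p(n^{-3/4})$ as in \citet{Zhang_2024}), which immediately gives $E\|\sqrt{n}(\widehat{\boldsymbol\theta}_n-\overline{\boldsymbol\theta}^{VB})\|^4=O(n^{-1})$ and makes the substitution inside quadratic forms rigorous via Cauchy--Schwarz. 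Your route through Assumption~8 is plausible but would need an additional uniform-integrability argument for the \emph{variational} posterior (Assumption~8 concerns only the exact posterior), so the paper's direct citation is the cleaner way to close this step.
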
 

\begin{remark}
	Under Assumptions 1-8, it can be shown that when the model is correctly specified
	\begin{eqnarray*}
		&&E_{\mathbf{y}}E_{\mathbf{y}_{rep}}\left( -2\ln p\left(\mathbf{y}_{rep}| 
		\overline{\boldsymbol{\theta}}^{VB}\right) \right)
		=E_{\mathbf{y}}\left( -2\ln p\left( \mathbf{y}|\widehat{{\mbox{\boldmath${\theta}$}}}_{n}\left( \mathbf{y}\right) \right) \right) -2%
		\mathbf{tr}\left[\mathbf{B}_{n}\mathbf{H}_{n}^{-1}\right] +o\left( 1\right)\\
		&=&E_{\mathbf{y}}\left( -2\ln p\left( \mathbf{y}|\widehat{{\mbox{\boldmath${\theta}$}}}_{n}\left( \mathbf{y}\right) \right) \right) +2%
		\mathbf{tr}\left[\mathbf{H}_{n}\mathbf{H}_{n}^{-1}\right] +o\left( 1\right)\\
		&=&E_{\mathbf{y}}\left( -2\ln p\left( \mathbf{y}|\widehat{{\mbox{\boldmath${\theta}$}}}_{n}\left( \mathbf{y}\right) \right) \right) +2\mathbf{P} +o\left( 1\right).
	\end{eqnarray*}
\end{remark}

\subsection{Risk of VB Posterior Predictive Distribution}

Under the Bayesian framework, the VB posterior predictive distribution for the replicated data \(\mathbf{y}_{rep}\), corresponding to \(p^{VB}(\boldsymbol{\theta}|\mathbf{y})\), is defined as:
\begin{equation}
p^{VB}(\mathbf{y}_{rep}|\mathbf{y})=\int p(\mathbf{y}_{rep}|{\mbox{\boldmath${%
\theta}$}},\mathbf{y})p^{VB}({\mbox{\boldmath${\theta }$}}|\mathbf{y})d{%
\mbox{\boldmath${\theta}$}}.  \label{pred_VB}
\end{equation}%
As described in Section 3.1, the KL divergence between the true data-generating process \(g\left( \mathbf{y}_{rep} \right)\) and the VB posterior predictive distribution \(p^{VB}\left( \mathbf{y}_{rep}|\mathbf{y}\right)\), multiplied by 2, is given by:
\begin{eqnarray*}
&&2\times KL\left[ g\left( \mathbf{y}_{rep}\right) ,p^{VB}\left( \mathbf{y}%
_{rep}|\mathbf{y}\right) \right] = 2E_{\mathbf{y}_{rep}}\left[ \ln \frac{%
g\left( \mathbf{y}_{rep}\right) }{p^{VB}\left( \mathbf{y}_{rep}|\mathbf{y}%
\right) }\right] \\
&=&2\int \left[ \ln \frac{g\left( \mathbf{y}_{rep}\right) }{p^{VB}\left( 
\mathbf{y}_{rep}|\mathbf{y}\right) }\right] g\left( \mathbf{y}%
_{rep}\right) \mbox{d}\mathbf{y}_{rep}
\end{eqnarray*}%
This divergence is used to quantify the predictive performance of the VB posterior predictive distribution. Accordingly, the loss function associated with the statistical decision \(d_2\), which involves using the VB posterior predictive distribution for prediction, is defined as:
\[
\mathcal{L}(\mathbf{y}, d_2) = 2 \times KL\left[ g\left( \mathbf{y}_{rep} \right), p^{VB}\left( \mathbf{y}_{rep}|\mathbf{y} \right) \right].
\]
The corresponding risk function for the decision \(d_2\) can be expressed as:
\begin{eqnarray*}
&&Risk(d_2)=E_{\mathbf{y}}\left[ \mathcal{L}(\mathbf{y}%
,d_2)\right] =2\times E_{\mathbf{y}}E_{\mathbf{y}_{rep}}\left[ \ln 
\frac{g\left( \mathbf{y}_{rep}\right) }{p^{VB}\left( \mathbf{y}_{rep}|%
\mathbf{y}\right) }\right]  \\
&=& E_{\mathbf{y}}E_{\mathbf{y}_{rep}}\left[ 2\ln g\left( 
\mathbf{y}_{rep}\right) \right] +E_{\mathbf{y}}E_{\mathbf{y}_{rep}}\left[
-2\ln p^{VB}\left( \mathbf{y}_{rep}|\mathbf{y}\right) \right], 
\end{eqnarray*}
which can be further rewritten as:
\[
\text{Risk}(d_2) = C + E_{\mathbf{y}}E_{\mathbf{y}_{rep}}\left[ -2 \ln p^{VB}\left( \mathbf{y}_{rep}|\mathbf{y} \right) \right],
\]
where \(C = E_{\mathbf{y}}E_{\mathbf{y}_{rep}}\left[ 2 \ln g\left( \mathbf{y}_{rep} \right) \right]\) is a constant that depends only on the DGP.

From this expression, it is evident that a smaller \(\text{Risk}(d_2)\) indicates better predictive performance of \(p^{VB}\left( \mathbf{y}_{rep}|\mathbf{y} \right)\) in approximating \(g\left( \mathbf{y}_{rep} \right)\). In the following, we derive an asymptotic expansion of this risk function via the following theorem.

\begin{theorem}
\label{riskvpic} Under Assumptions 1-8, it can be shown
\begin{eqnarray*}
&&E_{\mathbf{y}}E_{\mathbf{y}_{rep}}\left( -2\ln p^{VB}\left( \mathbf{y}%
_{rep}|\mathbf{y}\right) \right) \\
&=&E_{\mathbf{y}}\left( -2\ln p\left( \mathbf{y}|\widehat{{%
\mbox{\boldmath${\theta}$}}}_{n}\left( \mathbf{y}\right) \right) \right)
+\ln \left( \left\vert -\mathbf{H}_{n}\left( -\mathbf{H}_{n}^{d}\right)
^{-1}+\mathbf{I}_{n}\right\vert \right) +\mathbf{tr}\left[ \mathbf{B}%
_{n}\left( -\mathbf{H}_{n}\right) ^{-1}\right] \\
&&-\mathbf{tr}\left[ \left( -\mathbf{H}_{n}+\left( -\mathbf{H}%
_{n}^{d}\right) \right) ^{-1}\left( \mathbf{B}_{n}+\left( -\mathbf{H}%
_{n}^{d}\right) \mathbf{C}_{n}\left( -\mathbf{H}_{n}^{d}\right) \right) %
\right] +\mathbf{tr}\left[ \left( -\mathbf{H}_{n}^{d}\right) \mathbf{C}_{n}%
\right] +o\left( 1\right)
\end{eqnarray*}%
where $\mathbf{C}_{n}=\mathbf{H}_{n}^{-1}\mathbf{B}_{n}\mathbf{H}_{n}^{-1}$, 
$\mathbf{H}_{n}^{d}$ is a diagonal matrix with the same diagonal elements as
in $\mathbf{H}_{n}$.
\end{theorem}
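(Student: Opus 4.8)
The plan is to reduce $E_{\mathbf{y}}E_{\mathbf{y}_{rep}}\big(-2\ln p^{VB}(\mathbf{y}_{rep}|\mathbf{y})\big)$ to a Gaussian integral by exploiting the normal approximation $p^{VBN}$ to the variational posterior. Write $p^{VB}(\mathbf{y}_{rep}|\mathbf{y})=\int p(\mathbf{y}_{rep}|\boldsymbol{\theta})\,p^{VB}(\boldsymbol{\theta}|\mathbf{y})\,d\boldsymbol{\theta}$. Since the total variation distance between $p^{VB}(\boldsymbol{\theta}|\mathbf{y})$ and $p^{VBN}(\boldsymbol{\theta}|\mathbf{y})$ tends to $0$ in probability and $\overline{\boldsymbol{\theta}}^{VB}=\widehat{\boldsymbol{\theta}}_n+o_p(n^{-1/2})$, one may replace $p^{VB}(\boldsymbol{\theta}|\mathbf{y})$ by $N(\overline{\boldsymbol{\theta}}^{VB},\Sigma_q)$ with $\Sigma_q=(-n\mathbf{H}_n^d)^{-1}$, at the cost of an $o_p(1)$ error in $-2\ln p^{VB}(\mathbf{y}_{rep}|\mathbf{y})$; Assumptions~1 and~5 are used to control the tails of the unbounded integrand $p(\mathbf{y}_{rep}|\boldsymbol{\theta})$ in this replacement. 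Taylor expanding $\ln p(\mathbf{y}_{rep}|\boldsymbol{\theta})$ to second order about $\overline{\boldsymbol{\theta}}^{VB}$ (cubic and higher remainders controlled by Assumptions~3--5) and evaluating the Gaussian integral in closed form yields
\[
-2\ln p^{VB}(\mathbf{y}_{rep}|\mathbf{y})=-2\ln p(\mathbf{y}_{rep}|\overline{\boldsymbol{\theta}}^{VB})+\ln\big|\mathbf{I}_n-\Sigma_q\,\mathbf{h}(\mathbf{y}_{rep},\overline{\boldsymbol{\theta}}^{VB})\big|-\mathbf{g}'\big(\Sigma_q^{-1}-\mathbf{h}(\mathbf{y}_{rep},\overline{\boldsymbol{\theta}}^{VB})\big)^{-1}\mathbf{g}+o_p(1),
\]
where $\mathbf{g}=\mathbf{s}(\mathbf{y}_{rep},\overline{\boldsymbol{\theta}}^{VB})$.

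Then I would take $E_{\mathbf{y}}E_{\mathbf{y}_{rep}}$ term by term, using the independence of $\mathbf{y}$ and $\mathbf{y}_{rep}$. The first term is exactly the left-hand side of Theorem~\ref{riskvtic}, hence equals $E_{\mathbf{y}}(-2\ln p(\mathbf{y}|\widehat{\boldsymbol{\theta}}_n))-2\,\mathbf{tr}[\mathbf{B}_n\mathbf{H}_n^{-1}]+o(1)$. For the log-determinant term, the uniform law of large numbers gives $\mathbf{h}(\mathbf{y}_{rep},\overline{\boldsymbol{\theta}}^{VB})/n\to\mathbf{H}_n$ and $\overline{\boldsymbol{\theta}}^{VB}\to\boldsymbol{\theta}_n^p$, so $\Sigma_q\,\mathbf{h}(\mathbf{y}_{rep},\overline{\boldsymbol{\theta}}^{VB})\to(-\mathbf{H}_n^d)^{-1}\mathbf{H}_n$ and the term converges to $\ln\big(\big|-\mathbf{H}_n(-\mathbf{H}_n^d)^{-1}+\mathbf{I}_n\big|\big)$. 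For the quadratic term, further expand $\mathbf{s}(\mathbf{y}_{rep},\overline{\boldsymbol{\theta}}^{VB})=\mathbf{s}(\mathbf{y}_{rep},\boldsymbol{\theta}_n^p)+\mathbf{h}(\mathbf{y}_{rep},\boldsymbol{\theta}_n^p)(\overline{\boldsymbol{\theta}}^{VB}-\boldsymbol{\theta}_n^p)+\cdots$ and use $E_{\mathbf{y}_{rep}}[\mathbf{s}(\mathbf{y}_{rep},\boldsymbol{\theta}_n^p)]=0$, $E_{\mathbf{y}_{rep}}[\mathbf{s}(\mathbf{y}_{rep},\boldsymbol{\theta}_n^p)\mathbf{s}(\mathbf{y}_{rep},\boldsymbol{\theta}_n^p)']=n\mathbf{B}_n$, $E_{\mathbf{y}_{rep}}[\mathbf{h}(\mathbf{y}_{rep},\boldsymbol{\theta}_n^p)]=n\mathbf{H}_n$, the White sandwich limit $nE_{\mathbf{y}}[(\overline{\boldsymbol{\theta}}^{VB}-\boldsymbol{\theta}_n^p)(\overline{\boldsymbol{\theta}}^{VB}-\boldsymbol{\theta}_n^p)']\to\mathbf{C}_n$, and $\mathbf{H}_n\mathbf{C}_n\mathbf{H}_n=\mathbf{B}_n$; the cross-terms are negligible because only the $O(n)$ diagonal ($t=t'$) cross-moments of $\mathbf{s}_t$ and $\mathbf{h}_t$ survive the independence, and these are of smaller order after the $1/n$ scaling of $(\Sigma_q^{-1}-\mathbf{h})^{-1}$. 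One then finds the quadratic term contributes $-2\,\mathbf{tr}\big[(-\mathbf{H}_n+(-\mathbf{H}_n^d))^{-1}\mathbf{B}_n\big]+o(1)$ — two equal copies of $\mathbf{tr}[(-\mathbf{H}_n+(-\mathbf{H}_n^d))^{-1}\mathbf{B}_n]$, one from the $\mathbf{s}\mathbf{s}'$ variance of the replicate score and one from propagating the variance of $\overline{\boldsymbol{\theta}}^{VB}$ through $\mathbf{h}(\mathbf{y}_{rep},\boldsymbol{\theta}_n^p)$.

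Collecting everything, the left side equals $E_{\mathbf{y}}(-2\ln p(\mathbf{y}|\widehat{\boldsymbol{\theta}}_n))+2\,\mathbf{tr}[\mathbf{B}_n(-\mathbf{H}_n)^{-1}]+\ln\big(\big|-\mathbf{H}_n(-\mathbf{H}_n^d)^{-1}+\mathbf{I}_n\big|\big)-2\,\mathbf{tr}\big[(-\mathbf{H}_n+(-\mathbf{H}_n^d))^{-1}\mathbf{B}_n\big]+o(1)$, and the last step is a purely algebraic rearrangement into the stated form: writing $P=-\mathbf{H}_n$, $D=-\mathbf{H}_n^d$, $W=(P+D)^{-1}$ and $\mathbf{C}_n=P^{-1}\mathbf{B}_nP^{-1}$, the identities $P^{-1}-W=P^{-1}DW$, $\mathbf{I}_n-DW=PW$, and $\mathbf{tr}[\mathbf{B}_nP^{-1}DW]=\mathbf{tr}[DP^{-1}\mathbf{B}_nW]$ (trace invariant under transposition, all matrices symmetric) give $2\,\mathbf{tr}[\mathbf{B}_nP^{-1}]-2\,\mathbf{tr}[W\mathbf{B}_n]=\mathbf{tr}[\mathbf{B}_nP^{-1}]-\mathbf{tr}[W(\mathbf{B}_n+D\mathbf{C}_nD)]+\mathbf{tr}[D\mathbf{C}_n]$, which is exactly the remaining terms of the theorem.

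I expect the main obstacle to be the first step — making the replacement of $p^{VBN}$ for $p^{VB}$ inside $\int p(\mathbf{y}_{rep}|\boldsymbol{\theta})(\cdot)\,d\boldsymbol{\theta}$ rigorous, since $p(\mathbf{y}_{rep}|\boldsymbol{\theta})$ is unbounded so the total-variation bound alone does not control the integral. One must split the domain into a shrinking neighborhood of $\widehat{\boldsymbol{\theta}}_n$, where the quadratic expansion and the Gaussian approximation to $p^{VB}$ are simultaneously valid, and a complement that is negligible by the posterior concentration rate and the Lipschitz and dominating bounds of Assumptions~4--5; the same bounds supply the uniform integrability needed to turn the $o_p(1)$ remainders into $o(1)$ after taking $E_{\mathbf{y}}E_{\mathbf{y}_{rep}}$. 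Beyond that the argument is bookkeeping plus the closing trace identity.
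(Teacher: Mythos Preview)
Your proposal is correct and reaches the stated formula, but it is organized quite differently from the paper's proof. The paper applies the Laplace approximation to $\int p(\mathbf{y}_{rep}|\boldsymbol{\theta})\,p^{VBN}(\boldsymbol{\theta}|\mathbf{y})\,d\boldsymbol{\theta}$ \emph{centered at its maximizer} $\widetilde{\boldsymbol{\theta}}_{n}^{s}:=\arg\max_{\boldsymbol{\theta}}\big[\ln p(\mathbf{y}_{rep}|\boldsymbol{\theta})-\tfrac{n}{2}(\widehat{\boldsymbol{\theta}}_{n}(\mathbf{y})-\boldsymbol{\theta})'(-\mathbf{H}_{n}^{d})(\widehat{\boldsymbol{\theta}}_{n}(\mathbf{y})-\boldsymbol{\theta})\big]$, and then needs three preliminary lemmas establishing consistency, asymptotic normality, and the joint limiting distribution of $(\widetilde{\boldsymbol{\theta}}_{n}^{s},\widehat{\boldsymbol{\theta}}_{n}(\mathbf{y}),\widehat{\boldsymbol{\theta}}_{n}(\mathbf{y}_{rep}))$ with covariance blocks $\mathbf{D}_{n},\mathbf{F}_{n},\mathbf{G}_{n}$. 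The resulting expansion $-nh_{N}^{s}(\widetilde{\boldsymbol{\theta}}_{n}^{s})=\ln p(\mathbf{y}_{rep}|\widehat{\boldsymbol{\theta}}_{n}(\mathbf{y}))+L_{1}+L_{2}$ is decomposed into six subterms whose expectations are read off from those covariance blocks and then algebraically simplified. By contrast, you expand around $\overline{\boldsymbol{\theta}}^{VB}$ and evaluate the Gaussian integral in closed form, which sidesteps the auxiliary estimator $\widetilde{\boldsymbol{\theta}}_{n}^{s}$ and its three lemmas entirely; you instead invoke Theorem~\ref{riskvtic} for the plug-in piece and handle the quadratic term by a direct score expansion at $\boldsymbol{\theta}_{n}^{p}$. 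Your route is shorter and more elementary; the paper's route makes the dependence structure among the three ``estimators'' very explicit, which may be useful if one later wants to study variance rather than bias. Both require the same closing trace manipulation, and both are valid under Assumptions~1--8.

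One small point worth noting: for the technical obstacle you flag --- that the TV bound on $p^{VB}-p^{VBN}$ does not immediately control $\int p(\mathbf{y}_{rep}|\boldsymbol{\theta})(\cdot)\,d\boldsymbol{\theta}$ because the integrand is unbounded --- the paper uses a cleaner device than domain splitting: normalize numerator and denominator by $p(\mathbf{y}_{rep}|\widehat{\boldsymbol{\theta}}_{n}(\mathbf{y}_{rep}))$, so that $\big|p(\mathbf{y}_{rep}|\boldsymbol{\theta})/p(\mathbf{y}_{rep}|\widehat{\boldsymbol{\theta}}_{n}(\mathbf{y}_{rep}))\big|\le 1$, and then the TV bound applies directly. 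You may want to adopt this in place of the localization argument.
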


\begin{remark}
If $\mathbf{H}_{n}$ is diagonal, that
is $\mathbf{H}_{n}^{d}=\mathbf{H}_{n}$, it can be shown that%
\begin{equation}
\ln \left( \left\vert -\mathbf{H}_{n}\left( -\mathbf{H}_{n}^{d}\right) ^{-1}+%
\mathbf{I}_{n}\right\vert \right) =\ln \left( \left\vert -\mathbf{H}%
_{n}\left( -\mathbf{H}_{n}\right) ^{-1}+\mathbf{I}_{n}\right\vert \right)
=\ln \left( \left\vert 2\mathbf{I}_{n}\right\vert \right) =\mathbf{P}\ln 2,
\end{equation}%
and%
\begin{equation}
\begin{aligned}
&-\mathbf{tr}\left[ \left( -\mathbf{H}_{n}+\left( -\mathbf{H}%
_{n}^{d}\right) \right) ^{-1}\left( \mathbf{B}_{n}+\left( -\mathbf{H}%
_{n}^{d}\right) \mathbf{C}_{n}\left( -\mathbf{H}_{n}^{d}\right) \right) %
\right] +\mathbf{tr}\left[ \left( -\mathbf{H}_{n}^{d}\right) \mathbf{C}_{n}%
\right] \\
&=-\mathbf{tr}\left[ \left( -\mathbf{H}_{n}+\left( -\mathbf{H}_{n}\right)
\right) ^{-1}\left( \mathbf{B}_{n}+\left( -\mathbf{H}_{n}\right) \mathbf{C}%
_{n}\left( -\mathbf{H}_{n}\right) \right) \right] +\mathbf{tr}\left[ \left( -%
\mathbf{H}_{n}\right) \mathbf{C}_{n}\right] \\
&=-\mathbf{tr}\left[ \left( -2\mathbf{H}_{n}\right) ^{-1}\left( 2\mathbf{B}%
_{n}\right) \right] +\mathbf{tr}\left[ \mathbf{B}_{n}\left( -\mathbf{H}%
_{n}\right) ^{-1}\right] =0,
\end{aligned}
\end{equation}
then%
\begin{eqnarray*}
	&&E_{\mathbf{y}}E_{\mathbf{y}_{rep}}\left( -2\ln p^{VB}\left( \mathbf{y}%
	_{rep}|\mathbf{y}\right) \right) \\
	&=&E_{\mathbf{y}}\left( -2\ln p\left( \mathbf{y}|\widehat{{%
			\mbox{\boldmath${\theta}$}}}_{n}\left( \mathbf{y}\right) \right) \right) +%
	\mathbf{P}\ln 2+\mathbf{tr}\left[ \mathbf{B}_{n}\left( -\mathbf{H}%
	_{n}\right) ^{-1}\right] +o\left( 1\right) .
\end{eqnarray*}
\end{remark}
\begin{corollary}
Under Assumptions 1-8, it can be shown that when the model is correctly specified%
\begin{eqnarray*}
&&E_{\mathbf{y}}E_{\mathbf{y}_{rep}}\left( -2\ln p^{VB}\left( \mathbf{y}%
_{rep}|\mathbf{y}\right) \right) \\
&=&E_{\mathbf{y}}\left( -2\ln p\left( \mathbf{y}|\widehat{{%
\mbox{\boldmath${\theta}$}}}_{n}\left( \mathbf{y}\right) \right) \right)
+\ln \left( \left\vert -\mathbf{H}_{n}\left( -\mathbf{H}_{n}^{d}\right)
^{-1}+\mathbf{I}_{n}\right\vert \right) +\mathbf{P} \\
&&-\mathbf{tr}\left[ \left( -\mathbf{H}_{n}+\left( -\mathbf{H}%
_{n}^{d}\right) \right) ^{-1}\left( -\mathbf{H}_{n}+\left( -\mathbf{H}%
_{n}^{d}\right) \left( -\mathbf{H}_{n}\right) ^{-1}\left( -\mathbf{H}%
_{n}^{d}\right) \right) \right] \\
&&+\mathbf{tr}\left[ \left( -\mathbf{H}_{n}^{d}\right) \left( -\mathbf{H}%
_{n}\right) ^{-1}\right] +o\left( 1\right)
\end{eqnarray*}%
where $\mathbf{H}_{n}^{d}$ is a diagonal matrix with the same diagonal
elements as in $\mathbf{H}_{n}$. 
\end{corollary}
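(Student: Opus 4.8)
The plan is to obtain this corollary as an immediate specialization of Theorem \ref{riskvpic} by invoking the information matrix equality. Under correct specification the KL-minimizer $\boldsymbol{\theta}_n^p$ equals the true parameter value, the expected score vanishes there, and differentiating $\int p(\mathbf{y}|\boldsymbol{\theta})\,d\mathbf{y}=1$ twice under the integral sign gives $\mathbf{B}_n(\boldsymbol{\theta}_n^p) = -\mathbf{H}_n(\boldsymbol{\theta}_n^p)$, i.e. $\mathbf{B}_n = -\mathbf{H}_n$, uniformly in $n$; this is the population form of the collapse of White's sandwich variance in \eqref{white} to the inverse-Hessian form in \eqref{ml}. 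The interchange of differentiation and integration is licensed by the differentiability and dominance conditions already assumed (Assumptions 3--5), so no new regularity is needed.

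Next I would substitute $\mathbf{B}_n = -\mathbf{H}_n$ into each term of the expansion in Theorem \ref{riskvpic}. The sandwich matrix reduces to $\mathbf{C}_n = \mathbf{H}_n^{-1}\mathbf{B}_n\mathbf{H}_n^{-1} = -\mathbf{H}_n^{-1} = (-\mathbf{H}_n)^{-1}$; the trace term $\mathbf{tr}[\mathbf{B}_n(-\mathbf{H}_n)^{-1}]$ becomes $\mathbf{tr}[\mathbf{I}_n] = \mathbf{P}$; and the log-determinant term $\ln(|{-\mathbf{H}_n}(-\mathbf{H}_n^d)^{-1}+\mathbf{I}_n|)$ is left untouched since it does not involve $\mathbf{B}_n$. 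Replacing $\mathbf{B}_n$ by $-\mathbf{H}_n$ and $\mathbf{C}_n$ by $(-\mathbf{H}_n)^{-1}$ in the two remaining traces then yields exactly $-\mathbf{tr}[(-\mathbf{H}_n+(-\mathbf{H}_n^d))^{-1}(-\mathbf{H}_n+(-\mathbf{H}_n^d)(-\mathbf{H}_n)^{-1}(-\mathbf{H}_n^d))]$ and $\mathbf{tr}[(-\mathbf{H}_n^d)(-\mathbf{H}_n)^{-1}]$, and collecting all the pieces reproduces the stated identity with the $o(1)$ remainder of Theorem \ref{riskvpic} carried along unchanged.

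Since everything after the information matrix equality is purely algebraic substitution, there is no genuine obstacle in this corollary: the only step that is not mechanical is confirming $\mathbf{B}_n = -\mathbf{H}_n$ under correct specification, which is standard. I would therefore present the proof as a single sentence invoking the information matrix equality, followed by the term-by-term simplification described above, observing that the $o(1)$ term is inherited directly from Theorem \ref{riskvpic}.
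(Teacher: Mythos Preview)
Your proposal is correct and matches the paper's approach: the corollary is obtained as an immediate specialization of Theorem \ref{riskvpic} by invoking the information matrix equality $\mathbf{B}_n=-\mathbf{H}_n$ under correct specification, followed by the mechanical substitutions you describe (the paper's Remark 3.1 applies the identical reduction to Theorem \ref{riskvtic}, confirming that this is the intended route). There is nothing to add.
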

\begin{remark}
If $\mathbf{H}_{n}$ is
diagonal, that is $\mathbf{H}_{n}^{d}=\mathbf{H}_{n}$, then%
\begin{eqnarray*}
	&&E_{\mathbf{y}}E_{\mathbf{y}_{rep}}\left( -2\ln p^{VB}\left( \mathbf{y}%
	_{rep}|\mathbf{y}\right) \right) \\
	&=&E_{\mathbf{y}}\left( -2\ln p\left( \mathbf{y}|\widehat{{%
			\mbox{\boldmath${\theta}$}}}_{n}\left( \mathbf{y}\right) \right) \right) +%
	\mathbf{P}\ln 2+\mathbf{P}+o\left( 1\right) .
\end{eqnarray*}
\end{remark}

\section{Predictive Information Criteria for Comparing
Misspecified Models with Massive Data based on VB}
\label{sec:model selection}

In this section, we outline the development of new predictive information criteria for model comparison in the context of misspecified models with massive data. Building on the risk functions analyzed in Section \ref{sec:risk}, Section \ref{subsec:decision theory} introduces the framework of statistical decision theory for model comparison. In Section \ref{subsec:plugin vbic}, we propose an information criterion, termed VDIC$_{M}$, based on the VB plug-in predictive distribution. We then present another information criterion, termed VPIC, which is constructed using the VB posterior predictive distribution in Section \ref{subsec:posterior vbic}. At last, in Section \ref{subsec: bf and bic}, we then discuss BFs and BIC in the context of misspecified models. 

\subsection{Statistical Decision Theory based on Risk Function for Model
Selection}
\label{subsec:decision theory}

In this section, from a predictive perspective, we extend the decisional framework introduced in Section \ref{sec:risk} to develop information criteria for model comparison. Suppose there are \(K\) candidate models, all of which may be misspecified, and the task is to select the most suitable model. These candidate models are denoted by \(M_{k}\), where \(k = 1, 2, \ldots, K\). As discussed in the previous section, this selection is achieved by minimizing the risk associated with the statistical decision.

Assume that the probabilistic behavior of the observed data \(\mathbf{y} \in \mathbf{Y}\) is described by a set of probabilistic models \(\{M_k\}_{k=1}^K := \{p(\mathbf{y}|\boldsymbol{\theta}_k, M_k)\}_{k=1}^K\), where \(\boldsymbol{\theta}_k\) represents the set of parameters associated with model \(M_k\). Formally, the model selection problem can be framed as a decision-making problem, where the goal is to select one model from \(\{M_k\}_{k=1}^K\). In this context, the action space comprises \(K\) elements, denoted by \(\{d_k\}_{k=1}^K\), where \(d_k\) indicates that model \(M_k\) is selected.

For the decision-making process, as in Section \ref{sec:risk}, a loss function \(\mathcal{L}(\mathbf{y}, d_k)\) must be specified. This loss function quantifies the loss incurred by selecting decision \(d_k\). Given the loss function, the corresponding risk can be defined as:
\[
\text{Risk}(d_k) = E_{\mathbf{y}}\left[ \mathcal{L}(\mathbf{y}, d_k) \right] = \int \mathcal{L}(\mathbf{y}, d_k) g(\mathbf{y}) d\mathbf{y},
\]
where \(g(\mathbf{y})\) is the DGP. Consequently, the model selection problem is equivalent to optimizing the statistical decision by minimizing the risk:
\[
k^\ast = \arg \min_k \text{Risk}(d_k).
\]
Based on the set of candidate models \(\{M_k\}_{k=1}^K\), the model \(M_{k^\ast}\), corresponding to the decision \(d_{k^\ast}\), is selected as the optimal model.

The quantity used to assess the predictive ability of a candidate model is the KL divergence between the DGP \(g\left( \mathbf{y}_{rep} \right)\) and a predictive distribution \(p\left( \mathbf{y}_{rep}|\mathbf{y},M_k \right)\), scaled by a factor of 2:
\[
2 \times KL\left[ g\left( \mathbf{y}_{rep} \right), p\left( \mathbf{y}_{rep}|\mathbf{y},M_k \right) \right] = 2 E_{\mathbf{y}_{rep}}\left[ \ln \frac{g\left( \mathbf{y}_{rep} \right)}{p\left( \mathbf{y}_{rep}|\mathbf{y},M_k \right)} \right],
\]
which can also be written as $ 2 \int \left[ \ln \frac{g\left( \mathbf{y}_{rep} \right)}{p\left( \mathbf{y}_{rep}|\mathbf{y},M_k \right)} \right] g\left( \mathbf{y}_{rep} \right) d\mathbf{y}_{rep}.$
Similar to the framework introduced in Section 3, the loss function associated with the decision \(d_k\) is defined as $
\mathcal{L}(\mathbf{y}, d_k) = 2 \times KL\left[ g\left( \mathbf{y}_{rep} \right), p\left( \mathbf{y}_{rep}|\mathbf{y},M_k \right) \right].$
Thus, the model selection problem is formulated as:
\[
k^\ast = \arg \min_k \text{Risk}(d_k) = \arg \min_k E_{\mathbf{y}}\left[ \mathcal{L}(\mathbf{y}, d_k) \right],
\]
which can be further expanded as:
\[
k^\ast = \arg \min_k \left\{ 2 \times E_{\mathbf{y}} E_{\mathbf{y}_{rep}}\left[ \ln \frac{g\left( \mathbf{y}_{rep} \right)}{p\left( \mathbf{y}_{rep}|\mathbf{y},M_k \right)} \right] \right\}.
\]
Rearranging terms gives:
\[
k^\ast = \arg \min_k \left\{ E_{\mathbf{y}} E_{\mathbf{y}_{rep}}\left[ 2 \ln g\left( \mathbf{y}_{rep} \right) \right] + E_{\mathbf{y}} E_{\mathbf{y}_{rep}}\left[ -2 \ln p\left( \mathbf{y}_{rep}|\mathbf{y},M_k \right) \right] \right\}.
\]
Since \(g\left( \mathbf{y}_{rep} \right)\) is the DGP, the term \(E_{\mathbf{y}_{rep}}\left[ 2 \ln g\left( \mathbf{y}_{rep} \right) \right]\) is constant across all candidate models and can therefore be omitted from the equation. Consequently, the model selection problem simplifies to:
\[
k^\ast = \arg \min_k \text{Risk}(d_k) = \arg \min_k E_{\mathbf{y}} E_{\mathbf{y}_{rep}}\left[ -2 \ln p\left( \mathbf{y}_{rep}|\mathbf{y},M_k \right) \right].
\]
The smaller the value of \(\text{Risk}(d_k)\), the better the performance of the candidate model in using the predictive distribution \(p\left( \mathbf{y}_{rep}|\mathbf{y},M_k \right)\) to approximate \(g\left( \mathbf{y}_{rep} \right)\). Evaluating the risk among candidate models is therefore essential for making the optimal decision.

It is important to note that the action space in this context is larger than in previous cases. From a predictive perspective, we not only need to select a model for prediction but also determine which predictive distribution to use.  The action space is denoted byby $\left\{d_{k^1}, d_{k^2}\right\}_{k=1}^K$ where $d_{k^a}(a
\in(1,2)) $ means $M_k$ is selected, and the predictions are generated from $p\left(%
\mathbf{y}_{rep}|\mathbf{y}, M_k, d_a\right)$. If $a=1$, it means that the VB plug-in predictive distribution, $p\left(\mathbf{y}_{\text {rep}}|\mathbf{y}, M_k,
d_1\right)=p\left(\mathbf{y}_{\text {rep}}|\overline{\boldsymbol{\theta}}^{VB},
M_k\right)$ is used; if $a=2$, it means that the VB posterior predictive
distribution, $p\left(\mathbf{y}_{\text {rep }}|\mathbf{y}, M_k, d_2\right)=$
$p^{VB}\left(\mathbf{y}_{rep} \mid \mathbf{y}, M_k\right)$ is used. The KL
divergence for this setup is defined as 
\begin{equation*}
\left.\mathcal{L}\left(\mathbf{y}, d_{k^a}\right)=2 \times K L\left[g\left(%
\mathbf{y}_{\text {rep}}\right), p\left(\mathbf{y}_{r e p}|\mathbf{y},
d_{k^a}\right)\right)\right]
\end{equation*}
where $p\left(\mathbf{y}_{r e p}|\mathbf{y}, d_{k^a}\right):=p\left(\mathbf{y%
}_{r e p} \mid \mathbf{y}, M_k, d_a\right)$. The risk associated with $%
d_{k^a}$ is then given by
\begin{equation*}
Risk\left(d_{k^a}\right)=E_{\mathbf{y}}\left(\mathcal{L}\left(\mathbf{%
y}, d_{k^a}\right)\right)=\int \mathcal{L}\left(\mathbf{y}, d_{k^a}\right) g(%
\mathbf{y}) \mathrm{d} \mathbf{y}.
\end{equation*}
Consequently, the model selection problem is equivalent to solving the following statistical decision problem: 
\begin{equation}
\min_{a \in\{1,2\}} \min_{k \in\{1, \cdots, K\}} Risk%
\left(d_{k^a}\right).
\end{equation}

Since the DGPs \(g\left(\boldsymbol{y}\right)\) and \(g\left(\boldsymbol{y}_{rep}\right)\) are unknown, directly evaluating the risk associated with decision \(d_{k^a}\) is infeasible. However, it is possible to approximate the risk by using an asymptotically unbiased estimator of \(\text{Risk}\left(d_{k^a}\right)\). As noted in the literature, various information criteria proposed for model selection can be interpreted as asymptotically unbiased estimators of the expected loss function, up to a constant, under different statistical decision frameworks \citep{vrieze2012model}.

Traditionally, model selection has been conducted using information criteria that assess the relative quality of statistical models for a given dataset. Under the frequentist framework, criteria such as AIC, TIC, and their variants have been widely applied. Under the Bayesian framework, criteria include DIC and its extensions, such as the deviance information criterion for misspecified models (\(\text{DIC}_M\)) proposed by \citet{li2020deviance}. These information criteria have been shown to follow the principles of statistical decision theory discussed above. Specifically, AIC, TIC, DIC, and \(\text{DIC}_M\) are all constructed by estimating the KL divergence between the DGP and the corresponding predictive distributions. In this study, we develop new approaches that adhere to a similar decision-theoretical framework. To provide context, we first present two remarks that introduce these popular information criteria within this framework. Subsequently, we propose our new information criteria in the following subsections.

\begin{remark}
Under some regularity conditions, under Bayesian framework, for misspecified models, \citet{li2020deviance} proposed the new version of DIC by \citet{spiegelhalter2002bayesian} named as so-called $%
\text{DIC}_{M}^{k}$ for, that is, for model k, 
\begin{equation}
\text{DIC}_{M}^{k}=-2\ln p(\mathbf{y}| \bar{\boldsymbol{\theta}}  _{k},M_{k})+2P_{M}^{k},P_{M}^{k}=\mathbf{tr}\left\{ n\mathbf{\bar{\Omega}}%
_{n}\left( \bar{\boldsymbol{\theta}}_{k} \right) V\left( \bar{\boldsymbol{\theta}}_{k} \right) \right\} ,  \label{dicm1}
\end{equation}%
where $V\left( \bar{\boldsymbol{\theta}}_{k}\right) $ is the posterior
covariance matrix given by $V\left( \bar{\boldsymbol{\theta}}_{k}\right) =E\left[
\left( \boldsymbol{\theta}_{k}-\bar{\boldsymbol{\theta}}_{k}\right) \left( \boldsymbol{\theta}_{k}-\bar{\boldsymbol{\theta}}_{k}\right) ^{\prime }|\mathbf{y,}M_{k}%
\right] $ and 
$
\mathbf{\bar{\Omega}}_{n}\left( \widehat{\boldsymbol{\theta}}_{k}\right) =\frac{1}{n}
\sum_{t=1}^{n}\mathbf{s}_{t}\left( \widehat{\boldsymbol{\theta}}_{k}\right) \mathbf{%
s}_{t}\left( \widehat{\boldsymbol{\theta}}_{k}\right) ^{\prime }$.
For this information criterion, \citet{li2020deviance} showed that the regular plug-in predictive distribution, $p\left( \mathbf{y}_{rep}|\mathbf{y},M_{k},d_{1}\right)
=p\left( \mathbf{y}_{rep}|\bar{\boldsymbol{\theta}}_{k},M_{k}\right) $ can be used for constructing the loss function and the corresponding risk function discussed above.  Hence, from statistical decision viewpoint discussed above, when $a=1$,for misspecified models, it can be shown in \citet{li2020deviance} that 
\begin{equation*}
Risk(d_{k^{1}})=E_{\mathbf{y}}\left( \mathcal{L}(\mathbf{y}%
,d_{k^{1}})\right) =\int \mathcal{L}(\mathbf{y},d_{k^{1}})g(\mathbf{y})%
\mbox{d}\mathbf{y}=E_{\mathbf{y}}\left[ \text{DIC}_{M}^{k}+2C\right] +o(1).
\end{equation*}
If the candidate models are restricted into correctly specified
models or good models which are good approximation to DGP, $\text{DIC}_{M}^{k}$  is reduced as a good approximation of  $\text{DIC}^{k}$ of  \citet{spiegelhalter2002bayesian} given by
\begin{equation}
	\text{DIC}^{k}=-2\ln p(\mathbf{y}|\bar{\boldsymbol{\theta}}
	_{k},M_{k})+2P_{D}^{k},P_{D}^{k}=\int 2\left[ \ln p(\mathbf{y}|\bar{\boldsymbol{\theta}}
	_{k},M_{k})-\ln p(\mathbf{y}|\boldsymbol{\theta}
	_{k},M_{k})\right] %
	\mbox{d}{\mbox{\boldmath${\theta}$}}.  \label{dic2}
\end{equation}
It was shown in \citet{li2024deviance} that 
\begin{equation*}
Risk(d_{k^{1}})=E_{\mathbf{y}}\left( \mathcal{L}(\mathbf{y}%
,d_{k^{1}})\right) =\int \mathcal{L}(\mathbf{y},d_{k^{1}})g(\mathbf{y})%
\mbox{d}\mathbf{y}=E_{\mathbf{y}}\left[ \text{DIC}^{k}+2C\right] +o(1),
\end{equation*}
More details about the theoretical development of $\text{DIC}^{k}$
and $\text{DIC}_{M}^{k}$, one can refer to \citet{spiegelhalter2002bayesian}, \citet{li2020deviance}, \citet{li2024deviance} and reference therein. 
\end{remark}

\begin{remark}
	For some misspecified model k, under frequentist framework, Takeuchi
	information criterion (TIC) of \citet{takeuchi1976distribution} \footnote{TIC is originally developed by \citet{takeuchi1976distribution} for independent data and \citet{li2020deviance} relaxed this limitation to weakly dependent data} generally can be defined as
	\begin{equation}
		\text{TIC}^{k}=-2\ln p\left( \mathbf{y}|\widehat{\boldsymbol{\theta}}_{k}\right)
		+2P_{T}^{k},P_{T}^{k}=-\mathbf{tr}\left\{ \mathbf{\bar{\Omega}}_{n}\left( 
		\widehat{\boldsymbol{\theta}}_{k}\right) \mathbf{\bar{H}}_{n}^{-1}\left(
		\widehat{\boldsymbol{\theta}}_{k}\right) \right\} .  \label{TICeq}
	\end{equation}
	From decision viewpoint, when $a=1$, the MLE, $\widehat{\boldsymbol{\theta}}_{k}$, replaced the
	Bayesian estimator, ${\mbox{\boldmath${\bar{\theta}}$}}_{k}$ to formulate
	the regular plug-in predictive distribution for constricting the risk function. Then, for misspecified models, it can be also shown in \citet{li2020deviance} that 
	\begin{equation*}
		Risk(d_{k^{1}})=E_{\mathbf{y}}\left( \mathcal{L}(\mathbf{y}%
		,d_{k^{1}})\right) =\int \mathcal{L}(\mathbf{y},d_{k^{1}})g(\mathbf{y})%
		\mbox{d}\mathbf{y}=E_{\mathbf{y}}\left[ \text{TIC}^{k}+2C\right] +o(1).
	\end{equation*}
	Furthermore, when the candidate models are restricted into correctly
	specified models or good models which are good approximation to DGP, TIC is reduced as the well-known AIC and it can
	be shown in \citet{li2024deviance} that 
	\begin{equation*}
		Risk(d_{k^{1}})=E_{\mathbf{y}}\left( \mathcal{L}(\mathbf{y}%
		,d_{k^{1}})\right) =\int \mathcal{L}(\mathbf{y},d_{k^{1}})g(\mathbf{y})%
		\mbox{d}\mathbf{y}=E_{\mathbf{y}}\left[ \text{AIC}^{k}+2C\right] +o(1),
	\end{equation*}
	where 
	\begin{equation}
		\text{AIC}^{k}=-2\ln p(\mathbf{y}|\widehat{\boldsymbol{\theta}}_{k},M_{k})+2P^{k}
		\label{aic2}
	\end{equation}
\end{remark}
More details about the theoretical development of $\text{DIC}^{k}$
and $\text{DIC}_{M}^{k}$, one can refer to \citet{takeuchi1976distribution} , \citet{li2020deviance}, \citet{li2024deviance} and reference therein. 

\subsection{Information Criterion for Comparing Misspecified Models based on
Variational Bayes Plug-in Predictive Distributions}
\label{subsec:plugin vbic}

Following the statistical decision theory shown in section \ref{subsec:decision theory}, we utilize $\ln p(\mathbf{y}|{\mbox{\boldmath${\bar{\theta}}$}}
_{k}^{VB},M_{k})$ to construct the loss function and the corresponding risk function. Subsequently, similar to existing information criteria such as AIC, TIC, DIC and $\text{DIC}_M$,we propose a new information criterion for model selection. Let $%
\mathbf{\bar{\Omega}}_{n}\left( {\mbox{\boldmath${		\widehat{\theta}}$}}%
_{k}\right) $ , $\mathbf{\bar{H}}_{n}\left({\ \mbox{\boldmath${\widehat{%
\theta}}$}}_{k}\right)$ be consistent estimators of $\mathbf{B}_{n}({%
\mbox{\boldmath${\theta}$}}_{n}^{p})$ and $\mathbf{H}_{n}({%
\mbox{\boldmath${\theta}$}}_{n}^{p})$ respectively. Based on the results of \citet{han2019statistical} and \citet{Zhang_2024}, we have%
\begin{equation*}
{\mbox{\boldmath${\bar{\theta}}$}}_{k}^{VB}=\boldsymbol{\widehat{\theta}}
_{k}+O_{p}\left(n^{-3/4}\right),
\end{equation*}
where ${\mbox{\boldmath${\bar{\theta}}$}}_{k}^{VB}$ is the mean of
variational posterior density $p^{VB}\left( \mbox{\boldmath${\theta}$}|%
\mathbf{y}\right) $. 
Using this, we derive the consistent estimators of 
$\mathbf{B}_{n}({\mbox{\boldmath${\theta}$}}_{n}^{p})$ and $\mathbf{H}_{n}({%
\mbox{\boldmath${\theta}$}}_{n}^{p})$ as $\mathbf{\bar{\Omega}}_{n}\left( {%
\mbox{\boldmath${\bar{\theta}}$}}_{k}^{VB}\right) $ and $\mathbf{\bar{H}}%
_{n}\left( {\ \mbox{\boldmath${\bar{\theta}}$}}_{k}^{VB}\right) $, respectively. To account for model misspecification, we define a new information criterion, termed the Variational Deviance Information Criterion under Model Misspecification ($\text{VDIC}_M$), using the variational plug-in predictive density:
\begin{equation*}
\text{VDIC}_{M}^k=-2\ln p(\mathbf{y}|{\mbox{\boldmath${\bar{\theta}}$}}
_{k}^{VB},M_{k})+2 P^{k}_{VDIC_M},
\end{equation*}
where the penalty term $%
P^k_{VDIC_M}$ for model $k$ is defined as 
\begin{equation*}
\begin{aligned} P^{k}_{VDIC_M} = -\mathbf{tr}\left[
\mathbf{\bar{\Omega}}_{n}\left(
{\mbox{\boldmath${\bar{					\theta}}$}}_{k}^{VB}\right) \left(
\mathbf{\bar{H}}_{n}\left( {
\mbox{\boldmath${\bar{\theta}}$}}_{k}^{VB}\right) \right) ^{-1}\right].
\end{aligned}
\end{equation*}

\begin{theorem}
\label{theom vtic} Under Assumptions 1-8, we have, 
\begin{equation*}
Risk(d_{k^1})=\int \text{VDIC}_M^{k}\times g(\mathbf{y})d\mathbf{y}+ 2C + o(1), 
\text{ i.e., }E_{\mathbf{y}}(\text{VDIC}_M^{k})=Risk(d_{k^1}) -2C + o(1).
\end{equation*}
\end{theorem}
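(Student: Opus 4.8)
The plan is to reduce Theorem~\ref{theom vtic} to the asymptotic expansion already proved in Theorem~\ref{riskvtic} and then to verify that, after taking $E_{\mathbf{y}}$, the two ingredients of $\text{VDIC}_M^k$ reproduce the two terms on the right-hand side of that expansion. Recall from Section~\ref{sec:risk} that $Risk(d_{k^1})$ equals $E_{\mathbf{y}}E_{\mathbf{y}_{rep}}\left[-2\ln p(\mathbf{y}_{rep}|\bar{\boldsymbol{\theta}}^{VB},M_k)\right]$ plus a constant depending only on the DGP (this constant is denoted $2C$ in the statement), and that Theorem~\ref{riskvtic} gives
\begin{equation*}
E_{\mathbf{y}}E_{\mathbf{y}_{rep}}\left(-2\ln p(\mathbf{y}_{rep}|\bar{\boldsymbol{\theta}}^{VB},M_k)\right)=E_{\mathbf{y}}\left(-2\ln p(\mathbf{y}|\widehat{\boldsymbol{\theta}}_n(\mathbf{y}),M_k)\right)-2\,\mathbf{tr}\left[\mathbf{B}_n\mathbf{H}_n^{-1}\right]+o(1).
\end{equation*}
Thus it is enough to establish $E_{\mathbf{y}}(\text{VDIC}_M^k)=E_{\mathbf{y}}\left(-2\ln p(\mathbf{y}|\widehat{\boldsymbol{\theta}}_n(\mathbf{y}),M_k)\right)-2\,\mathbf{tr}\left[\mathbf{B}_n\mathbf{H}_n^{-1}\right]+o(1)$, which I would do term by term.

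For the deviance part, I would Taylor-expand $-2\ln p(\mathbf{y}|\cdot,M_k)$ about the quasi-MLE $\widehat{\boldsymbol{\theta}}_k=\widehat{\boldsymbol{\theta}}_n(\mathbf{y})$. The first-order term vanishes because $\mathbf{s}(\mathbf{y},\widehat{\boldsymbol{\theta}}_k)=\mathbf{0}$, leaving
\begin{equation*}
-2\ln p(\mathbf{y}|\bar{\boldsymbol{\theta}}_k^{VB},M_k)=-2\ln p(\mathbf{y}|\widehat{\boldsymbol{\theta}}_k,M_k)-\left(\bar{\boldsymbol{\theta}}_k^{VB}-\widehat{\boldsymbol{\theta}}_k\right)^{\prime}\mathbf{h}(\mathbf{y},\widetilde{\boldsymbol{\theta}})\left(\bar{\boldsymbol{\theta}}_k^{VB}-\widehat{\boldsymbol{\theta}}_k\right)
\end{equation*}
for some $\widetilde{\boldsymbol{\theta}}$ between $\widehat{\boldsymbol{\theta}}_k$ and $\bar{\boldsymbol{\theta}}_k^{VB}$. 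By the cited results of \citet{han2019statistical} and \citet{Zhang_2024} the variational posterior mean satisfies $\bar{\boldsymbol{\theta}}_k^{VB}-\widehat{\boldsymbol{\theta}}_k=O_p(n^{-3/4})$, while Assumption 5 bounds $\|\mathbf{h}(\mathbf{y},\widetilde{\boldsymbol{\theta}})\|=O_p(n)$ uniformly over $\boldsymbol{\Theta}$; hence the remainder is $O_p(n\cdot n^{-3/2})=o_p(1)$. Passing from $o_p(1)$ to $o(1)$ under $E_{\mathbf{y}}$ is done by uniform integrability, using the $(r+\delta)$-th moment domination of $\nabla^2 l_t$ in Assumption 5 together with the Lipschitz majorants of Assumption 4 (so the Hessian at the intermediate point is still dominated). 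This gives $E_{\mathbf{y}}\left(-2\ln p(\mathbf{y}|\bar{\boldsymbol{\theta}}_k^{VB},M_k)\right)=E_{\mathbf{y}}\left(-2\ln p(\mathbf{y}|\widehat{\boldsymbol{\theta}}_n(\mathbf{y}),M_k)\right)+o(1)$.

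For the penalty part, $P^k_{VDIC_M}=-\mathbf{tr}\left[\bar{\boldsymbol{\Omega}}_n(\bar{\boldsymbol{\theta}}_k^{VB})\bar{\mathbf{H}}_n^{-1}(\bar{\boldsymbol{\theta}}_k^{VB})\right]$. Since $\bar{\boldsymbol{\theta}}_k^{VB}\overset{p}{\rightarrow}\boldsymbol{\theta}_n^p$ (consistency of the VB posterior mean under Assumptions 1--8, via \citet{Zhang_2024}) and $\bar{\boldsymbol{\Omega}}_n(\cdot)$, $\bar{\mathbf{H}}_n(\cdot)$ are sample averages that converge uniformly on $\boldsymbol{\Theta}$ by the uniform law of large numbers implied by Assumptions 4--5, the continuous-mapping theorem yields $\bar{\boldsymbol{\Omega}}_n(\bar{\boldsymbol{\theta}}_k^{VB})\overset{p}{\rightarrow}\mathbf{B}_n$ --- here using the pseudo-true-value first-order condition $E[\nabla l_1(\boldsymbol{\theta}_n^p)]=\mathbf{0}$ (Assumption 6, $\boldsymbol{\theta}_n^p$ interior), so that $E[\nabla l_1 \nabla l_1^{\prime}]=\mathbf{B}_n(\boldsymbol{\theta}_n^p)$ in the i.i.d. case --- and $\bar{\mathbf{H}}_n(\bar{\boldsymbol{\theta}}_k^{VB})\overset{p}{\rightarrow}\mathbf{H}_n$, which is nonsingular by Assumption 7. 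Hence $P^k_{VDIC_M}\overset{p}{\rightarrow}-\mathbf{tr}\left[\mathbf{B}_n\mathbf{H}_n^{-1}\right]$, and the same $(r+\delta)$-th moment uniform integrability upgrades this to $2E_{\mathbf{y}}(P^k_{VDIC_M})=-2\,\mathbf{tr}\left[\mathbf{B}_n\mathbf{H}_n^{-1}\right]+o(1)$. Adding this to the deviance identity gives the required equality, and combining with Theorem~\ref{riskvtic} and the $Risk(d_{k^1})$ decomposition finishes the proof.

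I expect the main obstacle to be not the pointwise stochastic expansions --- these follow from the standard QML toolkit plus the cited $n^{-3/4}$ VB rate --- but the upgrade of each $o_p(1)$ remainder to $o(1)$ after integrating against $g(\mathbf{y})$. This needs a uniform-integrability (dominated-convergence) argument whose envelope must be valid simultaneously in $n$ and uniformly over a neighborhood of $\boldsymbol{\theta}_n^p$; Assumptions 4 and 5 are designed for this, but carefully chaining the moment bounds on $\nabla^j l_t$, the Lipschitz envelopes $c_t^j$, and the $O_p(n^{-3/4})$ contraction of $\bar{\boldsymbol{\theta}}_k^{VB}-\widehat{\boldsymbol{\theta}}_k$ so that $E_{\mathbf{y}}\left[\,n\|\bar{\boldsymbol{\theta}}_k^{VB}-\widehat{\boldsymbol{\theta}}_k\|^2\,\right]\to0$ is the delicate point.
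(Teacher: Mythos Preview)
Your proposal is correct and follows essentially the same route as the paper: invoke the risk expansion of Theorem~\ref{riskvtic}, replace the deviance at $\widehat{\boldsymbol{\theta}}_n$ by the deviance at $\bar{\boldsymbol{\theta}}_k^{VB}$ via the $O_p(n^{-3/4})$ rate, and show that $\bar{\boldsymbol{\Omega}}_n(\bar{\boldsymbol{\theta}}_k^{VB})$ and $\bar{\mathbf{H}}_n(\bar{\boldsymbol{\theta}}_k^{VB})$ consistently estimate $\mathbf{B}_n$ and $\mathbf{H}_n$ so that the empirical penalty recovers $-\mathbf{tr}[\mathbf{B}_n\mathbf{H}_n^{-1}]$. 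The paper actually uses the intermediate identity (with $\bar{\boldsymbol{\theta}}^{VB}$ already in the deviance) derived inside the proof of Theorem~\ref{riskvtic} rather than its final statement, but this is cosmetic; your explicit flagging of the uniform-integrability step needed to pass from $o_p(1)$ to $o(1)$ is, if anything, more careful than the paper's treatment.
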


It can be proved that $\text{VDIC}_M^{k}$ is an asymptotically unbaised estimator of $Risk(d_{k^1})$ up to a constant.
\begin{remark}
For $\text{VDIC}_M$, $-2\ln p(\mathbf{y}|{\mbox{\boldmath${\bar{\theta}}$}}
_{k}^{VB},M_{k})$ can be understood as a Bayesian measure of fit, while $2
P^{k}_{VDIC_M}$ measures the model complexity. This feature of trade-off
between the goodness of fit of the model and the complexity of the model is
shared by other information criteria, such as \text{TIC} and $\text{DIC}_M$.
\end{remark}

\begin{remark}
Similar to \text{TIC} and $\text{DIC}_M$, $\text{VDIC}_M$ works for both correctly specified
and misspecified models.
\end{remark}

\subsection{Information Criterion for Comparing Misspecified Models based on
the VB Posterior Predictive Distribution}
\label{subsec:posterior vbic}

Following the statistical decision theory outlined in Section \ref{subsec:decision theory}, we utilize \(p^{VB}(\mathbf{y}_{rep}|\mathbf{y})\) to construct the loss function and the corresponding risk function. Based on this posterior predictive distribution, a new information criterion can be developed to estimate \(\text{Risk}(d_{k^2})\).

Let \(\mathbf{\bar{\Omega}}_{n}({\boldsymbol{\bar{\theta}}}_k^{VB})\) and \(\mathbf{\bar{H}}_{n}({\boldsymbol{\bar{\theta}}}_k^{VB})\) be consistent estimators of \(\mathbf{B}_{n}({\boldsymbol{\theta}}_n^{p})\) and \(\mathbf{H}_{n}({\boldsymbol{\theta}}_n^{p})\), respectively. The consistent estimator of \(\mathbf{C}_{n}\) is given by:
\[
\mathbf{\hat{C}}_{n}({\boldsymbol{\bar{\theta}}}_k^{VB}) = \left(\mathbf{\bar{H}}_{n}({\boldsymbol{\bar{\theta}}}_k^{VB})\right)^{-1} \mathbf{\bar{\Omega}}_{n}({\boldsymbol{\bar{\theta}}}_k^{VB}) \left(\mathbf{\bar{H}}_{n}({\boldsymbol{\bar{\theta}}}_k^{VB})\right)^{-1},
\]
where \({\boldsymbol{\bar{\theta}}}_k^{VB}\) represents the mean of the variational posterior density \(p^{VB}(\boldsymbol{\theta}|\mathbf{y})\).

When accounting for model misspecification, we define a new information criterion based on the VB posterior predictive density, termed the Variational Predictive Information Criterion (\(\text{VPIC}\)):
\begin{equation*}
\text{VPIC}^k = -2 \ln p(\mathbf{y}|{\boldsymbol{\bar{\theta}}}_k^{VB}, M_k) + 2 P_{VPIC}^k,
\end{equation*}
where the penalty term \(P_{VPIC}^k\) for model \(k\) is defined as:
\[
\begin{aligned}
P_{VPIC}^k = &\frac{1}{2} \mathbf{tr}\left[\mathbf{\bar{\Omega}}_{n}({\boldsymbol{\bar{\theta}}}_k^{VB}) \left(-\mathbf{\bar{H}}_{n}({\boldsymbol{\bar{\theta}}}_k^{VB})\right)^{-1}\right] \\
+ &\frac{1}{2} \ln \left\lvert \left(-\mathbf{\bar{H}}_{n}({\boldsymbol{\bar{\theta}}}_k^{VB})\right) \left(-\mathbf{\bar{H}}_{n}^d({\boldsymbol{\bar{\theta}}}_k^{VB})\right)^{-1} + \mathbf{I}_n \right\rvert \\
- &\frac{1}{2} \mathbf{tr}\Bigg[ \left(-\mathbf{\bar{H}}_{n}({\boldsymbol{\bar{\theta}}}_k^{VB}) + \left(-\mathbf{\bar{H}}_{n}^d({\boldsymbol{\bar{\theta}}}_k^{VB})\right)\right)^{-1} \\
&\hspace{1cm} \times \left( \mathbf{\bar{\Omega}}_{n}({\boldsymbol{\bar{\theta}}}_k^{VB}) + \left(-\mathbf{\bar{H}}_{n}^d({\boldsymbol{\bar{\theta}}}_k^{VB})\right) \mathbf{\hat{C}}_{n}({\boldsymbol{\bar{\theta}}}_k^{VB}) \left(-\mathbf{\bar{H}}_{n}^d({\boldsymbol{\bar{\theta}}}_k^{VB})\right) \right)\Bigg] \\
+ &\frac{1}{2} \mathbf{tr}\left[\left(-\mathbf{\bar{H}}_{n}^d({\boldsymbol{\bar{\theta}}}_k^{VB})\right) \mathbf{\hat{C}}_{n}({\boldsymbol{\bar{\theta}}}_k^{VB})\right].
\end{aligned}
\]

\begin{theorem}
\label{theom2} Under Assumptions 1-8, we have,%
\begin{equation*}
Risk(d_{k^2})=\int \text{VPIC}^{k}\times g(\mathbf{y})d\mathbf{y} +2C +o(1),\text{
i.e., }E_{\mathbf{y}}(\text{VPIC}^{k})=Risk(d_{k^2})-2C +o(1)
\end{equation*}
\end{theorem}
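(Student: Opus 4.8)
The plan is to treat $\text{VPIC}^k$ as a sample analogue of the right-hand side of Theorem~\ref{riskvpic} and to show that it has the same $E_{\mathbf{y}}$-expectation up to $o(1)$; the statement then follows from the decomposition $Risk(d_{k^2})=2C+E_{\mathbf{y}}E_{\mathbf{y}_{rep}}\!\left[-2\ln p^{VB}(\mathbf{y}_{rep}|\mathbf{y})\right]$ already established, with $2C$ the model-free DGP-entropy term. Concretely, it suffices to prove two approximations: (a) $E_{\mathbf{y}}\!\left[-2\ln p(\mathbf{y}|\boldsymbol{\bar\theta}_k^{VB},M_k)\right]=E_{\mathbf{y}}\!\left(-2\ln p(\mathbf{y}|\widehat{\boldsymbol{\theta}}_n(\mathbf{y}))\right)+o(1)$; and (b) $E_{\mathbf{y}}\!\left[2P^k_{VPIC}\right]$ converges to the matrix functional $\ln\!\big(\big\vert-\mathbf{H}_n(-\mathbf{H}_n^d)^{-1}+\mathbf{I}_n\big\vert\big)+\mathbf{tr}\!\left[\mathbf{B}_n(-\mathbf{H}_n)^{-1}\right]-\mathbf{tr}\!\left[\big(-\mathbf{H}_n+(-\mathbf{H}_n^d)\big)^{-1}\big(\mathbf{B}_n+(-\mathbf{H}_n^d)\mathbf{C}_n(-\mathbf{H}_n^d)\big)\right]+\mathbf{tr}\!\left[(-\mathbf{H}_n^d)\mathbf{C}_n\right]$ appearing in Theorem~\ref{riskvpic}. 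Adding (a) and (b) and invoking Theorem~\ref{riskvpic} gives $E_{\mathbf{y}}(\text{VPIC}^k)=E_{\mathbf{y}}E_{\mathbf{y}_{rep}}\!\left(-2\ln p^{VB}(\mathbf{y}_{rep}|\mathbf{y})\right)+o(1)=Risk(d_{k^2})-2C+o(1)$, as claimed. The whole scheme parallels the proof of Theorem~\ref{theom vtic}, differing only in the bookkeeping of the extra matrix terms.

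For step (a) I would Taylor-expand $\ln p(\mathbf{y}|\cdot)$ about the QMLE $\widehat{\boldsymbol{\theta}}_n$; because its score vanishes there, $-2\ln p(\mathbf{y}|\boldsymbol{\bar\theta}_k^{VB})=-2\ln p(\mathbf{y}|\widehat{\boldsymbol{\theta}}_n)-(\boldsymbol{\bar\theta}_k^{VB}-\widehat{\boldsymbol{\theta}}_n)'\,\mathbf{h}(\mathbf{y},\widehat{\boldsymbol{\theta}}_n)\,(\boldsymbol{\bar\theta}_k^{VB}-\widehat{\boldsymbol{\theta}}_n)+R_n$, with $R_n$ a third-order remainder. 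Using the rate $\boldsymbol{\bar\theta}_k^{VB}=\widehat{\boldsymbol{\theta}}_k+O_p(n^{-3/4})$ of \citet{han2019statistical} and \citet{Zhang_2024} together with the derivative bounds of Assumption~5 ($n^{-1}\mathbf{h}(\mathbf{y},\widehat{\boldsymbol{\theta}}_n)=O_p(1)$ and $n^{-1}\sum_t\sup_{\boldsymbol{\theta}}\|\nabla^3 l_t(\boldsymbol{\theta})\|=O_p(1)$), the quadratic term is $O_p(n\cdot n^{-3/2})=O_p(n^{-1/2})$ and $R_n=O_p(n\cdot n^{-9/4})=O_p(n^{-5/4})$, so the two in-sample fits agree up to $o_p(1)$. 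The moment domination of Assumption~5 ($\sup_t E\|M_t(\mathbf{y}_t)\|^{r+\delta}<\infty$ with $r>2$) together with compactness of $\boldsymbol{\Theta}$ (Assumption~1) then supplies the uniform integrability needed to upgrade $o_p(1)$ to $o(1)$ under $E_{\mathbf{y}}$. This is precisely the step already used inside the proofs of Theorems~\ref{riskvtic} and~\ref{theom vtic}.

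For step (b), observe that $2P^k_{VPIC}$ is obtained from the functional above by the plug-in substitutions $\mathbf{B}_n\mapsto\mathbf{\bar{\Omega}}_n(\boldsymbol{\bar\theta}_k^{VB})$, $\mathbf{H}_n\mapsto\mathbf{\bar{H}}_n(\boldsymbol{\bar\theta}_k^{VB})$, $\mathbf{H}_n^d\mapsto\mathbf{\bar{H}}_n^d(\boldsymbol{\bar\theta}_k^{VB})$ and $\mathbf{C}_n\mapsto\mathbf{\hat{C}}_n(\boldsymbol{\bar\theta}_k^{VB})$. I would first establish $\mathbf{\bar{H}}_n(\boldsymbol{\bar\theta}_k^{VB})-\mathbf{H}_n(\boldsymbol{\theta}_n^p)=o_p(1)$ and $\mathbf{\bar{\Omega}}_n(\boldsymbol{\bar\theta}_k^{VB})-\mathbf{B}_n(\boldsymbol{\theta}_n^p)=o_p(1)$, uniformly in $n$, from (i) consistency $\boldsymbol{\bar\theta}_k^{VB}\overset{p}{\rightarrow}\boldsymbol{\theta}_n^p$ (QMLE consistency under Assumptions~1--6 plus the VB rate), (ii) the uniform Lipschitz/ULLN control $n^{-1}\sum_t\big(c_t^j(\mathbf{y}_t)-E c_t^j(\mathbf{y}_t)\big)\overset{p}{\rightarrow}0$ of Assumption~4, and (iii) $E[\mathbf{s}_t(\boldsymbol{\theta}_n^p)]=0$, which makes the uncentered second moment $\mathbf{\bar{\Omega}}_n(\boldsymbol{\theta}_n^p)$ agree with the centered $\mathbf{\bar{J}}_n(\boldsymbol{\theta}_n^p)$ up to $o_p(1)$ and hence estimate $\mathbf{B}_n(\boldsymbol{\theta}_n^p)=\mathbf{J}_n(\boldsymbol{\theta}_n^p)$ consistently; continuity of the diagonal map gives $\mathbf{\bar{H}}_n^d-\mathbf{H}_n^d=o_p(1)$ and then $\mathbf{\hat{C}}_n-\mathbf{C}_n=o_p(1)$. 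By Assumption~7 the sequences $-\mathbf{H}_n$, $-\mathbf{H}_n^d$ and $-\mathbf{H}_n+(-\mathbf{H}_n^d)$ are positive definite uniformly in $n$, so on a neighbourhood that contains the estimators with probability tending to one all the inverses, products, traces and log-determinants in the functional are continuous with locally bounded derivatives; the continuous mapping theorem, applied uniformly in $n$, yields $2P^k_{VPIC}=[\text{the functional}]+o_p(1)$, and Assumption~5 again upgrades this to $o(1)$ in expectation.

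The main obstacle — as for the companion Theorems~\ref{riskvtic} and~\ref{theom vtic} — is not any single matrix identity but the passage from in-probability estimates to statements about $E_{\mathbf{y}}$: one must verify that the third-order remainder $R_n$ and the plug-in errors of the penalty estimators form a uniformly integrable family in $n$. This is where the full strength of Assumptions~3 and~5 is used (eighth-order differentiability of $l_t$, $(r+\delta)$-moment domination of $\nabla^j l_t$ for $j\le4$) and where the uniform-in-$n$ definiteness of $\mathbf{H}_n$ and $\mathbf{B}_n$ in Assumption~7 is needed to keep the matrix inverses from blowing up. Once uniform integrability is in hand, the remainder of the argument is the routine algebra of substituting consistent estimators into the expansion of Theorem~\ref{riskvpic} and matching terms.
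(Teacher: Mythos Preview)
Your proposal is correct and follows essentially the same route as the paper: invoke Theorem~\ref{riskvpic}, show the VB plug-in fit term agrees with the QMLE fit term in expectation up to $o(1)$ via the $O_p(n^{-3/4})$ rate (your step~(a), the paper's equation~(\ref{t1 term finally})), and then replace $\mathbf{B}_n,\mathbf{H}_n,\mathbf{H}_n^d,\mathbf{C}_n$ by their sample analogues $\mathbf{\bar{\Omega}}_n(\boldsymbol{\bar\theta}^{VB}),\mathbf{\bar{H}}_n(\boldsymbol{\bar\theta}^{VB}),\mathbf{\bar{H}}_n^d(\boldsymbol{\bar\theta}^{VB}),\mathbf{\hat{C}}_n(\boldsymbol{\bar\theta}^{VB})$ using the consistency established in the proof of Theorem~\ref{theom vtic} (your step~(b)). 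You are somewhat more explicit than the paper about the uniform-integrability step needed to pass from $o_p(1)$ to $o(1)$ under $E_{\mathbf{y}}$, which the paper largely leaves implicit.
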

It can be proved that $\text{VPIC}^{k}$ is an asymptotically unbaised estimator of $Risk(d_{k^2})$ up to a constant.

\begin{remark}
For $\text{VPIC}^{k}$, $-2\ln p(\mathbf{y}|{\mbox{\boldmath${\bar{\theta}}$}}%
_{k}^{VB},M_{k})$ can be understood as a Bayesian measure of fit, while $2
P^{k}_{VPIC}$ measures the model complexity. $\text{VPIC}^{k}$ works for both correctly specified
and misspecified models.
\end{remark}

\subsection{BF and BIC type information criteria}
\label{subsec: bf and bic}
The BF and BIC belong to the first strand of model comparison in the section \ref{sec:intro}. They compare competing models by examining model posterior probabilities and search for the``true" model. Both BFs and BIC enjoy the property of consistency, that is, when the true DGP is one of the candidate models, BFs and BIC select it with probability approaching 1 when the sample size goes to infinity.

Suppose there are two candidate models, $M_1$ and $M_2$. The BF of $M_1$ against $M_2$ is defined as $B_{12}=\frac{p\left(\mathbf{y} | M_1\right)}{p\left(\mathbf{y} | M_2\right)}$, 
where $p\left(y | M_k\right)$ is the marginal likelihood of model $M_k$ which is obtained by
$$
p\left(\mathbf{y} | M_k\right)=\int_{\Theta_k} p\left(\mathbf{y} | \boldsymbol{\theta}_k, M_k\right) p\left(\boldsymbol{\theta}_k | M_k\right) \mathrm{d} \boldsymbol{\theta}_k, \quad \boldsymbol{\theta_k} \in \Theta_k, k=1,2,
$$
where $\boldsymbol{\theta}_k$ is the set of parameters in $M_k, p\left(\mathbf{y} | \boldsymbol{\theta}_k, M_k\right)$ the likelihood function of $M_k, p\left(\boldsymbol{\theta}_k \mid M_k\right)$ the prior of $\boldsymbol{\theta}_{\boldsymbol{k}}$ in $M_k$. If $B_{12}>1$, $M_1$ is preferred to $M_2$ and vice versa.

Based on the Laplace approximation, \citet{schwarz1978estimating} showed that the log-marginal likelihood can be approximated by
\begin{equation}
\label{bic_eq}
\begin{aligned}
\ln p\left(\mathbf{y} | M_k\right)=&\ln p\left(\mathbf{y} | \hat{\theta}_k, M_k\right)+\ln p\left(\hat{\boldsymbol{\theta}}_k | M_k\right)\\&+\frac{P_k \pi}{2}-\frac{P_k \ln n}{2}-\frac{\left|-\overline{\mathbf{H}}_n\left(\hat{\boldsymbol{\theta}}_k\right)\right|}{2}+O_p\left(\frac{1}{n}\right),
\end{aligned}
\end{equation}
where $\hat{\boldsymbol{\theta}}_k$ is the MLE of $\boldsymbol{\theta}_k$ and $\overline{\mathbf{H}}_n\left(\hat{\boldsymbol{\theta}}_k\right)$, and $P_k$ is the dimension of $\boldsymbol{\theta}_k$. Ignoring all the $O_p(1)$ terms in (\ref{bic_eq}) and under noninformative priors such as $p\left(\boldsymbol{\theta}_k | \boldsymbol{M}_k\right) \propto 1$, Schwarz defined $\mathrm{BIC}_k$ as
$\mathrm{BIC}_k =-2 \ln p\left(\mathbf{y}|\hat{\boldsymbol{\theta}}_k, M_k\right)+P_k \ln n$,
where, as in AIC and TIC, $-2 \ln p\left(\mathbf{y} |  \hat{\boldsymbol{\theta}}_k, M_k\right)$ is used to measure the model fit, but $P_k \ln n$ is the new penalty term. Obviously, $\mathrm{BIC}_k$ provides an approximation of $-2 \ln \left(\mathbf{y} | M_k\right)$.

Recently, \citet{Zhang_2024} showed that under regular conditions, the difference between the evidence lower bound, which is the by-product of VB algorithm, and $\mathrm{-BIC/2}$, is asymptotically constant as $n$ goes to infinity.

\begin{remark}
From the theoretical viewpoint, different criteria have different theoretical properties. BIC and BFs are consistent if the true model is one of the candidate models while $\mathrm{AIC}$, $\mathrm{TIC}$, $\mathrm{DIC}$, $\mathrm{DIC}_M$, $\mathrm{VDIC}_M$ and $\mathrm{VPIC}$ aim to provide the asymptotically unbiased estimator of the expected KL divergence between the DGP and a predictive distribution. When the true model is not included as a candidate model, which is often the case in practice, it is not clear what the best model selected by BIC and BFs can achieve. In this case, if one is concerned with the KL divergence between the DGP and a predictive distribution, it is expected that $\mathrm{TIC}$, $\mathrm{DIC}_M$ $\mathrm{VDIC}_M$ and $\mathrm{VPIC}$ perform better than BIC and BFs. Moreover, when the sample size is small, even when the true model is a candidate model, BIC and BFs may not select the true model. Again, if one is concerned with the KL divergence between the DGP and a predictive distribution, AIC and $\mathrm{DIC}$ can perform better than BIC and BFs. If one is considering model with massive data, in which MLE or MCMC methods can be intractable or costly, $\mathrm{VDIC}_M$ and $\mathrm{VPIC}$ will perform better. 
\end{remark}

\section{Simulation and Empirical Studies}
\label{sec:simu and real}

\subsection{Simulation Study}

We begin by using two numerical simulation examples to evaluate the performance of our newly proposed criteria in the context of massive data. Both examples involve model misspecification. In the first study, we use polymomial regression to fit a nonlinear model, aming to select the model with the best predicitve among candidate models. Similarly, in the second study, we focus on identifying the "best" model among four candidate probit models. For each scenario, we conduct 1000 replications and apply our two newly developed information criteria to assess their effectiveness with the ELBO criterion proposed by \citet{Zhang_2024}.

In every experiment, we simulate $\mathbf{y}_i$ and calculate $\text{VPIC}^k$, $\text{VDIC}_M^k$, $\text{ELBO}^k$, $\text{AIC}^k$ and $\text{BIC}^k$ of candidate model $M_k, k=1,\dots,K$. Each of the five criteria is used to selected a best model (call it $M_{k_i^*}$), we then record this model and the corresponding optimal criteria $IC(i)$. For $\text{VDIC}_M^k$, we use the VB plug-in predictive distribution $p^{VB}(\mathbf{y}_{rep}|\boldsymbol{\bar{\theta}}^{VB}_{k^{*}},M_{k^{*}})$ under the best model $M_{k*}$ to predict new data. Then we can estimate the risk by
\begin{equation*}
\widehat{Risk\left(d_{k_*^1}\right)} = \frac{1}{1000} \sum_{i=1}^{1000} IC_{k^*}\left(%
\mathbf{y}_i\right), \text{ for } \text{VDIC}_M^k,
\end{equation*} 
where $Risk\left(d_{k_*^1}\right)=E_{\mathbf{y}}\left[\mathcal{L}\left(\mathbf{y}, d_{k^*}\right)\right]=E_{\mathbf{y}}\left[2\times KL \left[g(\mathbf{y}_{rep}),p(\mathbf{y}_{rep}|\mathbf{y},M_{k^*},d_1)\right]\right]$ 

For $\text{VPIC}^k$, we use the VB posterior predictive distribution $p^{VB}(\mathbf{y}_{rep}|\mathbf{y},M_{k*})$ under the best model $M_{k*}$ to predict new data. Then we can estimate the risk by
\begin{equation*}
\widehat{Risk\left(d_{k_*^2}\right)} = \frac{1}{1000} \sum_{i=1}^{1000} IC_{k^*}\left(%
\mathbf{y}_i\right), \text{ for } \text{VPIC}^k,
\end{equation*}
Same risk is calculate to estimate the risk of $\text{AIC}^k$.

For $\text{ELBO}^k$ and $\text{BIC}^k$, we will use two proxies to evaluate its risk. As \citet{Zhang_2024} noted, under some regular conditions, the difference between $-\text{BIC}^k/2$ and $\text{ELBO}^k$ is asymptotically to be constant as 
$n$ goes to infinity. For the reasons that $\text{BIC}$ is constructed as an approximation of the marginal likelihood $p(\mathbf{y})$, not from predictive perspective, averaging for $-2\times\text{ELBOs}$ and $\text{BIC}$ in all replication as the risk of both $\text{ELBO}$ and $\text{BIC}$ is not a proper way. We will use two proxies to see the relative risk of $\text{ELBO}$. In each experiment, when choosing the best model $M_{k_i^*}$ under $\text{ELBO}$ or $\text{BIC}$, we will use both $\text{VDIC}_M^{k_i^*}$ and $\text{VPIC}^{k_i^*}$ whose expectation is the KL loss as proxy. Then we can estimate the risk of $\text{ELBO}$ and $\text{BIC}$ by
\begin{equation*}
    \begin{aligned}
        \widehat{Risk(d_{k^*})}_1 &= \frac{1}{1000}\sum_{i=1}^{1000} IC_{k^*}(\mathbf{y}_i), \text{ IC is } \text{VDIC}_M^k, \text{ and }\\
        \widehat{Risk(d_{k^*})}_2 &= \frac{1}{1000}\sum_{i=1}^{1000} IC_{k^*}(\mathbf{y}_i), \text{ IC is } \text{VPIC}^k,
    \end{aligned}
\end{equation*}
named as $\text{ELBO1}$, $\text{ELBO2}$, $\text{BIC1}$, and $\text{BIC2}$.

\subsubsection{Polymomial Regression}
We begin with a simple experiment to compare alternative model selection criteria when the true DGP is not included in the set of candidate models. In other words, all candidate models are misspecified. Following Ding et al. (2019), we generate data from the following model
$$
y_i=\ln \left(1+46 x_i\right)+e_i, e_i \sim N(0,1), i=1, \ldots, N,
$$
where $x_i=0.7(i-1) / n$ which is fixed under repeated sampling by design. In practice, researchers do not know the functional form. Suppose the following set of polynomial regressions is considered,
$$
M_k: y_i=\sum_{j=0}^{k-1} \beta_{k, j+1} x_i^j+u_i
$$
where $k=1, \ldots,\left\lfloor \ln\left(N\right)\right\rfloor$ and $u_i$ is assumed to be i.i.d. $N\left(0, \sigma^2\right)$. When $k \rightarrow \infty$ as $N \rightarrow \infty$, the polynomial regression is related to the sieve estimator which uses progressively more complex models to estimate an unknown function as more data becomes available. In our experiment, we estimate and compare all the candidate models $\left\{M_k, k=1, \ldots,\left\lfloor \ln\left(n^{3/4}\right) \right\rfloor\right\}$. Let $\mathrm{x}^j=\left(x_1^j, x_2^j, \ldots, x_N^j\right)^{\prime}, \mathbf{X}_k=\left(\mathrm{x}^0, \mathrm{x}^1, \ldots, \mathrm{x}^{k-1}\right)$, and $\mathbf{X}=\left(\mathrm{x}^0, \mathrm{x}^1, \ldots, \mathrm{x}^{\left[\ln(N)\right]-1}\right)$. In $M_k$, function $f\left(\boldsymbol{\beta}_k,\mathbf{X}_k\right) = \sum_{j=0}^{k-1} \beta_{k, j+1} x_i^j$ is used to approximate $\ln \left(1+46 x_i\right)$. Let $\boldsymbol{\beta}_k=\left(\beta_1, \ldots, \beta_k\right)^{\prime}$ so that $\boldsymbol{\theta}_k=\left(\boldsymbol{\beta}_k^{\prime}, \sigma^2\right)$, and the number of parameters is $k+1$.  

For Bayesian analysis, we assign priors to $\boldsymbol{\beta}_k$ and $\sigma^2 = h^{-1}$ as follows:
\[
\boldsymbol{\beta}_k \sim N(\tilde{\mu}, h^{-1} \tilde{V}), \quad h \sim \text{Gamma}(a, b),
\]
the hyperparameters of the priors are set as \(a = 1\), \(b = 1\), \(\tilde{\mu} = \mathbf{0}\), and \(\tilde{V} = 10^5 \times \mathbf{I}_k\). The optimal VB posterior of $\beta$ and 
$h$, which is $q(\beta, h)=q(\beta)q(h)$, approximats the true posterior $p\left(\beta,h|y\right)$, see more details in appendix B.1.

In the simulation study, the sample size varies from \(N = 500\) to \(N = 1,000,000\). For each sample size, we simulate the DGP 1000 times. In the \(i\)-th replication, a dataset of size \(N\) is simulated, and the values of \(\text{VPIC}^k\), \(\text{VDIC}_M^k\), \(\text{ELBO}^k\), \(\text{AIC}^k\) and \(\text{BIC}^k\) are computed for the candidate models \(M_k, k = 1, \dots, \left\lfloor \ln\left(N\right) \right\rfloor\).

\begin{figure}
    \centering
    \includegraphics[scale=0.4]{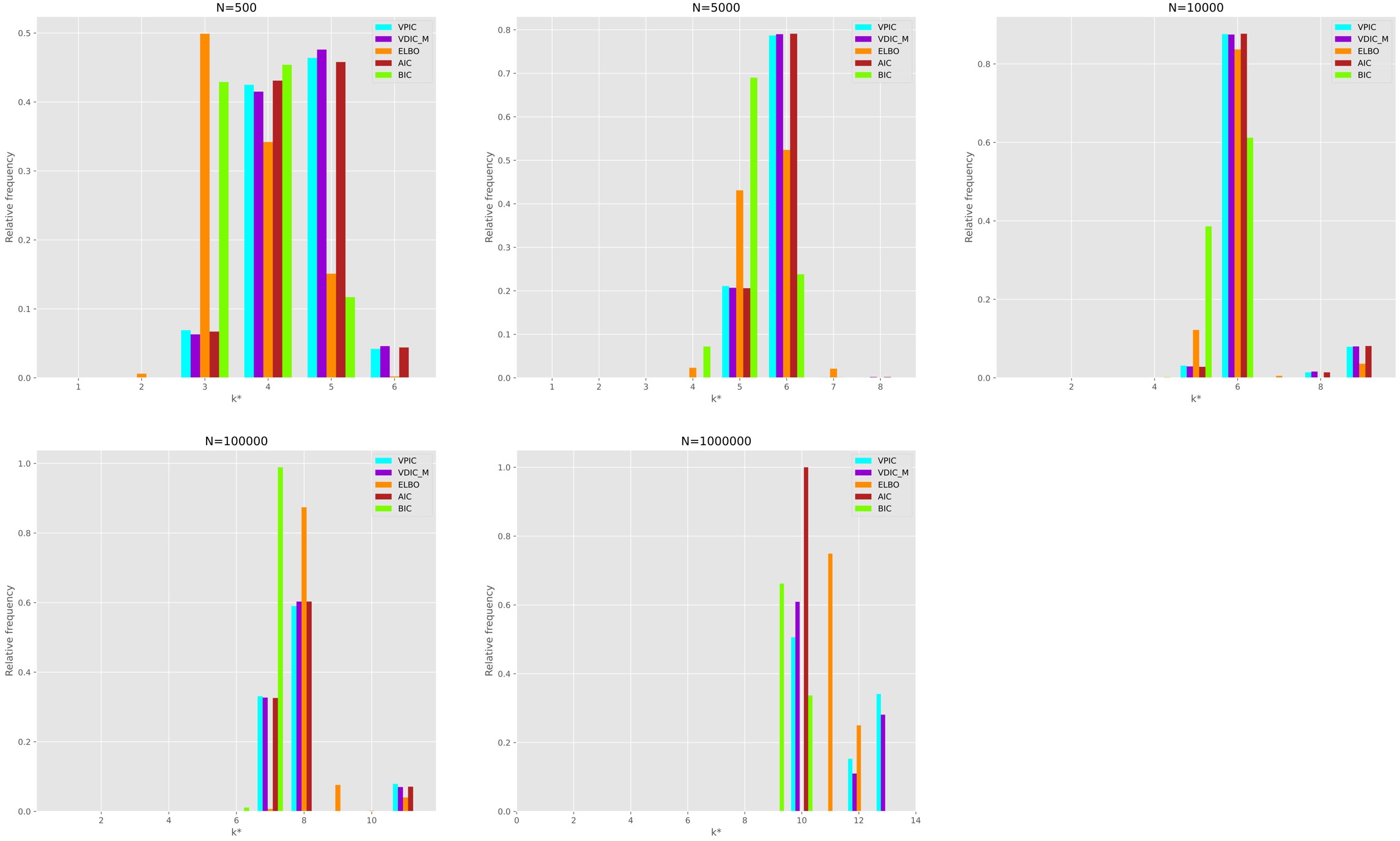}
    \caption{The figure plots relative frequencies of the polynomial orders selected by different
criteria.}
    \label{fig:poly_relative_k}
\end{figure}

\begin{table}[ht!]
    \small
    \centering
    \caption{Averge $k^*$ selected under different criteria}
    \begin{tabular}{cccccc}
    \hline N & VPIC & $\text{VDIC}_M$ & ELBO & AIC & BIC \\
    \hline 500 & 4.479 & 4.505 & 3.644 & 4.479 & 3.688 \\
    5,000 & 5.792 & 5.798 & 5.546 & 5.799 & 5.166 \\
    10,000 & 6.234 & 6.243 & 5.991 & 6.243 & 5.610 \\
    100,000 & 7.906 & 7.883 & 8.191 & 7.887 & 6.989 \\
    1,000,000 & 11.329 & 11.063 & 11.252 & 10.000 & 9.335 \\
    \hline
    \end{tabular}
    \label{optimal k poly}
\end{table}

The relative frequencies of the selected models by each of three criteria (namely VPIC, $\text{VDIC}_M$, ELBO, AIC and BIC) are reported in Figure \ref{fig:poly_relative_k}. And the average values of $k^*$ is listed in the table \ref{optimal k poly}, all across 1,000 replications. Several interesting results can be found in Figure \ref{fig:poly_relative_k}. First, the models selected by the ELBO and $\text{BIC}$ tend to be parsimonious than those selected by $\text{VPIC}$, $\text{VDIC}_M$ and $\text{AIC}$, this result is not surprising as BIC has a larger penalty term than AIC. Second, as $N$ increases, the average $k^*$ s selected by VPIC and $\text{VDIC}_M$ tends to be similar, suggested that they tend to select the same model. Though under regular conditions, the difference between $\text{BIC}$ and $\text{ELBO}$ are constant as $N$ goes infinity, in our simulation, we find that the averge $k^*$ selected by of $\text{BIC}$ and $\text{ELBO}$ tends to be different as $N$ increases. Third, as the sample size increases, the average $k^*$ s selected by all criteria tend to increase. This is not surprising as the true DGP is not a candidate model.

\begin{table}[ht!]
\small
\centering
\caption{Average risk of different criteria using polymomial regression (Scaled)}
\label{vbic for polymomial models}
\par
    \begin{tabular}{cccccccc}
    \hline & VPIC & $\text{VDIC}_M$ & ELBO1 & ELBO2 & AIC & BIC1 & BIC2 \\
    \hline 500 & \textcolor{red}{1.42124} & 1.42177 & 1.42297 & 1.42355 & 1.42190 & 1.42268 & \textcolor{blue}{1.42326} \\
    5000 & \textcolor{red}{14.19515} & 14.19565 & 14.19553 & 14.19605 & 14.19566 & 14.19710 & \textcolor{blue}{14.19764} \\
    10000 & \textcolor{red}{28.38818} & 28.38866 & 28.38837 & 28.38886 & 28.38867 & 28.38987 & \textcolor{blue}{28.39038} \\
    100000 & \textcolor{red}{283.82618} & 283.82660 & 283.82679 & 283.82715 & 283.82659 & 283.82735 & \textcolor{blue}{283.82779} \\
    1000000 & \textcolor{red}{2837.99446} & 2838.00302 & 2838.01325 & \textcolor{blue}{2838.01344} & 2838.00639 & 2838.01196 & 2838.01227 \\
    \hline
    \end{tabular}
\end{table}

Table \ref{vbic for polymomial models} reports the results of risks. We report $(\widehat{\operatorname{Risk}}-1-\ln (2 \pi))$ scaled by $10^3$ instead of $\widehat{\operatorname{Risk}}$ to better highlight differences in the risks under different criteria. We focus on the risk of $\mathrm{VPIC}$, $\mathrm{VDIC}_M$, $\mathrm{AIC}$ and two proxies of risk $\mathrm{ELBO}$ and $\mathrm{BIC}$. In our simulation experiment, $\mathrm{\text{VDIC}_M}$ and $\mathrm{\text{VPIC}}$ have smaller risks than $\mathrm{ELBO}$, $\mathrm{AIC}$ and $\mathrm{BIC}$. The most important result from Table \ref{vbic for polymomial models} is that $\text{VPIC}$ leads to a much smaller value of the expected KL divergence than the other criteria. Results obtained from this Monte Carlo study indicate that if one's objective is to get a best prediction for the future data, we should not only consider how to choose a "best" model and estimator the parametric in this model. We should take predictive distribution into consideration, that means we should use $\mathrm{\text{VDIC}_M}$ and $\mathrm{VPIC}$, not only one criterion, compare these criteria and get a minimum. Then we choose this "optimal" model and use the corresponding predictive distribution to predict the future data.

\subsubsection{Probit Regression}

In this subsection, we report a generalized linear model (GLM) example, using probit regression. We have linear predictor $Z_i=X_i^{\prime} \beta$ based on vector $X_i$, and we choose the probit link $g\left(E\left[Y_i|X_i\right]\right)=g\left(p_i\right)=Z_i$ as link function, 
the inverse of the link function $g^{-1}\left(\cdot\right)=\Phi\left(\cdot\right)$ is the cumulative distribution function (cdf) of standard normal distribution, it is shown that
\begin{equation}
Y_i \mid X_i \overset{i . i . d .}{\sim} Bernoulli\left(\Phi%
\left(X_i^{\prime} \beta \right)\right),  \label{llk of probit}
\end{equation}
where $\beta$ is $p\times1$ vector. For the Bayesian analysis, we assume a normal prior \(\beta \sim N(\tilde{\mu}, \tilde{V})\), where $\tilde{\mu}  = \mathbf{0}$ and $\tilde{V} = 10^5 \times \mathbf{I}_p$, then employ the mean-field VB method to derive the optimal VB posterior distribution \(q(\beta)\). For further details, refer to Appendix B.2

In this simulation study, we define the DGP as $p=4$, $\beta =
\left(\beta_0, \beta_1, \beta_2, \beta_3\right)^{\prime}$ with $\beta_0=-0.2$, $\beta_1=0.3$, $%
\beta_2=0$, $\beta_3 = 0.7$, $X_i = \left(1, x_{i1},x_{i2}, x_{i3}\right)^{\prime}$, and $%
N$ ranging from $N=500$ to $N=1,000,000$. We define such a model to simulate the scenario of under-fitting and over-fitting. Similarly to the first simulation study,  
we consider seven candidate models, as detailed below, 
\begin{table}[ht!]
\small
\centering
\caption{Candidate models of probit simulated data}
\label{candidate model of simulation2}
\begin{tabular}{cccc}
\hline Model & Numbers of variable & Model & Model specification\\
\hline $M_1$ & 1 & $Z_i=\left(1,x_{i1}\right)^{\prime}$ & Underfitting\\
$M_2$ & 1 & $Z_i=\left(1,x_{i2}\right)^{\prime}$ & Underfitting\\
$M_3$ & 1 & $Z_i=\left(1,x_{i3}\right)^{\prime}$ & Underfitting\\
$M_4$ & 2 & $Z_i=\left(1,x_{i1},x_{i2}\right)^{\prime}$ & Underfitting\\
$M_5$ & 2 & $Z_i=\left(1,x_{i1},x_{i3}\right)^{\prime}$ & Correctly specified\\
$M_6$ & 2 & $Z_i=\left(1,x_{i2},x_{i3}\right)^{\prime}$ & Underfitting\\
$M_7$ & 3 & $Z_i=\left(1,x_{i1},x_{i2},x_{i3}\right)^{\prime}$ & Overfitting\\
\hline
\end{tabular}
\end{table}

We replicated DGP for 1000 times, in the $i^{th}$
replication, we generate the data with sample size $N$, and calculate $\text{VPIC}^k$, $\text{VDIC}_M^k$, $\text{ELBO}^k$, $\text{AIC}^k$ and $\text{BIC}^k$ with $M_k=M_1,\dots,M_7$. Then we compare the performance of these criteria.

\begin{table}[ht!]
\small
\centering
\caption{Average risk of different criteria using probit regression (Scaled)}
\label{vbic for probit simu}
\par
\begin{tabular}{cccccccc}
    \hline & VPIC & $\text{VDIC}_m$ & ELBO1 & ELBO2 & AIC & BIC1 & BIC2 \\
    \hline 500 & \textcolor{red}{0.59530} & 0.59624 & 0.59824 & \textcolor{blue}{0.59907} & 0.59626 & 0.59554 & 0.59643 \\
    5,000 & \textcolor{red}{5.96466} & 5.96539 & 5.96474 & \textcolor{blue}{5.96562} & 5.96539 & 5.96473 & 5.96561 \\
    10,000 & \textcolor{red}{11.92945} & 11.93039 & 11.92982 & \textcolor{blue}{11.93070} & 11.93040 & 11.92977 & 11.93065 \\
    100,000 & \textcolor{red}{119.30140} & 119.30233 & 119.30170 & \textcolor{blue}{119.30258} & 119.30233 & 119.30169 & 119.30257 \\
    1,000,000 & \textcolor{red}{1193.04493} & 1193.04586 & 1193.04522 & \textcolor{blue}{1193.04610} & 1193.04586 & 1193.04522 & \textcolor{blue}{1193.04610} \\
    \hline
\end{tabular}
\end{table}

Table \ref{vbic for probit simu} presents the average risk associated with two different information criteria for seven candidate models under \(N\) ranging from 500 to 1,000,000. Each column in the table reports the risk when choosing the optimal candidate model \(M_{k^*}\). The risk of \(\text{VPIC}^k\) is consistently lower than that of \(\text{VDIC}_M^k\). Also, same like the results in the first simulation, our Monte Carlo experiment has shown that predictive risk under choosing the optimal candidate model from Variational predictive distribution is lower than that from variational lower bound. 

\subsection{Empirical Studies}
In this subsection, we first analyze a linear model with different covariates to identify the model that best predicts the number of passengers transported by flight. In the second study, we examine a credit risk model, typically formulated as a binary classification problem. These real data studies aim to show the performance of our two proposed new criteria, and to present that these VB based information criteria can well behave under big data analysis.  

\subsubsection{US Domestic Flights Predictive Model}
In this section, we analyze a linear model with different covariates to identify the model that best predicts the number of passengers transported by flight. The data set used in this analysis pertains to US domestic flights from 1990 to 2009 and contains approximately \(N = 3.61\) million observations. This data set is publicly available on Kaggle.  \citet{Chasiotis_Karlis_2024} employed this dataset to fit a linear regression model, selecting \(p = 5\) measurements as covariates. In this study, we utilize linear regression to explore the relationships between the dependent variable PASSENGERS (number of passengers, $y$) and the selected covariates, including SEATES (number of seats available on flight, $x_1$), FLIGHTS (number of flights between two locations, $x_2$), DISTANCE (distance flown between origin and destination, $x_3$), ORIGIN POP (origination city's population, $x_4$), DESTINATION POP (destination city's population, $x_5$), ORIGIN LONG (origination airport longitude, $x_6$), DESTINATION LONG (destination airport longitude, $x_7$), ORIGIN LAT (origination airport latitude, $x_8$), DESTINATION LAT (origination airport latitude, $x_9$). To conduct the model selection problem of this dataset, we consider four candidate models, and we list the candidate models and related considerations.  

\begin{table}[ht!]
\small
\centering
\caption{Candidate model set for US domestic flights data}
\label{model descrpition in linear empirical}
\begin{tabular}{ccc}
\hline
Model & Description & Number of covariate\\ \hline
$M_1$ & $Y_i = \beta_0+\beta_1 x_1 + \beta_2 x_2 + \epsilon_i$ & 2\\
$M_2$ & $Y_i = \beta_0+\beta_1 x_1 + \beta_2 x_2 + \beta_3 x_3 +\epsilon_i$ & 3\\ 
$M_3$ & $Y_i = \beta_0+\beta_1 x_1 + \beta_2 x_2 + \beta_3 x_3 + \beta_4 x_4+ \beta_5 x_5 + \epsilon_i$ & 5\\ 
$M_4$ & $Y_i = \beta_0+\beta_1 x_1 + \beta_2 x_2 + \beta_3 x_3 + \dots + \beta_9 x_9 + \epsilon_i$ & 9\\  \hline
\end{tabular}%
\end{table}

Table \ref{model descrpition in linear empirical} lists the variables we use in the linear regression model. For model comparison, we use mean-field VB to obtain the variational posterior estimators, and then compute the two new proposed information criteria $\text{VPIC}^{k}$ and $\text{VDIC}_M^{k}$ for all candidate models. In choosing the optimal model, to compare the performance of our new proposed method, with other commonly used criteria, we also report $\text{ELBO}^k$, $\text{AIC}^{k}$, $\text{BIC}^k$, $\text{DIC}^k$ and $\text{DIC}^k_M$.

Table \ref{Ic for air flight model} presents the values of \(\text{VPIC}^k\) and \(\text{VDIC}_M^k\), along with $\text{ELBO}^k$ and conventional (or benchmark) information criteria like $\text{AIC}^k$, $\text{BIC}^k$, $\text{DIC}$ and $\text{DIC}_M$. For the candidate models \(\{M_k\}_{k=1}^4\). Importantly, both \(\text{VPIC}^k\) and \(\text{VDIC}_M^k\) select model \(M_4\), same as the benchmark information criteria, indicating that \(M_4\) is preferred over the other candidate models under the same criteria. Based on these results, we recommend selecting model \(M_4\) and using the VB posterior predictive distribution for decision-making to achieve the minimum predictive risk.
\begin{table}[ht!]
    \small
    \centering
    \caption{Model selection results of 4 candidate models in US air flight data}
    \label{Ic for air flight model}
    \begin{tabular}{ccccc}
    \hline & $\text{M}_1$ & $\text{M}_2$ & $\text{M}_3$ & $\text{M}_4$ \\
    \hline $\text{VPIC}$ & 60255994.7307 & 60127455.0509 & 60123759.8081 & \textcolor{red}{60113021.1802} \\
    $\text{VDIC}_M$ & 60256044.6951 & 60127505.9314 & 60123811.8674 & \textcolor{red}{60113073.3737} \\
    $\text{ELBO}$ & -54214029.0097 & -54149777.3111 & -54147985.5373 & -54142673.4084 \\
    $\text{AIC}$ & 60255940.7704 & 60127400.1087 & 60123703.9091 & 60112963.8209 \\
    $\text{BIC}$ & 60255993.1500 & 60127465.5832 & 60123795.5734 & 60113107. 8649 \\
    $\text{DIC}$ & 60255940.8067 & 60127400.1023 & 60123703.8001 & 60112963.8361 \\
    $\text{DIC}_M$ & 60256042.8225 & 60127506.9269 & 60123810.6099 & 60113072.8852 \\
    \hline
    \end{tabular}
\end{table}

To show the difference among seven information criteria, we report a more detailed summary, shown in table \ref{detailed info of us air flight}. As is known, common information criterion are constructed with two terms: one is the fit term $D(\boldsymbol{\theta})$ equals $-2*\ell\left(\boldsymbol{\theta}\right)$, where $\ell\left(\boldsymbol{\theta}\right)$ is the logarithm likelihood function, and penalty term $P_{IC}$ depend on different measures. If one conducts model selection under Bayes framework, one aims to use the true posterior mean of $\boldsymbol{\theta}$, which is $\bar{\boldsymbol{\theta}}$, or turn to use VB posterior mean $\bar{\boldsymbol{\theta}}^{VB}$ in fit term. Some results can be found in this table. First, as we report in the table, the difference between $\ell\left(\bar{\boldsymbol{\theta}}\right)$ and $\ell\left(\bar{\boldsymbol{\theta}}^{VB}\right)$ is very small, showing that the true posterior mean and VB posterior mean tend to converge to the same value as the size of observed data $N$ goes to infinity. It should also be noted that the inference time between $\bar{\boldsymbol{\theta}}$ and $\bar{\boldsymbol{\theta}}^{VB}$ differs in application, to obtain $\bar{\boldsymbol{\theta}}$ in this around 3 million data, we expend 7159.35 seconds using MCMC, however, 2.54 seconds is used to obtain $\bar{\boldsymbol{\theta}}^{VB}$ under VB as we recorded. As N becomes larger or a more complicated model incoming, one may have to turn to used VB based information criteria rather than using other criteria. Second, both the penalty term $P_{\text{VDIC}_M}$ and $P_{\text{DIC}_M}$ are similar, indicating that $\text{VDIC}_M$ behaves like $\text{DIC}_M$. In addition, $P_{DIC}$ is similar to $P_{AIC}$, as \citet{li2024deviance} showed that $\text{DIC}$ is a Bayesian version of $\text{AIC}$.

\begin{table}[ht!]
    \caption{Difference among fit term and penalty term}
    \label{detailed info of us air flight}
    \small
    \centering
    \begin{tabular}{ccccc}
    \hline & $\text{M}_1$ & $\text{M}_2$ & $\text{M}_3$ & $\text{M}_4$ \\
    \hline $\ell\left(\bar{\boldsymbol{\theta}}^{VB}\right)$ & -30127966.3852 & -30063695.0543 & -30061844.9545 & -30056470.9105 \\
    $\ell\left(\bar{\boldsymbol{\theta}}\right)$ & -30127966.3856 & -30063695.0544 & -30061844.9552 & -30056470.9115 \\
    $||\ell\left(\bar{\boldsymbol{\theta}}^{VB}\right)-\ell\left(\bar{\boldsymbol{\theta}}\right)||$ & 0.0003 & 0.0001 & 0.0007 & 0.0010 \\
    $P^k_{\text{VPIC}}$ & 30.98 & 32.47 & 34.95 & 39.68 \\
    $P^k_{\text{VDIC}_M}$ & 55.96 & 57.91 & 60.98 & 65.78 \\
    $P^k_{\text{AIC}}$ & 4 & 5 & 7 & 11 \\
    $P^k_{\text{BIC}}$ & 30.19 & 37.74 & 52.83 & 83.02 \\
    $P^k_{\text{DIC}}$ & 4.02 & 5.00 & 6.94 & 11.01 \\
    $P^k_{\text{DIC}_M}$ & 55.03 & 58.41 & 60.35 & 65.53 \\
    \hline
    \end{tabular}
\end{table}

\subsubsection{Credit Risk Analysis}
The credit risk analysis is an application of binary classification model, including probit regression and logistic regression, used to determine whether a loan should be granted based on various borrower-specific information. In the context of binary classification, we define \(Y_i = 1\) if a loan is approved for the borrower, and \(Y_i = 0\) if it is not.  For this study, we utilize the LendingClub dataset, which is publicly available on Kaggle. This data downloaded from Kaggle has about 3 million, and covers the period from 2007 to the third quarter of 2020. By referring filtering process in \href{https://www.kaggle.com/code/kevinkollcaku/loan-classification}{Loan Classification}, we finally got 1.74 million data points. 
Tabel \ref{varaible descrpition} lists independent varaible and dependent variables that we are interested in. 

\begin{table}[ht!]
\small
\centering
\caption{Variable description for credit risk model}
\label{varaible descrpition}
\begin{tabular}{ccc}
\hline
Variable & Symbol & Description \\ \hline
loan status & $Y_i$ & Current status of the loan, if loaned $Y_i=1$, else $%
Y_i=0$ \\ 
annual inc & $AnuI_i$ & Annual income provided by the borrower during
registration \\ 
emp length & $Emp_i$ & Employment length in years. \\ 
dti & $DTI_i$ & Debt-to-Income Ratio, excluding mortgage and the requested
LC loan \\ 
loan amount & $Loanam_i$ & The amount of the loan applied for by the borrower
\\ 
term & $Term_i$ & The number of payments on the loan. \\ \hline
\end{tabular}%
\end{table}

We use probit regression and logistic regression to model the factors that affect personal
loans, linear combination is $Z_i = \beta_0+\beta_1 \log AnuI_i + \beta_2
Emp_i + \beta_3 DTI_i + \beta_4 \log Loanam_i + \beta_5 Term_i$. Candidate models are $M_k, k=1,2$, which can be listed as 
$$Y_i|Z_i \overset{i.i.d.}{\sim} Bernoulli\left(\mu%
\left(Z_i\right)\right),$$
where $\mu\left(Z_i\right)$ differs in 
$M_1: \mu\left(Z_i\right) =\Phi\left(Z_i\right)$, and $M_2: \mu\left(Z_i\right) =logit\left(Z_i\right)$,
where $\Phi\left(\cdot\right)$ is the cumulative density function (CDF) of standard normal distribution, and $logit\left(\cdot\right)$ is the logit link function.  For choosing the best model, we use mean-field VB to obtain the variational posterior mean estimator and compute the $\text{VPIC}^{k}$, $\text{VDIC}_M^{k}$, $\text{ELBO}^k$. Benchmark criteria, including $\text{AIC}^k$ and $\text{BIC}^k$ are also calculated 
for all candidate models. The estimator of two models are reported in table \ref{cred_risk_est}
\begin{table}[ht!]
    \caption{Variational posterior mean and standard error of $\boldsymbol{\beta}$ in M1 and M2}
    \label{cred_risk_est}
    \centering
   \begin{tabular}{cccccccc}
    \hline & & $\beta_0$ & $\beta_1$ & $\beta_2$ & $\beta_3$ & $\beta_4$ & $\beta_5$ \\
    \hline \multirow{2}{*}{M1} & $\mu_{VB}$ &  1.33 & 0.18 & 0.01 & -0.14 & -0.01 & -0.06 \\
     & $\sigma^2_{VB}$ & 1.67E-02 & 1.62E-03 & 2.06E-04 & 1.29E-03 & 8.52E-05 & 1.74E-04 \\
    \multirow{2}{*}{M2} & $\mu_{VB}$ & -1.18 & 0.49 & 0.01 & -0.06 & -0.02 & -0.11 \\
    & $\sigma^2_{VB}$ & 7.69E-03 & 1.45E-03 & 1.92E-04 & 1.79E-03 & 2.24E-04 & 2.58E-04 \\
    \hline
    \end{tabular}
\end{table}

Table \ref{Ic for credit risk model} presents the values of \(\text{VPIC}^k\) and \(\text{VDIC}_M^k\), along with $\text{AIC}^k$ and $\text{BIC}^k$ for models \(\{M_k\}_{k=1}^2\). The primary differences between the criteria is mainly due to the logarithm likelihood function (or fit term), which is no surprising as the prior of $\boldsymbol{\beta}$ is vauge. Importantly, both \(\text{VPIC}^k\) and \(\text{VDIC}_M^k\) identify model \(M_1\) as the best among the candidate models, indicating its superiority under these criteria. These VB based criteria suggest that the probit model is better than the logit model. Based on these findings, we recommend selecting model \(M_1\) and employing the VB posterior predictive distribution for decision-making to minimize predictive risk.

\begin{table}[ht!]
\small
\centering
\caption{Model selection results for the probit model and the logit model}
\label{Ic for credit risk model}
\begin{tabular}{cccccc}
\hline & $\text{VPIC}$ & $\text{VDIC}_M$ & $\text{ELBO}$ & $\text{AIC}$ & $\text{BIC}$ \\
\hline $\text{M}_1$ & \textcolor{red}{1572336.0802} & \textcolor{red}{1570922.7133} & -785536.9832 & 1570922.7620 & 1570996.9902 \\
$\text{M}_2$ & 1583798.1590 & 1582025.9001 & -825708.4686 & 1582024.7883 & 1582099.0165 \\
\hline
\end{tabular}
\end{table}
\section{Conclusion} \label{sec:conclude}

In this paper, we propose two novel penalty-based predictive information criteria for model comparison in the context of misspecified models with massive data. First, leveraging the VB 
posterior distribution, we demonstrate that two types of predictive distributions can be derived from a predictive perspective: the variational plug-in predictive distribution and the variational posterior predictive distribution. Second, we investigate the risk functions associated with these two variational predictive distributions, which are defined as the expectations of the KL divergence between the DGP and the predictive distributions. Third, under specific regularity conditions, we prove that the proposed information criteria are asymptotically unbiased estimators of their respective risk functions. Finally, through comprehensive numerical simulations and empirical applications in the fields of economics and finance, we demonstrate the performance of the proposed information criteria for model comparison of misspecified models in the context of massive data.



\bibliographystyle{apalike}
\bibliography{ref2024}

\pagebreak
\appendix
\setcounter{page}{1}{}\setlength{\baselineskip}{12pt} %
\setcounter{equation}{0}

\begin{center}
	{\Large Online Supplement for }
	
	{\Large \textquotedblleft Comparing Misspecified Models with Big Data: A Variational Bayesian Perspective\textquotedblright\ }
	
	\quad
	
	\ Yong Li$^{a}$, Sushanta K. Mallick$^{b}$, Tao Zeng$^{c}$, Junxing Zhang$^{a}$\\
	\quad
	
		$^{a}$ School of Economics, Renmin University of China, China \\
        $^{b}$ School of Business and Management, Queen Mary University of London
        \\
	$^{c}$ School of Economics, Zhejiang University, China

	{\small \ \ \ \ \ \ \ \ \ }
\end{center}
  This Online Supplement consists of two sections. Section A contains the proofs of Theorem \ref{riskvtic} and Theorem \ref{riskvpic}, with related lemmas used in these proofs. Section B contains VB analytical expression of parametric models used in the paper. 

\section{Proofs for Theorems and related lemmas}	

\subsection{Notations}

\begin{tabular}{llll}
$:=$ & definitional equality & $\overleftrightarrow{{\mbox{\boldmath${%
\theta}$}}}_{n}$ & posterior mode \\ 
$o(1)$ & tend to zero & $\widehat{{\mbox{\boldmath${\theta}$}}}_{n}$ & QML
estimate \\ 
$o_{p}(1)$ & tend to zero in probability & $\mbox{\boldmath${\theta}$}%
_{n}^{p}$ & pseudo true parameter \\ 
$\overset{p}{\rightarrow }$ & converge in probability & ${%
\mbox{\boldmath${\hat{\theta}}$}}_{AT}$ & $\arg \max $ of $2\ln p(\mathbf{y}|%
{\mbox{\boldmath${\theta}$}})+\ln p({\mbox{\boldmath${\theta}$}})$ \\ 
$\overline{\mbox{\boldmath${\theta}$}}_{n}$ & posterior mean & $\widetilde{%
\mbox{\boldmath${\theta}$}}_{n}$ & $\arg \max $ of $\ln p\left( \mathbf{y}%
_{rep}|\mbox{\boldmath${\theta}$}\right) +\ln p\left( \mathbf{y}|%
\mbox{\boldmath${\theta}$}\right) +\ln p\left( \mbox{\boldmath${\theta}$}%
\right) $%
\end{tabular}

\subsection{Proof of Theorems in the main paper}

Denote%
\begin{equation*}
\widetilde{\mbox{\boldmath${\theta}$}}_{n}^{s}:=\arg \max_{%
\mbox{\boldmath${\theta}$}}\ln p\left( \mathbf{y}_{rep}|\mbox{\boldmath${%
\theta}$}\right) -\frac{n}{2}\left( \widehat{{\mbox{\boldmath${\theta}$}}}%
_{n}\left( \mathbf{y}\right) -\mbox{\boldmath${\theta}$}\right) ^{\prime
}\left( -\mathbf{H}_{n}^{d}\right) \left( \widehat{{\mbox{\boldmath${%
\theta}$}}}_{n}\left( \mathbf{y}\right) -\mbox{\boldmath${\theta}$}\right)
\end{equation*}%
where $\mathbf{H}_{n}^{d}$ is diagonal and has the same diagonal terms as $%
\mathbf{H}_{n}$. Then we have the following three lemmas under the
condition that $\mathbf{y}$ and $\mathbf{y}_{rep}$ are independent. These
three lemma are useful to prove Theorem 3.2. 

\begin{lemma}
\label{lemmaconsistency}Under Assumptions 1-8, $\widetilde{%
\mbox{\boldmath${\theta}$}}_{n}\overset{p}{\rightarrow }\mbox{\boldmath${%
\theta}$}_{n}^{p}{.}$
\end{lemma}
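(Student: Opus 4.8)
The plan is to establish consistency of $\widetilde{\mbox{\boldmath${\theta}$}}_{n}$ toward the pseudo-true value $\mbox{\boldmath${\theta}$}_{n}^{p}$ by a standard M-estimation argument, treating the objective function
\[
Q_n(\mbox{\boldmath${\theta}$}) := \ln p\left( \mathbf{y}_{rep}|\mbox{\boldmath${\theta}$}\right) -\frac{n}{2}\left( \widehat{{\mbox{\boldmath${\theta}$}}}_{n}\left( \mathbf{y}\right) -\mbox{\boldmath${\theta}$}\right) ^{\prime}\left( -\mathbf{H}_{n}^{d}\right) \left( \widehat{{\mbox{\boldmath${\theta}$}}}_{n}\left( \mathbf{y}\right) -\mbox{\boldmath${\theta}$}\right)
\]
as a perturbation of the usual log-likelihood criterion. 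First I would note that $\widehat{{\mbox{\boldmath${\theta}$}}}_{n}(\mathbf{y}) \overset{p}{\rightarrow} \mbox{\boldmath${\theta}$}_{n}^{p}$ by White's QML theory (Assumptions 1--7, as invoked in the excerpt following \eqref{iden}), and that $-\mathbf{H}_n^d$ is bounded uniformly in $n$ by Assumption 7 and the uniform bounds in Assumption 5. I would then divide $Q_n$ by $n$ and show that $\frac{1}{n}Q_n(\mbox{\boldmath${\theta}$})$ converges uniformly over the compact set $\mbox{\boldmath${\Theta}$}$ (Assumption 1) to the same limiting criterion $\bar{Q}(\mbox{\boldmath${\theta}$}) = \lim_n \frac{1}{n}\sum_{t=1}^n E[l_t(\mbox{\boldmath${\theta}$})]$ that governs the QML estimator, because the quadratic penalty term, being $O_p(n\cdot\|\widehat{{\mbox{\boldmath${\theta}$}}}_{n}-\mbox{\boldmath${\theta}$}\|^2)$ with a bounded matrix, contributes at most $O_p(n)$ but — crucially — is maximized at $\mbox{\boldmath${\theta}$} = \widehat{{\mbox{\boldmath${\theta}$}}}_{n} \to \mbox{\boldmath${\theta}$}_{n}^{p}$, so it does not shift the location of the argmax asymptotically.

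The key steps in order: (i) invoke the uniform law of large numbers for $\frac{1}{n}\ln p(\mathbf{y}_{rep}|\mbox{\boldmath${\theta}$}) = \frac{1}{n}\sum_t l_t(\mathbf{y}_{rep,t},\mbox{\boldmath${\theta}$})$, which follows from the Lipschitz-type condition in Assumption 4 together with the envelope/moment bound in Assumption 5 and the i.i.d.\ structure in Assumption 2 (here $\mathbf{y}_{rep}$ is an independent copy with the same DGP, so the same ULLN applies); (ii) observe that $\frac{1}{2n}\cdot n\left( \widehat{{\mbox{\boldmath${\theta}$}}}_{n}-\mbox{\boldmath${\theta}$}\right)^{\prime}\left( -\mathbf{H}_{n}^{d}\right)\left( \widehat{{\mbox{\boldmath${\theta}$}}}_{n}-\mbox{\boldmath${\theta}$}\right) = \frac12(\widehat{{\mbox{\boldmath${\theta}$}}}_{n}-\mbox{\boldmath${\theta}$})^{\prime}(-\mathbf{H}_n^d)(\widehat{{\mbox{\boldmath${\theta}$}}}_{n}-\mbox{\boldmath${\theta}$})$ converges uniformly in $\mbox{\boldmath${\theta}$}$ to $\frac12(\mbox{\boldmath${\theta}$}_{n}^{p}-\mbox{\boldmath${\theta}$})^{\prime}(-\mathbf{H}_n^d)(\mbox{\boldmath${\theta}$}_{n}^{p}-\mbox{\boldmath${\theta}$})$, a continuous nonnegative function vanishing at $\mbox{\boldmath${\theta}$}_{n}^{p}$ — I would be careful to extract a convergent subsequence of $\{\mathbf{H}_n^d\}$ if needed, or argue along the sequence using the uniform bounds in Assumptions 5 and 7; (iii) combine (i) and (ii) so that $\frac{1}{n}Q_n(\mbox{\boldmath${\theta}$})$ converges uniformly to $\bar{Q}(\mbox{\boldmath${\theta}$}) - \frac12(\mbox{\boldmath${\theta}$}_{n}^{p}-\mbox{\boldmath${\theta}$})^{\prime}(-\mathbf{H}_n^d)(\mbox{\boldmath${\theta}$}_{n}^{p}-\mbox{\boldmath${\theta}$})$; (iv) invoke the identification condition \eqref{iden} (Assumption 6), which makes $\mbox{\boldmath${\theta}$}_{n}^{p}$ the well-separated unique maximizer of $\bar{Q}$, and note that subtracting the nonnegative penalty term (which is zero exactly at $\mbox{\boldmath${\theta}$}_{n}^{p}$) only sharpens the separation; (v) apply the standard argmax / consistency theorem for extremum estimators to conclude $\widetilde{\mbox{\boldmath${\theta}$}}_{n} \overset{p}{\rightarrow} \mbox{\boldmath${\theta}$}_{n}^{p}$.

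The main obstacle I anticipate is the bookkeeping around the $n$-dependence of the pseudo-true value $\mbox{\boldmath${\theta}$}_{n}^{p}$ and of the matrix $\mathbf{H}_n^d$: because the excerpt works with triangular-array / sequence-of-minimizers language (``uniformly in $n$'' in Assumptions 6 and 7) rather than a fixed limiting DGP, one cannot simply pass to a fixed limit criterion but must carry the uniform-in-$n$ bounds through the ULLN and the argmax step. I would handle this exactly as in \citet{Zhang_2024} and the QML references \citet{gallant1988unified} and \citet{WOOLDRIDGE19942639} cited after \eqref{iden}: the uniform integrability from Assumption 5 ($r+\delta$ moments, $r>2$) gives the ULLN uniformly in the array index, and Assumptions 6--7 give the uniform well-separatedness, so the conclusion $\widetilde{\mbox{\boldmath${\theta}$}}_{n} - \mbox{\boldmath${\theta}$}_{n}^{p} \overset{p}{\rightarrow} 0$ holds in the required sense. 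A secondary minor point is verifying that $\widetilde{\mbox{\boldmath${\theta}$}}_{n}$ is interior to $\mbox{\boldmath${\Theta}$}$ eventually, which follows from interiority of $\mbox{\boldmath${\theta}$}_{n}^{p}$ (Assumption 6) plus the consistency just established.
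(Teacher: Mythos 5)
Your proposal is correct and follows essentially the same route as the paper: both establish uniform convergence of the normalized objective (the log-likelihood of $\mathbf{y}_{rep}$ plus the quadratic penalty centered at $\widehat{\boldsymbol{\theta}}_n(\mathbf{y})$) to a limit criterion that, by Assumption 6 together with the fact that the penalty is nonpositive and vanishes at $\boldsymbol{\theta}_n^p$, is uniquely and well-separatedly maximized at $\boldsymbol{\theta}_n^p$, and then invoke the standard extremum-estimator consistency argument. The only organizational difference is that the paper runs a single finite-covering/Lipschitz ULLN for the combined objective, whereas you treat the likelihood and penalty pieces separately before recombining.
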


\begin{proof}
The proof follows the argument in Theorem 4.2 in Wooldridge (1994) and
Bester and Hansen (2006). Let $Q_{n}\left( \mbox{\boldmath${\theta}$}\right)
=n^{-1}\sum_{t=1}^{n}\left[ l_{t}\left( \mathbf{y}_{rep}^{t},%
\mbox{\boldmath${\theta}$}\right) -\frac{1}{2}\left( \widehat{{%
\mbox{\boldmath${\theta}$}}}_{n}\left( \mathbf{y}\right) -%
\mbox{\boldmath${\theta}$}\right) ^{\prime }\left( -\mathbf{H}%
_{n}^{d}\right) \left( \widehat{{\mbox{\boldmath${\theta}$}}}_{n}\left( 
\mathbf{y}\right) -\mbox{\boldmath${\theta}$}\right) \right] $ and\textbf{\ }%
$\bar{Q}_{n}\left( \mbox{\boldmath${\theta}$}\right) =n^{-1}E\left[ \ln
p\left( \mathbf{y}_{rep}|\mbox{\boldmath${\theta}$}\right) -\frac{n}{2}%
\left( \widehat{{\mbox{\boldmath${\theta}$}}}_{n}\left( \mathbf{y}\right) -%
\mbox{\boldmath${\theta}$}\right) ^{\prime }\left( -\mathbf{H}%
_{n}^{d}\right) \left( \widehat{{\mbox{\boldmath${\theta}$}}}_{n}\left( 
\mathbf{y}\right) -\mbox{\boldmath${\theta}$}\right) \right] $. \ For
simplicity, let%
\begin{equation*}
l_{t}^{\prime }\left( \mathbf{y},\mbox{\boldmath${\theta}$}\right) =-\frac{1%
}{2}\left( \widehat{{\mbox{\boldmath${\theta}$}}}_{n}\left( \mathbf{y}%
\right) -\mbox{\boldmath${\theta}$}\right) ^{\prime }\left( -\mathbf{H}%
_{n}^{d}\right) \left( \widehat{{\mbox{\boldmath${\theta}$}}}_{n}\left( 
\mathbf{y}\right) -\mbox{\boldmath${\theta}$}\right) ,
\end{equation*}%
then%
\begin{equation*}
Q_{n}\left( \mbox{\boldmath${\theta}$}\right) =n^{-1}\sum_{t=1}^{n}\left[
l_{t}\left( \mathbf{y}_{rep}^{t},\mbox{\boldmath${\theta}$}\right)
+l_{t}^{\prime }\left( \mathbf{y},\mbox{\boldmath${\theta}$}\right) \right] ,
\end{equation*}%
\begin{equation*}
\bar{Q}_{n}\left( \mbox{\boldmath${\theta}$}\right) =n^{-1}E\left[
\sum_{t=1}^{n}\left[ l_{t}\left( \mathbf{y}_{rep}^{t},\mbox{\boldmath${%
\theta}$}\right) +l_{t}^{\prime }\left( \mathbf{y},\mbox{\boldmath${\theta}$}%
\right) \right] \right] .
\end{equation*}%
Then we need to show that, for each $\varepsilon >0$, 
\begin{equation*}
P\left[ \sup_{\mbox{\boldmath${\theta}$}\mathbf{\in {\mbox{\boldmath${%
\Theta}$}}}}\left\vert Q_{n}\left( \mbox{\boldmath${\theta}$}\right) -\bar{Q}%
_{n}\left( \mbox{\boldmath${\theta}$}\right) \right\vert >\varepsilon \right]
\rightarrow 0\text{.}
\end{equation*}%
Let $\delta >0$ be a number to be set later. Because ${\mbox{\boldmath${%
\Theta}$}}$\ is compact, there exists a finite number of spheres of radius $%
\delta $\ about $\mbox{\boldmath${\theta}$}_{j}$, say $\zeta _{\delta
}\left( \mbox{\boldmath${\theta}$}_{j}\right) =\left\{ \mbox{\boldmath${%
\theta}$}\mathbf{\in {\mbox{\boldmath${\Theta}$}}:}\left\Vert %
\mbox{\boldmath${\theta}$}\mathbf{-}\mbox{\boldmath${\theta}$}%
_{j}\right\Vert \leq \delta \right\} $, $j=1,\ldots ,K\left( \delta \right) $%
, which covers ${\mbox{\boldmath${\Theta}$}}$ (Gallant and White, 1988). Set%
\textbf{\ }$\zeta _{j}=\zeta _{\delta }\left( \mbox{\boldmath${\theta}$}%
_{j}\right) $\textbf{, }$K=K\left( \delta \right) $\textbf{.} It follows that%
\begin{eqnarray*}
P\left[ \sup_{\mbox{\boldmath${\theta}$}\mathbf{\in {\mbox{\boldmath${%
\Theta}$}}}}\left\vert Q_{n}\left( \mbox{\boldmath${\theta}$}\right) -\bar{Q}%
_{n}\left( \mbox{\boldmath${\theta}$}\right) \right\vert >\varepsilon \right]
&\leq &P\left[ \max_{1\leq j\leq K}\sup_{\mbox{\boldmath${\theta}$}\mathbf{%
\in }\zeta _{j}}\left\vert Q_{n}\left( \mbox{\boldmath${\theta}$}\right) -%
\bar{Q}_{n}\left( \mbox{\boldmath${\theta}$}\right) \right\vert >\varepsilon %
\right] \\
&\leq &\sum_{j=1}^{K}P\left[ \sup_{\mbox{\boldmath${\theta}$}\mathbf{\in }%
\zeta _{j}}\left\vert Q_{n}\left( \mbox{\boldmath${\theta}$}\right) -\bar{Q}%
_{n}\left( \mbox{\boldmath${\theta}$}\right) \right\vert >\varepsilon \right]
.
\end{eqnarray*}%
For all $\mbox{\boldmath${\theta}$}\in \zeta _{j},$ 
\begin{eqnarray*}
&&\left\vert Q_{n}\left( \mbox{\boldmath${\theta}$}\right) -\bar{Q}%
_{n}\left( \mbox{\boldmath${\theta}$}\right) \right\vert \\
&\leq &\left\vert Q_{n}\left( \mbox{\boldmath${\theta}$}\right) -Q_{n}\left( %
\mbox{\boldmath${\theta}$}_{j}\right) \right\vert +\left\vert Q_{n}\left( %
\mbox{\boldmath${\theta}$}_{j}\right) -\bar{Q}_{n}\left( \mbox{\boldmath${%
\theta}$}_{j}\right) \right\vert +\left\vert \bar{Q}_{n}\left( %
\mbox{\boldmath${\theta}$}_{j}\right) -\bar{Q}_{n}\left( \mbox{\boldmath${%
\theta}$}\right) \right\vert \\
&\leq &\frac{1}{n}\sum_{t=1}^{n}\left\vert l_{t}^{\prime }\left( \mathbf{y},%
\mbox{\boldmath${\theta}$}\right) -l_{t}^{\prime }\left( \mathbf{y}^{t},%
\mbox{\boldmath${\theta}$}_{j}\right) \right\vert +\frac{1}{n}%
\sum_{t=1}^{n}\left\vert l_{t}\left( \mathbf{y}_{rep}^{t},%
\mbox{\boldmath${\theta}$}\right) -l_{t}\left( \mathbf{y}_{rep}^{t},%
\mbox{\boldmath${\theta}$}_{j}\right) \right\vert \\
&&+\left\vert \frac{1}{n}\sum_{t=1}^{n}\left( l_{t}^{\prime }\left( \mathbf{y%
},\mbox{\boldmath${\theta}$}_{j}\right) -E\left[ l_{t}^{\prime }\left( 
\mathbf{y},\mbox{\boldmath${\theta}$}_{j}\right) \right] \right) \right\vert
+\left\vert \frac{1}{n}\sum_{t=1}^{n}\left( l_{t}\left( \mathbf{y}_{rep}^{t},%
\mbox{\boldmath${\theta}$}_{j}\right) -E\left[ l_{t}\left( \mathbf{y}%
_{rep}^{t},\mbox{\boldmath${\theta}$}_{j}\right) \right] \right) \right\vert
\\
&&+\frac{1}{n}\sum_{t=1}^{n}\left\vert E\left[ l_{t}^{\prime }\left( \mathbf{%
y},\mbox{\boldmath${\theta}$}\right) \right] -E\left[ l_{t}^{\prime }\left( 
\mathbf{y},\mbox{\boldmath${\theta}$}_{j}\right) \right] \right\vert +\frac{1%
}{n}\sum_{t=1}^{n}\left\vert E\left[ l_{t}\left( \mbox{\boldmath${\theta}$}%
\right) \right] -E\left[ l_{t}\left( \mbox{\boldmath${\theta}$}_{j}\right) %
\right] \right\vert ,
\end{eqnarray*}%
where $E\left[ l_{t}\left( \mbox{\boldmath${\theta}$}\right) \right] :=E%
\left[ l_{t}\left( \mathbf{y}^{t},\mbox{\boldmath${\theta}$}\right) \right]
=E\left[ l_{t}\left( \mathbf{y}_{rep}^{t},\mbox{\boldmath${\theta}$}\right) %
\right] $. By Assumption 4, for all $\mbox{\boldmath${\theta}$}\in \zeta
_{j},$%
\begin{equation*}
\left\vert l_{t}\left( \mathbf{y}_{rep}^{t},\mbox{\boldmath${\theta}$}%
\right) -l_{t}\left( \mathbf{y}_{rep}^{t},\mbox{\boldmath${\theta}$}%
_{j}\right) \right\vert \leq c_{t}\left( \mathbf{y}_{rep}^{t}\right)
\left\Vert \mbox{\boldmath${\theta}$}\mathbf{-}\mbox{\boldmath${\theta}$}%
_{j}\right\Vert \leq \delta c_{t}\left( \mathbf{y}_{rep}^{t}\right) .
\end{equation*}%
and%
\begin{equation*}
\left\vert E\left[ l_{t}\left( \mathbf{y}_{rep}^{t},\mbox{\boldmath${%
\theta}$}\right) \right] -E\left[ l_{t}\left( \mathbf{y}_{rep}^{t},%
\mbox{\boldmath${\theta}$}_{j}\right) \right] \right\vert \leq \delta \bar{c}%
_{t},
\end{equation*}%
where $\bar{c}_{t}=E\left[ c_{t}\left( \mathbf{y}_{rep}^{t}\right) \right] $%
. Note that%
\begin{eqnarray*}
&&l_{t}^{\prime }\left( \mathbf{y},\mbox{\boldmath${\theta}$}\right) \\
&=&-\frac{1}{2}\left( \widehat{{\mbox{\boldmath${\theta}$}}}_{n}\left( 
\mathbf{y}\right) -\mbox{\boldmath${\theta}$}\right) ^{\prime }\left( -%
\mathbf{H}_{n}^{d}\right) \left( \widehat{{\mbox{\boldmath${\theta}$}}}%
_{n}\left( \mathbf{y}\right) -\mbox{\boldmath${\theta}$}\right) \\
&=&-\frac{1}{2}\left( \widehat{{\mbox{\boldmath${\theta}$}}}_{n}\left( 
\mathbf{y}\right) -\mbox{\boldmath${\theta}$}_{j}+\mbox{\boldmath${\theta}$}%
_{j}-\mbox{\boldmath${\theta}$}\right) ^{\prime }\left( -\mathbf{H}%
_{n}^{d}\right) \left( \widehat{{\mbox{\boldmath${\theta}$}}}_{n}\left( 
\mathbf{y}\right) -\mbox{\boldmath${\theta}$}_{j}+\mbox{\boldmath${\theta}$}%
_{j}-\mbox{\boldmath${\theta}$}\right) \\
&=&-\frac{1}{2}\left( \widehat{{\mbox{\boldmath${\theta}$}}}_{n}\left( 
\mathbf{y}\right) -\mbox{\boldmath${\theta}$}_{j}\right) ^{\prime }\left( -%
\mathbf{H}_{n}^{d}\right) \left( \widehat{{\mbox{\boldmath${\theta}$}}}%
_{n}\left( \mathbf{y}\right) -\mbox{\boldmath${\theta}$}_{j}\right) -\frac{1%
}{2}\left( \widehat{{\mbox{\boldmath${\theta}$}}}_{n}\left( \mathbf{y}%
\right) -\mbox{\boldmath${\theta}$}_{j}\right) ^{\prime }\left( -\mathbf{H}%
_{n}^{d}\right) \left( \mbox{\boldmath${\theta}$}_{j}-\mbox{\boldmath${%
\theta}$}\right) \\
&&-\frac{1}{2}\left( \mbox{\boldmath${\theta}$}_{j}-\mbox{\boldmath${%
\theta}$}\right) ^{\prime }\left( -\mathbf{H}_{n}^{d}\right) \left( 
\widehat{{\mbox{\boldmath${\theta}$}}}_{n}\left( \mathbf{y}\right) -%
\mbox{\boldmath${\theta}$}_{j}\right) -\frac{1}{2}\left( \mbox{\boldmath${%
\theta}$}_{j}-\mbox{\boldmath${\theta}$}\right) ^{\prime }\left( -\mathbf{H}%
_{n}^{d}\right) \left( \mbox{\boldmath${\theta}$}_{j}-\mbox{\boldmath${%
\theta}$}\right) ,
\end{eqnarray*}%
then we have%
\begin{eqnarray*}
&&\frac{1}{n}\sum_{t=1}^{n}\left\vert l_{t}^{\prime }\left( \mathbf{y},%
\mbox{\boldmath${\theta}$}\right) -l_{t}^{\prime }\left( \mathbf{y}^{t},%
\mbox{\boldmath${\theta}$}_{j}\right) \right\vert \\
&=&\left\vert l_{t}^{\prime }\left( \mathbf{y},\mbox{\boldmath${\theta}$}%
\right) -l_{t}^{\prime }\left( \mathbf{y},\mbox{\boldmath${\theta}$}%
_{j}\right) \right\vert \\
&\leq &\left\vert \left( \widehat{{\mbox{\boldmath${\theta}$}}}_{n}\left( 
\mathbf{y}\right) -\mbox{\boldmath${\theta}$}_{j}\right) ^{\prime }\left( -%
\mathbf{H}_{n}^{d}\right) \left( \mbox{\boldmath${\theta}$}_{j}-%
\mbox{\boldmath${\theta}$}\right) -\frac{1}{2}\left( \mbox{\boldmath${%
\theta}$}_{j}-\mbox{\boldmath${\theta}$}\right) ^{\prime }\left( -\mathbf{H}%
_{n}^{d}\right) \left( \mbox{\boldmath${\theta}$}_{j}-\mbox{\boldmath${%
\theta}$}\right) \right\vert \\
&\leq &\left\vert \left( \widehat{{\mbox{\boldmath${\theta}$}}}_{n}\left( 
\mathbf{y}\right) -\mbox{\boldmath${\theta}$}_{j}\right) ^{\prime }\left( -%
\mathbf{H}_{n}^{d}\right) \left( \mbox{\boldmath${\theta}$}_{j}-%
\mbox{\boldmath${\theta}$}\right) \right\vert +\frac{1}{2}\left\vert \left( %
\mbox{\boldmath${\theta}$}_{j}-\mbox{\boldmath${\theta}$}\right) ^{\prime
}\left( -\mathbf{H}_{n}^{d}\right) \left( \mbox{\boldmath${\theta}$}_{j}-%
\mbox{\boldmath${\theta}$}\right) \right\vert \\
&\leq &\left\vert \left\vert \widehat{{\mbox{\boldmath${\theta}$}}}%
_{n}\left( \mathbf{y}\right) -\mbox{\boldmath${\theta}$}_{j}\right\vert
\right\vert \left\vert \left\vert -\mathbf{H}_{n}^{d}\right\vert
\right\vert \left\vert \left\vert \mbox{\boldmath${\theta}$}_{j}-%
\mbox{\boldmath${\theta}$}\right\vert \right\vert +\frac{1}{2}\left\vert
\left\vert -\mathbf{H}_{n}^{d}\right\vert \right\vert \left\vert
\left\vert \mbox{\boldmath${\theta}$}_{j}-\mbox{\boldmath${\theta}$}%
\right\vert \right\vert ^{2} \\
&\leq &\left\vert \left\vert \widehat{{\mbox{\boldmath${\theta}$}}}%
_{n}\left( \mathbf{y}\right) -\mbox{\boldmath${\theta}$}_{j}\right\vert
\right\vert \left\vert \left\vert -\mathbf{H}_{n}^{d}\right\vert
\right\vert \delta +\frac{1}{2}\left\vert \left\vert -\mathbf{H}%
_{n}^{d}\right\vert \right\vert \delta ^{2} \\
&\leq &\left\vert \left\vert \widehat{{\mbox{\boldmath${\theta}$}}}%
_{n}\left( \mathbf{y}\right) -\mbox{\boldmath${\theta}$}_{n}^{p}\right\vert
\right\vert \left\vert \left\vert -\mathbf{H}_{n}^{d}\right\vert
\right\vert \delta +\left\vert \left\vert \mbox{\boldmath${\theta}$}_{n}^{p}-%
\mbox{\boldmath${\theta}$}_{j}\right\vert \right\vert \left\vert \left\vert -%
\mathbf{H}_{n}^{d}\right\vert \right\vert \delta +\frac{1}{2}\left\vert
\left\vert -\mathbf{H}_{n}^{d}\right\vert \right\vert \delta ^{2}
\end{eqnarray*}%
and%
\begin{eqnarray*}
&&\frac{1}{n}\sum_{t=1}^{n}\left\vert E\left[ l_{t}^{\prime }\left( \mathbf{y%
},\mbox{\boldmath${\theta}$}\right) \right] -E\left[ l_{t}^{\prime }\left( 
\mathbf{y},\mbox{\boldmath${\theta}$}_{j}\right) \right] \right\vert \\
&\leq &E\left\vert l_{t}^{\prime }\left( \mathbf{y},\mbox{\boldmath${%
\theta}$}\right) -l_{t}^{\prime }\left( \mathbf{y},\mbox{\boldmath${\theta}$}%
_{j}\right) \right\vert \\
&\leq &E\left( \left\vert \left\vert \widehat{{\mbox{\boldmath${\theta}$}}}%
_{n}\left( \mathbf{y}\right) -\mbox{\boldmath${\theta}$}_{j}\right\vert
\right\vert \right) \left\vert \left\vert -\mathbf{H}_{n}^{d}\right\vert
\right\vert \delta +\frac{1}{2}\left\vert \left\vert -\mathbf{H}%
_{n}^{d}\right\vert \right\vert \delta ^{2}.
\end{eqnarray*}%
It can be shown that%
\begin{eqnarray*}
&&-2l_{t}^{\prime }\left( \mathbf{y},\mbox{\boldmath${\theta}$}_{j}\right) \\
&=&\left( \widehat{{\mbox{\boldmath${\theta}$}}}_{n}\left( \mathbf{y}\right)
-\mbox{\boldmath${\theta}$}_{j}\right) ^{\prime }\left( -\mathbf{H}%
_{n}^{d}\right) \left( \widehat{{\mbox{\boldmath${\theta}$}}}_{n}\left( 
\mathbf{y}\right) -\mbox{\boldmath${\theta}$}_{j}\right) \\
&=&\left( \widehat{{\mbox{\boldmath${\theta}$}}}_{n}\left( \mathbf{y}\right)
-\mbox{\boldmath${\theta}$}_{n}^{p}+\mbox{\boldmath${\theta}$}_{n}^{p}-%
\mbox{\boldmath${\theta}$}_{j}\right) ^{\prime }\left( -\mathbf{H}%
_{n}^{d}\right) \left( \widehat{{\mbox{\boldmath${\theta}$}}}_{n}\left( 
\mathbf{y}\right) -\mbox{\boldmath${\theta}$}_{n}^{p}+\mbox{\boldmath${%
\theta}$}_{n}^{p}-\mbox{\boldmath${\theta}$}_{j}\right) \\
&=&\left( \widehat{{\mbox{\boldmath${\theta}$}}}_{n}\left( \mathbf{y}\right)
-\mbox{\boldmath${\theta}$}_{n}^{p}\right) ^{\prime }\left( -\mathbf{H}%
_{n}^{d}\right) \left( \widehat{{\mbox{\boldmath${\theta}$}}}_{n}\left( 
\mathbf{y}\right) -\mbox{\boldmath${\theta}$}_{n}^{p}\right) +\left( 
\widehat{{\mbox{\boldmath${\theta}$}}}_{n}\left( \mathbf{y}\right) -%
\mbox{\boldmath${\theta}$}_{n}^{p}\right) ^{\prime }\left( -\mathbf{H}%
_{n}^{d}\right) \left( \mbox{\boldmath${\theta}$}_{n}^{p}-%
\mbox{\boldmath${\theta}$}_{j}\right) \\
&&+\left( \mbox{\boldmath${\theta}$}_{n}^{p}-\mbox{\boldmath${\theta}$}%
_{j}\right) ^{\prime }\left( -\mathbf{H}_{n}^{d}\right) \left( \widehat{{%
\mbox{\boldmath${\theta}$}}}_{n}\left( \mathbf{y}\right) -%
\mbox{\boldmath${\theta}$}_{n}^{p}\right) +\left( \mbox{\boldmath${\theta}$}%
_{n}^{p}-\mbox{\boldmath${\theta}$}_{j}\right) ^{\prime }\left( -\mathbf{H}%
_{n}^{d}\right) \left( \mbox{\boldmath${\theta}$}_{n}^{p}-%
\mbox{\boldmath${\theta}$}_{j}\right) ,
\end{eqnarray*}%
then%
\begin{eqnarray*}
&&\left\vert \frac{1}{n}\sum_{t=1}^{n}\left( l_{t}^{\prime }\left( \mathbf{y}%
,\mbox{\boldmath${\theta}$}_{j}\right) -E\left[ l_{t}^{\prime }\left( 
\mathbf{y},\mbox{\boldmath${\theta}$}_{j}\right) \right] \right) \right\vert
\\
&=&\frac{1}{2}\left\vert \left( \widehat{{\mbox{\boldmath${\theta}$}}}%
_{n}\left( \mathbf{y}\right) -\mbox{\boldmath${\theta}$}_{j}\right) ^{\prime
}\left( -\mathbf{H}_{n}^{d}\right) \left( \widehat{{\mbox{\boldmath${%
\theta}$}}}_{n}\left( \mathbf{y}\right) -\mbox{\boldmath${\theta}$}%
_{j}\right) -E\left[ \left( \widehat{{\mbox{\boldmath${\theta}$}}}_{n}\left( 
\mathbf{y}\right) -\mbox{\boldmath${\theta}$}_{j}\right) ^{\prime }\left( -%
\mathbf{H}_{n}^{d}\right) \left( \widehat{{\mbox{\boldmath${\theta}$}}}%
_{n}\left( \mathbf{y}\right) -\mbox{\boldmath${\theta}$}_{j}\right) \right]
\right\vert \\
&\leq &\frac{1}{2}\left\vert 
\begin{array}{c}
\left( \widehat{{\mbox{\boldmath${\theta}$}}}_{n}\left( \mathbf{y}\right) -%
\mbox{\boldmath${\theta}$}_{n}^{p}\right) ^{\prime }\left( -\mathbf{H}%
_{n}^{d}\right) \left( \widehat{{\mbox{\boldmath${\theta}$}}}_{n}\left( 
\mathbf{y}\right) -\mbox{\boldmath${\theta}$}_{n}^{p}\right) \\ 
-E\left[ \left( \widehat{{\mbox{\boldmath${\theta}$}}}_{n}\left( \mathbf{y}%
\right) -\mbox{\boldmath${\theta}$}_{n}^{p}\right) ^{\prime }\left( -\mathbf{%
H}_{n}^{d}\right) \left( \widehat{{\mbox{\boldmath${\theta}$}}}_{n}\left( 
\mathbf{y}\right) -\mbox{\boldmath${\theta}$}_{n}^{p}\right) \right]%
\end{array}%
\right\vert \\
&&+\left\vert \left( \left( \widehat{{\mbox{\boldmath${\theta}$}}}_{n}\left( 
\mathbf{y}\right) -\mbox{\boldmath${\theta}$}_{n}^{p}\right) ^{\prime }-E%
\left[ \left( \widehat{{\mbox{\boldmath${\theta}$}}}_{n}\left( \mathbf{y}%
\right) -\mbox{\boldmath${\theta}$}_{n}^{p}\right) ^{\prime }\right] \right)
\left( -\mathbf{H}_{n}^{d}\right) \left( \mbox{\boldmath${\theta}$}%
_{n}^{p}-\mbox{\boldmath${\theta}$}_{j}\right) \right\vert \\
&\leq &\frac{1}{2}\left\vert 
\begin{array}{c}
\left( \widehat{{\mbox{\boldmath${\theta}$}}}_{n}\left( \mathbf{y}\right) -%
\mbox{\boldmath${\theta}$}_{n}^{p}\right) ^{\prime }\left( -\mathbf{H}%
_{n}^{d}\right) \left( \widehat{{\mbox{\boldmath${\theta}$}}}_{n}\left( 
\mathbf{y}\right) -\mbox{\boldmath${\theta}$}_{n}^{p}\right) \\ 
-E\left[ \left( \widehat{{\mbox{\boldmath${\theta}$}}}_{n}\left( \mathbf{y}%
\right) -\mbox{\boldmath${\theta}$}_{n}^{p}\right) ^{\prime }\left( -\mathbf{%
H}_{n}^{d}\right) \left( \widehat{{\mbox{\boldmath${\theta}$}}}_{n}\left( 
\mathbf{y}\right) -\mbox{\boldmath${\theta}$}_{n}^{p}\right) \right]%
\end{array}%
\right\vert \\
&&+\left\vert \left\vert \left( \widehat{{\mbox{\boldmath${\theta}$}}}%
_{n}\left( \mathbf{y}\right) -\mbox{\boldmath${\theta}$}_{n}^{p}\right)
^{\prime }-E\left[ \left( \widehat{{\mbox{\boldmath${\theta}$}}}_{n}\left( 
\mathbf{y}\right) -\mbox{\boldmath${\theta}$}_{n}^{p}\right) ^{\prime }%
\right] \right\vert \right\vert \left\vert \left\vert \left( -\mathbf{H}%
_{n}^{d}\right) \left( \mbox{\boldmath${\theta}$}_{n}^{p}-%
\mbox{\boldmath${\theta}$}_{j}\right) \right\vert \right\vert .
\end{eqnarray*}%
Let%
\begin{equation*}
l_{t}^{\prime }\left( \mathbf{y},\mbox{\boldmath${\theta}$}_{n}^{p}\right)
=\left( \widehat{{\mbox{\boldmath${\theta}$}}}_{n}\left( \mathbf{y}\right) -%
\mbox{\boldmath${\theta}$}_{n}^{p}\right) ^{\prime }\left( -\mathbf{H}%
_{n}^{d}\right) \left( \widehat{{\mbox{\boldmath${\theta}$}}}_{n}\left( 
\mathbf{y}\right) -\mbox{\boldmath${\theta}$}_{n}^{p}\right) ,
\end{equation*}%
we have%
\begin{equation*}
l_{t}^{\prime }\left( \mathbf{y},\mbox{\boldmath${\theta}$}_{n}^{p}\right)
-E\left( l_{t}^{\prime }\left( \mathbf{y},\mbox{\boldmath${\theta}$}%
_{n}^{p}\right) \right) =o_{p}\left( 1\right) ,
\end{equation*}%
\begin{equation*}
\left( \widehat{{\mbox{\boldmath${\theta}$}}}_{n}\left( \mathbf{y}\right) -%
\mbox{\boldmath${\theta}$}_{n}^{p}\right) ^{\prime }-E\left[ \left( \widehat{%
{\mbox{\boldmath${\theta}$}}}_{n}\left( \mathbf{y}\right) -%
\mbox{\boldmath${\theta}$}_{n}^{p}\right) ^{\prime }\right] =o_{p}\left(
1\right)
\end{equation*}%
by Assumptions 1-8.

Thus, we have%
\begin{eqnarray*}
&&\sup_{\mbox{\boldmath${\theta}$}\mathbf{\in }\zeta _{j}}\left\vert
Q_{n}\left( \mbox{\boldmath${\theta}$}\right) -\bar{Q}_{n}\left( %
\mbox{\boldmath${\theta}$}\right) \right\vert \\
&\leq &\left\vert \left\vert -\mathbf{H}_{n}^{d}\right\vert \right\vert
\delta ^{2}+E\left( \left\vert \left\vert \widehat{{\mbox{\boldmath${%
\theta}$}}}_{n}\left( \mathbf{y}\right) -\mbox{\boldmath${\theta}$}%
_{j}\right\vert \right\vert \right) \left\vert \left\vert -\mathbf{H}%
_{n}^{d}\right\vert \right\vert \delta +\left\vert \left\vert %
\mbox{\boldmath${\theta}$}_{n}^{p}-\mbox{\boldmath${\theta}$}_{j}\right\vert
\right\vert \left\vert \left\vert -\mathbf{H}_{n}^{d}\right\vert
\right\vert \delta +\frac{2\delta }{n}\sum_{t=1}^{n}\bar{c}_{t} \\
&&+\left\vert \left\vert \widehat{{\mbox{\boldmath${\theta}$}}}_{n}\left( 
\mathbf{y}\right) -\mbox{\boldmath${\theta}$}_{n}^{p}\right\vert \right\vert
\left\vert \left\vert -\mathbf{H}_{n}^{d}\right\vert \right\vert \delta
+\left\vert l_{t}^{\prime }\left( \mathbf{y},\mbox{\boldmath${\theta}$}%
_{n}^{p}\right) -E\left( l_{t}^{\prime }\left( \mathbf{y},%
\mbox{\boldmath${\theta}$}_{n}^{p}\right) \right) \right\vert \\
&&+\left\vert \left\vert \left( \widehat{{\mbox{\boldmath${\theta}$}}}%
_{n}\left( \mathbf{y}\right) -\mbox{\boldmath${\theta}$}_{n}^{p}\right)
^{\prime }-E\left[ \left( \widehat{{\mbox{\boldmath${\theta}$}}}_{n}\left( 
\mathbf{y}\right) -\mbox{\boldmath${\theta}$}_{n}^{p}\right) ^{\prime }%
\right] \right\vert \right\vert \left\vert \left\vert \left( -\mathbf{H}%
_{n}^{d}\right) \left( \mbox{\boldmath${\theta}$}_{n}^{p}-%
\mbox{\boldmath${\theta}$}_{j}\right) \right\vert \right\vert \\
&&+\frac{\delta }{n}\sum_{t=1}^{n}\left[ c_{t}\left( \mathbf{y}%
_{rep}^{t}\right) -\bar{c}_{t}\right] +\left\vert \frac{1}{n}%
\sum_{t=1}^{n}\left( l_{t}\left( \mathbf{y}_{rep}^{t},\mbox{\boldmath${%
\theta}$}_{j}\right) -E\left[ l_{t}\left( \mathbf{y}_{rep}^{t},%
\mbox{\boldmath${\theta}$}_{j}\right) \right] \right) \right\vert .
\end{eqnarray*}%
By Assumptions 1-8, there exists some $C^{\ast }\left( \delta \right)
<\infty $ such that 
\begin{equation*}
C^{\ast }\left( \delta \right) \geq \left\vert \left\vert -\mathbf{H}%
_{n}^{d}\right\vert \right\vert \delta ^{2}+E\left( \left\vert \left\vert 
\widehat{{\mbox{\boldmath${\theta}$}}}_{n}\left( \mathbf{y}\right) -%
\mbox{\boldmath${\theta}$}_{j}\right\vert \right\vert \right) \left\vert
\left\vert -\mathbf{H}_{n}^{d}\right\vert \right\vert \delta +\left\vert
\left\vert \mbox{\boldmath${\theta}$}_{n}^{p}-\mbox{\boldmath${\theta}$}%
_{j}\right\vert \right\vert \left\vert \left\vert -\mathbf{H}%
_{n}^{d}\right\vert \right\vert \delta +\frac{2\delta }{n}\sum_{t=1}^{n}%
\bar{c}_{t}.
\end{equation*}%
And if we define%
\begin{eqnarray*}
Z_{n,j}^{\ast } &=&\left\vert \left\vert \widehat{{\mbox{\boldmath${\theta}$}%
}}_{n}\left( \mathbf{y}\right) -\mbox{\boldmath${\theta}$}%
_{n}^{p}\right\vert \right\vert \left\vert \left\vert -\mathbf{H}%
_{n}^{d}\right\vert \right\vert \delta +\left\vert l_{t}^{\prime }\left( 
\mathbf{y},\mbox{\boldmath${\theta}$}_{n}^{p}\right) -E\left( l_{t}^{\prime
}\left( \mathbf{y},\mbox{\boldmath${\theta}$}_{n}^{p}\right) \right)
\right\vert \\
&&+\left\vert \left\vert \left( \widehat{{\mbox{\boldmath${\theta}$}}}%
_{n}\left( \mathbf{y}\right) -\mbox{\boldmath${\theta}$}_{n}^{p}\right)
^{\prime }-E\left[ \left( \widehat{{\mbox{\boldmath${\theta}$}}}_{n}\left( 
\mathbf{y}\right) -\mbox{\boldmath${\theta}$}_{n}^{p}\right) ^{\prime }%
\right] \right\vert \right\vert \left\vert \left\vert \left( -\mathbf{H}%
_{n}^{d}\right) \left( \mbox{\boldmath${\theta}$}_{n}^{p}-%
\mbox{\boldmath${\theta}$}_{j}\right) \right\vert \right\vert \\
&&+\frac{\delta }{n}\sum_{t=1}^{n}\left[ c_{t}\left( \mathbf{y}%
_{rep}^{t}\right) -\bar{c}_{t}\right] +\left\vert \frac{1}{n}%
\sum_{t=1}^{n}\left( l_{t}\left( \mathbf{y}_{rep}^{t},\mbox{\boldmath${%
\theta}$}_{j}\right) -E\left[ l_{t}\left( \mathbf{y}_{rep}^{t},%
\mbox{\boldmath${\theta}$}_{j}\right) \right] \right) \right\vert ,
\end{eqnarray*}%
we have $Z_{n,j}^{\ast }=o_{p}\left( 1\right) $ by Assumptions 1-8.

It follows that%
\begin{equation*}
P\left[ \max_{\mbox{\boldmath${\theta}$}\mathbf{\in }\zeta _{j}}\left\vert
Q_{n}\left( \mbox{\boldmath${\theta}$}\right) -\bar{Q}_{n}\left( %
\mbox{\boldmath${\theta}$}\right) \right\vert >\varepsilon \right] \leq P%
\left[ Z_{n,j}^{\ast }>\varepsilon -C^{\ast }\left( \delta \right) \right] .
\end{equation*}%
Now choose $\delta \leq 1$ such that $\varepsilon -C^{\ast }\left( \delta
\right) <\varepsilon /2$. Then%
\begin{equation*}
P\left[ \sup_{\mbox{\boldmath${\theta}$}\mathbf{\in }\zeta _{j}}\left\vert
Q_{n}\left( \mbox{\boldmath${\theta}$}\right) -\bar{Q}_{n}\left( %
\mbox{\boldmath${\theta}$}\right) \right\vert >\varepsilon \right] \leq P%
\left[ Z_{n,j}^{\ast }>\varepsilon /2\right] .
\end{equation*}%
Next, choose $n_{0}$ so that%
\begin{equation*}
P\left[ Z_{n,j}^{\ast }>\varepsilon /2\right] \leq \frac{\varepsilon }{K}
\end{equation*}%
for all $n\geq n_{0}$ and all $j=1,\ldots ,K$ by Assumptions 1-8 since $K$
is finite. Hence,%
\begin{equation*}
P\left[ \sup_{\mbox{\boldmath${\theta}$}\mathbf{\in {\mbox{\boldmath${%
\Theta}$}}}}\left\vert Q_{n}\left( \mbox{\boldmath${\theta}$}\right) -\bar{Q}%
_{n}\left( \mbox{\boldmath${\theta}$}\right) \right\vert >\varepsilon \right]
\rightarrow 0\text{.}
\end{equation*}%
It then follows that $Q_{n}\left( \mbox{\boldmath${\theta}$}\right) $
satisfies a uniform law of large numbers and the consistency of $\widetilde{%
\mbox{\boldmath${\theta}$}}_{n}$ followed by the usual argument.
\end{proof}

\begin{lemma}
\label{lemmaclts}Under Assumptions 1-8, $\mathbf{D}_{n}^{-1/2}\sqrt{n}\left( 
\widetilde{\mbox{\boldmath${\theta}$}}_{n}^{s}-\mbox{\boldmath${\theta}$}%
_{n}^{p}\right) \overset{d}{\rightarrow }N\left( 0,\mathbf{I}_{P}\right) $
where 
\begin{equation*}
\mathbf{D}_{n}=\left( -\mathbf{H}_{n}+\left( -\mathbf{H}_{n}^{d}\right)
\right) ^{-1}\left( \mathbf{B}_{n}+\left( -\mathbf{H}_{n}^{d}\right) 
\mathbf{C}_{n}\left( -\mathbf{H}_{n}^{d}\right) \right) \left( -\mathbf{H}%
_{n}+\left( -\mathbf{H}_{n}^{d}\right) \right) ^{-1}.
\end{equation*}
\end{lemma}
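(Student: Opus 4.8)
The plan is to treat $\widetilde{\boldsymbol{\theta}}_{n}^{s}$ as an M-estimator and carry out the usual score-expansion argument, exploiting throughout that $\mathbf{y}$ enters the objective only through $\widehat{\boldsymbol{\theta}}_{n}(\mathbf{y})$, which is independent of $\mathbf{y}_{rep}$. First I would write the first-order condition for the maximizer: differentiating $\ln p(\mathbf{y}_{rep}|\boldsymbol{\theta})-\tfrac{n}{2}(\widehat{\boldsymbol{\theta}}_{n}(\mathbf{y})-\boldsymbol{\theta})^{\prime}(-\mathbf{H}_{n}^{d})(\widehat{\boldsymbol{\theta}}_{n}(\mathbf{y})-\boldsymbol{\theta})$ in $\boldsymbol{\theta}$ and setting the gradient to zero gives
\[
\mathbf{s}(\mathbf{y}_{rep},\widetilde{\boldsymbol{\theta}}_{n}^{s})+n(-\mathbf{H}_{n}^{d})\bigl(\widehat{\boldsymbol{\theta}}_{n}(\mathbf{y})-\widetilde{\boldsymbol{\theta}}_{n}^{s}\bigr)=\mathbf{0}.
\]

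Next I would Taylor-expand $\mathbf{s}(\mathbf{y}_{rep},\cdot)$ about $\boldsymbol{\theta}_{n}^{p}$ and write $\widehat{\boldsymbol{\theta}}_{n}(\mathbf{y})-\widetilde{\boldsymbol{\theta}}_{n}^{s}=(\widehat{\boldsymbol{\theta}}_{n}(\mathbf{y})-\boldsymbol{\theta}_{n}^{p})-(\widetilde{\boldsymbol{\theta}}_{n}^{s}-\boldsymbol{\theta}_{n}^{p})$. By Lemma \ref{lemmaconsistency} together with the Lipschitz and domination conditions of Assumptions 3--5 and the uniform law of large numbers they deliver, $\tfrac{1}{n}\mathbf{h}(\mathbf{y}_{rep},\cdot)$ evaluated at any mean value between $\widetilde{\boldsymbol{\theta}}_{n}^{s}$ and $\boldsymbol{\theta}_{n}^{p}$ converges in probability, uniformly in $n$, to $\mathbf{H}_{n}(\boldsymbol{\theta}_{n}^{p})$. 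Collecting terms, dividing by $n$ and multiplying by $\sqrt{n}$ then yields
\[
\bigl[-\mathbf{H}_{n}+(-\mathbf{H}_{n}^{d})+o_{p}(1)\bigr]\sqrt{n}\bigl(\widetilde{\boldsymbol{\theta}}_{n}^{s}-\boldsymbol{\theta}_{n}^{p}\bigr)=\tfrac{1}{\sqrt{n}}\mathbf{s}(\mathbf{y}_{rep},\boldsymbol{\theta}_{n}^{p})+(-\mathbf{H}_{n}^{d})\sqrt{n}\bigl(\widehat{\boldsymbol{\theta}}_{n}(\mathbf{y})-\boldsymbol{\theta}_{n}^{p}\bigr).
\]
I would then substitute the QML linearization coming from White's theory under Assumptions 1--7, $\sqrt{n}(\widehat{\boldsymbol{\theta}}_{n}(\mathbf{y})-\boldsymbol{\theta}_{n}^{p})=(-\mathbf{H}_{n})^{-1}\tfrac{1}{\sqrt{n}}\mathbf{s}(\mathbf{y},\boldsymbol{\theta}_{n}^{p})+o_{p}(1)$, so that the right-hand side becomes $\tfrac{1}{\sqrt{n}}\mathbf{s}(\mathbf{y}_{rep},\boldsymbol{\theta}_{n}^{p})+(-\mathbf{H}_{n}^{d})(-\mathbf{H}_{n})^{-1}\tfrac{1}{\sqrt{n}}\mathbf{s}(\mathbf{y},\boldsymbol{\theta}_{n}^{p})+o_{p}(1)$.

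For the central-limit step I would use that $\mathbf{y}$ and $\mathbf{y}_{rep}$ are independent i.i.d. samples from the same DGP and that $E[\tfrac{1}{n}\sum_{t=1}^{n}\nabla l_{t}(\boldsymbol{\theta}_{n}^{p})]=\mathbf{0}$ by the defining property of $\boldsymbol{\theta}_{n}^{p}$. Applying a Lindeberg-type central limit theorem to the two stacked (and, across blocks, independent) scores --- the moment bound of Assumption 5 supplying the Lindeberg condition and Assumption 7 keeping $\mathbf{B}_{n}$ and $\mathbf{H}_{n}$ uniformly well conditioned --- the right-hand side is asymptotically normal with mean zero and variance $\mathbf{B}_{n}+(-\mathbf{H}_{n}^{d})(-\mathbf{H}_{n})^{-1}\mathbf{B}_{n}(-\mathbf{H}_{n})^{-1}(-\mathbf{H}_{n}^{d})=\mathbf{B}_{n}+(-\mathbf{H}_{n}^{d})\mathbf{C}_{n}(-\mathbf{H}_{n}^{d})$, the cross term vanishing precisely because $\mathbf{y}$ and $\mathbf{y}_{rep}$ are independent. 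Inverting $-\mathbf{H}_{n}+(-\mathbf{H}_{n}^{d})$, which is nonsingular by Assumption 7, and invoking Slutsky's theorem, the asymptotic variance of $\sqrt{n}(\widetilde{\boldsymbol{\theta}}_{n}^{s}-\boldsymbol{\theta}_{n}^{p})$ is exactly $\mathbf{D}_{n}$, and pre-multiplying by $\mathbf{D}_{n}^{-1/2}$ gives the stated convergence.

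The step I expect to be the main obstacle is not any single identity but the uniformity bookkeeping: $\boldsymbol{\theta}_{n}^{p}$, $\mathbf{H}_{n}$, $\mathbf{B}_{n}$ and $\mathbf{H}_{n}^{d}$ are all $n$-indexed, so every convergence-in-probability and every central-limit statement has to hold uniformly in $n$, and one must verify carefully that the mean value in the Taylor expansion may be replaced by $\boldsymbol{\theta}_{n}^{p}$ and that the resulting remainder is genuinely $o_{p}(1)$. This is exactly where Assumptions 3--5 and 7 are used, and it mirrors the uniform-law-of-large-numbers argument already carried out for Lemma \ref{lemmaconsistency} in the spirit of Wooldridge (1994) and Bester and Hansen (2006).
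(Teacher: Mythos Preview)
Your proposal is correct and follows essentially the same route as the paper: first-order condition, Taylor expansion of the score about $\boldsymbol{\theta}_{n}^{p}$ using consistency from Lemma~\ref{lemmaconsistency}, convergence of the mean-value Hessian to $\mathbf{H}_{n}$, and a joint CLT for the two independent score blocks followed by Slutsky. The only cosmetic difference is that you substitute the QML linearization $\sqrt{n}(\widehat{\boldsymbol{\theta}}_{n}(\mathbf{y})-\boldsymbol{\theta}_{n}^{p})=(-\mathbf{H}_{n})^{-1}n^{-1/2}\mathbf{s}(\mathbf{y},\boldsymbol{\theta}_{n}^{p})+o_{p}(1)$ and apply the CLT to the stacked raw scores, whereas the paper keeps $\sqrt{n}(\widehat{\boldsymbol{\theta}}_{n}(\mathbf{y})-\boldsymbol{\theta}_{n}^{p})$ with its $\mathbf{C}_{n}$ limit and invokes the Cram\'er--Wold device; the resulting variance computation is identical.
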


\begin{proof}
The proof follows from Bester and Hansen (2006). By Lemma \ref%
{lemmaconsistency}, we have,%
\begin{eqnarray*}
0 &=&\frac{1}{n}\sum_{t=1}^{n}\bigtriangledown l_{t}\left( \mathbf{y}%
_{rep}^{t},\widetilde{\mbox{\boldmath${\theta}$}}_{n}^{s}\right) +\left( -%
\mathbf{H}_{n}^{d}\right) \left( \widehat{{\mbox{\boldmath${\theta}$}}}%
_{n}\left( \mathbf{y}\right) -\widetilde{\mbox{\boldmath${\theta}$}}%
_{n}^{s}\right) \\
&=&\frac{1}{n}\sum_{t=1}^{n}\bigtriangledown l_{t}\left( \mathbf{y}%
_{rep}^{t},\mbox{\boldmath${\theta}$}_{n}^{p}\right) +\left( -\mathbf{H}%
_{n}^{d}\right) \left( \widehat{{\mbox{\boldmath${\theta}$}}}_{n}\left( 
\mathbf{y}\right) -\mbox{\boldmath${\theta}$}_{n}^{p}\right) +\frac{1}{n}%
\sum_{t=1}^{n}\bigtriangledown ^{2}l_{t}\left( \mathbf{y}_{rep}^{t},%
\widetilde{\mbox{\boldmath${\theta}$}}_{n3}\right) \left( \widetilde{%
\mbox{\boldmath${\theta}$}}_{n}^{s}-\mbox{\boldmath${\theta}$}_{n}^{p}\right)
\\
&&-\left( -\mathbf{H}_{n}^{d}\right) \left( \widetilde{\mbox{\boldmath${%
\theta}$}}_{n}^{s}-\mbox{\boldmath${\theta}$}_{n}^{p}\right)
\end{eqnarray*}%
where $\widetilde{\mbox{\boldmath${\theta}$}}_{n3}$ is an intermediate value
between $\widetilde{\mbox{\boldmath${\theta}$}}_{n}^{s}$ and $%
\mbox{\boldmath${\theta}$}_{n}^{p}$. It follows that%
\begin{eqnarray*}
\sqrt{n}\left( \widetilde{\mbox{\boldmath${\theta}$}}_{n}^{s}-%
\mbox{\boldmath${\theta}$}_{n}^{p}\right) &=&\left(
-n^{-1}\sum_{t=1}^{n}\bigtriangledown ^{2}l_{t}\left( \mathbf{y}_{rep}^{t},%
\widetilde{\mbox{\boldmath${\theta}$}}_{n3}\right) +\left( -\mathbf{H}%
_{n}^{d}\right) \right) ^{-1}\times \\
&&\left( n^{-1/2}\sum_{t=1}^{n}\bigtriangledown l_{t}\left( \mathbf{y}%
_{rep}^{t},\mbox{\boldmath${\theta}$}_{n}^{p}\right) +\left( -\mathbf{H}%
_{n}^{d}\right) \sqrt{n}\left( \widehat{{\mbox{\boldmath${\theta}$}}}%
_{n}\left( \mathbf{y}\right) -\mbox{\boldmath${\theta}$}_{n}^{p}\right)
\right) .
\end{eqnarray*}%
Under the assumptions, we have%
\begin{equation*}
-n^{-1}\sum_{t=1}^{n}\bigtriangledown ^{2}l_{t}\left( \mathbf{y}_{rep}^{t},%
\widetilde{\mbox{\boldmath${\theta}$}}_{n3}\right) \overset{p}{\rightarrow }-%
\mathbf{H}_{n},\text{ }
\end{equation*}%
\begin{equation*}
\mathbf{B}_{n}^{-1/2}n^{-1/2}\sum_{t=1}^{n}\bigtriangledown l_{t}\left( 
\mathbf{y}_{rep}^{t},\mbox{\boldmath${\theta}$}_{n}^{p}\right) \overset{d}{%
\rightarrow }N\left( 0,\mathbf{I}_{P}\right) ,\text{ }\mathbf{C}_{n}^{-1/2}%
\sqrt{n}\left( \widehat{{\mbox{\boldmath${\theta}$}}}_{n}\left( \mathbf{y}%
\right) -\mbox{\boldmath${\theta}$}_{n}^{p}\right) \overset{d}{\rightarrow }%
N\left( 0,\mathbf{I}_{P}\right) .
\end{equation*}%
Note that $Var\left( n^{-1/2}\sum_{t=1}^{n}\bigtriangledown l_{t}\left( 
\mathbf{y}^{t},\mbox{\boldmath${\theta}$}_{n}^{p}\right) \right) \rightarrow 
\mathbf{B}_{n}$ as $n\rightarrow \infty $. By the central limit theorem and
the Cramer-Wold device, we get%
\begin{equation*}
\mathbf{D}_{n}^{-\frac{1}{2}}\sqrt{n}\left( \widetilde{\mbox{\boldmath${%
\theta}$}}_{n}^{s}-\mbox{\boldmath${\theta}$}_{n}^{p}\right) \overset{d}{%
\rightarrow }N\left( 0,\mathbf{I}_{P}\right) \text{ }
\end{equation*}%
where $\mathbf{D}_{n}=\left( -\mathbf{H}_{n}+\left( -\mathbf{H}%
_{n}^{d}\right) \right) ^{-1}\left( \mathbf{B}_{n}+\left( -\mathbf{H}%
_{n}^{d}\right) \mathbf{C}_{n}\left( -\mathbf{H}_{n}^{d}\right) \right)
\left( -\mathbf{H}_{n}+\left( -\mathbf{H}_{n}^{d}\right) \right) ^{-1}$.
\end{proof}

\begin{lemma}
\label{lemmajoints}Under Assumption 1-8, the asymptotic joint distribution
of $\sqrt{n}\left( \widetilde{\mbox{\boldmath${\theta}$}}_{n}^{s}-%
\mbox{\boldmath${\theta}$}_{n}^{p}\right) $ , $\sqrt{n}\left( \widehat{{%
\mbox{\boldmath${\theta}$}}}_{n}\left( \mathbf{y}\right) -%
\mbox{\boldmath${\theta}$}_{n}^{p}\right) $ and $\sqrt{n}\left( \widehat{{%
\mbox{\boldmath${\theta}$}}}_{n}\left( \mathbf{y}_{rep}\right) -%
\mbox{\boldmath${\theta}$}_{n}^{p}\right) $ is%
\begin{equation*}
\left[ 
\begin{array}{ccc}
\mathbf{D}_{n} & \mathbf{F}_{n} & \mathbf{G}_{n} \\ 
\mathbf{F}_{n} & \mathbf{C}_{n} & \mathbf{0} \\ 
\mathbf{G}_{n} & \mathbf{0} & \mathbf{C}_{n}%
\end{array}%
\right] ^{-1/2}\left( 
\begin{array}{c}
\sqrt{n}\left( \widetilde{\mbox{\boldmath${\theta}$}}_{n}^{s}-%
\mbox{\boldmath${\theta}$}_{n}^{p}\right) \\ 
\sqrt{n}\left( \widehat{{\mbox{\boldmath${\theta}$}}}_{n}\left( \mathbf{y}%
\right) -\mbox{\boldmath${\theta}$}_{n}^{p}\right) \\ 
\sqrt{n}\left( \widehat{{\mbox{\boldmath${\theta}$}}}_{n}\left( \mathbf{y}%
_{rep}\right) -\mbox{\boldmath${\theta}$}_{n}^{p}\right)%
\end{array}%
\right) \overset{d}{\rightarrow }N\left( 0,\mathbf{I}_{3P}\right) ,
\end{equation*}%
where 
$\mathbf{F}_{n}=\left( -\mathbf{H}_{n}+\left( -\mathbf{H}_{n}^{d}\right)
\right) ^{-1}\left( -\mathbf{H}_{n}^{d}\right) \mathbf{C}_{n}$ and $%
\mathbf{G}_{n}=\left( -\mathbf{H}_{n}+\left( -\mathbf{H}_{n}^{d}\right)
\right) ^{-1}\mathbf{B}_{n}\left( -\mathbf{H}_{n}\right) ^{-1}$.
\end{lemma}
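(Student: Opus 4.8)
The plan is to build the joint limiting distribution of the three rescaled quantities $\sqrt n(\widetilde{\mbox{\boldmath${\theta}$}}_n^s-\mbox{\boldmath${\theta}$}_n^p)$, $\sqrt n(\widehat{\mbox{\boldmath${\theta}$}}_n(\mathbf y)-\mbox{\boldmath${\theta}$}_n^p)$, and $\sqrt n(\widehat{\mbox{\boldmath${\theta}$}}_n(\mathbf y_{rep})-\mbox{\boldmath${\theta}$}_n^p)$ from a single underlying bivariate score CLT applied to the independent samples $\mathbf y$ and $\mathbf y_{rep}$. First I would record the first-order (Bahadur-type) linear representations of each coordinate. From standard QML expansions under Assumptions 1--8 (exactly as in White (1982), Wooldridge (1994), Bester and Hansen (2006)),
\[
\sqrt n(\widehat{\mbox{\boldmath${\theta}$}}_n(\mathbf y)-\mbox{\boldmath${\theta}$}_n^p)=(-\mathbf H_n)^{-1}\,n^{-1/2}\sum_{t=1}^n\nabla l_t(\mathbf y^t,\mbox{\boldmath${\theta}$}_n^p)+o_p(1),
\]
and identically with $\mathbf y$ replaced by $\mathbf y_{rep}$. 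For the third quantity, I would reuse the stationary-point equation already displayed in the proof of Lemma \ref{lemmaclts}, which gives
\[
\sqrt n(\widetilde{\mbox{\boldmath${\theta}$}}_n^s-\mbox{\boldmath${\theta}$}_n^p)=\bigl(-\mathbf H_n+(-\mathbf H_n^d)\bigr)^{-1}\Bigl(n^{-1/2}\sum_{t=1}^n\nabla l_t(\mathbf y_{rep}^t,\mbox{\boldmath${\theta}$}_n^p)+(-\mathbf H_n^d)\sqrt n(\widehat{\mbox{\boldmath${\theta}$}}_n(\mathbf y)-\mbox{\boldmath${\theta}$}_n^p)\Bigr)+o_p(1).
\]
Substituting the representation of $\sqrt n(\widehat{\mbox{\boldmath${\theta}$}}_n(\mathbf y)-\mbox{\boldmath${\theta}$}_n^p)$ into the last display writes all three statistics as fixed (nonrandom, $n$-indexed) linear combinations of the two independent score averages $U_n:=n^{-1/2}\sum_t\nabla l_t(\mathbf y^t,\mbox{\boldmath${\theta}$}_n^p)$ and $V_n:=n^{-1/2}\sum_t\nabla l_t(\mathbf y_{rep}^t,\mbox{\boldmath${\theta}$}_n^p)$.

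Next I would apply a CLT to the stacked vector $(U_n',V_n')'$. Since $\mathbf y$ and $\mathbf y_{rep}$ are i.i.d. with the same DGP and are independent of each other (the hypothesis under which these lemmas are stated), each of $U_n,V_n$ satisfies the Lindeberg CLT under Assumptions 2, 5, 7 with $\mathrm{Var}(U_n)\to\mathbf B_n$, $\mathrm{Var}(V_n)\to\mathbf B_n$, and $\mathrm{Cov}(U_n,V_n)=\mathbf 0$; by the Cramér--Wold device the joint limit of $(U_n',V_n')'$ is $N(\mathbf 0,\,\mathrm{diag}(\mathbf B_n,\mathbf B_n))$. Then the continuous mapping theorem (linear maps) transfers this to the three target statistics: the limiting covariance of the stacked vector $(\sqrt n(\widetilde{\mbox{\boldmath${\theta}$}}_n^s-\mbox{\boldmath${\theta}$}_n^p)',\ \sqrt n(\widehat{\mbox{\boldmath${\theta}$}}_n(\mathbf y)-\mbox{\boldmath${\theta}$}_n^p)',\ \sqrt n(\widehat{\mbox{\boldmath${\theta}$}}_n(\mathbf y_{rep})-\mbox{\boldmath${\theta}$}_n^p)')'$ is $A_n\,\mathrm{diag}(\mathbf B_n,\mathbf B_n)\,A_n'$ where $A_n$ is the $3P\times 2P$ coefficient matrix read off from the linear representations above. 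It remains to verify by direct block multiplication that this product equals the claimed matrix, i.e. that the diagonal blocks are $\mathbf D_n,\mathbf C_n,\mathbf C_n$ (already known from Lemmas \ref{lemmaclts} and White's theorem) and the off-diagonal blocks are $\mathbf F_n=(-\mathbf H_n+(-\mathbf H_n^d))^{-1}(-\mathbf H_n^d)\mathbf C_n$, $\mathbf G_n=(-\mathbf H_n+(-\mathbf H_n^d))^{-1}\mathbf B_n(-\mathbf H_n)^{-1}$, and $\mathbf 0$ for the $(\widehat{\mbox{\boldmath${\theta}$}}_n(\mathbf y),\widehat{\mbox{\boldmath${\theta}$}}_n(\mathbf y_{rep}))$ pair (the last being immediate from independence of $U_n$ and $V_n$).

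The routine but slightly delicate part is the covariance bookkeeping: $\widetilde{\mbox{\boldmath${\theta}$}}_n^s$ depends on \emph{both} samples (through the $\mathbf y_{rep}$-score and through the penalty centered at $\widehat{\mbox{\boldmath${\theta}$}}_n(\mathbf y)$), so its cross-covariance with $\widehat{\mbox{\boldmath${\theta}$}}_n(\mathbf y)$ comes only from the $U_n$-channel (giving $\mathbf F_n$) while its cross-covariance with $\widehat{\mbox{\boldmath${\theta}$}}_n(\mathbf y_{rep})$ comes only from the $V_n$-channel (giving $\mathbf G_n$); one must be careful not to double-count, and to keep the two $(-\mathbf H_n^d)$ versus $(-\mathbf H_n)$ factors in the right places. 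The main obstacle, such as it is, is not conceptual but the justification of the remainder terms: one needs the $o_p(1)$ errors in the Bahadur representations — in particular $-n^{-1}\sum_t\nabla^2 l_t(\mathbf y_{rep}^t,\widetilde{\mbox{\boldmath${\theta}$}}_{n3})\overset{p}{\to}-\mathbf H_n$ at the intermediate point, which relies on the consistency from Lemma \ref{lemmaconsistency} together with the Lipschitz/domination Assumptions 3--5 — to be uniform enough (uniformly in $n$, via Assumption 7) that Slutsky's theorem applies to the matrix inverses. Given Lemmas \ref{lemmaconsistency} and \ref{lemmaclts} this is already essentially in hand, so the proof is short: state the linear representations, invoke the bivariate score CLT on $(U_n,V_n)$, and multiply out the block covariance.
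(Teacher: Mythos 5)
Your proposal is correct and follows essentially the same route as the paper: both start from the stationary-point linearization of $\sqrt{n}\left(\widetilde{\mbox{\boldmath${\theta}$}}_{n}^{s}-\mbox{\boldmath${\theta}$}_{n}^{p}\right)$ established in Lemma \ref{lemmaclts}, exploit the independence of $\mathbf{y}$ and $\mathbf{y}_{rep}$ so that the cross-covariance with $\widehat{\mbox{\boldmath${\theta}$}}_{n}\left(\mathbf{y}\right)$ comes only through the penalty term (yielding $\mathbf{F}_{n}$) and the cross-covariance with $\widehat{\mbox{\boldmath${\theta}$}}_{n}\left(\mathbf{y}_{rep}\right)$ only through the $\mathbf{y}_{rep}$-score (yielding $\mathbf{G}_{n}$), and conclude by the CLT and Cramér--Wold. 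Your organization of the argument through a single stacked score vector $(U_n',V_n')'$ and a fixed linear map is a slightly cleaner way to justify the \emph{joint} normality than the paper's pairwise covariance computations, but it is the same decomposition and the same key facts.
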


\begin{proof}
By Lemma \ref{lemmaclts}, we have%
\begin{eqnarray*}
\sqrt{n}\left( \widetilde{\mbox{\boldmath${\theta}$}}_{n}^{s}-%
\mbox{\boldmath${\theta}$}_{n}^{p}\right) &=&\left(
-n^{-1}\sum_{t=1}^{n}\bigtriangledown ^{2}l_{t}\left( \mathbf{y}_{rep}^{t},%
\widetilde{\mbox{\boldmath${\theta}$}}_{n3}\right) +\left( -\mathbf{H}%
_{n}^{d}\right) \right) ^{-1}\times \\
&&\left( n^{-1/2}\sum_{t=1}^{n}\bigtriangledown l_{t}\left( \mathbf{y}%
_{rep}^{t},\mbox{\boldmath${\theta}$}_{n}^{p}\right) +\left( -\mathbf{H}%
_{n}^{d}\right) \sqrt{n}\left( \widehat{{\mbox{\boldmath${\theta}$}}}%
_{n}\left( \mathbf{y}\right) -\mbox{\boldmath${\theta}$}_{n}^{p}\right)
\right) .
\end{eqnarray*}%
\begin{equation*}
\sqrt{n}\left( \widehat{{\mbox{\boldmath${\theta}$}}}_{n}\left( \mathbf{y}%
_{rep}\right) -\mbox{\boldmath${\theta}$}_{n}^{p}\right) =\left(
-n^{-1}\sum_{t=1}^{n}\bigtriangledown ^{2}l_{t}\left( \mathbf{y}_{rep}^{t},%
\widetilde{\mbox{\boldmath${\theta}$}}_{n4}\right) \right)
^{-1}n^{-1/2}\sum_{t=1}^{n}\bigtriangledown l_{t}\left( \mathbf{y}_{rep}^{t},%
\mbox{\boldmath${\theta}$}_{n}^{p}\right) ,
\end{equation*}%
where $\widetilde{\mbox{\boldmath${\theta}$}}_{n4}$ is an intermediate value
between $\widehat{{\mbox{\boldmath${\theta}$}}}_{n}\left( \mathbf{y}%
_{rep}\right) $ and $\mbox{\boldmath${\theta}$}_{n}^{p}$. Hence, we have%
\begin{eqnarray*}
&&Cov\left( \sqrt{n}\left( \widetilde{\mbox{\boldmath${\theta}$}}_{n}^{s}-%
\mbox{\boldmath${\theta}$}_{n}^{p}\right) ,\sqrt{n}\left( \widehat{{%
\mbox{\boldmath${\theta}$}}}_{n}\left( \mathbf{y}\right) -%
\mbox{\boldmath${\theta}$}_{n}^{p}\right) \right) \\
&=&E\left( \sqrt{n}\left( \widetilde{\mbox{\boldmath${\theta}$}}_{n}^{s}-%
\mbox{\boldmath${\theta}$}_{n}^{p}\right) \sqrt{n}\left( \widehat{{%
\mbox{\boldmath${\theta}$}}}_{n}\left( \mathbf{y}\right) -%
\mbox{\boldmath${\theta}$}_{n}^{p}\right) ^{\prime }\right) +o\left( 1\right)
\\
&=&E\left[ 
\begin{array}{c}
\left\{ -n^{-1}\sum_{t=1}^{n}\bigtriangledown ^{2}l_{t}\left( \mathbf{y}%
_{rep}^{t},\widetilde{\mbox{\boldmath${\theta}$}}_{n3}\right) +\left( -%
\mathbf{H}_{n}^{d}\right) \right\} ^{-1}\left( -\mathbf{H}_{n}^{d}\right)
\\ 
\times \sqrt{n}\left( \widehat{{\mbox{\boldmath${\theta}$}}}_{n}\left( 
\mathbf{y}\right) -\mbox{\boldmath${\theta}$}_{n}^{p}\right) \sqrt{n}\left( 
\widehat{{\mbox{\boldmath${\theta}$}}}_{n}\left( \mathbf{y}\right) -%
\mbox{\boldmath${\theta}$}_{n}^{p}\right) ^{\prime }%
\end{array}%
\right] +o\left( 1\right) \\
&=&\left( -\mathbf{H}_{n}+\left( -\mathbf{H}_{n}^{d}\right) \right)
^{-1}\left( -\mathbf{H}_{n}^{d}\right) \mathbf{C}_{n}+o\left( 1\right)
\end{eqnarray*}%
and%
\begin{eqnarray*}
&&Cov\left( \sqrt{n}\left( \widetilde{\mbox{\boldmath${\theta}$}}_{n}^{s}-%
\mbox{\boldmath${\theta}$}_{n}^{p}\right) ,\sqrt{n}\left( \widehat{{%
\mbox{\boldmath${\theta}$}}}_{n}\left( \mathbf{y}_{rep}\right) -%
\mbox{\boldmath${\theta}$}_{n}^{p}\right) \right) \\
&=&E\left( \sqrt{n}\left( \widetilde{\mbox{\boldmath${\theta}$}}_{n}^{s}-%
\mbox{\boldmath${\theta}$}_{n}^{p}\right) \sqrt{n}\left( \widehat{{%
\mbox{\boldmath${\theta}$}}}_{n}\left( \mathbf{y}_{rep}\right) -%
\mbox{\boldmath${\theta}$}_{n}^{p}\right) ^{\prime }\right) +o\left( 1\right)
\\
&=&E\left[ 
\begin{array}{c}
\left\{ -n^{-1}\sum_{t=1}^{n}\bigtriangledown ^{2}l_{t}\left( \mathbf{y}%
_{rep}^{t},\widetilde{\mbox{\boldmath${\theta}$}}_{n3}\right) +\left( -%
\mathbf{H}_{n}^{d}\right) \right\}
^{-1}n^{-1/2}\sum_{t=1}^{n}\bigtriangledown l_{t}\left( \mathbf{y}_{rep}^{t},%
\mbox{\boldmath${\theta}$}_{n}^{p}\right) \\ 
\times n^{-1/2}\sum_{t=1}^{n}\bigtriangledown l_{t}\left( \mathbf{y}%
_{rep}^{t},\mbox{\boldmath${\theta}$}_{n}^{p}\right) \left(
-n^{-1}\sum_{t=1}^{n}\bigtriangledown ^{2}l_{t}\left( \mathbf{y}_{rep}^{t},%
\widetilde{\mbox{\boldmath${\theta}$}}_{n4}\right) \right) ^{-1}%
\end{array}%
\right] +o\left( 1\right) \\
&=&\left( -\mathbf{H}_{n}+\left( -\mathbf{H}_{n}^{d}\right) \right) ^{-1}%
\mathbf{B}_{n}\left( -\mathbf{H}_{n}\right) ^{-1}+o\left( 1\right)
\end{eqnarray*}%
Then we have%
\begin{equation*}
\left[ 
\begin{array}{ccc}
\mathbf{D}_{n} & \mathbf{F}_{n} & \mathbf{G}_{n} \\ 
\mathbf{F}_{n} & \mathbf{C}_{n} & \mathbf{0} \\ 
\mathbf{G}_{n} & \mathbf{0} & \mathbf{C}_{n}%
\end{array}%
\right] ^{-1/2}\left( 
\begin{array}{c}
\sqrt{n}\left( \widetilde{\mbox{\boldmath${\theta}$}}_{n}^{s}-%
\mbox{\boldmath${\theta}$}_{n}^{p}\right) \\ 
\sqrt{n}\left( \widehat{{\mbox{\boldmath${\theta}$}}}_{n}\left( \mathbf{y}%
\right) -\mbox{\boldmath${\theta}$}_{n}^{p}\right) \\ 
\sqrt{n}\left( \widehat{{\mbox{\boldmath${\theta}$}}}_{n}\left( \mathbf{y}%
_{rep}\right) -\mbox{\boldmath${\theta}$}_{n}^{p}\right)%
\end{array}%
\right) \overset{d}{\rightarrow }N\left( 0,\mathbf{I}_{3P}\right) ,
\end{equation*}%
where 
\begin{equation*}
\mathbf{D}_{n}=\left( -\mathbf{H}_{n}+\left( -\mathbf{H}_{n}^{d}\right)
\right) ^{-1}\left( \mathbf{B}_{n}+\left( -\mathbf{H}_{n}^{d}\right) 
\mathbf{C}_{n}\left( -\mathbf{H}_{n}^{d}\right) \right) \left( -\mathbf{H}%
_{n}+\left( -\mathbf{H}_{n}^{d}\right) \right) ^{-1},
\end{equation*}
$\mathbf{F}_{n}=\left( -\mathbf{H}_{n}+\left( -\mathbf{H}_{n}^{d}\right)
\right) ^{-1}\left( -\mathbf{H}_{n}^{d}\right) \mathbf{C}_{n}$ and $%
\mathbf{G}_{n}=\left( -\mathbf{H}_{n}+\left( -\mathbf{H}_{n}^{d}\right)
\right) ^{-1}\mathbf{B}_{n}\left( -\mathbf{H}_{n}\right) ^{-1}$.
\end{proof}


\subsubsection{Proof of Theorem 3.1}
We write $\mathbf{H}_n\left(\boldsymbol{\theta}_n^p\right)$ as $\mathbf{H}_n, \mathbf{B}_n\left(\boldsymbol{\theta}_n^p\right)$ as $\mathbf{B}_n$, and let $\mathbf{C}_n=\mathbf{H}_n^{-1} \mathbf{B}_n \mathbf{H}_n^{-1}$. Note that
\begin{equation}
	\label{vb consistency mle}
	\overline{\boldsymbol{\theta}}^{VB}(\mathbf{y})=\widehat{\boldsymbol{\theta}}_n(\mathbf{y})+O_p\left(n^{-3/4}\right),
\end{equation}
in \citet{Zhang_2024}. Then, we have
\begin{equation}
\label{vb consistency psedo true}
\overline{\boldsymbol{\theta}}^{VB}(\mathbf{y})=\boldsymbol{\theta}_n^p+O_p\left(n^{-1/2}\right),
\end{equation}
\begin{equation}
	\label{var score f}
	\frac{1}{\sqrt{n}} \mathbf{B}_n^{-1/2} \frac{\partial \ln p\left(\mathbf{y}_{\text {rep}}| \boldsymbol{\theta}_n^p\right)}{\partial \boldsymbol{\theta}} \xrightarrow{d} N\left(0, \mathbf{I}_P\right),
\end{equation}
and
\begin{equation}
	\label{vb to pesedo true}
\mathbf{C}_n^{-1/2} \sqrt{n}\left(\widehat{\boldsymbol{\theta}}_n(\mathbf{y})-\boldsymbol{\theta}_n^p\right) \xrightarrow{d} N\left(0, \mathbf{I}_P\right).
\end{equation}
We are now in the position to prove Theorem 3.1. Note that
$$
\begin{aligned}
	& E_{\mathbf{y}} E_{\mathbf{y}_{\text {rep}}}\left(-2 \ln p\left(\mathbf{y}_{\text {rep}} |\overline{\boldsymbol{\theta}}^{VB}\left(\mathbf{y}\right)\right)\right) \\
	= & E_{\mathbf{y}} E_{\mathbf{y}_{\text {rep}}}\left(-2 \ln p\left(\mathbf{y}_{\text {rep}} |\overline{\boldsymbol{\theta}}^{VB}\left(\mathbf{y}_{\text{rep}}\right)\right)\right) \\
	+ & E_{\mathbf{y}} E_{\mathbf{y}_{\text {rep}}}\left(-2 \ln p\left(\mathbf{y}_{\text {rep}} | \boldsymbol{\theta}_n^p\right)\right)-E_{\mathbf{y}} E_{\mathbf{y}_{\text {rep}}}\left(-2 \ln p\left(\mathbf{y}_{\text {rep}}| \overline{\boldsymbol{\theta}}^{VB}\left(\mathbf{y}_{\text {rep}}\right)\right)\right) \\
	+ & E_{\mathbf{y}} E_{\mathbf{y}_{\text {rep }}}\left(-2 \ln p\left(\mathbf{y}_{\text {rep }}| \overline{\boldsymbol{\theta}}^{VB}(\mathbf{y})\right)\right)-E_{\mathbf{y}} E_{\mathbf{y}_{\text {rep }}}\left(-2 \ln p\left(\mathbf{y}_{\text {rep }}| \boldsymbol{\theta}_n^p\right)\right)\\
	= & T_1 + T_2 + T_3
\end{aligned}
$$
where 
$$
T_1 =E_{\mathbf{y}} E_{\mathbf{y}_{\text {rep}}}\left(-2 \ln p\left(\mathbf{y}_{\text {rep}} |\overline{\boldsymbol{\theta}}^{VB}\left(\mathbf{y}_{\text{rep}}\right)\right)\right),
$$
$$
T_2 = E_{\mathbf{y}} E_{\mathbf{y}_{\text {rep}}}\left(-2 \ln p\left(\mathbf{y}_{\text {rep}} | \boldsymbol{\theta}_n^p\right)\right)-E_{\mathbf{y}} E_{\mathbf{y}_{\text {rep}}}\left(-2 \ln p\left(\mathbf{y}_{\text {rep}}| \overline{\boldsymbol{\theta}}^{VB}\left(\mathbf{y}_{\text {rep}}\right)\right)\right),
$$
and
$$
T_3 = E_{\mathbf{y}} E_{\mathbf{y}_{\text {rep }}}\left(-2 \ln p\left(\mathbf{y}_{\text {rep }}| \overline{\boldsymbol{\theta}}^{VB}(\mathbf{y})\right)\right)-E_{\mathbf{y}} E_{\mathbf{y}_{\text {rep }}}\left(-2 \ln p\left(\mathbf{y}_{\text {rep }}| \boldsymbol{\theta}_n^p\right)\right).
$$
Now let us analyze $T_2$ and $T_3$. First, expanding $\ln p\left(\mathbf{y}_{\text{rep}} |\boldsymbol{\theta}_n^p\right)$ at $\overline{\boldsymbol{\theta}}^{VB} \left(\mathbf{y}_{\text{rep}}\right)$
\begin{equation}
	\label{theorem31_t21}
	\begin{aligned}
		& \ln p\left(\mathbf{y}_{\text {rep}}| \boldsymbol{\theta}_n^p\right) \\
		= & \ln p\left(\mathbf{y}_{\text {rep}}| \overline{\boldsymbol{\theta}}^{VB}\left(\mathbf{y}_{\text {rep}}\right)\right)+\frac{\partial \ln p\left(\mathbf{y}_{\text {rep}}| \overline{\boldsymbol{\theta}}^{VB}\left(\mathbf{y}_{\text {rep}}\right)\right)}{\partial \boldsymbol{\theta}^{\prime}}\left(\boldsymbol{\theta}_n^p-\overline{\boldsymbol{\theta}}^{VB}\left(\mathbf{y}_{\text {rep}}\right)\right) \\
		& +\frac{1}{2}\left(\boldsymbol{\theta}_n^p-\overline{\boldsymbol{\theta}}^{VB}\left(\mathbf{y}_{\text {rep}}\right)\right)^{\prime} \frac{\partial^2 \ln p\left(\mathbf{y}_{\text {rep}}| \overline{\boldsymbol{\theta}}^{VB}\left(\mathbf{y}_{\text {rep}}\right)\right)}{\partial \boldsymbol{\theta} \partial \boldsymbol{\theta}^{\prime}}\left(\boldsymbol{\theta}_n^p-\overline{\boldsymbol{\theta}}^{VB}\left(\mathbf{y}_{\text {rep}}\right)\right)\\
		&+\frac{1}{6}\left[\left(\boldsymbol{\theta}_n^p-\overline{\boldsymbol{\theta}}^{VB}\left(\mathbf{y}_{\text {rep}}\right)\right) \otimes\left(\boldsymbol{\theta}_n^p-\overline{\boldsymbol{\theta}}^{VB}\left(\mathbf{y}_{\text {rep}}\right)\right)\right]^{\prime} \frac{\partial^3 \ln p\left(\mathbf{y}_{\text {rep}}| \overline{\boldsymbol{\theta}}^{*VB}\left(\mathbf{y}_{\text {rep}}\right)\right)}{\partial \boldsymbol{\theta} \partial \boldsymbol{\theta}^{\prime} \partial \boldsymbol{\theta}}\left(\boldsymbol{\theta}_n^p-\overline{\boldsymbol{\theta}}^{VB}\left(\mathbf{y}_{\text {rep}}\right)\right)		
	\end{aligned}
\end{equation}
where $\overline{\boldsymbol{\theta}}^{*VB}\left(\mathbf{y}_{\text {rep}}\right)$ lies between $\boldsymbol{\theta}_n^p$ and $\overline{\boldsymbol{\theta}}^{VB}\left(\mathbf{y}_{\text {rep}}\right)$. Note that the last term can be written as
\begin{equation}
	\begin{aligned}
		R T_{1, n} & =\frac{1}{6} \frac{1}{\sqrt{n}}\left[\sqrt{n}\left(\boldsymbol{\theta}_n^p-\overline{\boldsymbol{\theta}}^{VB}\left(\mathbf{y}_{\text {rep}}\right)\right) \otimes \sqrt{n}\left(\boldsymbol{\theta}_n^p-\overline{\boldsymbol{\theta}}^{VB}\left(\mathbf{y}_{\text{rep}}\right)\right)\right]^{\prime} \\
		& \times \frac{1}{n} \sum_{t=1}^n \nabla^3 l_t\left(\overline{\boldsymbol{\theta}}^{*VB}\left(\mathbf{y}_{rep}\right)\right) \sqrt{n}\left(\boldsymbol{\theta}_n^p-\overline{\boldsymbol{\theta}}^{VB}\left(\mathbf{y}_{rep}\right)\right)
	\end{aligned}
\end{equation}
where $\sqrt{n}\left(\boldsymbol{\theta}_n^p-\overline{\boldsymbol{\theta}}^{VB}\left(\mathbf{y}_{\text {rep}}\right)\right)=O_p(1)$ by Assumptions 1-8 and

$$
\begin{aligned}
	\left\|\frac{1}{n} \sum_{t=1}^n \nabla^3 l_t\left(\overline{\boldsymbol{\theta}}^{*VB}\left(\mathbf{y}_{rep}\right)\right)\right\| & \leq \frac{1}{n} \sum_{t=1}^n\left\|\nabla^3 l_t\left(\overline{\boldsymbol{\theta}}^{*VB}\left(\mathbf{y}_{r e p}\right)\right)\right\| \leq \frac{1}{n} \sum_{t=1}^n \sup _{\boldsymbol{\theta} \in \boldsymbol{\Theta}}\left\|\nabla^j l_t(\boldsymbol{\theta})\right\| \\
	& \leq \frac{1}{n} \sum_{t=1}^n M_t\left(\mathbf{y}_t\right)
\end{aligned}
$$
by Assumption 5. It can be shown that
$$
P\left(\frac{1}{n} \sum_{t=1}^n M_t\left(\mathbf{y}_t\right)>C\right) \leq \frac{\frac{1}{n} \sum_{t=1}^n E\left(M_t\left(\mathbf{y}_t\right)\right)}{C} \leq \frac{\sup_t E\left(M_t\left(\mathbf{y}_t\right)\right)}{C} \leq \frac{M}{C}
$$
by the Markov inequality. Let $\varepsilon=M / C$, for any $\varepsilon$, there exists a constant $C=M / \varepsilon$ such that
$$
P\left(\frac{1}{n} \sum_{t=1}^n M_t\left(\mathbf{y}_t\right)>C\right) \leq \varepsilon .
$$
Thus, $\frac{1}{n} \sum_{t=1}^n M_t\left(\mathbf{y}_t\right)=O_p(1)$ and $\left\|\frac{1}{n} \sum_{t=1}^n \nabla^3 l_t\left(\overline{\boldsymbol{\theta}}^{*VB}\left(\mathbf{y}_{\text {rep }}\right)\right)\right\|=O_p(1)$. Hence, we have $R T_{1, n}=O_p\left(n^{-1 / 2}\right)$.

We can rewrite (\ref{theorem31_t21}) as 
$$
\begin{aligned}
	& \ln p\left(\mathbf{y}_{\text {rep}}| \boldsymbol{\theta}_n^p\right) \\
	= & \ln p\left(\mathbf{y}_{\text {rep}}| \overline{\boldsymbol{\theta}}^{VB}\left(\mathbf{y}_{\text {rep}}\right)\right)+\frac{\partial \ln p\left(\mathbf{y}_{\text {rep}}| \overline{\boldsymbol{\theta}}^{VB}\left(\mathbf{y}_{\text {rep}}\right)\right)}{\partial \boldsymbol{\theta}^{\prime}}\left(\boldsymbol{\theta}_n^p-\overline{\boldsymbol{\theta}}^{VB}\left(\mathbf{y}_{\text {rep}}\right)\right) \\
	& +\frac{1}{2}\left(\boldsymbol{\theta}_n^p-\overline{\boldsymbol{\theta}}^{VB}\left(\mathbf{y}_{\text {rep}}\right)\right)^{\prime} \frac{\partial^2 \ln p\left(\mathbf{y}_{\text {rep}}| \overline{\boldsymbol{\theta}}^{VB}\left(\mathbf{y}_{\text {rep}}\right)\right)}{\partial \boldsymbol{\theta} \partial \boldsymbol{\theta}^{\prime}}\left(\boldsymbol{\theta}_n^p-\overline{\boldsymbol{\theta}}^{VB}\left(\mathbf{y}_{\text {rep}}\right)\right) + RT_{1,n}\\
	= & \ln p\left(\mathbf{y}_{\text {rep}}| \overline{\boldsymbol{\theta}}^{VB}\left(\mathbf{y}_{\text {rep}}\right)\right)+\frac{\partial \ln p\left(\mathbf{y}_{\text {rep}}| \widehat{\boldsymbol{\theta}}\left(\mathbf{y}_{\text {rep}}\right)\right)}{\partial \boldsymbol{\theta}^{\prime}}\left(\boldsymbol{\theta}_n^p-\overline{\boldsymbol{\theta}}^{VB}\left(\mathbf{y}_{\text {rep}}\right)\right) \\
	& +  \frac{1}{2}\left(\boldsymbol{\theta}_n^p-\overline{\boldsymbol{\theta}}^{VB}\left(\mathbf{y}_{\text {rep}}\right)\right)^{\prime} \frac{\partial^2 \ln p\left(\mathbf{y}_{\text {rep}}| \overline{\boldsymbol{\theta}}^{VB}\left(\mathbf{y}_{\text {rep}}\right)\right)}{\partial \boldsymbol{\theta} \partial \boldsymbol{\theta}^{\prime}}\left(\boldsymbol{\theta}_n^p-\overline{\boldsymbol{\theta}}^{VB}\left(\mathbf{y}_{\text {rep}}\right)\right)\\
	& +\left(\frac{\partial \ln p\left(\mathbf{y}_{rep}|\overline{\boldsymbol{\theta}}^{VB}\left(\mathbf{y}_{\text {rep}}\right)\right)}{\partial \boldsymbol{\theta}^{\prime}}-\frac{\partial \ln p\left(\mathbf{y}_{rep}|\widehat{\boldsymbol{\theta}}_n\left(\mathbf{y}_{\text {rep}}\right)\right)}{\partial \boldsymbol{\theta}^{\prime}}\right)\left(\boldsymbol{\theta}_n^p-\overline{\boldsymbol{\theta}}^{VB}\left(\mathbf{y}_{\text {rep}}\right)\right)+R T_{1,n}\\
	= & \ln p\left(\mathbf{y}_{\text {rep}}| \overline{\boldsymbol{\theta}}^{VB}\left(\mathbf{y}_{\text {rep}}\right)\right)+\frac{\partial \ln p\left(\mathbf{y}_{\text {rep}}| \widehat{\boldsymbol{\theta}}\left(\mathbf{y}_{\text {rep}}\right)\right)}{\partial \boldsymbol{\theta}^{\prime}}\left(\boldsymbol{\theta}_n^p-\overline{\boldsymbol{\theta}}^{VB}\left(\mathbf{y}_{\text {rep}}\right)\right) \\
	& +  \frac{1}{2}\left(\boldsymbol{\theta}_n^p-\overline{\boldsymbol{\theta}}^{VB}\left(\mathbf{y}_{\text {rep}}\right)\right)^{\prime} \frac{\partial^2 \ln p\left(\mathbf{y}_{\text {rep}}| \overline{\boldsymbol{\theta}}^{VB}\left(\mathbf{y}_{\text {rep}}\right)\right)}{\partial \boldsymbol{\theta} \partial \boldsymbol{\theta}^{\prime}}\left(\boldsymbol{\theta}_n^p-\overline{\boldsymbol{\theta}}^{VB}\left(\mathbf{y}_{\text {rep}}\right)\right) + RT_n
\end{aligned}
$$
from (\ref{vb consistency mle}) where $RT_n = RT_{1,n} + RT_{2,n}$ with
\begin{equation}
	\label{rt 2n}
	RT_{2,n} = \left(\frac{\partial \ln p\left(\mathbf{y}_{rep}|\overline{\boldsymbol{\theta}}^{VB}\left(\mathbf{y}_{\text {rep}}\right)\right)}{\partial \boldsymbol{\theta}^{\prime}}-\frac{\partial \ln p\left(\mathbf{y}_{rep}|\widehat{\boldsymbol{\theta}}_n\left(\mathbf{y}_{\text {rep}}\right)\right)}{\partial \boldsymbol{\theta}^{\prime}}\right)\left(\boldsymbol{\theta}_n^p-\overline{\boldsymbol{\theta}}^{VB}\left(\mathbf{y}_{\text {rep}}\right)\right)
\end{equation}
We can rewrite the first term on the right-hand side of (\ref{rt 2n}) as
$$
\begin{aligned}
	& \left(\frac{\partial \ln p\left(\mathbf{y}_{rep}|\overline{\boldsymbol{\theta}}^{VB}\left(\mathbf{y}_{\text {rep}}\right)\right)}{\partial \boldsymbol{\theta}}-\frac{\partial \ln p\left(\mathbf{y}_{rep}|\widehat{\boldsymbol{\theta}}_n\left(\mathbf{y}_{\text {rep}}\right)\right)}{\partial \boldsymbol{\theta}}\right) \\
	= & \frac{1}{n} \frac{\partial^2 \ln p\left(\mathbf{y}_{\text {rep}} \mid \widehat{\boldsymbol{\theta}}_n^{\#}\left(\mathbf{y}_{\text {rep}}\right)\right)}{\partial \boldsymbol{\theta} \partial \boldsymbol{\theta}^{\prime}} n\left(\overline{\boldsymbol{\theta}}^{VB}\left(\mathbf{y}_{\text {rep}}\right)-\widehat{\boldsymbol{\theta}}_n\left(\mathbf{y}_{\text {rep}}\right)\right)=O_p\left(1\right)
\end{aligned}
$$
where $\widehat{\boldsymbol{\theta}}_n^{\#}\left(\mathbf{y}_{\text {rep}}\right)$ lies between $\overline{\boldsymbol{\theta}}^{VB}\left(\mathbf{y}_{\text {rep}}\right)$ and $\widehat{\boldsymbol{\theta}}_n\left(\mathbf{y}_{\text {rep}}\right)$. Thus,
$$
R T_{2, n}=O_p(1) O_p\left(n^{-1 / 2}\right)=O_p\left(n^{-1 / 2}\right)
$$
Hence, we have
\begin{equation}
	\label{Op rt_n}
	R T_n=R T_{1, n}+R T_{2, n}=O_p\left(n^{-1 / 2}\right)
\end{equation}

Now we will consider the expectation of the norm of $R T_{1, n}$ and $R T_{2, n}$. For $R T_{1, n}$, we first consider the term
\begin{equation}
	\label{rt_1n subterm}
	\begin{aligned}
	 & \left[\sqrt{n}\left(\boldsymbol{\theta}_n^p-\overline{\boldsymbol{\theta}}^{VB}\left(\mathbf{y}_{\text {rep}}\right)\right) \otimes \sqrt{n}\left(\boldsymbol{\theta}_n^p-\overline{\boldsymbol{\theta}}^{VB}\left(\mathbf{y}_{\text{rep}}\right)\right)\right]^{\prime} \\
		& \times \frac{1}{n} \sum_{t=1}^n \nabla^3 l_t\left(\overline{\boldsymbol{\theta}}^{*VB}\left(\mathbf{y}_{rep}\right)\right) \sqrt{n}\left(\boldsymbol{\theta}_n^p-\overline{\boldsymbol{\theta}}^{VB}\left(\mathbf{y}_{rep}\right)\right)
	\end{aligned}
\end{equation}
and try to prove that the expectation of (\ref{rt_1n subterm}) is bounded. It can be shown that
\begin{equation}
	\label{Cauchy Schwarz inequ of expectation of tr1n}
	\begin{aligned} E&\left[\left\|\left[\sqrt{n}\left(\boldsymbol{\theta}_n^p-\overline{\boldsymbol{\theta}}^{VB}\left(\mathbf{y}_{r e p}\right)\right) \otimes \sqrt{n}\left(\boldsymbol{\theta}_n^p-\overline{\boldsymbol{\theta}}^{VB}\left(\mathbf{y}_{\text {rep}}\right)\right)\right]^{\prime} \frac{1}{n} \sum_{t=1}^n \nabla^3 l_t\left(\overline{\boldsymbol{\theta}}^{*VB}\left(\mathbf{y}_{r e p}\right)\right) \sqrt{n}\left(\boldsymbol{\theta}_n^p-\overline{\boldsymbol{\theta}}^{VB}\left(\mathbf{y}_{rep}\right)\right)\right\|\right] \\
		\leq & \left(E\left[\left\|\left[\sqrt{n}\left(\boldsymbol{\theta}_n^p-\overline{\boldsymbol{\theta}}^{VB}\left(\mathbf{y}_{rep}\right)\right) \otimes \sqrt{n}\left(\boldsymbol{\theta}_n^p-\overline{\boldsymbol{\theta}}^{VB}\left(\mathbf{y}_{\text {rep}}\right)\right)\right]^{\prime}\right\|^2\right]\right)^{1 / 2} \\
		& \times\left(E\left[\left\|\frac{1}{n} \sum_{t=1}^n \nabla^3 l_t\left(\overline{\boldsymbol{\theta}}^{*VB}\left(\mathbf{y}_{rep}\right)\right) \sqrt{n}\left(\boldsymbol{\theta}_n^p-\overline{\boldsymbol{\theta}}^{VB}\left(\mathbf{y}_{rep}\right)\right)\right\|^2\right]\right)^{1 / 2} \\
		= & \left(E\left[\left\|\sqrt{n}\left(\boldsymbol{\theta}_n^p-\overline{\boldsymbol{\theta}}^{VB}\left(\mathbf{y}_{\text {rep}}\right)\right)\right\|^4\right]\right)^{1 / 2}\left(E\left[\left\|\frac{1}{n} \sum_{t=1}^n \nabla^3 l_t\left(\overline{\boldsymbol{\theta}}^{*VB}\left(\mathbf{y}_{r e p}\right)\right) \sqrt{n}\left(\boldsymbol{\theta}_n^p-\overline{\boldsymbol{\theta}}^{VB}\left(\mathbf{y}_{r e p}\right)\right)\right\|^2\right]\right)^{1 / 2}
	\end{aligned}
\end{equation}
by the Cauchy-Schwarz Inequality and the fact that
$$
\left\|\left[\sqrt{n}\left(\boldsymbol{\theta}_n^p-\overline{\boldsymbol{\theta}}^{VB}\left(\mathbf{y}_{re p}\right)\right) \otimes \sqrt{n}\left(\boldsymbol{\theta}_n^p-\overline{\boldsymbol{\theta}}^{VB}\left(\mathbf{y}_{r e p}\right)\right)\right]^{\prime}\right\|=\left\|\sqrt{n}\left(\boldsymbol{\theta}_n^p-\overline{\boldsymbol{\theta}}^{VB}\left(\mathbf{y}_{r e p}\right)\right)\right\|^2.
$$
To prove that (\ref{Cauchy Schwarz inequ of expectation of tr1n}) is bounded, we need to prove that
\begin{equation}
	\label{expectation of term1 tr1n}
	E\left[\left\|\sqrt{n}\left(\boldsymbol{\theta}_n^p-\overline{\boldsymbol{\theta}}^{VB}\left(\mathbf{y}_{r e p}\right)\right)\right\|^4\right]
\end{equation}
and 
\begin{equation}
	\label{expectation of term2 tr1n}
	E\left[\left\|\frac{1}{n} \sum_{t=1}^n \nabla^3 l_t\left(\overline{\boldsymbol{\theta}}^{*VB}\left(\mathbf{y}_{\text {rep}}\right)\right) \sqrt{n}\left(\boldsymbol{\theta}_n^p-\overline{\boldsymbol{\theta}}^{VB}\left(\mathbf{y}_{\text {rep}}\right)\right)\right\|^2\right]
\end{equation}
are both bounded.

For (\ref{expectation of term1 tr1n}), we have
$$
\begin{aligned}
	& \left(E\left[\left\|\sqrt{n}\left(\boldsymbol{\theta}_n^p-\overline{\boldsymbol{\theta}}^{VB}\left(\mathbf{y}_{\text {rep}}\right)\right)\right\|^4\right]\right)^{1 / 4} \\
	= & \left(E\left[\left\|\sqrt{n}\left(\boldsymbol{\theta}_n^p-\widehat{\boldsymbol{\theta}}_n\left(\mathbf{y}_{\text {rep}}\right)+\widehat{\boldsymbol{\theta}}_n\left(\mathbf{y}_{\text {rep}}\right)-\overline{\boldsymbol{\theta}}^{VB}\left(\mathbf{y}_{\text {rep}}\right)\right)\right\|^4\right]\right)^{1 / 4} \\
	\leq & \left(E\left[\left(\left\|\sqrt{n}\left(\boldsymbol{\theta}_n^p-\widehat{\boldsymbol{\theta}}_n\left(\mathbf{y}_{\text {rep}}\right)\right)\right\|+\left\|\sqrt{n}\left(\widehat{\boldsymbol{\theta}}_n\left(\mathbf{y}_{\text {rep}}\right)-\overline{\boldsymbol{\theta}}^{VB}\left(\mathbf{y}_{\text {rep}}\right)\right)\right\|\right)^4\right]\right)^{1 / 4} \\
	\leq & \left(E\left[\left\|\sqrt{n}\left(\boldsymbol{\theta}_n^p-\widehat{\boldsymbol{\theta}}_n\left(\mathbf{y}_{\text {rep}}\right)\right)\right\|^4\right]\right)^{1 / 4}+\left(E\left[\left\|\sqrt{n}\left(\widehat{\boldsymbol{\theta}}_n\left(\mathbf{y}_{\text {rep}}\right)-\overline{\boldsymbol{\theta}}^{VB}\left(\mathbf{y}_{\text {rep}}\right)\right)\right\|^4\right]\right)^{1 / 4}
\end{aligned}
$$
by the triangular inequality and the Minkowski inequality. To prove that (\ref{expectation of term1 tr1n}) is bounded, it is suffice to show
\begin{equation}
	\label{1 term in Minkowski inequality of CS term1}
	E\left[\left\|\sqrt{n}\left(\boldsymbol{\theta}_n^p-\widehat{\boldsymbol{\theta}}_n\left(\mathbf{y}_{\text {rep}}\right)\right)\right\|^4\right]
\end{equation}
and
\begin{equation}
	\label{2 term in Minkowski inequality of CS term1}
	E\left[\left\|\sqrt{n}\left(\widehat{\boldsymbol{\theta}}_n\left(\mathbf{y}_{rep}\right)-\overline{\boldsymbol{\theta}}^{VB}\left(\mathbf{y}_{\text {rep}}\right)\right)\right\|^4\right]
\end{equation}
are both bounded. \citet{li2024deviance} have proved that
\begin{equation}
	\label{1 term in Minkowski inequality of CS term1 is not infinety}
	E\left[\left\|\sqrt{n}\left(\boldsymbol{\theta}_n^p-\widehat{\boldsymbol{\theta}}_n\left(\mathbf{y}_{\text {rep}}\right)\right)\right\|^4\right] <\infty
\end{equation}
under Assumption 1-8.

For (\ref{2 term in Minkowski inequality of CS term1}), following Theorem1 and Corollary 1 of \citet{han2019statistical}, if we use $\overline{\boldsymbol{\theta}}^{VB}\left(\mathbf{y}_{\text {rep}}\right)$ to approximate $\widehat{\boldsymbol{\theta}}_n\left(\mathbf{y}_{rep}\right)$, the bound of the approximate error is
\begin{equation}
	\left\|\sqrt{n}\left(\widehat{\boldsymbol{\theta}}_n\left(\mathbf{y}_{rep}\right)-\overline{\boldsymbol{\theta}}^{VB}\left(\mathbf{y}_{rep}\right)\right)\right\| \leq \frac{C M^{3 / 2}(\log n)^{d / 2+3 / 2}}{n^{1 / 4}} .
\end{equation}
with a exist constant $C$ and for any $M \geq 1$. Therefore (\ref{2 term in Minkowski inequality of CS term1}) is bounded by
\begin{equation}
	\label{2 term in Minkowski inequality of CS term1 is not infinety}
	E\left[\left\|\sqrt{n}\left(\widehat{\boldsymbol{\theta}}_n\left(\mathbf{y}_{rep}\right)-\overline{\boldsymbol{\theta}}^{VB}\left(\mathbf{y}_{rep}\right)\right)\right\|^4\right] \leq \frac{C^4 M^{6}(\log n)^{2d+6}}{n}=O\left(n^{-1}\right)<\infty.
\end{equation}
Thus, from (\ref{1 term in Minkowski inequality of CS term1 is not infinety}) and (\ref{2 term in Minkowski inequality of CS term1 is not infinety}), we have
\begin{equation}
	\label{expectation of term1 tr1n is not infinety}
	\begin{aligned}
		& \left(E\left[\left\|\sqrt{n}\left(\boldsymbol{\theta}_n^p-\overline{\boldsymbol{\theta}}^{VB}\left(\mathbf{y}_{\text {rep}}\right)\right)\right\|^4\right]\right)^{1 / 4} \\
		\leq & \left(E\left[\left\|\sqrt{n}\left(\boldsymbol{\theta}_n^p-\widehat{\boldsymbol{\theta}}_n\left(\mathbf{y}_{\text {rep}}\right)\right)\right\|^4\right]\right)^{1 / 4}+\left(E\left[\left\|\sqrt{n}\left(\widehat{\boldsymbol{\theta}}_n\left(\mathbf{y}_{\text {rep}}\right)-\overline{\boldsymbol{\theta}}^{VB}\left(\mathbf{y}_{\text {rep}}\right)\right)\right\|^4\right]\right)^{1 / 4}\\
		< & \infty.
	\end{aligned}
\end{equation}

For (\ref{expectation of term2 tr1n}), we have 
\begin{equation}
	\label{expectation of term2 tr1n is not infinety}
	\begin{aligned}
		& E\left[\left\|\frac{1}{n} \sum_{t=1}^n \nabla^3 l_t\left(\overline{\boldsymbol{\theta}}^{*VB}\left(\mathbf{y}_{\text {rep}}\right)\right) \sqrt{n}\left(\boldsymbol{\theta}_n^p-\overline{\boldsymbol{\theta}}^{VB}\left(\mathbf{y}_{\text {rep}}\right)\right)\right\|^2\right] \\
		\leq & E\left[\left\|\frac{1}{n} \sum_{t=1}^n \nabla^3 l_t\left(\overline{\boldsymbol{\theta}}^{*VB}\left(\mathbf{y}_{\text{rep}}\right)\right)\right\|^2\left\|\sqrt{n}\left(\boldsymbol{\theta}_n^p-\overline{\boldsymbol{\theta}}^{VB}\left(\mathbf{y}_{\text {rep}}\right)\right)\right\|^2\right] \\
		\leq & \left(E\left[\left\|\frac{1}{n} \sum_{t=1}^n \nabla^3 l_t\left(\overline{\boldsymbol{\theta}}^{*VB}\left(\mathbf{y}_{\text {rep}}\right)\right)\right\|^4\right]\right)^{1 / 2}\left(E\left[\left\|\sqrt{n}\left(\boldsymbol{\theta}_n^p-\overline{\boldsymbol{\theta}}^{VB}\left(\mathbf{y}_{\text {rep}}\right)\right)\right\|^4\right]\right)^{1 / 2} \\
		< & \infty
	\end{aligned}
\end{equation}
by Assumption 5 and (\ref{expectation of term1 tr1n is not infinety}). Thus, from (\ref{rt_1n subterm}), (\ref{Cauchy Schwarz inequ of expectation of tr1n}), (\ref{expectation of term1 tr1n is not infinety}) and (\ref{expectation of term2 tr1n is not infinety}), we have
\begin{equation}
	\label{expectation of tr1n is small op1}
	\begin{aligned}
		& E\left\|R T_{1, n}\right\| \\
		\leq & \frac{1}{6} \frac{1}{\sqrt{n}}\left(E\left[\left\|\sqrt{n}\left(\boldsymbol{\theta}_n^p-\overline{\boldsymbol{\theta}}^{VB}\left(\mathbf{y}_{\text {rep}}\right)\right)\right\|^4\right]\right)^{1 / 4} \\
		& \times\left(E\left[\left\|\frac{1}{n} \sum_{t=1}^n \nabla^3 l_t\left(\overline{\boldsymbol{\theta}}^{*VB}\left(\mathbf{y}_{\text {rep}}\right)\right) \sqrt{n}\left(\boldsymbol{\theta}_n^p-\overline{\boldsymbol{\theta}}^{VB}\left(\mathbf{y}_{\text {rep}}\right)\right)\right\|^2\right]\right)^{1 / 4} \\
		= & o(1)
	\end{aligned}
\end{equation}

For $RT_{2,n}$, we have 
\begin{equation}
	\label{expectation of tr2n}
	\begin{aligned}
		& E\left\|R T_{2,n}\right\| \\
		\leq & E\left[\left\|\frac{1}{\sqrt{n}}\left(\frac{\partial \ln p\left(\mathbf{y}_{rep}| \overline{\boldsymbol{\theta}}^{VB}\left(\mathbf{y}_{\text {rep}}\right)\right)}{\partial \boldsymbol{\theta}^{\prime}}-\frac{\partial \ln p\left(\mathbf{y}_{\text {rep}}| \widehat{\boldsymbol{\theta}}_n\left(\mathbf{y}_{\text {rep}}\right)\right)}{\partial \boldsymbol{\theta}^{\prime}}\right)\right\|\left\|\sqrt{n}\left(\boldsymbol{\theta}_n^p-\overline{\boldsymbol{\theta}}^{VB}\left(\mathbf{y}_{\text {rep}}\right)\right)\right\|\right] \\
		\leq & \left(E\left[\left\|\frac{1}{\sqrt{n}}\left(\frac{\partial \ln p\left(\mathbf{y}_{\text {rep}}| \overline{\boldsymbol{\theta}}^{VB}\left(\mathbf{y}_{\text {rep}}\right)\right)}{\partial \boldsymbol{\theta}^{\prime}}-\frac{\partial \ln p\left(\mathbf{y}_{\text {rep}}|\widehat{\boldsymbol{\theta}}_n\left(\mathbf{y}_{\text {rep}}\right)\right)}{\partial \boldsymbol{\theta}^{\prime}}\right)\right\|^2\right]\right)^{1 / 2} \\
		& \times\left(E\left[\left\|\sqrt{n}\left(\boldsymbol{\theta}_n^p-\overline{\boldsymbol{\theta}}^{VB}\left(\mathbf{y}_{rep}\right)\right)\right\|^2\right]\right)^{1 / 2},
	\end{aligned}
\end{equation}
where
$$
E\left[\left\|\sqrt{n}\left(\boldsymbol{\theta}_n^p-\overline{\boldsymbol{\theta}}^{VB}\left(\mathbf{y}_{\text {rep}}\right)\right)\right\|^2\right]<\infty
$$
by (\ref{expectation of term1 tr1n is not infinety}). For the first term in the right-hand side of (\ref{expectation of tr2n})
$$
\begin{aligned}
	& \frac{1}{\sqrt{n}}\left(\frac{\partial \ln p\left(\mathbf{y}_{\text {rep}}| \overline{\boldsymbol{\theta}}^{VB}\left(\mathbf{y}_{r e p}\right)\right)}{\partial \boldsymbol{\theta}^{\prime}}-\frac{\partial \ln p\left(\mathbf{y}_{\text {rep}}| \widehat{\boldsymbol{\theta}}_n\left(\mathbf{y}_{r e p}\right)\right)}{\partial \boldsymbol{\theta}^{\prime}}\right) \\
	= & \frac{1}{\sqrt{n}} \frac{\partial^2 \ln p\left(\mathbf{y}_{r e p}| \widehat{\boldsymbol{\theta}}_n^{\#}\left(\mathbf{y}_{r e p}\right)\right)}{\partial \boldsymbol{\theta} \partial \boldsymbol{\theta}^{\prime}}\left(\overline{\boldsymbol{\theta}}^{VB}\left(\mathbf{y}_{\text {rep}}\right)-\widehat{\boldsymbol{\theta}}_n\left(\mathbf{y}_{\text {rep}}\right)\right) \\
	= & \frac{1}{n} \frac{\partial^2 \ln p\left(\mathbf{y}_{\text {rep}}| \widehat{\boldsymbol{\theta}}_n^{\#}\left(\mathbf{y}_{\text {rep}}\right)\right)}{\partial \boldsymbol{\theta} \partial \boldsymbol{\theta}^{\prime}} \sqrt{n}\left(\overline{\boldsymbol{\theta}}^{VB}\left(\mathbf{y}_{rep}\right)-\widehat{\boldsymbol{\theta}}_n\left(\mathbf{y}_{r e p}\right)\right),
\end{aligned}
$$
where $\widehat{\boldsymbol{\theta}}_n^{\#}\left(\mathbf{y}_{\text {rep}}\right)$ lies between $\overline{\boldsymbol{\theta}}^{VB}\left(\mathbf{y}_{\text {rep}}\right)$ and $\widehat{\boldsymbol{\theta}}_n\left(\mathbf{y}_{\text {rep}}\right)$. Thus, we have
$$
\begin{aligned}
	& E\left[\left\|\frac{1}{\sqrt{n}}\left(\frac{\partial \ln p\left(\mathbf{y}_{rep}| \overline{\boldsymbol{\theta}}^{VB}\left(\mathbf{y}_{r e p}\right)\right)}{\partial \boldsymbol{\theta}^{\prime}}-\frac{\partial \ln p\left(\mathbf{y}_{rep}| \widehat{\boldsymbol{\theta}}_n\left(\mathbf{y}_{rep}\right)\right)}{\partial \boldsymbol{\theta}^{\prime}}\right)\right\|^2\right] \\
	& =E\left[\left\|\frac{1}{n} \frac{\partial^2 \ln p\left(\mathbf{y}_{\text {rep}}| \widehat{\boldsymbol{\theta}}_n^{\#}\left(\mathbf{y}_{\text {rep}}\right)\right)}{\partial \boldsymbol{\theta} \partial \boldsymbol{\theta}^{\prime}} \sqrt{n}\left(\overline{\boldsymbol{\theta}}^{VB}\left(\mathbf{y}_{\text {rep}}\right)-\widehat{\boldsymbol{\theta}}_n\left(\mathbf{y}_{\text {rep}}\right)\right)\right\|^2\right] \\
	& \leq E\left[\left\|\frac{1}{n} \frac{\partial^2 \ln p\left(\mathbf{y}_{\text {rep}}| \widehat{\boldsymbol{\theta}}_n^{\#}\left(\mathbf{y}_{\text {rep}}\right)\right)}{\partial \boldsymbol{\theta} \partial \boldsymbol{\theta}^{\prime}}\right\|^2\left\|\sqrt{n}\left(\overline{\boldsymbol{\theta}}^{VB}\left(\mathbf{y}_{\text {rep}}\right)-\widehat{\boldsymbol{\theta}}_n\left(\mathbf{y}_{\text {rep}}\right)\right)\right\|^2\right] \\
	& \leq\left(E\left[\left\|\frac{1}{n} \frac{\partial^2 \ln p\left(\mathbf{y}_{\text {rep}}| \widehat{\boldsymbol{\theta}}_n^{\#}\left(\mathbf{y}_{\text {rep}}\right)\right)}{\partial \boldsymbol{\theta} \partial \boldsymbol{\theta}^{\prime}}\right\|^4\right]\right)^{1 / 2}\left(E\left[\left\|\sqrt{n}\left(\overline{\boldsymbol{\theta}}^{VB}\left(\mathbf{y}_{\text {rep}}\right)-\widehat{\boldsymbol{\theta}}_n\left(\mathbf{y}_{\text {rep}}\right)\right)\right\|^4\right]\right)^{1 / 2}
\end{aligned}
$$
By Assumption 5 and (\ref{2 term in Minkowski inequality of CS term1 is not infinety}), we have
$$
E\left[\left\|\frac{1}{n} \frac{\partial^2 \ln p\left(\mathbf{y}_{\text {rep}}|\widehat{\boldsymbol{\theta}}_n^{\#}\left(\mathbf{y}_{\text {rep}}\right)\right)}{\partial \boldsymbol{\theta} \partial \boldsymbol{\theta}^{\prime}}\right\|^4\right]<\infty,
$$
and
$$
E\left[\left\|\sqrt{n}\left(\overline{\boldsymbol{\theta}}^{VB}\left(\mathbf{y}_{\text {rep}}\right)-\widehat{\boldsymbol{\theta}}_n\left(\mathbf{y}_{\text {rep}}\right)\right)\right\|^4\right]=O\left(n^{-1}\right).
$$
Hence, 
$$
E\left[\left\|\frac{1}{\sqrt{n}}\left(\frac{\partial \ln p\left(\mathbf{y}_{rep}|\overline{\boldsymbol{\theta}}^{VB}\left(\mathbf{y}_{rep}\right)\right)}{\partial \boldsymbol{\theta}^{\prime}}-\frac{\partial \ln p\left(\mathbf{y}_{rep}| \widehat{\boldsymbol{\theta}}_n\left(\mathbf{y}_{rep}\right)\right)}{\partial \boldsymbol{\theta}^{\prime}}\right)\right\|^2\right]=o(1) .
$$
So we get
\begin{equation}
	\label{tr_2n is small op1}
	\begin{aligned}
		& E\left\|R T_{2, n}\right\| \\
		& \leq \left(E\left[\left\|\frac{1}{\sqrt{n}}\left(\frac{\partial \ln p\left(\mathbf{y}_{\text {rep}}| \overline{\boldsymbol{\theta}}^{VB}\left(\mathbf{y}_{\text {rep}}\right)\right)}{\partial \boldsymbol{\theta}^{\prime}}-\frac{\partial \ln p\left(\mathbf{y}_{\text {rep}}| \widehat{\boldsymbol{\theta}}_n\left(\mathbf{y}_{\text {rep}}\right)\right)}{\partial \boldsymbol{\theta}^{\prime}}\right)\right\|^2\right]\right)^{1 / 2} \\
		& \times\left(E\left[\left\|\sqrt{n}\left(\boldsymbol{\theta}_n^p-\overline{\boldsymbol{\theta}}^{VB}\left(\mathbf{y}_{\text {rep}}\right)\right)\right\|^2\right]\right)^{1 / 2} \\
		& = o(1).
	\end{aligned}
\end{equation}

From (\ref{expectation of tr1n is small op1}) and (\ref{tr_2n is small op1}), it can be shown that
$$
E\left\|R T_n\right\| \leq E\left\|R T_{1, n}\right\|+E\left\|R T_{2, n}\right\|=o(1) .
$$
We can further get
$$
\begin{aligned}
	T_2 & =E_{\mathbf{y}} E_{\mathbf{y}_{\text {rep}}}\left(-2 \ln p\left(\mathbf{y}_{\text {rep}} | \boldsymbol{\theta}_n^p\right)\right)-E_{\mathbf{y}} E_{\mathbf{y}_{\text {rep}}}\left(-2 \ln p\left(\mathbf{y}_{\text {rep}}| \overline{\boldsymbol{\theta}}^{VB}\left(\mathbf{y}_{\text {rep}}\right)\right)\right)\\
	& = E_{\mathbf{y}} E_{\mathbf{y}_{\text {rep}}}\left[- \frac{\partial \ln p\left(\mathbf{y}_{\text {rep}} | \overline{\boldsymbol{\theta}}^{VB}\left(\mathbf{y}_{\text {rep}}\right)\right)}{\partial \boldsymbol{\theta}^{\prime}}\left(\overline{\boldsymbol{\theta}}^{VB}\left(\mathbf{y}_{\text {rep}}\right)-\boldsymbol{\theta}_n^p\right)\right]\\
	& + E_{\mathbf{y}} E_{\mathbf{y}_{\text {rep}}}\left[-\left(\overline{\boldsymbol{\theta}}^{V B}\left(\mathbf{y}_{\text {rep}}\right)-\boldsymbol{\theta}_n^p\right)^{\prime} \frac{\partial \ln p\left(\mathbf{y}_{\text {rep}} | \overline{\boldsymbol{\theta}}^{VB}\left(\mathbf{y}_{\text {rep}}\right)\right)}{\partial \boldsymbol{\theta} \partial \boldsymbol{\theta}^{\prime}}\left(\overline{\boldsymbol{\theta}}^{VB}\left(\mathbf{y}_{\text {rep}}\right)-\boldsymbol{\theta}_n^p\right)+R T_n\right] \\
	& =E_{\mathbf{y}_{\text {rep}}}\left[-\left(\overline{\boldsymbol{\theta}}^{VB}\left(\mathbf{y}_{\text {rep}}\right)-\boldsymbol{\theta}_n^p\right)^{\prime} \frac{\partial^2 \ln p\left(\mathbf{y}_{\text {rep}} | \overline{\boldsymbol{\theta}}^{VB}\left(\mathbf{y}_{\text {rep}}\right)\right)}{\partial \boldsymbol{\theta} \partial \boldsymbol{\theta}^{\prime}}\left(\overline{\boldsymbol{\theta}}^{VB}\left(\mathbf{y}_{\text {rep}}\right)-\boldsymbol{\theta}_n^p\right)\right]+o(1) \\
	& =E_{\mathbf{y}}\left[-\left(\overline{\boldsymbol{\theta}}^{VB}(\mathbf{y})-\boldsymbol{\theta}_n^p\right)^{\prime} \frac{\partial^2 \ln p\left(\mathbf{y}| \overline{\boldsymbol{\theta}}^{VB}(\mathbf{y})\right)}{\partial \boldsymbol{\theta} \partial \boldsymbol{\theta}^{\prime}}\left(\overline{\boldsymbol{\theta}}^{VB}(\mathbf{y})-\boldsymbol{\theta}_n^p\right)\right]+o(1) .
\end{aligned}
$$
Next we expand $\ln p\left(\mathbf{y}_{r e p}| \overline{\boldsymbol{\theta}}^{VB}(\mathbf{y})\right)$ at $\boldsymbol{\theta}_n^p$
$$
\begin{aligned}
	\ln p\left(\mathbf{y}_{rep}| \overline{\boldsymbol{\theta}}^{VB}(\mathbf{y})\right)= & \ln p\left(\mathbf{y}_{rep}| \boldsymbol{\theta}_n^p\right)+\frac{\partial \ln p\left(\mathbf{y}_{rep}| \boldsymbol{\theta}_n^p\right)}{\partial \boldsymbol{\theta}^{\prime}}\left(\overline{\boldsymbol{\theta}}^{VB}(\mathbf{y})-\boldsymbol{\theta}_n^p\right) \\
	& +\frac{1}{2}\left(\overline{\boldsymbol{\theta}}^{VB}(\mathbf{y})-\boldsymbol{\theta}_n^p\right)^{\prime} \frac{\partial^2 \ln p\left(\mathbf{y}_{rep}| \boldsymbol{\theta}_n^p\right)}{\partial \boldsymbol{\theta} \partial \boldsymbol{\theta}^{\prime}}\left(\overline{\boldsymbol{\theta}}^{VB}(\mathbf{y})-\boldsymbol{\theta}_n^p\right)+o_p(1).
\end{aligned}
$$
Substituting the above expansion into $T_3$, we have
$$
\begin{aligned}
	& T_3=E_{\mathbf{y}} E_{\mathbf{y}_{\text {rep}}}\left[-2 \ln p\left(\mathbf{y}_{\text {rep}} | \overline{\boldsymbol{\theta}}^{VB}(\mathbf{y})\right)\right]-E_{\mathbf{y}} E_{\mathbf{y}_{\text {rep}}}\left[-2 \ln p\left(\mathbf{y}_{\text {rep}} | \boldsymbol{\theta}_n^p\right)\right] \\
	& =E_{\mathbf{y}} E_{\mathbf{y}_{\text {rep}}}\left[\begin{array}{c}
		-2 \frac{\partial \ln p\left(\mathbf{y}_{\text {rep}}|\boldsymbol{\theta}_n^p\right)}{\partial \boldsymbol{\theta}^{\prime}}\left(\overline{\boldsymbol{\theta}}^{VB}(\mathbf{y})-\boldsymbol{\theta}_n^p\right)- \\
		\left(\overline{\boldsymbol{\theta}}^{VB}(\mathbf{y})-\boldsymbol{\theta}_n^p\right)^{\prime} \frac{\partial^2 \ln p\left(\mathbf{y}_{rep}| \boldsymbol{\theta}_n^p\right)}{\partial \boldsymbol{\theta} \partial \boldsymbol{\theta}^{\prime}}\left(\overline{\boldsymbol{\theta}}^{VB}(\mathbf{y})-\boldsymbol{\theta}_n^p\right)+o_p(1)
	\end{array}\right] \\
	& =E_{\mathbf{y}} E_{\mathbf{y}_{\text {rep}}}\left[-2 \frac{\partial \ln p\left(\mathbf{y}_{\text {rep}}| \boldsymbol{\theta}_n^p\right)}{\partial \boldsymbol{\theta}^{\prime}}\left(\overline{\boldsymbol{\theta}}^{VB}(\mathbf{y})-\boldsymbol{\theta}_n^p\right)\right] \\
	& +E_{\mathbf{y}} E_{\mathbf{y}_{\text {rep}}}\left[-\left(\overline{\boldsymbol{\theta}}^{VB}(\mathbf{y})-\boldsymbol{\theta}_n^p\right)^{\prime} \frac{\partial^2 \ln p\left(\mathbf{y}_{\text {rep}} | \boldsymbol{\theta}_n^p\right)}{\partial \boldsymbol{\theta} \partial \boldsymbol{\theta}^{\prime}}\left(\overline{\boldsymbol{\theta}}^{VB}(\mathbf{y})-\boldsymbol{\theta}_n^p\right)\right]+o(1) \\
	& =-2 E_{\mathbf{y}_{\text {rep}}}\left(\frac{\partial \ln p\left(\mathbf{y}_{\text {rep}} | \boldsymbol{\theta}_n^p\right)}{\partial \boldsymbol{\theta}^{\prime}}\right) E_{\mathbf{y}}\left[\left(\overline{\boldsymbol{\theta}}^{VB}(\mathbf{y})-\boldsymbol{\theta}_n^p\right)\right] \\
	& +E_{\mathbf{y}}\left[-\left(\overline{\boldsymbol{\theta}}^{VB}(\mathbf{y})-\boldsymbol{\theta}_n^p\right)^{\prime} E_{\mathbf{y}_{\text {rep}}}\left(\frac{\partial^2 \ln p\left(\mathbf{y}_{\text {rep}} | \boldsymbol{\theta}_n^p\right)}{\partial \boldsymbol{\theta} \partial \boldsymbol{\theta}^{\prime}}\right)\left(\overline{\boldsymbol{\theta}}^{VB}(\mathbf{y})-\boldsymbol{\theta}_n^p\right)\right]+o(1) \\
	& =E_{\mathbf{y}}\left[-\sqrt{n}\left(\overline{\boldsymbol{\theta}}^{VB}(\mathbf{y})-\boldsymbol{\theta}_n^p\right)^{\prime} E_{\mathbf{y}}\left(\frac{1}{n} \frac{\partial^2 \ln p\left(\mathbf{y} | \boldsymbol{\theta}_n^p\right)}{\partial \boldsymbol{\theta} \partial \boldsymbol{\theta}^{\prime}}\right) \sqrt{n}\left(\overline{\boldsymbol{\theta}}^{VB}(\mathbf{y})-\boldsymbol{\theta}_n^p\right)\right]+o(1),
\end{aligned}
$$
since
$$
\begin{aligned}
	E_{\mathbf{y}} E_{\mathbf{y}_{\text {rep}}}& \left[-2 \frac{\partial \ln p\left(\mathbf{y}_{\text {rep}} | \boldsymbol{\theta}_n^p\right)}{\partial \boldsymbol{\theta}^{\prime}}\left(\overline{\boldsymbol{\theta}}^{VB}(\mathbf{y})-\boldsymbol{\theta}_n^p\right)\right] \\ 
	= E_{\mathbf{y}_{\text {rep}}} & \left[-2 \frac{\partial \ln p\left(\mathbf{y}_{\text {rep}} | \boldsymbol{\theta}_n^p\right)}{\partial \boldsymbol{\theta}^{\prime}}\right] E_{\mathbf{y}}\left[\left(\overline{\boldsymbol{\theta}}^{VB}(\mathbf{y})-\boldsymbol{\theta}_n^p\right)\right]=0
\end{aligned}
$$
by (\ref{var score f}) and the dominated convergence theorem.

We can rewrite $T_2$ as
$$
\begin{aligned}
	T_2 & =E_{\mathbf{y}}\left[-\left(\overline{\boldsymbol{\theta}}^{VB}(\mathbf{y})-\boldsymbol{\theta}_n^p\right)^{\prime} \frac{\partial^2 \ln p\left(\mathbf{y} | \overline{\boldsymbol{\theta}}^{VB}(\mathbf{y})\right)}{\partial \boldsymbol{\theta} \partial \boldsymbol{\theta}^{\prime}}\left(\overline{\boldsymbol{\theta}}^{VB}(\mathbf{y})-\boldsymbol{\theta}_n^p\right)\right]+o(1) \\
	& =E_{\mathbf{y}}\left[-\sqrt{n}\left(\overline{\boldsymbol{\theta}}^{VB}(\mathbf{y})-\boldsymbol{\theta}_n^p\right)^{\prime} \frac{1}{n} E_{\mathbf{y}}\left(\frac{\partial^2 \ln p\left(\mathbf{y} | \boldsymbol{\theta}_n^p\right)}{\partial \boldsymbol{\theta} \partial \boldsymbol{\theta}^{\prime}}\right) \sqrt{n}\left(\overline{\boldsymbol{\theta}}^{VB}(\mathbf{y})-\boldsymbol{\theta}_n^p\right)\right] \\
	& +E_{\mathbf{y}}\left[\begin{array}{c}
		-\sqrt{n}\left(\overline{\boldsymbol{\theta}}^{VB}(\mathbf{y})-\boldsymbol{\theta}_n^p\right)^{\prime}\left(\frac{1}{n} \frac{\partial^2 \ln p\left(\mathbf{y} | \overline{\boldsymbol{\theta}}^{VB}(\mathbf{y})\right)}{\partial \boldsymbol{\theta} \partial \boldsymbol{\theta}^{\prime}}-E_{\mathbf{y}}\left(\frac{1}{n} \frac{\partial^2 \ln p\left(\mathbf{y} | \boldsymbol{\theta}_n^p\right)}{\partial \boldsymbol{\theta} \partial \boldsymbol{\theta}^{\prime}}\right)\right) \\
		\times \sqrt{n}\left(\overline{\boldsymbol{\theta}}^{VB}(\mathbf{y})-\boldsymbol{\theta}_n^p\right)
	\end{array}\right] + o\left(1\right)	
\end{aligned}
$$
where 
\begin{equation}
	\label{t2 scaling inequality}
	\begin{aligned}
		& E_{\mathbf{y}}\left[\begin{array}{c}
			\left.-\sqrt{n}\left(\overline{\boldsymbol{\theta}}^{VB}(\mathbf{y})-\boldsymbol{\theta}_n^p\right)^{\prime}\left(\frac{1}{n} \frac{\partial^2 \ln p\left(\mathbf{y}| \overline{\boldsymbol{\theta}}^{VB}(\mathbf{y})\right)}{\partial \boldsymbol{\theta} \partial \boldsymbol{\theta}^{\prime}}-E_{\mathbf{y}}\left(\frac{1}{n} \frac{\partial^2 \ln p\left(\mathbf{y} | \boldsymbol{\theta}_n^p\right)}{\partial \boldsymbol{\theta} \partial \boldsymbol{\theta}^{\prime}}\right)\right)\right] \\
			\times \sqrt{n}\left(\overline{\boldsymbol{\theta}}^{VB}(\mathbf{y})-\boldsymbol{\theta}_n^p\right)
		\end{array}\right] \\
		\leq & E_{\mathbf{y}}\left[\left\|\sqrt{n}\left(\overline{\boldsymbol{\theta}}^{VB}(\mathbf{y})-\boldsymbol{\theta}_n^p\right)\right\|^2\left\|\frac{1}{n} \frac{\partial^2 \ln p\left(\mathbf{y}| \overline{\boldsymbol{\theta}}^{VB}(\mathbf{y})\right)}{\partial \boldsymbol{\theta} \partial \boldsymbol{\theta}^{\prime}}-E_{\mathbf{y}}\left(\frac{1}{n} \frac{\partial^2 \ln p\left(\mathbf{y}| \boldsymbol{\theta}_n^p\right)}{\partial \boldsymbol{\theta} \partial \boldsymbol{\theta}^{\prime}}\right)\right\|\right] \\
		\leq & \left(E_{\mathbf{y}}\left[\left\|\sqrt{n}\left(\overline{\boldsymbol{\theta}}^{VB}(\mathbf{y})-\boldsymbol{\theta}_n^p\right)\right\|^4\right]\right)^{1 / 2} \\
		& \times\left(E_{\mathbf{y}}\left[\left\|\frac{1}{n} \frac{\partial^2 \ln p\left(\mathbf{y}| \overline{\boldsymbol{\theta}}^{VB}(\mathbf{y})\right)}{\partial \boldsymbol{\theta} \partial \boldsymbol{\theta}^{\prime}}-E_{\mathbf{y}}\left(\frac{1}{n} \frac{\partial^2 \ln p\left(\mathbf{y} | \boldsymbol{\theta}_n^p\right)}{\partial \boldsymbol{\theta} \partial \boldsymbol{\theta}^{\prime}}\right)\right\|^2\right]\right)^{1 / 2} .
	\end{aligned}
\end{equation}
In (\ref{t2 scaling inequality}), we have
\begin{equation}
	\label{t2 scaling inequality term1}
	\begin{aligned}
		& E_{\mathbf{y}}\left[\left\|\frac{1}{n} \frac{\partial^2 \ln p\left(\mathbf{y}| \overline{\boldsymbol{\theta}}^{VB}(\mathbf{y})\right)}{\partial \boldsymbol{\theta} \partial \boldsymbol{\theta}^{\prime}}-E_{\mathbf{y}}\left(\frac{1}{n} \frac{\partial^2 \ln p\left(\mathbf{y} | \boldsymbol{\theta}_n^p\right)}{\partial \boldsymbol{\theta} \partial \boldsymbol{\theta}^{\prime}}\right)\right\|^2\right] \\
		= & E_{\mathbf{y}}\left[\left\|\frac{1}{n} \frac{\partial^2 \ln p\left(\mathbf{y} | \overline{\boldsymbol{\theta}}^{VB}(\mathbf{y})\right)}{\partial \boldsymbol{\theta} \partial \boldsymbol{\theta}^{\prime}}-\overline{\mathbf{H}}_n\left(\boldsymbol{\theta}_n^p\right)+\overline{\mathbf{H}}_n\left(\boldsymbol{\theta}_n^p\right)-\mathbf{H}_n\right\|^2\right] \\
		\leq & {\left[E_{\mathbf{y}}\left[\left\|\frac{1}{n} \frac{\partial^2 \ln p\left(\mathbf{y} | \overline{\boldsymbol{\theta}}^{VB}(\mathbf{y})\right)}{\partial \boldsymbol{\theta} \partial \boldsymbol{\theta}^{\prime}}-\overline{\mathbf{H}}_n\left(\boldsymbol{\theta}_n^p\right)\right\|^2\right]^{1 / 2}+\left[E_{\mathbf{y}}\left[\left\|\overline{\mathbf{H}}_n\left(\boldsymbol{\theta}_n^p\right)-\mathbf{H}_n\right\|^2\right]\right]^{1 / 2}\right]^2 }
	\end{aligned}
\end{equation}

The first term of (\ref{t2 scaling inequality term1}) can be written as
$$
\begin{aligned}
vec & \left(\frac{1}{n} \frac{\partial^2 \ln p\left(\mathbf{y} | \overline{\boldsymbol{\theta}}^{VB}(\mathbf{y})\right)}{\partial \boldsymbol{\theta} \partial \boldsymbol{\theta}^{\prime}}-\overline{\mathbf{H}}_n\left(\boldsymbol{\theta}_n^p\right)\right) \\
= & vec \left(\overline{\mathbf{H}}_n\left(\overline{\boldsymbol{\theta}}^{VB}(\mathbf{y})\right)\right)-\operatorname{vec}\left(\overline{\mathbf{H}}_n\left(\boldsymbol{\theta}_n^p\right)\right)=\frac{1}{n} \sum_{t=1}^n \nabla^3 l_t\left(\widetilde{\boldsymbol{\theta}}_n^{* *}(\mathbf{y})\right)\left(\overline{\boldsymbol{\theta}}^{VB}(\mathbf{y})-\boldsymbol{\theta}_n^p\right) \\
= & \frac{1}{\sqrt{n}} \frac{1}{n} \sum_{t=1}^n \nabla^3 l_t\left(\widetilde{\boldsymbol{\theta}}_n^{* *}(\mathbf{y})\right) \sqrt{n}\left(\overline{\boldsymbol{\theta}}^{VB}(\mathbf{y})-\boldsymbol{\theta}_n^p\right)
\end{aligned}
$$
by vectorization and the Taylor expansion, where $\tilde{\boldsymbol{\theta}}_n^{**}(\mathbf{y})$ lies between $\overline{\boldsymbol{\theta}}^{VB}(\mathbf{y})$ and $\boldsymbol{\theta}_n^p$. Thus,
\begin{equation}
	\label{t2 scaling inequality term2}
	\begin{aligned}
		& E_{\mathbf{y}}\left[\left\|\frac{1}{n} \frac{\partial^2 \ln p\left(\mathbf{y} | \overline{\boldsymbol{\theta}}^{VB}(\mathbf{y})\right)}{\partial \boldsymbol{\theta} \partial \boldsymbol{\theta}^{\prime}}-\overline{\mathbf{H}}_n\left(\boldsymbol{\theta}_n^p\right)\right\|^2\right] \\
		\leq & \frac{1}{n} E_{\mathbf{y}}\left[\left\|\frac{1}{n} \sum_{t=1}^n \nabla^3 l_t\left(\widetilde{\boldsymbol{\theta}}_n^{* *}(\mathbf{y})\right)\right\|^2\left\|\sqrt{n}\left(\overline{\boldsymbol{\theta}}^{VB}(\mathbf{y})-\boldsymbol{\theta}_n^p\right)\right\|^2\right] \\
		\leq & \frac{1}{n}\left(E_{\mathbf{y}}\left[\left\|\frac{1}{n} \sum_{t=1}^n \nabla^3 l_t\left(\widetilde{\boldsymbol{\theta}}_n^{* *}(\mathbf{y})\right)\right\|^4\right]\right)^{1 / 2}\left(E_{\mathbf{y}}\left[\left\|\sqrt{n}\left(\overline{\boldsymbol{\theta}}^{VB}(\mathbf{y})-\boldsymbol{\theta}_n^p\right)\right\|^4\right]\right)^{1 / 2} \\
		= & O\left(n^{-1}\right)
	\end{aligned}
\end{equation}
by Assumption 5 and (\ref{expectation of term1 tr1n is not infinety}). The second term of (\ref{t2 scaling inequality term1}) can be written as
\begin{equation}
E_{\mathbf{y}}\left[\left\|\overline{\mathbf{H}}_n\left(\boldsymbol{\theta}_n^p\right)-\mathbf{H}_n\right\|^2\right] \leq \frac{1}{n} E_{\mathbf{y}}\left[\left\|\sqrt{n}\left(\overline{\mathbf{H}}_n\left(\boldsymbol{\theta}_n^p\right)-\mathbf{H}_n\right)\right\|^2\right]=O\left(n^{-1}\right)
\end{equation}
by Assumption 1-8. From (\ref{t2 scaling inequality term1}) and (\ref{t2 scaling inequality term2}) 
$$
E_{\mathbf{y}}\left[\left\|\frac{1}{n} \frac{\partial^2 \ln p\left(\mathbf{y} | \overline{\boldsymbol{\theta}}^{VB}(\mathbf{y})\right)}{\partial \boldsymbol{\theta} \partial \boldsymbol{\theta}^{\prime}}-E_{\mathbf{y}}\left(\frac{1}{n} \frac{\partial^2 \ln p\left(\mathbf{y} | \boldsymbol{\theta}_n^p\right)}{\partial \boldsymbol{\theta} \partial \boldsymbol{\theta}^{\prime}}\right)\right\|^2\right]=o(1)
$$
Thus, we have
\begin{equation}
	E_{\mathbf{y}}\left[\begin{array}{c}
		-\sqrt{n}\left(\overline{\boldsymbol{\theta}}^{VB}(\mathbf{y})-\boldsymbol{\theta}_n^p\right)^{\prime}\left(\frac{1}{n} \frac{\partial^2 \ln p\left(\mathbf{y} | \overline{\boldsymbol{\theta}}^{VB}(\mathbf{y})\right)}{\partial \boldsymbol{\theta} \partial \boldsymbol{\theta}^{\prime}}-E_{\mathbf{y}}\left(\frac{1}{n} \frac{\partial^2 \ln p\left(\mathbf{y} | \boldsymbol{\theta}_n^p\right)}{\partial \boldsymbol{\theta} \partial \boldsymbol{\theta}^{\prime}}\right)\right) \\
		\times \sqrt{n}\left(\overline{\boldsymbol{\theta}}^{VB}(\mathbf{y})-\boldsymbol{\theta}_n^p\right)
	\end{array}\right] = o\left(1\right)
\end{equation}
We can further rewrite $T_2$ as
$$
\begin{aligned}
	T_2 & =E_{\mathbf{y}}\left[-\left(\overline{\boldsymbol{\theta}}^{VB}(\mathbf{y})-\boldsymbol{\theta}_n^p\right)^{\prime} \frac{\partial^2 \ln p\left(\mathbf{y} | \overline{\boldsymbol{\theta}}^{VB}(\mathbf{y})\right)}{\partial \boldsymbol{\theta} \partial \boldsymbol{\theta}^{\prime}}\left(\overline{\boldsymbol{\theta}}^{VB}(\mathbf{y})-\boldsymbol{\theta}_n^p\right)\right]+o(1) \\
	& =E_{\mathbf{y}}\left[-\sqrt{n}\left(\overline{\boldsymbol{\theta}}^{VB}(\mathbf{y})-\boldsymbol{\theta}_n^p\right)^{\prime} \frac{1}{n} E_{\mathbf{y}}\left(\frac{\partial^2 \ln p\left(\mathbf{y} | \boldsymbol{\theta}_n^p\right)}{\partial \boldsymbol{\theta} \partial \boldsymbol{\theta}^{\prime}}\right) \sqrt{n}\left(\overline{\boldsymbol{\theta}}^{VB}(\mathbf{y})-\boldsymbol{\theta}_n^p\right)\right]+o(1) \\
	& =T_3+o(1) .
\end{aligned}
$$
Hence, we only need to analyze $T_3$. Note that
\begin{equation}
	\label{t3 revised}
	\begin{aligned}
		T_3 & =E_{\mathbf{y}}\left[-\sqrt{n}\left(\overline{\boldsymbol{\theta}}^{VB}(\mathbf{y})-\boldsymbol{\theta}_n^p\right)^{\prime} E_{\mathbf{y}}\left(-\frac{1}{n} \frac{\partial^2 \ln p\left(\mathbf{y} | \boldsymbol{\theta}_n^p\right)}{\partial \boldsymbol{\theta} \partial \boldsymbol{\theta}^{\prime}}\right) \sqrt{n}\left(\overline{\boldsymbol{\theta}}^{VB}(\mathbf{y})-\boldsymbol{\theta}_n^p\right)\right]+o(1) \\
		= & E_{\mathbf{y}}\left[\sqrt{n}\left(\overline{\boldsymbol{\theta}}^{VB}(\mathbf{y})-\boldsymbol{\theta}_n^p\right)^{\prime}\left(-\mathbf{H}_n\right) \sqrt{n}\left(\overline{\boldsymbol{\theta}}^{VB}(\mathbf{y})-\boldsymbol{\theta}_n^p\right)\right]+o(1) \\
		= & E_{\mathbf{y}}\left[\left(\mathbf{C}_n^{-1 / 2} \sqrt{n}\left(\overline{\boldsymbol{\theta}}^{VB}(\mathbf{y})-\boldsymbol{\theta}_n^p\right)\right)^{\prime} \mathbf{C}_n^{1 / 2}\left(-\mathbf{H}_n\right) \mathbf{C}_n^{1 / 2} \mathbf{C}_n^{-1 / 2} \sqrt{n}\left(\overline{\boldsymbol{\theta}}^{VB}(\mathbf{y})-\boldsymbol{\theta}_n^p\right)\right]+o(1) \\
		= & E_{\mathbf{y}}\left\{\boldsymbol{t r}\left[\left(-\mathbf{H}_n\right) \mathbf{C}_n^{1 / 2} \mathbf{C}_n^{-1 / 2} \sqrt{n}\left(\overline{\boldsymbol{\theta}}^{VB}(\mathbf{y})-\boldsymbol{\theta}_n^p\right) \sqrt{n}\left(\overline{\boldsymbol{\theta}}^{VB}(\mathbf{y})-\boldsymbol{\theta}_n^p\right)^{\prime} \mathbf{C}_n^{-1 / 2} \mathbf{C}_n^{1 / 2}\right]\right\}+o(1) \\
		= & \mathbf{tr} \left\{\left(-\mathbf{H}_n\right) \mathbf{C}_n^{1 / 2} E_{\mathbf{y}}\left[\mathbf{C}_n^{-1 / 2} \sqrt{n}\left(\overline{\boldsymbol{\theta}}^{VB}(\mathbf{y})-\boldsymbol{\theta}_n^p\right) \sqrt{n}\left(\overline{\boldsymbol{\theta}}^{VB}(\mathbf{y})-\boldsymbol{\theta}_n^p\right)^{\prime} \mathbf{C}_n^{-1 / 2}\right] \mathbf{C}_n^{1 / 2}\right\}+o(1)
	\end{aligned}
\end{equation}

In (\ref{t3 revised}), we have
$$
\begin{aligned}
	& E_{\mathbf{y}}\left[\mathbf{C}_n^{-1 / 2} \sqrt{n}\left(\overline{\boldsymbol{\theta}}^{VB}(\mathbf{y})-\boldsymbol{\theta}_n^p\right) \sqrt{n}\left(\overline{\boldsymbol{\theta}}^{VB}(\mathbf{y})-\boldsymbol{\theta}_n^p\right)^{\prime} \mathbf{C}_n^{-1 / 2}\right] \\
	= & \mathbf{C}_n^{-1 / 2} E_{\mathbf{y}}\left[ \sqrt{n}\left(\overline{\boldsymbol{\theta}}^{VB}(\mathbf{y})-\boldsymbol{\theta}_n^p\right) \sqrt{n}\left(\overline{\boldsymbol{\theta}}^{VB}(\mathbf{y})-\boldsymbol{\theta}_n^p\right)^{\prime} \right] \mathbf{C}_n^{-1 / 2}
\end{aligned}
$$
where
\begin{equation}
	\label{revised t3 subterm}
	\begin{aligned}
			& E_{\mathbf{y}}\left[\sqrt{n}\left(\overline{\boldsymbol{\theta}}^{VB}(\mathbf{y})-\boldsymbol{\theta}_n^p\right) \sqrt{n}\left(\overline{\boldsymbol{\theta}}^{VB}(\mathbf{y})-\boldsymbol{\theta}_n^p\right)^{\prime}\right] \\
			= &  E_{\mathbf{y}}\left[\sqrt{n}\left(\overline{\boldsymbol{\theta}}^{VB}(\mathbf{y})-\widehat{\boldsymbol{\theta}}_n(\mathbf{y})+\widehat{\boldsymbol{\theta}}_n(\mathbf{y})-\boldsymbol{\theta}_n^p\right) \sqrt{n}\left(\overline{\boldsymbol{\theta}}^{VB}(\mathbf{y})-\widehat{\boldsymbol{\theta}}_n(\mathbf{y})+\widehat{\boldsymbol{\theta}}_n(\mathbf{y})-\boldsymbol{\theta}_n^p\right)^{\prime}\right] \\
			= & E_{\mathbf{y}}\left[\sqrt{n}\left(\widehat{\boldsymbol{\theta}}_n(\mathbf{y})-\boldsymbol{\theta}_n^p\right) \sqrt{n}\left(\widehat{\boldsymbol{\theta}}_n(\mathbf{y})-\boldsymbol{\theta}_n^p\right)^{\prime}\right]+E_{\mathbf{y}}\left[\sqrt{n}\left(\overline{\boldsymbol{\theta}}^{VB}(\mathbf{y})-\widehat{\boldsymbol{\theta}}_n(\mathbf{y})\right) \sqrt{n}\left(\widehat{\boldsymbol{\theta}}_n(\mathbf{y})-\boldsymbol{\theta}_n^p\right)^{\prime}\right] \\
			& +E_{\mathbf{y}}\left[\sqrt{n}\left(\widehat{\boldsymbol{\theta}}_n(\mathbf{y})-\boldsymbol{\theta}_n^p\right) \sqrt{n}\left(\overline{\boldsymbol{\theta}}^{VB}(\mathbf{y})-\widehat{\boldsymbol{\theta}}_n(\mathbf{y})\right)^{\prime}\right] \\
			& +E_{\mathbf{y}}\left[\sqrt{n}\left(\overline{\boldsymbol{\theta}}^{VB}(\mathbf{y})-\widehat{\boldsymbol{\theta}}_n(\mathbf{y})\right) \sqrt{n}\left(\overline{\boldsymbol{\theta}}^{VB}(\mathbf{y})-\widehat{\boldsymbol{\theta}}_n(\mathbf{y})\right)^{\prime}\right] .
	\end{aligned}
\end{equation}
In (\ref{revised t3 subterm}), it can be shown that the last three terms are all $o\left(1\right)$ because of (\ref{1 term in Minkowski inequality of CS term1 is not infinety}) and (\ref{2 term in Minkowski inequality of CS term1 is not infinety}). For the first term, we know that
$$
E_{\mathbf{y}}\left[\sqrt{n}\left(\widehat{\boldsymbol{\theta}}_n(\mathbf{y})-\boldsymbol{\theta}_n^p\right) \sqrt{n}\left(\widehat{\boldsymbol{\theta}}_n(\mathbf{y})-\boldsymbol{\theta}_n^p\right)^{\prime}\right]=\mathbf{H}_n^{-1} \mathbf{B}_n \mathbf{H}_n^{-1}+o(1)=\mathbf{C}_n+o(1)
$$
by \citet{li2024deviance}. Hence, it can be shown that
$$
\begin{aligned}
	T_3 & =\mathbf{tr} \left\{\left(-\mathbf{H}_n\right) \mathbf{C}_n^{1 / 2} \mathbf{C}_n^{-1 / 2} E_{\mathbf{y}}\left[\sqrt{n}\left(\overline{\boldsymbol{\theta}}^{VB}(\mathbf{y})-\boldsymbol{\theta}_n^p\right) \sqrt{n}\left(\overline{\boldsymbol{\theta}}^{VB}(\mathbf{y})-\boldsymbol{\theta}_n^p\right)^{\prime}\right] \mathbf{C}_n^{-1 / 2} \mathbf{C}_n^{1 / 2}\right\}+o(1) \\
	& =\mathbf{tr} \left\{\left(-\mathbf{H}_n\right) \mathbf{C}_n^{1 / 2} \mathbf{C}_n^{-1 / 2} \mathbf{C}_n \mathbf{C}_n^{-1 / 2} \mathbf{C}_n^{1 / 2}\right\}+o(1) \\
	& =\mathbf{tr} \left(\left(-\mathbf{H}_n\right) \mathbf{C}_n\right)+o(1) \\
	& =\mathbf{tr} \left(\left(-\mathbf{H}_n\right)\left(-\mathbf{H}_n\right)^{-1} \mathbf{B}_n\left(-\mathbf{H}_n\right)^{-1}\right)+o(1) \\
	& =\mathbf{tr} \left[\mathbf{B}_n\left(-\mathbf{H}_n\right)^{-1}\right]+o(1).
\end{aligned}
$$
and 
\begin{equation}
	\label{vtic prepared}
	\begin{aligned}
		& E_{\mathbf{y}}\left[E_{\mathbf{y}_{\text {rep}}}\left(-2 \ln p\left(\mathbf{y}_{\text {rep}} | \overline{\boldsymbol{\theta}}^{VB}(\mathbf{y})\right)\right)\right] \\
		= & E_{\mathbf{y}}\left[E_{\mathbf{y}_{\text {rep}}}\left(-2 \ln p\left(\mathbf{y}_{\text {rep}} | \overline{\boldsymbol{\theta}}^{VB}\left(\mathbf{y}_{\text {rep}}\right)\right)+ T_2 + T_3 \right)\right] \\
		= & E_{\mathbf{y}}\left[E_{\mathbf{y}_{\text {rep}}}\left(-2 \ln p\left(\mathbf{y}_{\text {rep}} | \overline{\boldsymbol{\theta}}^{VB}\left(\mathbf{y}_{\text {rep}}\right)\right)\right)\right]+2 \mathbf{t r}\left[\mathbf{B}_n\left(-\mathbf{H}_n\right)^{-1}\right]+o(1) \\
		= & E_{\mathbf{y}}\left[E_{\mathbf{y}}\left(-2 \ln p\left(\mathbf{y} | \overline{\boldsymbol{\theta}}^{VB}(\mathbf{y})\right)\right)\right]+2 \mathbf{t r}\left[\mathbf{B}_n\left(-\mathbf{H}_n\right)^{-1}\right]+o(1) \\
		= & E_{\mathbf{y}}\left[-2 \ln p\left(\mathbf{y} | \overline{\boldsymbol{\theta}}^{VB}(\mathbf{y})\right)\right] - 2 \mathbf{t r}\left[\mathbf{B}_n \mathbf{H}_n^{-1}\right]+o(1).
	\end{aligned}
\end{equation}

Note that in (\ref{vtic prepared}), we have tranformed $T_1$ as
$$
\begin{aligned}
	T_1 & = E_{\mathbf{y}}\left[E_{\mathbf{y}_{\text {rep}}}\left(-2 \ln p\left(\mathbf{y}_{\text {rep}} | \overline{\boldsymbol{\theta}}^{VB}\left(\mathbf{y}_{\text {rep}}\right)\right)\right)\right] \\
	& = E_{\mathbf{y}}\left[E_{\mathbf{y}}\left(-2 \ln p\left(\mathbf{y} | \overline{\boldsymbol{\theta}}^{VB}\left(\mathbf{y}\right)\right)\right)\right]\\
	& = E_{\mathbf{y}}\left[-2 \ln p\left(\mathbf{y} | \overline{\boldsymbol{\theta}}^{VB}\left(\mathbf{y}\right)\right)\right],
\end{aligned}
$$
The last step to prove Theroem 3.1
is  to make a slight chage on $T_1$ 
$$
\begin{aligned}
	T_1 & = E_{\mathbf{y}}\left[-2 \ln p\left(\mathbf{y} | \overline{\boldsymbol{\theta}}^{VB}\left(\mathbf{y}\right)\right)\right] \\
	& = T_{11} + T_{12},
\end{aligned}
$$
where
$$
\begin{aligned}
T_{11} &= E_{\mathbf{y}}  \left[-2 \ln p\left(\mathbf{y} | \widehat{\boldsymbol{\theta}}_n\left(\mathbf{y}\right)\right)\right] \\
T_{22} &= E_{\mathbf{y}} \left[ \left(-2 \ln p\left(\mathbf{y} | \overline{\boldsymbol{\theta}}^{VB}\left(\mathbf{y}\right)\right)\right) -\left(-2 \ln p\left(\mathbf{y} | \widehat{\boldsymbol{\theta}}_n\left(\mathbf{y}\right)\right)\right) \right],
\end{aligned}
$$
where we expand the term in $T_{22}$ at $\widehat{\boldsymbol{\theta}}_n$ 
$$
\begin{aligned}
	& \ln p\left(\mathbf{y} | \overline{\boldsymbol{\theta}}^{VB}\left(\mathbf{y}\right)\right) - \ln p\left(\mathbf{y} | \widehat{\boldsymbol{\theta}}_n\left(\mathbf{y}\right)\right) \\
	= & \frac{\partial \ln p\left(\mathbf{y}|\boldsymbol{\theta}^{\#\#}_n\left(\mathbf{y}\right)\right)}{\partial \boldsymbol{\theta}^{\prime}}\left(\overline{\boldsymbol{\theta}}^{VB}\left(\mathbf{y}\right)-\widehat{\boldsymbol{\theta}}_n\left(\mathbf{y}\right)\right),
\end{aligned}
$$
where $\boldsymbol{\theta}^{\#\#}_n$ lies between $\overline{\boldsymbol{\theta}}^{VB}\left(\mathbf{y}\right)$ and $\widehat{\boldsymbol{\theta}}_n\left(\mathbf{y}\right)$.
From (\ref{vb consistency mle}), and Assumption 5, we have 
$$
\begin{aligned}
	& \left(-2 \ln p\left(\mathbf{y} | \overline{\boldsymbol{\theta}}^{VB}\left(\mathbf{y}\right)\right)\right) -\left(-2 \ln p\left(\mathbf{y} | \widehat{\boldsymbol{\theta}}_n\left(\mathbf{y}\right)\right)\right) \\
	= & O_p\left(1\right) \times O_p\left(n^{-3/4}\right) = O_p\left(n^{-3/4}\right)\\
	= & o_p\left(1\right),
\end{aligned}
$$
thus we have
\begin{equation}
	\label{t1 term finally}
	\begin{aligned}
		T_1 & = E_{\mathbf{y}}\left[-2 \ln p\left(\mathbf{y} | \overline{\boldsymbol{\theta}}^{VB}\left(\mathbf{y}\right)\right)\right] = T_{11} + T_{12} \\
		& = E_{\mathbf{y}} \left[-2 \ln p\left(\mathbf{y} |\widehat{\boldsymbol{\theta}}_n\left(\mathbf{y}\right)\right) + o_p\left(1\right) \right] \\
		& = E_{\mathbf{y}} \left[-2 \ln p\left(\mathbf{y} |\widehat{\boldsymbol{\theta}}_n\left(\mathbf{y}\right)\right) \right] + o\left(1\right).
	\end{aligned}
\end{equation}

With (\ref{vtic prepared}) and (\ref{t1 term finally}), we have 
\begin{equation}
	\begin{aligned}
		& E_{\mathbf{y}}\left[E_{\mathbf{y}_{\text {rep}}}\left(-2 \ln p\left(\mathbf{y}_{\text {rep}} | \overline{\boldsymbol{\theta}}^{VB}(\mathbf{y})\right)\right)\right] \\
		= & E_{\mathbf{y}}\left[-2 \ln p\left(\mathbf{y} | \overline{\boldsymbol{\theta}}^{VB}(\mathbf{y})\right)\right] - 2 \mathbf{t r}\left[\mathbf{B}_n \mathbf{H}_n^{-1}\right]+o(1) \\
        = & E_{\mathbf{y}} \left[-2 \ln p\left(\mathbf{y} |\widehat{\boldsymbol{\theta}}_n\left(\mathbf{y}\right)\right) \right] - 2 \mathbf{t r}\left[\mathbf{B}_n \mathbf{H}_n^{-1}\right]+o(1)
	\end{aligned}
\end{equation}
Therefore $-2 \ln p\left(\mathbf{y} |\widehat{\boldsymbol{\theta}}_n\left(\mathbf{y}\right)\right) - 2 \mathbf{t r}\left[\mathbf{B}_n \mathbf{H}_n^{-1}\right]$ is an unbiased estimator of $$E_{\mathbf{y}}\left[E_{\mathbf{y}_{\text {rep}}}\left(-2 \ln p\left(\mathbf{y}_{\text {rep}} | \overline{\boldsymbol{\theta}}^{VB}(\mathbf{y})\right)\right)\right]$$ asymptotically.

\subsubsection{Proof of Theorem 3.2}

We are now in the position to prove Theorem 3.2.
Under Assumptions
1-8, it can be shown that, 
\begin{equation*}
E_{\mathbf{y}}E_{\mathbf{y}_{rep}}\left( -2\ln p\left( \mathbf{y}_{rep}|%
\mathbf{y}\right) \right) =E_{\mathbf{y}}\left[ -2\ln p\left( \mathbf{y}|%
\overleftrightarrow{{\mbox{\boldmath${\theta}$}}}_{n}\right) +\left( 1+\ln
2\right) P\right] +o\left( 1\right). 
\end{equation*}


By the Laplace approximation (Tierney et al., 1989 and Kass et al., 1990)
and Lemma \ref{lemmaclts}, we have%
\begin{eqnarray*}
&&p^{VB}\left( \mathbf{y}_{rep}|\mathbf{y}\right) \\
&=&\int p\left( \mathbf{y}_{rep}|\mbox{\boldmath${\theta}$}\right)
p^{VB}\left( \mbox{\boldmath${\theta}$}|\mathbf{y}\right) d%
\mbox{\boldmath${\theta}$} \\
&=&\int p\left( \mathbf{y}_{rep}|\mbox{\boldmath${\theta}$}\right)
p^{VBN}\left( \mbox{\boldmath${\theta}$}|\mathbf{y}\right) d%
\mbox{\boldmath${\theta}$}+\int p\left( \mathbf{y}_{rep}|\mbox{\boldmath${%
\theta}$}\right) \left( p^{VB}\left( \mbox{\boldmath${\theta}$}|\mathbf{y}%
\right) -p^{VBN}\left( \mbox{\boldmath${\theta}$}|\mathbf{y}\right) \right) d%
\mbox{\boldmath${\theta}$} \\
&=&\int p\left( \mathbf{y}_{rep}|\mbox{\boldmath${\theta}$}\right)
p^{VBN}\left( \mbox{\boldmath${\theta}$}|\mathbf{y}\right) d%
\mbox{\boldmath${\theta}$}\left( 1+\frac{\int p\left( \mathbf{y}_{rep}|%
\mbox{\boldmath${\theta}$}\right) \left( p^{VB}\left( \mbox{\boldmath${%
\theta}$}|\mathbf{y}\right) -p^{VBN}\left( \mbox{\boldmath${\theta}$}|%
\mathbf{y}\right) \right) d\mbox{\boldmath${\theta}$}}{\int p\left( \mathbf{y%
}_{rep}|\mbox{\boldmath${\theta}$}\right) p^{VBN}\left( \mbox{\boldmath${%
\theta}$}|\mathbf{y}\right) d\mbox{\boldmath${\theta}$}}\right)
\end{eqnarray*}%
Note that%
\begin{eqnarray*}
&&\frac{\int p\left( \mathbf{y}_{rep}|\mbox{\boldmath${\theta}$}\right)
\left( p^{VB}\left( \mbox{\boldmath${\theta}$}|\mathbf{y}\right)
-p^{VBN}\left( \mbox{\boldmath${\theta}$}|\mathbf{y}\right) \right) d%
\mbox{\boldmath${\theta}$}}{\int p\left( \mathbf{y}_{rep}|%
\mbox{\boldmath${\theta}$}\right) p^{VBN}\left( \mbox{\boldmath${\theta}$}|%
\mathbf{y}\right) d\mbox{\boldmath${\theta}$}} \\
&=&\frac{\int \frac{p\left( \mathbf{y}_{rep}|\mbox{\boldmath${\theta}$}%
\right) }{p\left( \mathbf{y}_{rep}|\widehat{{\mbox{\boldmath${\theta}$}}}%
_{n}\left( \mathbf{y}_{rep}\right) \right) }\left( p^{VB}\left( %
\mbox{\boldmath${\theta}$}|\mathbf{y}\right) -p^{VBN}\left( %
\mbox{\boldmath${\theta}$}|\mathbf{y}\right) \right) d\mbox{\boldmath${%
\theta}$}}{\int \frac{p\left( \mathbf{y}_{rep}|\mbox{\boldmath${\theta}$}%
\right) }{p\left( \mathbf{y}_{rep}|\widehat{{\mbox{\boldmath${\theta}$}}}%
_{n}\left( \mathbf{y}_{rep}\right) \right) }p^{VBN}\left( %
\mbox{\boldmath${\theta}$}|\mathbf{y}\right) d\mbox{\boldmath${\theta}$}}
\end{eqnarray*}%
where%
\begin{eqnarray*}
&&\left\vert \int \frac{p\left( \mathbf{y}_{rep}|\mbox{\boldmath${\theta}$}%
\right) }{p\left( \mathbf{y}_{rep}|\widehat{{\mbox{\boldmath${\theta}$}}}%
_{n}\left( \mathbf{y}_{rep}\right) \right) }\left( p^{VB}\left( %
\mbox{\boldmath${\theta}$}|\mathbf{y}\right) -p^{VBN}\left( %
\mbox{\boldmath${\theta}$}|\mathbf{y}\right) \right) d\mbox{\boldmath${%
\theta}$}\right\vert \\
&\leq &\int \left\vert \frac{p\left( \mathbf{y}_{rep}|\mbox{\boldmath${%
\theta}$}\right) }{p\left( \mathbf{y}_{rep}|\widehat{{\mbox{\boldmath${%
\theta}$}}}_{n}\left( \mathbf{y}_{rep}\right) \right) }\right\vert
\left\vert p^{VB}\left( \mbox{\boldmath${\theta}$}|\mathbf{y}\right)
-p^{VBN}\left( \mbox{\boldmath${\theta}$}|\mathbf{y}\right) \right\vert d%
\mbox{\boldmath${\theta}$} \\
&\leq &\int \left\vert p^{VB}\left( \mbox{\boldmath${\theta}$}|\mathbf{y}%
\right) -p^{VBN}\left( \mbox{\boldmath${\theta}$}|\mathbf{y}\right)
\right\vert d\mbox{\boldmath${\theta}$}=o_{p}\left( 1\right)
\end{eqnarray*}%
by \citep{Wang_2018,wang2019variational}. Then we have%
\begin{eqnarray*}
&&\int p\left( \mathbf{y}_{rep}|\mbox{\boldmath${\theta}$}\right)
p^{VB}\left( \mbox{\boldmath${\theta}$}|\mathbf{y}\right) d%
\mbox{\boldmath${\theta}$} \\
&=&\int p\left( \mathbf{y}_{rep}|\mbox{\boldmath${\theta}$}\right)
p^{VBN}\left( \mbox{\boldmath${\theta}$}|\mathbf{y}\right) d%
\mbox{\boldmath${\theta}$}\left( 1+o_{p}\left( 1\right) \right)
\end{eqnarray*}%
and%
\begin{eqnarray*}
&&\ln \int p\left( \mathbf{y}_{rep}|\mbox{\boldmath${\theta}$}\right)
p^{VB}\left( \mbox{\boldmath${\theta}$}|\mathbf{y}\right) d%
\mbox{\boldmath${\theta}$} \\
&=&\ln \int p\left( \mathbf{y}_{rep}|\mbox{\boldmath${\theta}$}\right)
p^{VBN}\left( \mbox{\boldmath${\theta}$}|\mathbf{y}\right) d%
\mbox{\boldmath${\theta}$}+o_{p}\left( 1\right) .
\end{eqnarray*}%
Then we can further rewrite $\int p\left( \mathbf{y}_{rep}|%
\mbox{\boldmath${\theta}$}\right) p^{VBN}\left( \mbox{\boldmath${\theta}$}|%
\mathbf{y}\right) d\mbox{\boldmath${\theta}$}$ as 
\begin{eqnarray*}
&&\int p\left( \mathbf{y}_{rep}|\mbox{\boldmath${\theta}$}\right)
p^{VBN}\left( \mbox{\boldmath${\theta}$}|\mathbf{y}\right) d%
\mbox{\boldmath${\theta}$} \\
&\mathbf{=}&\left( \frac{1}{2\pi }\right) ^{\frac{P}{2}}\left\vert \left( -n%
\mathbf{H}_{n}^{d}\right) ^{-1}\right\vert ^{-\frac{1}{2}}\int p\left( 
\mathbf{y}_{rep}|\mbox{\boldmath${\theta}$}\right) \exp \left[ -\frac{n}{2}%
\left( \widehat{{\mbox{\boldmath${\theta}$}}}_{n}\left( \mathbf{y}\right) -%
\mbox{\boldmath${\theta}$}\right) ^{\prime }\left( -\mathbf{H}_{n}^{d}
\left( \widehat{{\mbox{\boldmath${\theta}$}}}_{n}\left( \mathbf{y}\right)
\right)\right) \left( \widehat{{\mbox{\boldmath${\theta}$}}}_{n}\left( 
\mathbf{y}\right) -\mbox{\boldmath${\theta}$}\right) \right] d%
\mbox{\boldmath${\theta}$} \\
&=&\left( \frac{1}{2\pi }\right) ^{\frac{P}{2}}\left\vert \left( -n\mathbf{H}%
_{n}^{d}\right) ^{-1}\right\vert ^{-\frac{1}{2}} \\
&&\times \int \exp \left[ \ln p\left( \mathbf{y}_{rep}|\mbox{\boldmath${%
\theta}$}\right) -\frac{n}{2}\left( \widehat{{\mbox{\boldmath${\theta}$}}}%
_{n}\left( \mathbf{y}\right) -\mbox{\boldmath${\theta}$}\right) ^{\prime
}\left( -\mathbf{H}_{n}^{d} \left( \widehat{{\mbox{\boldmath${\theta}$}}}%
_{n}\left( \mathbf{y}\right) \right)\right) \left( \widehat{{%
\mbox{\boldmath${\theta}$}}}_{n}\left( \mathbf{y}\right) -%
\mbox{\boldmath${\theta}$}\right) \right] d\mbox{\boldmath${\theta}$} \\
&=&\left( \frac{1}{2\pi }\right) ^{\frac{P}{2}}\left\vert \frac{1}{n}\left( -%
\mathbf{H}_{n}^{d}\right) ^{-1}\right\vert ^{-\frac{1}{2}}\left( \frac{1}{%
2\pi }\right) ^{-\frac{P}{2}}\left\vert n\nabla ^{2}h_{N}^{s}\left( 
\widetilde{\mbox{\boldmath${\theta}$}}_{n}^{s}\right) \right\vert
^{-1/2}\exp \left( -nh_{N}^{s}\left( \widetilde{\mbox{\boldmath${\theta}$}}%
_{n}^{s}\right) \right) \left( 1+O_{p}\left( \frac{1}{n}\right) \right)
\end{eqnarray*}%
where 
\begin{equation*}
\begin{aligned} h_{N}^{s}\left( \mbox{\boldmath${\theta}$}\right)
=&-\frac{1}{n}\left( \ln p\left(
\mathbf{y}_{rep}|\mbox{\boldmath${\theta}$}\right) -\frac{n}{2}\left(
\widehat{{\mbox{\boldmath${\theta}$}}}_{n}\left( \mathbf{y}\right)
-\mbox{\boldmath${\theta}$}\right) ^{\prime }\left(
-\mathbf{H}_{n}^{d}\right) \left(
\widehat{{\mbox{\boldmath${\theta}$}}}_{n}\left( \mathbf{y}\right)
-\mbox{\boldmath${\theta}$}\right) \right), \\ \mathbf{H}_{n}^{d} =&
\mathbf{H}_{n}^{d} \left( \widehat{{\mbox{\boldmath${\theta}$}}}_{n}\left(
\mathbf{y}\right) \right). \end{aligned}
\end{equation*}
Note that%
\begin{eqnarray*}
&&\left( \frac{1}{2\pi }\right) ^{\frac{P}{2}}\left\vert \frac{1}{n}\left( -%
\mathbf{H}_{n}^{d}\right) ^{-1}\right\vert ^{-\frac{1}{2}}\left( \frac{1}{%
2\pi }\right) ^{-\frac{P}{2}}\left\vert n\nabla ^{2}h_{N}^{s}\left( 
\widetilde{\mbox{\boldmath${\theta}$}}_{n}^{s}\right) \right\vert ^{-1/2} \\
&=&\left\vert \left( -\mathbf{H}_{n}^{d}\right)^{-1} \right\vert ^{-\frac{1%
}{2}}\left\vert \nabla ^{2}h_{N}^{s}\left( \widetilde{\mbox{\boldmath${%
\theta}$}}_{n}^{s}\right) \right\vert ^{-1/2}=\left\vert \left( -\mathbf{H}%
_{n}^{d}\right) ^{-1}\left( -\frac{1}{n}\frac{\partial \ln p\left( \mathbf{%
y}_{rep}|\widetilde{\mbox{\boldmath${\theta}$}}_{n}\right) }{\partial %
\mbox{\boldmath${\theta}$}\partial \mbox{\boldmath${\theta}$}^{\prime }}%
+\left( -\mathbf{H}_{n}^{d}\right) \right) \right\vert ^{-\frac{1}{2}} \\
&=&\left\vert \left( -\mathbf{H}_{n}^{d}\right) ^{-1}\left( -\mathbf{H}%
_{n}+\left( -\mathbf{H}_{n}^{d}\right) \right) \right\vert ^{-\frac{1}{2}%
}+o_{p}\left( 1\right) =\left\vert \left( -\mathbf{H}_{n}+\left( -\mathbf{H}%
_{n}^{d}\right) \right) \left( -\mathbf{H}_{n}^{d}\right)
^{-1}\right\vert ^{-\frac{1}{2}}+o_{p}\left( 1\right) \\
&=&\left\vert -\mathbf{H}_{n}\left( -\mathbf{H}_{n}^{d}\right) ^{-1}+%
\mathbf{I}_{n}\right\vert ^{-\frac{1}{2}}+o_{p}\left( 1\right) .
\end{eqnarray*}%
Then take logrithm, we have%
\begin{eqnarray}
\ln p^{VB}\left( \mathbf{y}_{rep}|\mathbf{y}\right) &=&\ln \int p\left( 
\mathbf{y}_{rep}|\mbox{\boldmath${\theta}$}\right) p^{VB}\left( %
\mbox{\boldmath${\theta}$}|\mathbf{y}\right) d\mbox{\boldmath${\theta}$}
\label{finalequation1_s} \\
&=&\ln \int p\left( \mathbf{y}_{rep}|\mbox{\boldmath${\theta}$}\right)
p^{VBN}\left( \mbox{\boldmath${\theta}$}|\mathbf{y}\right) d%
\mbox{\boldmath${\theta}$}+o_{p}\left( 1\right)  \notag \\
&=&-\frac{1}{2}\ln \left( \left\vert -\mathbf{H}_{n}\left( -\mathbf{H}%
_{n}^{d}\right) ^{-1}+\mathbf{I}_{n}\right\vert \right) -nh_{N}^{s}\left( 
\widetilde{\mbox{\boldmath${\theta}$}}_{n}^{s}\right) +o_{p}\left( 1\right) 
\notag
\end{eqnarray}%
where second term is%
\begin{eqnarray}
&&-nh_{N}^{s}\left( \widetilde{\mbox{\boldmath${\theta}$}}_{n}^{s}\right) 
\notag \\
&=&\ln p\left( \mathbf{y}_{rep}|\widetilde{\mbox{\boldmath${\theta}$}}%
_{n}^{s}\right) -\frac{n}{2}\left( \widehat{{\mbox{\boldmath${\theta}$}}}%
_{n}\left( \mathbf{y}\right) -\widetilde{\mbox{\boldmath${\theta}$}}%
_{n}^{s}\right) ^{\prime }\left( -\mathbf{H}_{n}^{d}\right) \left( 
\widehat{{\mbox{\boldmath${\theta}$}}}_{n}\left( \mathbf{y}\right) -%
\widetilde{\mbox{\boldmath${\theta}$}}_{n}^{s}\right)  \notag \\
&=&\ln p\left( \mathbf{y}_{rep}|\widehat{{\mbox{\boldmath${\theta}$}}}%
_{n}\left( \mathbf{y}\right) \right) +\ln p\left( \mathbf{y}_{rep}|%
\widetilde{\mbox{\boldmath${\theta}$}}_{n}^{s}\right) -\ln p\left( \mathbf{y}%
_{rep}|\widehat{{\mbox{\boldmath${\theta}$}}}_{n}\left( \mathbf{y}\right)
\right)  \notag \\
&&-\frac{n}{2}\left( \widehat{{\mbox{\boldmath${\theta}$}}}_{n}\left( 
\mathbf{y}\right) -\widetilde{\mbox{\boldmath${\theta}$}}_{n}^{s}\right)
^{\prime }\left( -\mathbf{H}_{n}^{d}\right) \left( \widehat{{%
\mbox{\boldmath${\theta}$}}}_{n}\left( \mathbf{y}\right) -\widetilde{%
\mbox{\boldmath${\theta}$}}_{n}^{s}\right)  \notag \\
&=&\ln p\left( \mathbf{y}_{rep}|\widehat{{\mbox{\boldmath${\theta}$}}}%
_{n}\left( \mathbf{y}\right) \right) +L_{1}+L_{2},  \label{laplaceap1_s}
\end{eqnarray}%
where%
\begin{equation*}
L_{1}=\ln p\left( \mathbf{y}_{rep}|\widetilde{\mbox{\boldmath${\theta}$}}%
_{n}^{s}\right) -\ln p\left( \mathbf{y}_{rep}|\widehat{{\mbox{\boldmath${%
\theta}$}}}_{n}\left( \mathbf{y}\right) \right) \text{, }L_{2}=-\frac{n}{2}%
\left( \widehat{{\mbox{\boldmath${\theta}$}}}_{n}\left( \mathbf{y}\right) -%
\widetilde{\mbox{\boldmath${\theta}$}}_{n}^{s}\right) ^{\prime }\left( -%
\mathbf{H}_{n}^{d}\right) \left( \widehat{{\mbox{\boldmath${\theta}$}}}%
_{n}\left( \mathbf{y}\right) -\widetilde{\mbox{\boldmath${\theta}$}}%
_{n}^{s}\right) .
\end{equation*}%
We can further decompose $L_{1}$ as%
\begin{equation*}
L_{1}=L_{11}+L_{12},
\end{equation*}%
where%
\begin{equation*}
L_{11}=\ln p\left( \mathbf{y}_{rep}|\widetilde{\mbox{\boldmath${\theta}$}}%
_{n}^{s}\right) -\ln p\left( \mathbf{y}_{rep}|\mbox{\boldmath${\theta}$}%
_{n}^{p}\right) \text{, }L_{12}=\ln p\left( \mathbf{y}_{rep}|%
\mbox{\boldmath${\theta}$}_{n}^{p}\right) -\ln p\left( \mathbf{y}_{rep}|%
\widehat{{\mbox{\boldmath${\theta}$}}}_{n}\left( \mathbf{y}\right) \right) .
\end{equation*}

For $L_{11}$, we have%
\begin{eqnarray*}
L_{11} &=&\ln p\left( \mathbf{y}_{rep}|\widetilde{\mbox{\boldmath${\theta}$}}%
_{n}^{s}\right) -\ln p\left( \mathbf{y}_{rep}|\mbox{\boldmath${\theta}$}%
_{n}^{p}\right) \\
&=&\frac{1}{\sqrt{n}}\frac{\partial \ln p\left( \mathbf{y}_{rep}|%
\mbox{\boldmath${\theta}$}_{n}^{p}\right) }{\partial \mbox{\boldmath${%
\theta}$}^{\prime }}\sqrt{n}\left( \widetilde{\mbox{\boldmath${\theta}$}}%
_{n}^{s}-\mbox{\boldmath${\theta}$}_{n}^{p}\right) +\frac{1}{2}\sqrt{n}%
\left( \widetilde{\mbox{\boldmath${\theta}$}}_{n}^{s}-\mbox{\boldmath${%
\theta}$}_{n}^{p}\right) ^{\prime }\frac{1}{n}\frac{\partial ^{2}\ln p\left( 
\mathbf{y}_{rep}|\mbox{\boldmath${\theta}$}_{n}^{p}\right) }{\partial %
\mbox{\boldmath${\theta}$}\partial \mbox{\boldmath${\theta}$}^{\prime }}%
\sqrt{n}\left( \widetilde{\mbox{\boldmath${\theta}$}}_{n}^{s}-%
\mbox{\boldmath${\theta}$}_{n}^{p}\right) +o_{p}\left( 1\right) .
\end{eqnarray*}%
Following Assumption 1-8 and Lemma \ref{lemmajoints}, we can similarly prove
that 
\begin{eqnarray*}
&&\frac{1}{\sqrt{n}}\frac{\partial \ln p\left( \mathbf{y}_{rep}|%
\mbox{\boldmath${\theta}$}_{n}^{p}\right) }{\partial \mbox{\boldmath${%
\theta}$}^{\prime }}\sqrt{n}\left( \widetilde{\mbox{\boldmath${\theta}$}}%
_{n}^{s}-\mbox{\boldmath${\theta}$}_{n}^{p}\right) \\
&=&\sqrt{n}\left( \widehat{{\mbox{\boldmath${\theta}$}}}_{n}\left( \mathbf{y}%
_{rep}\right) -\mbox{\boldmath${\theta}$}_{n}^{p}\right) ^{\prime }\left(
-n^{-1}\sum_{t=1}^{n}\bigtriangledown ^{2}l_{t}\left( \mathbf{y}_{rep}^{t},%
\mbox{\boldmath${\theta}$}_{n}^{p}\right) \right) \sqrt{n}\left( \widetilde{%
\mbox{\boldmath${\theta}$}}_{n}^{s}-\mbox{\boldmath${\theta}$}%
_{n}^{p}\right) +o_{p}\left( 1\right) \\
&=&\sqrt{n}\left( \widehat{{\mbox{\boldmath${\theta}$}}}_{n}\left( \mathbf{y}%
_{rep}\right) -\mbox{\boldmath${\theta}$}_{n}^{p}\right) ^{\prime }\left( -%
\mathbf{H}_{n}\right) \sqrt{n}\left( \widetilde{\mbox{\boldmath${\theta}$}}%
_{n}-\mbox{\boldmath${\theta}$}_{n}^{p}\right) +o_{p}\left( 1\right) \\
&=&\mathbf{tr}\left[ \left( -\mathbf{H}_{n}\right) \sqrt{n}\left( \widetilde{%
\mbox{\boldmath${\theta}$}}_{n}-\mbox{\boldmath${\theta}$}_{n}^{p}\right) 
\sqrt{n}\left( \widehat{{\mbox{\boldmath${\theta}$}}}_{n}\left( \mathbf{y}%
_{rep}\right) -\mbox{\boldmath${\theta}$}_{n}^{p}\right) ^{\prime }\right]
+o_{p}\left( 1\right) .
\end{eqnarray*}%
Hence, we have%
\begin{eqnarray}
&&E_{\mathbf{y}}E_{\mathbf{y}_{rep}}\left[ \frac{1}{\sqrt{n}}\frac{\partial
\ln p\left( \mathbf{y}_{rep}|\mbox{\boldmath${\theta}$}_{n}^{p}\right) }{%
\partial \mbox{\boldmath${\theta}$}^{\prime }}\sqrt{n}\left( \widetilde{%
\mbox{\boldmath${\theta}$}}_{n}^{s}-\mbox{\boldmath${\theta}$}%
_{n}^{p}\right) \right]  \notag \\
&=&E_{\mathbf{y}}E_{\mathbf{y}_{rep}}\left[ \mathbf{tr}\left[ \left( -%
\mathbf{H}_{n}\right) \sqrt{n}\left( \widetilde{\mbox{\boldmath${\theta}$}}%
_{n}^{s}-\mbox{\boldmath${\theta}$}_{n}^{p}\right) \sqrt{n}\left( \widehat{{%
\mbox{\boldmath${\theta}$}}}_{n}\left( \mathbf{y}_{rep}\right) -%
\mbox{\boldmath${\theta}$}_{n}^{p}\right) ^{\prime }\right] +o\left(
1\right) \right]  \notag \\
&=&\mathbf{tr}\left[ \left( -\mathbf{H}_{n}\right) E_{\mathbf{y}}E_{\mathbf{y%
}_{rep}}\left[ \sqrt{n}\left( \widetilde{\mbox{\boldmath${\theta}$}}_{n}^{s}-%
\mbox{\boldmath${\theta}$}_{n}^{p}\right) \sqrt{n}\left( \widehat{{%
\mbox{\boldmath${\theta}$}}}_{n}\left( \mathbf{y}_{rep}\right) -%
\mbox{\boldmath${\theta}$}_{n}^{p}\right) ^{\prime }\right] +o\left(
1\right) \right]  \notag \\
&=&\mathbf{tr}\left[ \left( -\mathbf{H}_{n}\right) \mathbf{G}_{n}\right]
+o\left( 1\right)=\mathbf{tr}\left[ \left( -\mathbf{H}_{n}\right) \left( -%
\mathbf{H}_{n}+\left( -\mathbf{H}_{n}^{d}\right) \right) ^{-1}\mathbf{B}%
_{n}\left( -\mathbf{H}_{n}\right) ^{-1}\right]+o\left( 1\right) \\
&=&\mathbf{tr}\left[ \left( -\mathbf{H}_{n}+\left( -\mathbf{H}%
_{n}^{d}\right) \right) ^{-1}\mathbf{B}_{n}\right] +o\left( 1\right)
\label{eqlap8_s}
\end{eqnarray}%
following Lemma \ref{lemmajoints}. Moreover,%
\begin{eqnarray}
&&\frac{1}{2}\sqrt{n}\left( \widetilde{\mbox{\boldmath${\theta}$}}_{n}^{s}-%
\mbox{\boldmath${\theta}$}_{n}^{p}\right) ^{\prime }\frac{1}{n}\frac{%
\partial ^{2}\ln p\left( \mathbf{y}_{rep}|\mbox{\boldmath${\theta}$}%
_{n}^{p}\right) }{\partial \mbox{\boldmath${\theta}$}\partial %
\mbox{\boldmath${\theta}$}^{\prime }}\sqrt{n}\left( \widetilde{%
\mbox{\boldmath${\theta}$}}_{n}^{s}-\mbox{\boldmath${\theta}$}_{n}^{p}\right)
\notag \\
&=&\frac{1}{2}\sqrt{n}\left( \widetilde{\mbox{\boldmath${\theta}$}}_{n}^{s}-%
\mbox{\boldmath${\theta}$}_{n}^{p}\right) ^{\prime }\mathbf{H}_{n}\sqrt{n}%
\left( \widetilde{\mbox{\boldmath${\theta}$}}_{n}^{s}-\mbox{\boldmath${%
\theta}$}_{n}^{p}\right) +o_{p}\left( 1\right)  \notag \\
&=&\frac{1}{2}\mathbf{tr}\left[\mathbf{H}_{n}\sqrt{n}\left( \widetilde{%
\mbox{\boldmath${\theta}$}}_{n}^{s}-\mbox{\boldmath${\theta}$}%
_{n}^{p}\right) \sqrt{n}\left( \widetilde{\mbox{\boldmath${\theta}$}}%
_{n}^{s}-\mbox{\boldmath${\theta}$}_{n}^{p}\right) ^{\prime }\right]%
+o_{p}\left( 1\right) ,  \label{eqlap2_s}
\end{eqnarray}%
then%
\begin{eqnarray*}
&&E_{\mathbf{y}}E_{\mathbf{y}_{rep}}\left[ \frac{1}{2}\mathbf{tr}\left[%
\mathbf{H}_{n}\sqrt{n}\left( \widetilde{\mbox{\boldmath${\theta}$}}_{n}^{s}-%
\mbox{\boldmath${\theta}$}_{n}^{p}\right) \sqrt{n}\left( \widetilde{%
\mbox{\boldmath${\theta}$}}_{n}^{s}-\mbox{\boldmath${\theta}$}%
_{n}^{p}\right) ^{\prime }\right]\right] \\
&=&\frac{1}{2}\mathbf{tr}\left[ \mathbf{H}_{n}\mathbf{D}_{n}\right] +o\left(
1\right)
\end{eqnarray*}%
%
%
%
%
%
%
%
%
%
%
%
%
%
%
%
%
%
%
%
%
%
%
%
%
%
From (\ref{eqlap8_s}) and (\ref{eqlap2_s}) we have 
\begin{equation*}
E_{\mathbf{y}}E_{\mathbf{y}_{rep}}\left( L_{11}\right) =\mathbf{tr}\left[
\left( -\mathbf{H}_{n}+\left( -\mathbf{H}_{n}^{d}\right) \right) ^{-1}%
\mathbf{B}_{n}\right] -\frac{1}{2}\mathbf{tr}\left[ \left( -\mathbf{H}%
_{n}\right) \mathbf{D}_{n}\right] +o\left( 1\right)
\end{equation*}%
by Lemma \ref{lemmajoints}.

For $L_{12}$, we have%
\begin{eqnarray*}
&&L_{12}=\ln p\left( \mathbf{y}_{rep}|\mbox{\boldmath${\theta}$}%
_{n}^{p}\right) -\ln p\left( \mathbf{y}_{rep}|\widehat{{\mbox{\boldmath${%
\theta}$}}}_{n}\left( \mathbf{y}\right)\right) = -\frac{1}{\sqrt{n}}\frac{%
\partial \ln p\left( \mathbf{y}_{rep}|\mbox{\boldmath${\theta}$}%
_{n}^{p}\right) }{\partial \mbox{\boldmath${\theta}$}^{\prime }}\sqrt{n}%
\left( \widehat{{\mbox{\boldmath${\theta}$}}}_{n}\left( \mathbf{y}\right) -%
\mbox{\boldmath${\theta}$}_{n}^{p}\right) \\
&&-\frac{1}{2}\left( \widehat{{\mbox{\boldmath${\theta}$}}}_{n}\left( 
\mathbf{y}\right) -\mbox{\boldmath${\theta}$}_{n}^{p}\right) ^{\prime }\frac{%
\partial ^{2}\ln p\left( \mathbf{y}_{rep}|\mbox{\boldmath${\theta}$}%
_{n}^{p}\right) }{\partial \mbox{\boldmath${\theta}$}\partial %
\mbox{\boldmath${\theta}$}^{\prime }}\left( \widehat{{\mbox{\boldmath${%
\theta}$}}}_{n}\left( \mathbf{y}\right) -\mbox{\boldmath${\theta}$}%
_{n}^{p}\right) +o_{p}\left( 1\right) .
\end{eqnarray*}%
Since%
\begin{align}
& E_{\mathbf{y}}E_{\mathbf{y}_{rep}}\left( \left( \widehat{{%
\mbox{\boldmath${\theta}$}}}_{n}\left( \mathbf{y}\right) -%
\mbox{\boldmath${\theta}$}_{n}^{p}\right) ^{\prime }\frac{\partial ^{2}\ln
p\left( \mathbf{y}_{rep}|\mbox{\boldmath${\theta}$}_{n}^{p}\right) }{%
\partial \mbox{\boldmath${\theta}$}\partial \mbox{\boldmath${\theta}$}%
^{\prime }}\left( \widehat{{\mbox{\boldmath${\theta}$}}}_{n}\left( \mathbf{y}%
\right) -\mbox{\boldmath${\theta}$}_{n}^{p}\right) \right)  \notag \\
& =E_{\mathbf{y}}E_{\mathbf{y}_{rep}}\left( \mathbf{tr}\left[ \frac{\partial
^{2}\ln p\left( \mathbf{y}_{rep}|\mbox{\boldmath${\theta}$}_{n}^{p}\right) }{%
\partial \mbox{\boldmath${\theta}$}\partial \mbox{\boldmath${\theta}$}%
^{\prime }}\left( \widehat{{\mbox{\boldmath${\theta}$}}}_{n}\left( \mathbf{y}%
\right) -\mbox{\boldmath${\theta}$}_{n}^{p}\right) \left( \widehat{{%
\mbox{\boldmath${\theta}$}}}_{n}\left( \mathbf{y}\right) -%
\mbox{\boldmath${\theta}$}_{n}^{p}\right) ^{\prime }\right] \right)  \notag
\\
& =\mathbf{tr}\left[ E_{\mathbf{y}_{rep}}\left( \frac{1}{n}\frac{\partial
^{2}\ln p\left( \mathbf{y}_{rep}|\mbox{\boldmath${\theta}$}_{n}^{p}\right) }{%
\partial \mbox{\boldmath${\theta}$}\partial \mbox{\boldmath${\theta}$}%
^{\prime }}\right) E_{\mathbf{y}}\left( n\left( \widehat{{%
\mbox{\boldmath${\theta}$}}}_{n}\left( \mathbf{y}\right) -%
\mbox{\boldmath${\theta}$}_{n}^{p}\right) \left( \widehat{{%
\mbox{\boldmath${\theta}$}}}_{n}\left( \mathbf{y}\right) -%
\mbox{\boldmath${\theta}$}_{n}^{p}\right) ^{\prime }\right) \right]  \notag
\\
& =-\mathbf{tr}\left[ \mathbf{B}_{n}\left( -\mathbf{H}_{n}\right) ^{-1}%
\right] +o\left( 1\right)  \label{eqlap3_s}
\end{align}

\begin{equation}
E_{\mathbf{y}_{rep}}\left( \frac{1}{\sqrt{n}}\frac{\partial \ln p\left( 
\mathbf{y}_{rep}|\mbox{\boldmath${\theta}$}_{n}^{p}\right) }{\partial %
\mbox{\boldmath${\theta}$}}\right) =0,E_{\mathbf{y}_{rep}}\left( \sqrt{n}%
\left( \widehat{{\mbox{\boldmath${\theta}$}}}_{n}\left( \mathbf{y}\right) -%
\mbox{\boldmath${\theta}$}_{n}^{p}\right) \right) =o\left( 1\right)
\label{eqlap4_s}
\end{equation}%
from (\ref{eqlap3_s}), and (\ref{eqlap4_s}), we have%
\begin{equation*}
E_{\mathbf{y}}E_{\mathbf{y}_{rep}}\left( L_{12}\right) =\frac{1}{2}\mathbf{tr%
}\left[ \mathbf{B}_{n}\left( -\mathbf{H}_{n}\right) ^{-1}\right] +o\left(
1\right) .
\end{equation*}%
Then%
\begin{eqnarray}
E_{\mathbf{y}}E_{\mathbf{y}_{rep}}\left( L_{1}\right) &=&E_{\mathbf{y}}E_{%
\mathbf{y}_{rep}}\left( \ln p\left( \mathbf{y}_{rep}|\widetilde{%
\mbox{\boldmath${\theta}$}}_{n}\right) -\ln p\left( \mathbf{y}_{rep}|%
\overleftrightarrow{{\mbox{\boldmath${\theta}$}}}_{n}\right) \right) =E_{%
\mathbf{y}}E_{\mathbf{y}_{rep}}\left( L_{11}+L_{12}\right)
\label{finalequation2_s} \\
&=&\mathbf{tr}\left[ \left( -\mathbf{H}_{n}+\left( -\mathbf{H}%
_{n}^{d}\right) \right) ^{-1}\mathbf{B}_{n}\right] -\frac{1}{2}\mathbf{tr}%
\left[ \left( -\mathbf{H}_{n}\right) \mathbf{D}_{n}\right] +\frac{1}{2}%
\mathbf{tr}\left[ \mathbf{B}_{n}\left( -\mathbf{H}_{n}\right) ^{-1}\right]
+o\left( 1\right) .  \notag
\end{eqnarray}%
Similarly, we can decompose $L_{2}$ $=-\frac{n}{2}\left( \widehat{{%
\mbox{\boldmath${\theta}$}}}_{n}\left( \mathbf{y}\right) -\widetilde{%
\mbox{\boldmath${\theta}$}}_{n}^{s}\right) ^{\prime }\left( -\mathbf{H}%
_{n}^{d}\right) \left( \widehat{{\mbox{\boldmath${\theta}$}}}_{n}\left( 
\mathbf{y}\right) -\widetilde{\mbox{\boldmath${\theta}$}}_{n}^{s}\right) $ as%
\begin{equation*}
L_{2}=L_{21}+L_{22}+L_{23}+L_{24},
\end{equation*}%
where%
\begin{equation*}
L_{21}=-\frac{n}{2}\left( \widehat{{\mbox{\boldmath${\theta}$}}}_{n}\left( 
\mathbf{y}\right) -\mbox{\boldmath${\theta}$}_{n}^{p}\right) ^{\prime
}\left( -\mathbf{H}_{n}^{d}\right) \left( \widehat{{\mbox{\boldmath${%
\theta}$}}}_{n}\left( \mathbf{y}\right) -\mbox{\boldmath${\theta}$}%
_{n}^{p}\right) ,L_{22}=-\frac{n}{2}\left( \widehat{{\mbox{\boldmath${%
\theta}$}}}_{n}\left( \mathbf{y}\right) -\mbox{\boldmath${\theta}$}%
_{n}^{p}\right) ^{\prime }\left( -\mathbf{H}_{n}^{d}\right) \left( %
\mbox{\boldmath${\theta}$}_{n}^{p}-\widetilde{\mbox{\boldmath${\theta}$}}%
_{n}^{s}\right) ,
\end{equation*}%
\begin{equation*}
L_{23}=-\frac{n}{2}\left( \mbox{\boldmath${\theta}$}_{n}^{p}-\widetilde{%
\mbox{\boldmath${\theta}$}}_{n}^{s}\right) ^{\prime }\left( -\mathbf{H}%
_{n}^{d}\right) \left( \widehat{{\mbox{\boldmath${\theta}$}}}_{n}\left( 
\mathbf{y}\right) -\mbox{\boldmath${\theta}$}_{n}^{p}\right) ,L_{24}=-\frac{n%
}{2}\left( \mbox{\boldmath${\theta}$}_{n}^{p}-\widetilde{\mbox{\boldmath${%
\theta}$}}_{n}^{s}\right) ^{\prime }\left( -\mathbf{H}_{n}^{d}\right)
\left( \mbox{\boldmath${\theta}$}_{n}^{p}-\widetilde{\mbox{\boldmath${%
\theta}$}}_{n}^{s}\right) .
\end{equation*}%
For $L_{21}$, we have%
\begin{eqnarray*}
L_{21} &=&-\frac{n}{2}\left( \widehat{{\mbox{\boldmath${\theta}$}}}%
_{n}\left( \mathbf{y}\right) -\mbox{\boldmath${\theta}$}_{n}^{p}\right)
^{\prime }\left( -\mathbf{H}_{n}^{d}\right) \left( \widehat{{%
\mbox{\boldmath${\theta}$}}}_{n}\left( \mathbf{y}\right) -%
\mbox{\boldmath${\theta}$}_{n}^{p}\right) \\
&=&-\frac{1}{2}\sqrt{n}\left( \widehat{{\mbox{\boldmath${\theta}$}}}%
_{n}\left( \mathbf{y}\right) -\mbox{\boldmath${\theta}$}_{n}^{p}\right)
^{\prime }\left( -\mathbf{H}_{n}^{d}\right) \sqrt{n}\left( \widehat{{%
\mbox{\boldmath${\theta}$}}}_{n}\left( \mathbf{y}\right) -%
\mbox{\boldmath${\theta}$}_{n}^{p}\right) \\
&=&-\frac{1}{2}\mathbf{tr}\left[ \left( -\mathbf{H}_{n}^{d}\right) \sqrt{n}%
\left( \widehat{{\mbox{\boldmath${\theta}$}}}_{n}\left( \mathbf{y}\right) -%
\mbox{\boldmath${\theta}$}_{n}^{p}\right) \sqrt{n}\left( \widehat{{%
\mbox{\boldmath${\theta}$}}}_{n}\left( \mathbf{y}\right) -%
\mbox{\boldmath${\theta}$}_{n}^{p}\right) ^{\prime }\right] ,
\end{eqnarray*}%
then%
\begin{equation*}
E_{\mathbf{y}}E_{\mathbf{y}_{rep}}\left( L_{21}\right) =-\frac{1}{2}\mathbf{%
tr}\left[ \left( -\mathbf{H}_{n}^{d}\right) \mathbf{C}_{n}\right] +o\left(
1\right) .
\end{equation*}%
For $L_{22}$ and $L_{23}$, we have%
\begin{eqnarray*}
L_{22} &=&L_{23}=-\frac{n}{2}\left( \widehat{{\mbox{\boldmath${\theta}$}}}%
_{n}\left( \mathbf{y}\right) -\mbox{\boldmath${\theta}$}_{n}^{p}\right)
^{\prime }\left( -\mathbf{H}_{n}^{d}\right) \left( \mbox{\boldmath${%
\theta}$}_{n}^{p}-\widetilde{\mbox{\boldmath${\theta}$}}_{n}^{s}\right) \\
&=&-\frac{1}{2}\sqrt{n}\left( \widehat{{\mbox{\boldmath${\theta}$}}}%
_{n}\left( \mathbf{y}\right) -\mbox{\boldmath${\theta}$}_{n}^{p}\right)
^{\prime }\left( -\mathbf{H}_{n}^{d}\right) \sqrt{n}\left( %
\mbox{\boldmath${\theta}$}_{n}^{p}-\widetilde{\mbox{\boldmath${\theta}$}}%
_{n}^{s}\right) \\
&=&-\frac{1}{2}\mathbf{tr}\left[ \left( -\mathbf{H}_{n}^{d}\right) \sqrt{n}%
\left( \mbox{\boldmath${\theta}$}_{n}^{p}-\widetilde{\mbox{\boldmath${%
\theta}$}}_{n}^{s}\right) \sqrt{n}\left( \widehat{{\mbox{\boldmath${\theta}$}%
}}_{n}\left( \mathbf{y}\right) -\mbox{\boldmath${\theta}$}_{n}^{p}\right)
^{\prime }\right] \\
&=&\frac{1}{2}\mathbf{tr}\left[ \left( -\mathbf{H}_{n}^{d}\right) \sqrt{n}%
\left( \widetilde{\mbox{\boldmath${\theta}$}}_{n}^{s}-\mbox{\boldmath${%
\theta}$}_{n}^{p}\right) \sqrt{n}\left( \widehat{{\mbox{\boldmath${\theta}$}}%
}_{n}\left( \mathbf{y}\right) -\mbox{\boldmath${\theta}$}_{n}^{p}\right)
^{\prime }\right]
\end{eqnarray*}%
then%
\begin{equation*}
E_{\mathbf{y}}E_{\mathbf{y}_{rep}}\left( L_{22}\right) =E_{\mathbf{y}}E_{%
\mathbf{y}_{rep}}\left( L_{23}\right) =\frac{1}{2}\mathbf{tr}\left[ \left( -%
\mathbf{H}_{n}^{d}\right) \mathbf{F}_{n}\right] +o\left( 1\right) .
\end{equation*}%
For $L_{24}$, we have%
\begin{eqnarray*}
L_{24} &=&-\frac{n}{2}\left( \mbox{\boldmath${\theta}$}_{n}^{p}-\widetilde{%
\mbox{\boldmath${\theta}$}}_{n}^{s}\right) ^{\prime }\left( -\mathbf{H}%
_{n}^{d}\right) \left( \mbox{\boldmath${\theta}$}_{n}^{p}-\widetilde{%
\mbox{\boldmath${\theta}$}}_{n}^{s}\right) \\
&=&-\frac{1}{2}\sqrt{n}\left( \mbox{\boldmath${\theta}$}_{n}^{p}-\widetilde{%
\mbox{\boldmath${\theta}$}}_{n}^{s}\right) ^{\prime }\left( -\mathbf{H}%
_{n}^{d}\right) \sqrt{n}\left( \mbox{\boldmath${\theta}$}_{n}^{p}-%
\widetilde{\mbox{\boldmath${\theta}$}}_{n}^{s}\right) \\
&=&-\frac{1}{2}\mathbf{tr}\left[ \left( -\mathbf{H}_{n}^{d} \left( 
\widehat{{\mbox{\boldmath${\theta}$}}}_{n}\left( \mathbf{y}\right)
\right)\right) \sqrt{n}\left( \mbox{\boldmath${\theta}$}_{n}^{p}-\widetilde{%
\mbox{\boldmath${\theta}$}}_{n}^{s}\right) \sqrt{n}\left( %
\mbox{\boldmath${\theta}$}_{n}^{p}-\widetilde{\mbox{\boldmath${\theta}$}}%
_{n}^{s}\right) ^{\prime }\right] ,
\end{eqnarray*}%
then%
\begin{equation*}
E_{\mathbf{y}}E_{\mathbf{y}_{rep}}\left( L_{24}\right) =-\frac{1}{2}\mathbf{%
tr}\left[ \left( -\mathbf{H}_{n}^{d}\right) \mathbf{D}_{n}\right] +o\left(
1\right) .
\end{equation*}%
Hence we have%
\begin{eqnarray}
E_{\mathbf{y}}E_{\mathbf{y}_{rep}}\left( L_{2}\right) &=&E_{\mathbf{y}}E_{%
\mathbf{y}_{rep}}\left( L_{21}+L_{22}+L_{23}+L_{24}\right)
\label{finalequation3_s} \\
&=&-\frac{1}{2}\mathbf{tr}\left[ \left( -\mathbf{H}_{n}^{d}\right) \mathbf{%
C}_{n}\right] +\mathbf{tr}\left[ \left( -\mathbf{H}_{n}^{d}\right) \mathbf{%
F}_{n}\right] -\frac{1}{2}\mathbf{tr}\left[ \left( -\mathbf{H}%
_{n}^{d}\right) \mathbf{D}_{n}\right] +o\left( 1\right) .  \notag
\end{eqnarray}

Note that%
\begin{equation*}
{\mbox{\boldmath${\bar\theta}$}}_{n}^{VB}=\widehat{{\mbox{\boldmath${%
\theta}$}}}_{n}\left( \mathbf{y}\right) +o_{p}(n^{-1/2}),
\end{equation*}%
by \citet{Wang_2018} and \citet{Zhang_2024}. Mimicking the proof of \citet{li2024deviance}, we get%
\begin{equation}
E_{\mathbf{y}}E_{\mathbf{y}_{rep}}\ln p\left( \mathbf{y}_{rep}|\widehat{{%
\mbox{\boldmath${\theta}$}}}_{n}\left( \mathbf{y}\right) \right) =E_{\mathbf{%
y}}\left[ \ln p\left( \mathbf{y}|\widehat{{\mbox{\boldmath${\theta}$}}}%
_{n}\left( \mathbf{y}\right) \right) \right] -\mathbf{tr}\left[ \mathbf{B}%
_{n}\left( -\mathbf{H}_{n}\right) ^{-1}\right] .  \label{finalequation4_s}
\end{equation}%
With (\ref{finalequation1_s}), (\ref{finalequation2_s}), (\ref%
{finalequation3_s}) and (\ref{finalequation4_s}), we have%
\begin{eqnarray*}
&&E_{\mathbf{y}}\left[ E_{\mathbf{y}_{rep}}\ln p^{VB}\left( \mathbf{y}_{rep}|%
\mathbf{y}\right) \right] \\
&=&-\frac{1}{2}\ln \left( \left\vert -\mathbf{H}_{n}\left( -\mathbf{H}%
_{n}^{d}\right) ^{-1}+\mathbf{I}_{n}\right\vert \right) +E_{\mathbf{y}}E_{%
\mathbf{y}_{rep}}\ln p\left( \mathbf{y}_{rep}|\widehat{{\mbox{\boldmath${%
\theta}$}}}_{n}\left( \mathbf{y}\right) \right) +E_{\mathbf{y}}E_{\mathbf{y}%
_{rep}}\left( L_{1}+L_{2}\right) \\
&=&-\frac{1}{2}\ln \left( \left\vert -\mathbf{H}_{n}\left( -\mathbf{H}%
_{n}^{d}\right) ^{-1}+\mathbf{I}_{n}\right\vert \right) +E_{\mathbf{y}}%
\left[ \ln p\left( \mathbf{y}|\widehat{{\mbox{\boldmath${\theta}$}}}%
_{n}\left( \mathbf{y}\right) \right) \right] -\mathbf{tr}\left[ \mathbf{B}%
_{n}\left( -\mathbf{H}_{n}\right) ^{-1}\right] \\
&&+\mathbf{tr}\left[ \left( -\mathbf{H}_{n}+\left( -\mathbf{H}%
_{n}^{d}\right) \right) ^{-1}\mathbf{B}_{n}\right] -\frac{1}{2}\mathbf{tr}%
\left[ \left( -\mathbf{H}_{n}\right) \mathbf{D}_{n}\right] +\frac{1}{2}%
\mathbf{tr}\left[ \mathbf{B}_{n}\left( -\mathbf{H}_{n}\right) ^{-1}\right] \\
&&-\frac{1}{2}\mathbf{tr}\left[ \left( -\mathbf{H}_{n}^{d}\right) \mathbf{C%
}_{n}\right] +\mathbf{tr}\left[ \left( -\mathbf{H}_{n}^{d}\right) \mathbf{F%
}_{n}\right] -\frac{1}{2}\mathbf{tr}\left[ \left( -\mathbf{H}%
_{n}^{d}\right) \mathbf{D}_{n}\right] +o\left( 1\right) \\
&=&-\frac{1}{2}\ln \left( \left\vert -\mathbf{H}_{n}\left( -\mathbf{H}%
_{n}^{d}\right) ^{-1}+\mathbf{I}_{n}\right\vert \right) +E_{\mathbf{y}}%
\left[ \ln p\left( \mathbf{y}|\widehat{{\mbox{\boldmath${\theta}$}}}%
_{n}\left( \mathbf{y}\right) \right) \right] -\frac{1}{2}\mathbf{tr}\left[ 
\mathbf{B}_{n}\left( -\mathbf{H}_{n}\right) ^{-1}\right] \\
&&+\mathbf{tr}\left[ \left( -\mathbf{H}_{n}+\left( -\mathbf{H}%
_{n}^{d}\right) \right) ^{-1}\mathbf{B}_{n}\right] -\frac{1}{2}\mathbf{tr}%
\left[ \left( -\mathbf{H}_{n}\right) \mathbf{D}_{n}\right] \\
&&-\frac{1}{2}\mathbf{tr}\left[ \left( -\mathbf{H}_{n}^{d}\right) \mathbf{C%
}_{n}\right] +\mathbf{tr}\left[ \left( -\mathbf{H}_{n}^{d}\right) \mathbf{F%
}_{n}\right] -\frac{1}{2}\mathbf{tr}\left[ \left( -\mathbf{H}%
_{n}^{d}\right) \mathbf{D}_{n}\right] +o\left( 1\right) \\
&=&-\frac{1}{2}\ln \left( \left\vert -\mathbf{H}_{n}\left( -\mathbf{H}%
_{n}^{d}\right) ^{-1}+\mathbf{I}_{n}\right\vert \right) +E_{\mathbf{y}}%
\left[ \ln p\left( \mathbf{y}|\widehat{{\mbox{\boldmath${\theta}$}}}%
_{n}\left( \mathbf{y}\right) \right) \right] -\frac{1}{2}\mathbf{tr}\left[ 
\mathbf{B}_{n}\left( -\mathbf{H}_{n}\right) ^{-1}\right] \\
&&+\mathbf{tr}\left[ \left( -\mathbf{H}_{n}+\left( -\mathbf{H}%
_{n}^{d}\right) \right) ^{-1}\mathbf{B}_{n}\right] -\frac{1}{2}\mathbf{tr}%
\left[ \left( -\mathbf{H}_{n}-\mathbf{H}_{n}^{d}\right) \mathbf{D}_{n}%
\right] \\
&&-\frac{1}{2}\mathbf{tr}\left[ \left( -\mathbf{H}_{n}^{d}\right) \mathbf{C%
}_{n}\right] +\mathbf{tr}\left[ \left( -\mathbf{H}_{n}^{d}\right) \mathbf{F%
}_{n}\right] +o\left( 1\right).
\end{eqnarray*}%
Then we have%
\begin{eqnarray*}
&&E_{\mathbf{y}}\left[ E_{\mathbf{y}_{rep}}\ln p^{VB}\left( \mathbf{y}_{rep}|%
\mathbf{y}\right) \right] \\
&=&E_{\mathbf{y}}\left[ \ln p\left( \mathbf{y}|\widehat{{\mbox{\boldmath${%
\theta}$}}}_{n}\left( \mathbf{y}\right) \right) \right] -\frac{1}{2}\ln
\left( \left\vert -\mathbf{H}_{n}\left( -\mathbf{H}_{n}^{d}\right)^{-1}+%
\mathbf{I}_{n}\right\vert \right) -\frac{1}{2}\mathbf{tr}\left[ \mathbf{B}%
_{n}\left( -\mathbf{H}_{n}\right) ^{-1}\right] \\
&&+\mathbf{tr}\left[ \left( -\mathbf{H}_{n}+\left( -\mathbf{H}%
_{n}^{d}\right) \right) ^{-1}\mathbf{B}_{n}\right] -\frac{1}{2}\mathbf{tr}%
\left[ \left( -\mathbf{H}_{n}-\mathbf{H}_{n}^{d}\right) \mathbf{D}_{n}%
\right] -\frac{1}{2}\mathbf{tr}\left[ \left( -\mathbf{H}_{n}^{d}\right) 
\mathbf{C}_{n}\right] \\
&&+\mathbf{tr}\left[ \left( -\mathbf{H}_{n}^{d}\right) \mathbf{F}_{n}%
\right] +o\left( 1\right) \\
&=&E_{\mathbf{y}}\left[ \ln p\left( \mathbf{y}|\widehat{{\mbox{\boldmath${%
\theta}$}}}_{n}\left( \mathbf{y}\right) \right) \right] -\frac{1}{2}\ln
\left( \left\vert -\mathbf{H}_{n}\left( -\mathbf{H}_{n}^{d}\right) ^{-1}+%
\mathbf{I}_{n}\right\vert \right) -\frac{1}{2}\mathbf{tr}\left[ \mathbf{B}%
_{n}\left( -\mathbf{H}_{n}\right) ^{-1}\right] \\
&&+\frac{1}{2}\mathbf{tr}\left[ \left( -\mathbf{H}_{n}+\left( -\mathbf{H}%
_{n}^{d}\right) \right) ^{-1}\mathbf{B}_{n}\right] -\frac{1}{2}\mathbf{tr}%
\left[ \left( -\mathbf{H}_{n}^{d}\right) \mathbf{C}_{n}\left( -\mathbf{H}%
_{n}^{d}\right) \left( -\mathbf{H}_{n}+\left( -\mathbf{H}_{n}^{d}\right)
\right) ^{-1}\right] \\
&&-\frac{1}{2}\mathbf{tr}\left[ \left( -\mathbf{H}_{n}^{d}\right) \mathbf{C%
}_{n}\right] +\mathbf{tr}\left[ \left( -\mathbf{H}_{n}^{d}\right) \left( -%
\mathbf{H}_{n}+\left( -\mathbf{H}_{n}^{d}\right) \right) ^{-1}\left( -%
\mathbf{H}_{n}^{d}\right) \mathbf{C}_{n}\right] +o\left( 1\right) \\
&=&E_{\mathbf{y}}\left[ \ln p\left( \mathbf{y}|\widehat{{\mbox{\boldmath${%
\theta}$}}}_{n}\left( \mathbf{y}\right) \right) \right] -\frac{1}{2}\ln
\left( \left\vert -\mathbf{H}_{n}\left( -\mathbf{H}_{n}^{d}\right) ^{-1}+%
\mathbf{I}_{n}\right\vert \right) -\frac{1}{2}\mathbf{tr}\left[ \mathbf{B}%
_{n}\left( -\mathbf{H}_{n}\right) ^{-1}\right] \\
&&+\frac{1}{2}\mathbf{tr}\left[ \left( -\mathbf{H}_{n}+\left( -\mathbf{H}%
_{n}^{d}\right) \right) ^{-1}\mathbf{B}_{n}\right] +\frac{1}{2}\mathbf{tr}%
\left[ \left( -\mathbf{H}_{n}^{d}\right) \mathbf{C}_{n}\left( -\mathbf{H}%
_{n}^{d}\right) \left( -\mathbf{H}_{n}+\left( -\mathbf{H}_{n}^{d}\right)
\right) ^{-1}\right] \\
&&-\frac{1}{2}\mathbf{tr}\left[ \left( -\mathbf{H}_{n}^{d}\right) \mathbf{C%
}_{n}\right] +o\left( 1\right) \\
&=&E_{\mathbf{y}}\left[ \ln p\left( \mathbf{y}|\widehat{{\mbox{\boldmath${%
\theta}$}}}_{n}\left( \mathbf{y}\right) \right) \right] -\frac{1}{2}\ln
\left( \left\vert -\mathbf{H}_{n}\left( -\mathbf{H}_{n}^{d}\right) ^{-1}+%
\mathbf{I}_{n}\right\vert \right) -\frac{1}{2}\mathbf{tr}\left[ \mathbf{B}%
_{n}\left( -\mathbf{H}_{n}\right) ^{-1}\right] \\
&&+\frac{1}{2}\mathbf{tr}\left[ \left( -\mathbf{H}_{n}+\left( -\mathbf{H}%
_{n}^{d}\right) \right) ^{-1}\left( \mathbf{B}_{n}+\left( -\mathbf{H}%
_{n}^{d}\right) \mathbf{C}_{n}\left( -\mathbf{H}_{n}^{d}\right) \right) %
\right] -\frac{1}{2}\mathbf{tr}\left[ \left( -\mathbf{H}_{n}^{d}\right) 
\mathbf{C}_{n}\right] +o\left( 1\right)
\end{eqnarray*}%
Therefore, 
\begin{equation*}
\begin{aligned} -\ln p\left(
\mathbf{y}|\widehat{{\mbox{\boldmath${\theta}$}}}_{n}\left(
\mathbf{y}\right) \right) &+\frac{1}{2}\ln \left( \left\vert
-\mathbf{H}_{n}\left( -\mathbf{H}_{n}^{d}\right)
^{-1}+\mathbf{I}_{n}\right\vert \right) +\frac{1}{2}\mathbf{tr}\left[
\mathbf{B}_{n}\left( -\mathbf{H}_{n}\right) ^{-1}\right] \\ +
\frac{1}{2}\mathbf{tr}&\left[ \left( -\mathbf{H}_{n}^{d}\right)
\mathbf{C}_{n}\right] -\frac{1}{2}\mathbf{tr}\left[ \left(
-\mathbf{H}_{n}+\left( -\mathbf{H}_{n}^{d}\right) \right) ^{-1}\left(
\mathbf{B}_{n}+\left( -\mathbf{H}_{n}^{d}\right) \mathbf{C}_{n}\left(
-\mathbf{H}_{n}^{d}\right) \right) \right] \end{aligned}
\end{equation*}
is an unbiased estimator of $E_{\mathbf{y}_{rep}}\left( -\ln p^{VB}\left( 
\mathbf{y}_{rep}|\mathbf{y}\right) \right) $ asymptotically.

\subsubsection{Proof of Theorem 4.1}
We are now in the position to prove Theorem 4.1.
The key step is to prove that both $$\mathbf{\bar{\Omega}}_{n}\left( {%
\mbox{\boldmath${\bar{\theta}}$}}^{VB}\right) \quad \text{and} \quad \mathbf{\bar{H}}_{n}\left( {\ \mbox{\boldmath${\bar{\theta}}$}}^{VB}\right) $$
are the consistent estimator of both 
$\mathbf{B}_{n}({\mbox{\boldmath${\theta}$}}_{n}^{p})$ and $\mathbf{H}_{n}({
\mbox{\boldmath${\theta}$}}_{n}^{p})$, where $\boldsymbol{\bar{\theta}}^{VB}$ is the VB posterior mean.

From (\ref{var score f}), we have
$$
\frac{1}{\sqrt{n}} \mathbf{B}_n^{-1/2} \frac{\partial \ln p\left(\mathbf{y}| \boldsymbol{\theta}_n^p\right)}{\partial \boldsymbol{\theta}} \xrightarrow{d} N\left(0, \mathbf{I}_P\right).
$$
It should be noted that, 
$$
\mathbf{s}(\mathbf{y},{\mbox{\boldmath${\theta}$}})=\frac{\partial \ln p(%
\mathbf{y}|{\mbox{\boldmath${\theta}$}})}{\partial {\mbox{\boldmath${%
\theta}$}}}=\sum_{t=1}^{n}\bigtriangledown l_{t}\left( {\mbox{\boldmath${%
\theta}$}}\right),
$$
the left side of (\ref{var score f}) is equivalent to
\begin{equation}
    \label{expasion of var score f}
    \begin{aligned}
        &\frac{1}{\sqrt{n}} \mathbf{B}_n^{-1/2} \frac{\partial \ln p\left(\mathbf{y}| \boldsymbol{\theta}_n^p\right)}{\partial \boldsymbol{\theta}} = \frac{1}{\sqrt{n}} \mathbf{B}_n^{-1/2} \sum_{t=1}^{n}\bigtriangledown l_{t}\left( {\mbox{\boldmath${\theta}$}}\right)\\
        = \frac{1}{\sqrt{n}} \mathbf{B}_n^{-1/2}&\sum_{t=1}^{n}\left(\bigtriangledown l_{t}\left( {\mbox{\boldmath${\theta}$}_n^p}\right)-\bigtriangledown l_{t}\left( {\mbox{\boldmath${\bar\theta}$}}^{VB}\right)\right)+\frac{1}{\sqrt{n}} \mathbf{B}_n^{-1/2}\sum_{t=1}^{n}\bigtriangledown l_{t}\left( {\mbox{\boldmath${\bar\theta}$}}^{VB}\right),
    \end{aligned}
\end{equation}
for the first term we have 
$$
\bigtriangledown l_{t}\left( {\mbox{\boldmath${\theta}$}_n^p}\right)-\bigtriangledown l_{t}\left( {\mbox{\boldmath${\bar\theta}$}}^{VB}\right)=\bigtriangledown ^2 l_{t}\left( {\mbox{\boldmath${\bar\theta}$}}^{\#*}\right)\left({\mbox{\boldmath${\theta}$}_n^p}-{\mbox{\boldmath${\bar\theta}$}}^{VB}\right),
$$
where ${\mbox{\boldmath${\bar\theta}$}}^{\#*}$ lies in ${\mbox{\boldmath${\theta}$}_n^p}$ and ${\mbox{\boldmath${\bar\theta}$}}^{VB}$. From Assumption 5 and (\ref{vb consistency psedo true}) , we have $\left\|\bigtriangledown ^2 l_{t}\left( {\mbox{\boldmath${\bar\theta}$}}^{\#*}\right)\right\|$ is bounded, and $\overline{\boldsymbol{\theta}}^{V B}(\mathbf{y})=\boldsymbol{\theta}_n^p+O_p\left(n^{-1 / 2}\right)$, so we derive that
$$
\bigtriangledown l_{t}\left( {\mbox{\boldmath${\theta}$}_n^p}\right)-\bigtriangledown l_{t}\left( {\mbox{\boldmath${\bar\theta}$}}^{VB}\right) = O_p\left(n^{-1 / 2}\right). 
$$
Because $\mathbf{B}_{n}\left( {\mbox{\boldmath${\theta}$}}\right)=Var\left[ \frac{%
1}{\sqrt{n}}\sum_{t=1}^{n}\bigtriangledown l_{t}\left( {\mbox{\boldmath${%
\theta}$}}\right) \right]$, under Assumption 5, $\mathbf{B}_{n}$ is also bounded, so we finally get
\begin{equation}
    \frac{1}{\sqrt{n}} \mathbf{B}_n^{-1/2}\sum_{t=1}^{n}\left(\bigtriangledown l_{t}\left( {\mbox{\boldmath${\theta}$}_n^p}\right)-\bigtriangledown l_{t}\left( {\mbox{\boldmath${\bar\theta}$}}^{VB}\right)\right)=O_p\left(n^{-1}\right).
\end{equation}
Combined with (\ref{expasion of var score f}), we have 
\begin{equation}
    \label{expasion2 of var score f}
    \begin{aligned}
        &\frac{1}{\sqrt{n}} \mathbf{B}_n^{-1/2} \frac{\partial \ln p\left(\mathbf{y}| \boldsymbol{\theta}_n^p\right)}{\partial \boldsymbol{\theta}} = \frac{1}{\sqrt{n}} \mathbf{B}_n^{-1/2} \sum_{t=1}^{n}\bigtriangledown l_{t}\left( {\mbox{\boldmath${\theta}$}}\right)\\
        = \frac{1}{\sqrt{n}} \mathbf{B}_n^{-1/2}&\sum_{t=1}^{n}\left(\bigtriangledown l_{t}\left( {\mbox{\boldmath${\theta}$}_n^p}\right)-\bigtriangledown l_{t}\left( {\mbox{\boldmath${\bar\theta}$}}^{VB}\right)\right)+\frac{1}{\sqrt{n}} \mathbf{B}_n^{-1/2}\sum_{t=1}^{n}\bigtriangledown l_{t}\left( {\mbox{\boldmath${\bar\theta}$}}^{VB}\right)\\
        =O_p\left(n^{-1}\right)&+\frac{1}{\sqrt{n}} \mathbf{B}_n^{-1/2}\sum_{t=1}^{n}\bigtriangledown l_{t}\left( {\mbox{\boldmath${\bar\theta}$}}^{VB}\right) \xrightarrow{d} N\left(0, \mathbf{I}_P\right).
    \end{aligned}
\end{equation}
Note that $\mathbf{\bar{\Omega}}_{n}\left( {%
\mbox{\boldmath${\bar{\theta}}$}}^{VB}\right)=\sum_{t=1}^{n}\bigtriangledown l_{t}\left( {\mbox{\boldmath${\bar\theta}$}}^{VB}\right)\bigtriangledown l_{t}\left( {\mbox{\boldmath${\bar\theta}$}}^{VB}\right)^{\prime}$ we finally have 
\begin{equation}
\label{vb consistent omega bar}
\mathbf{\bar{\Omega}}_{n}\left( {%
\mbox{\boldmath${\bar{\theta}}$}}^{VB}\right) = \mathbf{B}_n\left(\boldsymbol{\theta}_n^p\right)+O_p\left(n^{-1}\right).
\end{equation}
With (\ref{t2 scaling inequality term1}) in proof of Theorem 3.1.
\begin{equation}
\label{vb consistent hn bar}
    \mathbf{\bar{H}}_{n}\left( {\ \mbox{\boldmath${\bar{\theta}}$}}^{VB}\right) = \mathbf{H}_{n}({
\mbox{\boldmath${\theta}$}}_{n}^{p}) + O_p\left(n^{-1/2}\right).
\end{equation}
Combined (\ref{vb consistent omega bar}) and (\ref{vb consistent hn bar}), we have
\begin{equation}
\label{vb consistent trace hn and bn bar}
    \mathbf{tr}\left[
\mathbf{\bar{\Omega}}_{n}\left(
{\mbox{\boldmath${\bar{					\theta}}$}}^{VB}\right) \left(
\mathbf{\bar{H}}_{n}\left( {
\mbox{\boldmath${\bar{\theta}}$}}^{VB}\right) \right) ^{-1}\right] = \mathbf{tr}\left[\mathbf{B}_{n}\mathbf{H}_{n}^{-1}\right]+O_p\left(n^{-1/2}\right). 
\end{equation}

Thus, with (\ref{vtic prepared}) and (\ref{vb consistent trace hn and bn bar}),
\begin{equation}
\label{final proof of vtic asymptotic estimator}
	\begin{aligned}
		& E_{\mathbf{y}}\left[E_{\mathbf{y}_{\text {rep}}}\left(-2 \ln p\left(\mathbf{y}_{\text {rep}} | \overline{\boldsymbol{\theta}}^{VB}(\mathbf{y})\right)\right)\right] \\
		= & E_{\mathbf{y}}\left[-2 \ln p\left(\mathbf{y} | \overline{\boldsymbol{\theta}}^{VB}(\mathbf{y})\right) - 2 \mathbf{t r}\left[\mathbf{B}_n \mathbf{H}_n^{-1}\right]\right]+o(1)\\
        = & E_{\mathbf{y}}\left[-2 \ln p\left(\mathbf{y} | \overline{\boldsymbol{\theta}}^{VB}(\mathbf{y})\right) - 2 \mathbf{tr}\left[
        \mathbf{\bar{\Omega}}_{n}\left(
        {\mbox{\boldmath${\bar{					\theta}}$}}^{VB}\right) \left(
        \mathbf{\bar{H}}_{n}\left( {
        \mbox{\boldmath${\bar{\theta}}$}}^{VB}\right) \right) ^{-1}\right]+o_p(1)\right]+o(1)\\
        = & E_{\mathbf{y}}\left[-2 \ln p\left(\mathbf{y} | \overline{\boldsymbol{\theta}}^{VB}(\mathbf{y})\right) - 2 \mathbf{tr}\left[
        \mathbf{\bar{\Omega}}_{n}\left(
        {\mbox{\boldmath${\bar{					\theta}}$}}^{VB}\right) \left(
        \mathbf{\bar{H}}_{n}\left( {
        \mbox{\boldmath${\bar{\theta}}$}}^{VB}\right) \right) ^{-1}\right]\right]+o(1),
	\end{aligned}
\end{equation}
which means $\text{VDIC}_M^{k}$ is an asymptotically unbaised estimator of $Risk(d_{k^1})$ up to a constant.

\subsubsection{Proof of Theorem 4.2}
Proof of Theorem 4.2
is similar like proof of Theorem 4.1.
In Theorem 3.2
\begin{eqnarray*}
&&E_{\mathbf{y}}E_{\mathbf{y}_{rep}}\left( -2\ln p^{VB}\left( \mathbf{y}%
_{rep}|\mathbf{y}\right) \right) \\
&=&E_{\mathbf{y}}\left( -2\ln p\left( \mathbf{y}|\widehat{{%
\mbox{\boldmath${\theta}$}}}_{n}\left( \mathbf{y}\right) \right) \right)
+\ln \left( \left\vert -\mathbf{H}_{n}\left( -\mathbf{H}_{n}^{d}\right)
^{-1}+\mathbf{I}_{n}\right\vert \right) +\mathbf{tr}\left[ \mathbf{B}%
_{n}\left( -\mathbf{H}_{n}\right) ^{-1}\right] \\
&&-\mathbf{tr}\left[ \left( -\mathbf{H}_{n}+\left( -\mathbf{H}%
_{n}^{d}\right) \right) ^{-1}\left( \mathbf{B}_{n}+\left( -\mathbf{H}%
_{n}^{d}\right) \mathbf{C}_{n}\left( -\mathbf{H}_{n}^{d}\right) \right) %
\right] +\mathbf{tr}\left[ \left( -\mathbf{H}_{n}^{d}\right) \mathbf{C}_{n}%
\right] +o\left( 1\right),
\end{eqnarray*}
where $\mathbf{C}_{n}=\mathbf{H}_{n}^{-1}\mathbf{B}_{n}\mathbf{H}_{n}^{-1}$, 
$\mathbf{H}_{n}^{d}$ is a diagonal matrix with the same diagonal elements as
in $\mathbf{H}_{n}$.

Because both $$\mathbf{\bar{\Omega}}_{n}\left( {%
\mbox{\boldmath${\bar{\theta}}$}}^{VB}\right) \quad \text{and} \quad \mathbf{\bar{H}}_{n}\left( {\ \mbox{\boldmath${\bar{\theta}}$}}^{VB}\right) $$
are the consistent estimator of both 
$\mathbf{B}_{n}$ and $\mathbf{H}_{n}$, proved in Theorem 4.1, 
so we derive a consistent estimator that
\begin{equation}
\label{vb consistent cn bar}
   \mathbf{\hat{C}}_{n}\left( {%
\mbox{\boldmath${\bar{\theta}}$}}^{VB}\right) =\left( \mathbf{\bar{H}}%
_{n}\left( {\mbox{\boldmath${\bar{\theta}}$}}^{VB}\right) \right) ^{-1}%
\mathbf{\bar{\Omega}}_{n}\left( {\mbox{\boldmath${\bar{\theta}}$}}%
^{VB}\right) \left( \mathbf{\bar{H}}_{n}\left( {\mbox{\boldmath${\bar{%
\theta}}$}}^{VB}\right) \right) ^{-1}=\mathbf{C}_{n}+ O_p(n^{-1}) 
\end{equation}
Mimicking the proof of equation (\ref{t2 scaling inequality term1}), 
\begin{equation}
\label{vb consistent hn bar diag}
    \mathbf{\bar{H}}_{n}^d\left( {\ \mbox{\boldmath${\bar{\theta}}$}}^{VB}\right) = \mathbf{H}_{n}^d({
\mbox{\boldmath${\theta}$}}_{n}^{p}) + O_p\left(n^{-1/2}\right).
\end{equation}
is derived.

Combined with (\ref{t1 term finally}), (\ref{vb consistent omega bar}), (\ref{vb consistent hn bar}), (\ref{vb consistent trace hn and bn bar}), (\ref{vb consistent cn bar}) and (\ref{vb consistent hn bar diag}),
\begin{equation}
    \begin{aligned}
        &E_{\mathbf{y}}E_{\mathbf{y}_{rep}}\left( -2\ln p^{VB}\left( \mathbf{y}%
        _{rep}|\mathbf{y}\right) \right) \\
        &=E_{\mathbf{y}}\left( -2\ln p\left( \mathbf{y}|\widehat{{%
        \mbox{\boldmath${\theta}$}}}_{n}\left( \mathbf{y}\right) \right) \right)
        +\ln \left( \left\vert -\mathbf{H}_{n}\left( -\mathbf{H}_{n}^{d}\right)
        ^{-1}+\mathbf{I}_{n}\right\vert \right) +\mathbf{tr}\left[ \mathbf{B}%
        _{n}\left( -\mathbf{H}_{n}\right) ^{-1}\right] \\
        &-\mathbf{tr}\left[ \left( -\mathbf{H}_{n}+\left( -\mathbf{H}%
        _{n}^{d}\right) \right) ^{-1}\left( \mathbf{B}_{n}+\left( -\mathbf{H}%
        _{n}^{d}\right) \mathbf{C}_{n}\left( -\mathbf{H}_{n}^{d}\right) \right) %
        \right] +\mathbf{tr}\left[ \left( -\mathbf{H}_{n}^{d}\right) \mathbf{C}_{n}%
        \right] +o\left( 1\right)\\
        & = E_{\mathbf{y}}\left(\text{VPIC}\right) +o(1),
    \end{aligned}
\end{equation}
where $\text{VPIC}=-2\ln p(\mathbf{y}|{\mbox{\boldmath${\bar{\theta}}$}}%
^{VB})+2 P_{VPIC}$, with 
\begin{equation*}
\begin{aligned} P_{VPIC} &= \frac{1}{2}\mathbf{tr}\left[
\mathbf{\bar{\Omega}}_{n}\left(
{\mbox{\boldmath${\bar{					\theta}}$}}^{VB}\right) \left(
-\mathbf{\bar{H}}_{n}\left( {
\mbox{\boldmath${\bar{\theta}}$}}^{VB}\right) \right) ^{-1}\right] +
\frac{1}{2}\ln \left( \left\vert \left( -\mathbf{\bar{H}}_{n}\left(
{\mbox{\boldmath${\bar{					\theta}}$}}^{VB}\right) \right) \left( -\mathbf{\bar{H}}_{n}^d\left(
{\mbox{\boldmath${\bar{					\theta}}$}}^{VB}\right) \right)^{-1} +\mathbf{I}_{n}\right\vert \right) \\
-&\frac{1}{2}\mathbf{tr}\left[\begin{array}{c}
	\left(
	-\mathbf{\bar{H}}_{n}\left(
	{\mbox{\boldmath${				\bar{\theta}}$}}^{VB}\right) +\left( -\mathbf{\bar{H}}_{n}^d\left(
	{\mbox{\boldmath${\bar{					\theta}}$}}^{VB}\right) \right) \right) ^{-1} \\
	 \times \left( \mathbf{\bar{\Omega}}_{n}\left(
	{\mbox{\boldmath${\bar{\theta}}$} }^{VB}\right)+\left( -\mathbf{\bar{H}}_{n}^d\left(
	{\mbox{\boldmath${\bar{					\theta}}$}}^{VB}\right) \right) \mathbf{\hat{C}}_{n}\left( {
		\mbox{\boldmath${\bar{\theta}}$}}^{VB}\right) \left( -\mathbf{\bar{H}}_{n}^d\left(
		{\mbox{\boldmath${\bar{					\theta}}$}}^{VB}\right) \right) \right)
\end{array}\right]\\ 
+&\frac{1}{2}\mathbf{tr}\left[ \left( -\mathbf{\bar{H}}_{n}^d\left(
{\mbox{\boldmath${\bar{					\theta}}$}}^{VB}\right) \right) \mathbf{\hat{C}}_{n}\left( {
\mbox{\boldmath${\bar{\theta}}$}}^{VB}\right) \right] . \end{aligned}
\end{equation*}	
\newpage
\section{Analytical expression of VB for used parametric model}

\subsection{Mean-Field VB for linear regression with normal error}
\label{mean feild linear}
As literatures shows, for parameter $\theta_i \subset \boldsymbol{\theta}$,
one can derive the mean-field VB posterior 
\begin{equation*}
\log q\left(\theta_i\right) \propto E_{q_{-i}}[\log p(\theta_{i}\mid%
\boldsymbol{\theta}_{-i}, \mathbf{y})],
\end{equation*}
which can be transformed by Gibbs sampling using full conditional
distributions. By setting priors in main paper, we write the full conditional
density of $\beta$ 
\begin{equation*}
\begin{aligned} & \log P(\beta \mid Y, X, h) \propto \log P(Y \mid X, \beta,
h)+\log P(\beta \mid h) \\ & \propto-\frac{h}{2}\left(Y^{\prime}
Y-2\left(X^{\prime} Y+\tilde{V}^{-1} \tilde{\mu}\right)^{\prime}
\beta+\beta^{\prime}\left(X^{\prime} X+\tilde{V}^{-1}\right)
\beta+\tilde{\mu}^{\prime} \tilde{V}^{-1} \tilde{\mu}\right) \\ & =
N\left(\mu_\beta, V_\beta\right) \\ & \mu_\beta =\left(X^{\prime}
X+\tilde{V}^{-1}\right)^{-1}\left(X^{\prime} Y+\tilde{V}^{-1}
\tilde{\mu}\right) \\ & V_\beta = h^{-1}\left(X^{\prime}
X+\tilde{V}^{-1}\right)^{-1}, \end{aligned}
\end{equation*}
and $h$ 
\begin{equation*}
\begin{aligned} & \log P(h \mid Y, X, \beta) \propto \log P(Y \mid X, \beta,
h)+\log P(h) \\ & \propto -\left(b+\frac{1}{2} Y^{\prime} Y-Y^{\prime} X
\beta+\frac{1}{2} \beta^{\prime} X^{\prime} X \beta\right) h
+\left(a-1+\frac{N}{2}\right) \log h \\ & = \operatorname{Gamma}\left(a_h,
b_h\right) \\ & a_h = a+\frac{N}{2} \\ & b_h = b+\frac{1}{2}(Y-X
\beta)^{\prime}(Y-X \beta), \end{aligned}
\end{equation*}
the optimal VB posterior of $\beta$ and $h$ that approximate the true
posterior $p\left(\beta,h\mid y\right)$ of linear regression model by
coordinate ascent variational bayes, having the same form as prior that 
\begin{equation*}
q(\beta, h)=q(\beta)q(h)
\end{equation*}
with 
\begin{equation}
\begin{aligned} q(\beta) & \sim N\left(\mu_{\beta}^*, V_{\beta}^*\right) \\
\mu_{\beta}^* & \leftarrow\left(X^{\prime}
X+\tilde{V}^{-1}\right)^{-1}\left[\tilde{V}^{-1} \tilde{\mu}+X^{\prime}
Y\right] \\ V_{\beta}^* & \leftarrow\left(X^{\prime}
X+\tilde{V}^{-1}\right)^{-1} \frac{b_h^*}{a_h^*} \\ q(h) & \sim
Gamma\left(a_h^*, b_h^*\right) \\ a_h^* & \leftarrow \frac{N}{2}+a \\ b_h^*
& \leftarrow b+\frac{1}{2} Y^{\prime} Y-Y^{\prime} X
\mu_{\beta}^*+\frac{1}{2} \operatorname{trace}\left(X^{\prime}
X\left(V_{\beta}^*+\mu_{\beta}^* \mu_{\beta}^{* \prime}\right)\right).
\end{aligned}  \label{VB for linear normal terms}
\end{equation}

For linear regression model, the parameters we are interested about are $%
\boldsymbol{\theta}=\left(\beta^{\prime}, h\right)^{\prime}$ and denote $%
L\left(\boldsymbol{y}\mid \boldsymbol{\theta}\right)$ as logrithm likelihood
function. To derive $\mathrm{IC}_k^{VB}$ of candidate model $k=1,\dots,K$,
we need consistent estimator of $\mathbf{B}_n\left(\boldsymbol{\theta}%
_n^p\right)$ 
\begin{equation*}
\overline{\boldsymbol{\Omega}}_n\left(\bar{\boldsymbol{\theta}}%
_k^{VB}\right)=\frac{1}{N} \sum_{t=1}^N \mathbf{s}_t\left(\bar{\boldsymbol{%
\theta}}_k^{VB}\right) \mathbf{s}_t\left(\bar{\boldsymbol{\theta}}%
_k^{VB}\right)^{\prime}
\end{equation*}
where 
\begin{equation*}
\mathbf{s}_i\left(\boldsymbol{\theta}\right) = \left(\frac{\partial
L\left(Y_i\mid \boldsymbol{\theta}\right)^{\prime}}{\partial \beta}, \frac{%
\partial L\left(Y_i\mid \boldsymbol{\theta}\right)}{\partial h}%
\right)^{\prime}
\end{equation*}
with 
\begin{equation*}
\begin{aligned} \frac{\partial L\left(Y_i\mid
\boldsymbol{\theta}\right)}{\partial \beta} &= h \left(Y_i X_i - X _i
X_i^{\prime} \beta \right)\\ \frac{\partial L\left(Y_i\mid
\boldsymbol{\theta}\right)}{\partial h} &= \frac{1}{2h} -
\frac{1}{2}\left(Y_i-X_i^{\prime}\beta\right)^2 \end{aligned}
\end{equation*}
and consistent estimator of $\mathbf{H}_n\left(\boldsymbol{\theta}%
_n^p\right) $ 
\begin{equation*}
\begin{aligned}
\overline{\mathbf{H}}_n\left(\overline{\boldsymbol{\theta}}_k^{V B}\right)
&= \frac{1}{N} \sum_{t=1}^N \mathbf{h}_t(\overline{\boldsymbol{\theta}}_k^{V
B}),\\ \sum_{t=1}^N \mathbf{h}_t(\boldsymbol{\theta}) & =
\left(\begin{array}{cc} \frac{\partial^2 L\left(Y\mid
\boldsymbol{\theta}\right)}{\partial \beta \partial \beta^{\prime}} &
\frac{\partial^2 L\left(Y\mid \boldsymbol{\theta}\right)}{\partial \beta
\partial h} \\ \frac{\partial^2 L\left(Y\mid
\boldsymbol{\theta}\right)}{\partial h \partial \beta^{\prime}} &
\frac{\partial^2 L\left(Y\mid \boldsymbol{\theta}\right)}{\partial h^2}
\end{array}\right) \end{aligned}
\end{equation*}
where 
\begin{equation*}
\begin{aligned} \frac{\partial^2 L\left(Y\mid
\boldsymbol{\theta}\right)}{\partial \beta \partial \beta^{\prime}} & =
\sum_{i=1}^{N} \left(-h X_i X_i^{\prime} \right)= -h X^{\prime} X\\
\frac{\partial^2 L\left(Y\mid \boldsymbol{\theta}\right)}{\partial \beta
\partial h} & = \sum_{i=1}^{N} \left(Y_i X_i - X_i X_i^{\prime}\beta
\right)= X^{\prime}Y-X^{\prime}X\beta \\ \frac{\partial^2 L\left(Y\mid
\boldsymbol{\theta}\right)}{\partial h \partial \beta^{\prime}} & =
\sum_{i=1}^{N} \left(Y_i X_i - X_i X_i^{\prime}\beta \right)^{\prime} =
Y^{\prime}X - \beta^{\prime}X^{\prime}X \\ \frac{\partial^2 L\left(Y\mid
\boldsymbol{\theta}\right)}{\partial h^2} & = \sum_{i=1}^{N}
\left(-\frac{1}{2h^2}\right)= -\frac{N}{2}\frac{1}{h^2} \end{aligned}
\end{equation*}
then we have the consistent estimator of $\mathbf{C}_n$ where 
\begin{equation*}
\hat{\mathbf{C}}_n\left(\overline{\boldsymbol{\theta}}_k^{V B}\right)=\left(%
\overline{\mathbf{H}}_n\left(\overline{\boldsymbol{\theta}}_k^{V
B}\right)\right)^{-1} \overline{\boldsymbol{\Omega}}_n\left(\overline{%
\boldsymbol{\theta}}_k^{V B}\right)\left(\overline{\mathbf{H}}_n\left(%
\overline{\boldsymbol{\theta}}_k^{V B}\right)\right)^{-1}.
\end{equation*}

\subsection{Mean-Field VB for probit regression}
\label{mean feild probit}
We use mean-field VB algorithm for the probit model, for all observed i.i.d.
data 
\begin{equation*}
Y=\left[%
\begin{array}{c}
Y_1 \\ 
Y_2 \\ 
\vdots \\ 
Y_N%
\end{array}%
\right] \quad X=\left[%
\begin{array}{cccc}
1 & x_{12} & \ldots & x_{1 p} \\ 
1 & x_{22} & \ldots & x_{2 p} \\ 
\vdots & \vdots & \vdots & \vdots \\ 
1 & x_{N 2} & \ldots & x_{N p}%
\end{array}%
\right],
\end{equation*}
we have linear predictor based on vector $X_i$ 
\begin{equation*}
Z_i=X_i^{\prime} \beta,
\end{equation*}
and we choose the probit link as link function 
\begin{equation*}
\Phi^{-1}\left(p_i\right)=Z_i,
\end{equation*}
the inverse of the link function $\Phi\left(\cdot\right)$ is the cdf of
standard normal distribution, let $g^{-1}\left(\cdot\right)=\Phi\left(\cdot%
\right)$, we will have $\mathbb{E}[Y \mid X]=g^{-1}\left(X^{\prime} \beta
\right)$, the likelihood function of probit model is 
\begin{equation}
Y_i \mid X_i \overset{i . i . d .}{\sim} Bernoulli\left(\Phi%
\left(X_i^{\prime} \beta \right)\right),  \label{llk of probit model}
\end{equation}
the likelihood function of all the observed data is 
\begin{equation*}
f(Y \mid
\beta)=\prod_{i=1}^N\left(\Phi\left(Z_i\right)\right)^{Y_i}\left(1-\Phi%
\left(Z_i\right)\right)^{1-Y_i}
\end{equation*}
which $Y_i$ equals 0 or 1. In Bayesian framework, we will posit a normal
prior $\beta \sim N (0,\tilde{V})$, To facilitate computation, it is common
to augment the model by introducing $N$ latent variables $\boldsymbol{z}%
=\left(z_1, \ldots, z_N\right)$ with latent distribution 
\begin{equation}
z_i \mid \beta \overset{i . i . d .}{\sim} N \left(X_i^{\prime} \beta,1
\right),  \label{probit latent distribution}
\end{equation}
so that $p\left(Y_i \mid z_i\right)=I\left(z_i \geq 0\right)^{Y_i}
I\left(z_i<0\right)^{1-Y_i}$. Under the model augmentation, we can write the
logarithm of the joint posterior distribution over the parameter-latent pair 
$(\beta,\boldsymbol{z})$ as 
\begin{equation*}
\begin{aligned} \log p\left(\boldsymbol{z}, \beta \mid Y\right) & =
\sum_{i=1}^N\left[Y_i \log I\left(z_i \geq 0\right)+\left(1-Y_i\right) \log
I\left(z_i<0\right)\right] \\ - & \frac{1}{2}\beta^{\prime} \tilde{V}^{-1}
\beta -\frac{1}{2}\left(\boldsymbol{z}-X
\beta\right)^{\prime}\left(\boldsymbol{z}-X \beta\right)+\text {const}.
\end{aligned}
\end{equation*}
With mean-field VB updating formula, we have 
\begin{equation}
q\left(z_i\right) \sim 
\begin{cases}
N_{+}\left(X_i^{\prime} E_q[\beta], 1\right) & Y_i=1 \\ 
N_{-}\left(X_i^{\prime} E_q[\beta], 1\right), & Y_i=0%
\end{cases}%
.  \label{cavb of per-data par in vb probit}
\end{equation}
where $N_{+}\left(\cdot\right)$ and $N_{-}\left(\cdot\right)$ denote the
normal distributions truncated to positive and negative part, respectively.
For $\beta$, we have 
\begin{equation}
q\left(\beta\right) \sim N\left(\left(X^{\prime} X+\tilde{V}%
^{-1}\right)^{-1} X^{\prime} E_q[\boldsymbol{z}],\left(X^{\prime} X+\tilde{V}%
^{-1}\right)^{-1}\right)  \label{cavb of global par in vb probit}
\end{equation}
Both VB optimal distributions of $\beta$ and $\boldsymbol{z}$ are normal or
truncated normal, with fixed variance. Let $\mu_{\beta}^*=E_q[\beta]$ and $%
\mu_{\boldsymbol{z}}^*=E_q[\boldsymbol{z}]$ as follows. 
\begin{equation}
\begin{aligned} \mu_{\beta}^* & =\left(X^{\prime}
X+\tilde{V}^{-1}\right)^{-1} X^{\prime} \mu_{\boldsymbol{z}}^* \\
\mu_{z_i}^* & =X_i^{\prime} \mu_{\beta}^*+\frac{\phi\left(X_i^{\prime}
\mu_{\beta}^* \right)}{\Phi\left(X_i^{\prime}
\mu_{\beta}^*\right)^{Y_i}\left[\Phi\left(X_i^{\prime}
\mu_{\beta}^*\right)-1\right]^{1-Y_i}} \end{aligned}
\label{cavb para of logit}
\end{equation}
where $\phi$ is the pdf of standard normal distribution. The optimal ELBO
has an analytical form as 
\begin{equation}
\begin{aligned} ELBO & = \sum_{i=1}^{N}\left[Y_i \log \Phi
\left(X_i^{\prime} \mu_{\beta}^*\right)+\left(1-Y_i\right) \log
\left(1-\Phi\left(X_i^{\prime} \mu_{\beta}^*\right)\right)\right] \\ - &
\frac{1}{2} \mu_{\beta}^{*\prime} \tilde{V}^{-1} \mu_{\beta}^*-\frac{1}{2}
\log \operatorname{det}\left(\tilde{V} X^{\prime} X +I_d\right) \end{aligned}
\label{vb elbo of probit}
\end{equation}
As discussed in the literature, one can use this ELBO value as the criterion
to conduct variable selection by selecting a subset of variables that
maximizes it.

The interested parameters $\boldsymbol{\theta}$ in this model is $\beta$, to
derive $\mathrm{IC}_k^{VB}$, we need consistent estimator of $\mathbf{B}%
_n\left(\boldsymbol{\theta}_n^p\right)$ 
\begin{equation*}
\overline{\boldsymbol{\Omega}}_n\left(\bar{\boldsymbol{\theta}}%
_k^{VB}\right)=\frac{1}{N} \sum_{t=1}^N \mathbf{s}_t\left(\bar{\boldsymbol{%
\theta}}_k^{VB}\right) \mathbf{s}_t\left(\bar{\boldsymbol{\theta}}%
_k^{VB}\right)^{\prime}
\end{equation*}
where 
\begin{equation*}
\mathbf{s}_i\left(\boldsymbol{\theta}\right) = \frac{\phi\left(X_i^{\prime}
\beta\right)}{\Phi\left(X_i^{\prime} \beta\right)\left[1-\Phi\left(X_i^{%
\prime} \beta\right)\right]}\left[Y_i-\Phi\left(X_i^{\prime} \beta\right)%
\right] X_i
\end{equation*}
and consistent estimator of $\mathbf{H}_n\left(\boldsymbol{\theta}%
_n^p\right) $ 
\begin{equation*}
\overline{\mathbf{H}}_n\left(\overline{\boldsymbol{\theta}}_k^{V B}\right) = 
\frac{1}{N} \sum_{t=1}^N \mathbf{h}_t(\overline{\boldsymbol{\theta}}_k^{V B})
\end{equation*}
where 
\begin{equation*}
\begin{aligned} \sum_{t=1}^{N}\mathbf{h}_t(\boldsymbol{\theta}) & =
-\sum_{i=1}^N \phi\left(X_i^{\prime} \beta\right)\left[Y_i
\frac{\phi\left(X_i^{\prime} \beta\right)+X_i^{\prime} \beta
\Phi\left(X_i^{\prime} \beta\right)}{\Phi\left(X_i^{\prime}
\beta\right)^2}\right] X_i X_i^{\prime}\\ & -\sum_{i=1}^N
\phi\left(X_i^{\prime} \beta\right)\left[\left(1-Y_i\right)
\frac{\phi\left(X_i^{\prime} \beta\right)-X_i^{\prime}
\beta\left(1-\Phi\left(X_i^{\prime}
\beta\right)\right)}{\left[1-\Phi\left(X_i^{\prime}
\beta\right)\right]^2}\right] X_i X_i^{\prime} \end{aligned}
\end{equation*}
then we have the consistent estimator of $\mathbf{C}_n$ where 
\begin{equation*}
\hat{\mathbf{C}}_n\left(\overline{\boldsymbol{\theta}}_k^{V B}\right)=\left(%
\overline{\mathbf{H}}_n\left(\overline{\boldsymbol{\theta}}_k^{V
B}\right)\right)^{-1} \overline{\boldsymbol{\Omega}}_n\left(\overline{%
\boldsymbol{\theta}}_k^{V B}\right)\left(\overline{\mathbf{H}}_n\left(%
\overline{\boldsymbol{\theta}}_k^{V B}\right)\right)^{-1}
\end{equation*}

\end{document}